\newtheorem{thm}{Theorem}[section]
\newtheorem{lem}[thm]{Lemma}
\newtheorem{conj}[thm]{Conjecture}
\newtheorem{prop}[thm]{Proposition}
\tikzstyle{untwisted}=[fill=white, draw=black, shape=circle, tikzit draw=black, tikzit shape=circle, tikzit fill=white]
\tikzstyle{twisted}=[fill={rgb,255: red,64; green,64; blue,64}, draw=black, shape=circle, tikzit shape=circle, tikzit fill={rgb,255: red,64; green,64; blue,64}, tikzit draw=black]
\tikzstyle{equiv}=[<->, thick]
\tikzstyle{twist}=[-, dashed, draw={rgb,255: red,64; green,64; blue,64}, thick]
\tikzstyle{twisted bg}=[-, draw={rgb,255: red,128; green,128; blue,128}, dashed, thick]
\def\IC{\mathbb{C}}
\def\IN{\mathbb{N}}
\def\IV{\mathbb{V}}
\def\IZ{\mathbb{Z}}
\def\IL{\mathbb{L}}
\def\AA{{\cal A}} 
\def\BB{{\cal B}}
\def\CC{{\cal C}}
\def\FF{{\cal F}}
\def\GG{{\cal G}}
\def\II{{\cal I}}
\def\JJ{{\cal J}}
\def\KK{{\cal K}}
\def\LL{{\cal L}}
\def\NN{{\cal N}}
\def\OO{{\cal O}}
\def\QQ{{\cal Q}}
\def\RR{{\cal R}}
\def\SS{{\cal S}}
\def\TT{{\cal T}}
\def\VV{{\cal V}}
\def\WW{{\cal W}}
\def\ZZ{{\cal Z}}
\newcommand{\ket}[1]{|{#1}\rangle}
\def\Tr{\mathop{\mathrm{Tr}}\nolimits}
\def\dim{{\rm dim}}
\def\ch{{\rm ch}}
\def\slf{\mathfrak{sl}}
\def\suf{\mathfrak{su}}
\def\af{\mathfrak{a}}
\def\bbf{\mathfrak{b}}
\def\ccf{\mathfrak{c}}
\def\df{\mathfrak{d}}
\def\ef{\mathfrak{e}}
\def\gf{\mathfrak{g}}
\def \ff{\mathfrak{f}}
\def \zf{\mathfrak{z}}
\def \eg{{\it e.g.}}
\def \ie{{\it i.e.}}
\def \cf{{\it cf.}}
\def \KL{\mathrm{KL}}
\def \id{\mathbbm{1}}
\def \rk{\mathrm{rk}}
\newcommand{\Hi}[1]{\mathrm{H}^{\frac{\infty}{2}+#1}}
\newcommand{\Li}[1]{\mathrm{\bigwedge}\,^{\frac{\infty}{2}+#1}}
\def\D{\mathcal{D}^{ch}}
\def\W{\mathbf{W}}
\def\V{\mathbf{V}}
\def \Op{\mathrm{Op}}
\def \MOp{\mathrm{MOp}}
\title{Twisted chiral algebras of class \texorpdfstring{$\mathcal{S}$}{S} and mixed Feigin--Frenkel gluing}
\author{Christopher Beem and Sujay Nair}
\affiliation{Mathematical Institute, University of Oxford, Woodstock Road, Oxford, OX2 6GG, UK}
\abstract{The correspondence between four-dimensional $\mathcal{N}=2$ superconformal field theories and vertex operator algebras, when applied to theories of class $\mathcal{S}$, leads to a rich family of VOAs that have been given the monicker \emph{chiral algebras of class $\mathcal{S}$}. A remarkably uniform construction of these vertex operator algebras has been put forward by Tomoyuki Arakawa in \cite{Arakawa:2018egx}. The construction of \cite{Arakawa:2018egx} takes as input a choice of simple Lie algebra $\mathfrak{g}$, and applies equally well regardless of whether $\mathfrak{g}$ is simply laced or not. In the non-simply laced case, however, the resulting VOAs do not correspond in any clear way to known four-dimensional theories. On the other hand, the standard realisation of class $\SS$ theories involving non-simply laced symmetry algebras requires the inclusion of outer automorphism twist lines, and this requires a further development of the approach of \cite{Arakawa:2018egx}. In this paper, we give an account of those further developments and propose definitions of most chiral algebras of class $\SS$ with outer automorphism twist lines. We show that our definition passes some consistency checks and point out some important open problems.}
\begin{document}

\maketitle
\flushbottom


\eject

\section{Introduction}
\label{sec:intro}

Since their rise to prominence in the late aughts, theories of class $\SS$~\cite{Gaiotto:2009we, Gaiotto:2009hg} have proven to constitute an important and highly structured family of four-dimensional, $\NN=2$ superconformal field theories (SCFTs).\footnote{Throughout this paper we restrict without further comment to the $\NN=2$ superconformal case when referencing SCFTs or ``theories'' more generally.} These theories are best understood as those which arise in the low energy limit when six-dimensional $(2,0)$ SCFTs are compactified (in an appropriately supersymmetry-preserving fashion) on a compact two-manifold (the \emph{UV curve}), possibly in the presence of codimension-two supersymmetric defects. Many properties of class $\SS$ theories are well-characterised in terms of two-dimensional structures, \eg, the marginal (gauge) couplings are identified with complex structure moduli of the curve. Indeed, a cottage industry of identifying four-dimensional observables with dual computations taking place in two dimensions on the UV curve has emerged and achieved a good degree of maturity.

One such observable (to use the term expansively) that is tractable yet structurally intricate is the associated vertex operator algebra (VOA) or chiral algebra \cite{Beem:2013sza}. The VOAs associated to theories of class $\SS$ were first systematically studied in \cite{Beem:2014rza} where they went under the name \emph{chiral algebras of class $\SS$}. In that work, a number of key properties of these VOAs were identified and some explicit computations performed in simple cases (with underlying $(2,0)$ theory of type $\gf=\af_1$ or $\af_2$). A speculative vision was also put forward wherein general chiral algebras of class $\SS$ might be uniquely determined by their various duality properties---a \emph{theory space bootstrap}---which was an idea that had seen prior success in the case of the superconformal index \cite{Gaiotto:2012xa}.

This speculation was largely confirmed several years later in a remarkable paper by Tomoyuki Arakawa \cite{Arakawa:2018egx}. As we will review later, Arakawa's construction involves a gluing operation, which we refer to as \emph{Feigin--Frenkel gluing}, that has no obvious four-dimensional counterpart but which can heuristically be thought of as gluing UV curves together along interior points as opposed to marked points. The crucial insight pointing towards the construction in question is that maximal punctures in class $\SS$ give rise to critical-level affine Kac-Moody subalgebras of the associated VOA, and the centres of these subalgebras (their \emph{Feigin-Frenkel centres}) should be identified amongst the different punctures. This is an affine version of a set of well-known Higgs branch chiral ring relations for class $\SS$ theories, which themselves play the key role in the construction of the Moore--Tachikawa holomorphic symplectic varieties \cite{Moore:2011ee} by Ginzburg--Kazhdan \cite{Ginzburg:2018} and Braverman--Finkelberg--Nakajima~\cite{Nakajima:2015txa, Braverman:2016wma, Braverman:2017ofm, Braverman:2016pwk}.

A curious feature of the construction in \cite{Arakawa:2018egx} is that it makes perfect sense for any choice of underlying simple Lie algebra $\gf$, whether simply laced or not. This state of affairs is to be contrasted with the physical reality of class $\SS$, where the basic story applies only for $\gf$ simply laced (because $\gf$ is the algebra labelling the parent six-dimensional $(2,0)$ theory), and the incorporation of non-simply-laced (gauge and global) symmetry algebras normally requires the extra structure of (discrete) outer automorphism twist lines being introduced on the UV curve~\cite{Vafa:1997mh, Tachikawa:2009, Tachikawa:2011}. This raises the question of how/whether to interpret in four-dimensional terms the VOAs constructed in \cite{Arakawa:2018egx} for the non-simply laced case. Perhaps more importantly, it leaves open the problem of defining in comparable generality the \emph{twisted chiral algebras of class $\SS$} corresponding to theories that are realised using outer automorphism twist lines.

The purpose of the present work is to address the latter problem. We propose a definition \emph{\`a la Arakawa} for a large family of twisted chiral algebras of class $\SS$. Our definition allows for the realisation of all trinion chiral algebras with one untwisted and two twisted punctures. This is still restricted, however, in that it does not allow for the case of the $\df_4$-type trinion theory with three twisted punctures, or more general theories whose realisation require such constituent fixtures. While we can prove a number of duality properties of these algebras, we encounter an obstruction in the proof of the most general expected dualities. This obstruction is analogous to one that appears already in \cite{Arakawa:2018egx} in the case of higher-genus chiral algebras.

The organisation of the rest of this paper is as follows. In Section \ref{sec:twisted_background} we provide an idiosyncratically selective review of class $\SS$ SCFTs and their generalisation to incorporate outer automorphism twist lines. In Section \ref{sec:chiral_background} we recall the associated vertex operator algebras of class $\SS$ theories and provide a fairly pedagogical overview of their construction in the simply laced setting by Arakawa. In Section \ref{sec:fixtures} we propose a definition of the associated VOA for the simplest twisted trinions and establish generalised $S$-duality for the twisted chiral algebras of class $\SS$ with respect to a subset of duality transformations. In Section \ref{sec:conc_remarks} we make some comments on possible extensions and future directions. We include a short review of spectral sequences, which play an important role in our proofs and those of \cite{Arakawa:2018egx} in Appendix \ref{app:spectral_sequences}. In Appendices \ref{sec:proof_closed_immersion} and \ref{app:proof_cyl}, we prove two important technical theorems used in our analysis of mixed Feigin--Frenkel gluing. In Appendix \ref{app:rewriting_characters} we give the details of a useful rewriting of class $\SS$ Schur indices, and in Appendix \ref{app:ds_free} we include some calculations regarding the Drinfel'd--Sokolov reduction from our twisted trinion chiral algebra to a free symplectic boson VOA.

\emph{Author's note: while this paper was in the late stages of preparation, the paper~\cite{Frenkel:2021bmx} appeared in which their Lemma 9.4 is a finite-dimensional cousin of our Theorem 4.1.}


\section{Review of (twisted) class \texorpdfstring{$\SS$}{S}}
\label{sec:twisted_background}

To begin, let us review some important features of four-dimensional SCFTs of class $\SS$, paying special attention to the generalisation to included twist lines. This is not meant to be a comprehensive overview; much more background on these theories can be found in,~\eg,~\cite{Gaiotto:2009we, Gaiotto:2009hg, Tachikawa:2011}.

Untwisted theories of class $\SS$ are specified by a triple $(\gf,\CC_{g,s},\{\Lambda_i\})$, where $\gf$ is the simply-laced Lie algebra labelling the underlying six-dimensional theory, the UV curve $\CC_{g,s}$ is a genus $g$ Riemann surface with $s$ marked points, upon which the $(2,0)$ is compactified, and the $\{\Lambda_i\}$ for $i=1,\ldots,s$ are a set of labels for the marked points, which amount to specifications of maximally supersymmetric, codimension-two defect operators inserted at those points in six dimensions. The only geometric data of $\CC_{g,s}$ that are relevant are its complex structure moduli, which get identified with marginal (gauge) couplings of the associated four-dimensional SCFT.

The six-dimensional $(2,0)$ theories enjoy discrete global symmetries corresponding to outer automorphisms of the underlying Lie algebra $\gf$---these can be identified with graph automorphisms of the corresponding Dynkin diagrams. One may then include twists/monodromies by these outer automorphism symmetries when compactifying; said differently, one generalises the compactification procedure to incorporate a local system of Dynkin diagrams on the UV curve. The data of this local system can be represented pictorially by including consistent networks of (potentially oriented) twist lines on the UV curve labelled by elements of the outer automorphism group. This generalisation of the class $\SS$ construction also introduces new types of punctures, called \emph{twisted punctures}, which correspond to codimension two defects that carry an outer automorphism monodromy. We will be specialising to the genus zero case in our construction and so for convenience we introduce the notation, $\CC_{0,m,n}$ to denote the Riemann sphere with $m$ untiwsted punctures and $2n$ twisted punctures. One cannot specify, unambiguously, a higher genus surface with twisted punctures by the genus and number of each type of puncture.

\subsection{\label{ssec:twisted_background/punctures}Untwisted and twisted punctures}

Marked points in class $\SS$ can be either \emph{regular} or \emph{irregular}, and in this work we will restrict our attention entirely to the regular case. Regular, untwisted punctures in a theory of type $\gf\in\{\af_n,\df_n,\ef_n\}$ are labelled by conjugacy classes of homomorphisms $\Lambda:\slf(2)\rightarrow \gf$. The defect operator corresponding to such a puncture enjoys some global symmetry that it then contributes to the total global flavour symmetry of four dimensional theory. The factor contributed takes the form $\ff_\Lambda=C_{\gf}\left(\Lambda(\suf(2))\right)$, \ie, the commutant of the image of $\suf (2)$ in $\gf$. As a consequence of the Jacobson--Morozov theorem~\cite{Collingwood:1993}, the marked points can therefore be equivalently labelled by nilpotent orbits in $\gf$.

Any simple Lie algebra has at least two especially important nilpotent orbits: the principal orbit (which is the unique, largest nilpotent orbit) and the trivial orbit. The trivial orbit corresponds to the trivial homomorphism $x\mapsto0, \forall x\in\suf(2)$, while an explicit expression for the principal orbit can be found in~\cite{Collingwood:1993}. Their respective commutants are $\mathfrak{f}=0$ for the principal orbit and $\mathfrak{f}=\gf$ for the trivial orbit. A puncture labelled by the trivial embedding is called a \textit{maximal} puncture, while a puncture labelled by the principal embedding contributes no flavour symmetry and is equivalent to having no puncture at all. With an eye towards the twisted case, we will nevertheless adopt the convention of referring to a hypothetical puncture labelled by the principal embedding as an \textit{empty} puncture.

Given a theory where a particular puncture is maximal, the corresponding theory with that puncture replaced by a sub-maximal puncture can be realised by partially Higgsing the $\gf$ flavour symmetry associated to that puncture in the four dimensional theory. A choice of nontrivial $\Lambda$ is realised by assigning an expectation value to the ``moment map'' Higgs branch operator in the conserved current multiplet that lies in the corresponding nilpotent orbit. Consequently, for many purposes it is sufficient to be able to construct theories associated to surfaces with maximal punctures, with other punctures structures being subsequently reached via partial Higgsing. This will be our approach in this paper.

\begin{table}[!t]
\centering
\begin{tabular}{c|c|c}
    Lie algebra ($\gf$) & ~~Outer automorphism twist~~ & Twisted algebra ($\gf_t$)\\
    \hline
    $\af_{2n}$ & $\IZ_2$ & $\ccf_n$ \\
    $\af_{2n-1}$ & $\IZ_2$ & $\bbf_n$ \\
    $\df_{n}$ & $\IZ_2$ &  $\ccf_{n-1} $ \\
    $\df_4$ & $\IZ_3$ & $\gf_2$ \\
    $\ef_6$ & $\IZ_2$ & $\ff_4$ \\
\end{tabular}
\caption{Simply laced Lie algebras $\gf$ and the corresponding twisted algebras $\gf_t$ for different choices of outer automorphism twist. The Lie algebras $\af_{2n}$ and $\df_{n}$ both give rise to Lie algebras of type $\ccf$ after a twist. The corresponding theories nevertheless have subtle differences---see,~\eg,~\cite{Chacaltana:2014nya,Beem:2020pry}. The algebra $\mathfrak{d}_4$ has an outer automorphism group of $S_3$, the symmetric group on three elements. Here we will only consider its abelian subgroups $\IZ_2$ and $\IZ_3$.}
\label{tab:outer_aut}%
\end{table}

Twisted punctures are also labelled by $\slf(2)$ homomorphisms/nilpotent orbits, but now of a different Lie algebra $\gf_t$. Let $\gf_t^\vee$ be the subalgebra of $\gf$ invariant under the outer-automorphism monodromy of the puncture. Then twisted punctures are labelled by nilpotent orbits in the simple Lie algebra $\gf_t$, the Langlands dual to $\gf_t^\vee$. The pairs of untwisted algebras and their twisted counterparts can be found in Table~\ref{tab:outer_aut}. The maximal twisted punctures carries $\gf_t$ flavour symmetry, while submaximal twisted punctures proceed analogously to the twisted case. Importantly, unlike in the untwisted case, the empty twisted puncture remains a nontrivial puncture (as it still carries monodromy on the UV curve; in terms of twist lines there is still a point where the relevant twist line ends, which distinguishes the point from a generic point on the UV curve).

Throughout this paper, $\gf$ and $\gf_u$ will generally refer to the type of the parent six-dimensional theory or, equivalently, to the flavour symmetry algebra contributed by a maximal untwisted puncture, while we denote by $\gf_t$ the flavour algebra of a maximal twisted puncture as in Table~\ref{tab:outer_aut}. More generally, subscripts $u$ and $t$ will differentiate between various objects associated to $\gf_u$ and $\gf_t$.

\subsection{\label{subsec:trinions_and_duality}Trinions, gluing, and duality}

Weak coupling limits of class $\SS$ theories correspond to degeneration limits of the UV curve, with maximally weakly coupled limits corresponding to pair-of-pants decompositions. The basic entities are the theories corresponding to pairs of pants/spheres with three maximal punctures; these are often referred to as \emph{trinion theories} or \emph{fixtures}. Even in the untwisted setting, there are myriad possible (combinations of) marked point labels that can appear on a single trinion; these have been extensively detailed by (various subsets of) Chacaltana, Distler, Tachikawa, and Trimm in \cite{Chacaltana:2012zy,Chacaltana:2010ks,Chacaltana:2011ze}. These (generically strongly coupled) theories can be combined by exactly marginal gauging to realise general class $\SS$ theories, but the equivalence of the SCFTs realised in this manner by inequivalent pair-of-pants decompositions constitutes a highly nontrivial set of quantum dualities often referred to as generalised $S$-dualities. The class $\SS$ dictionary identifies the (compactified) moduli space of the UV curve with the space of complexified gauge couplings of the associated theory. The various pants decompositions are, therefore, the various weak coupling limits of the associated theory---all of them related by generalised $S$-duality.

In the untwisted case, all pants decompositions can be reached by iterating two types of elementary move.

For the first type of move, consider the four puncture sphere $\CC_{0,4}$ with some labelling  $1,\dots,4$ of its punctures. We move to the singular point on the moduli space where the sphere decomposes into two trinions connected by a long tube, one of which has punctures labelled by $1,2$ and the other by $3,4$. In Figure~\ref{fig:4-move}, this is represented by the duality frame on the left. By moving to a different singular point, one can decompose $\CC_{0,4}$ into two connected trinion labelled by $1,3$ and $2,4$---shown on the right hand side of Figure \ref{fig:4-move}. We call such a move, moving between the various decompositions of $\CC_{0,4}$, a $4$-move. This can be lifted to a $4$-move acting on any collection of four punctures on a general $\CC_{g,s}$. First, one moves to a singular locus on the moduli space where $\CC_{g,s}$ decomposes into a $\CC_{0,4}$ that contains the punctures of interest and is connected by a long tube to $\CC_{g,s-4}$. Then one applies a $4$-move to swap between decompositions of $\CC_{0,4}$, before moving back out of the singular loci.

For surfaces $\CC_{g,s}$ with $g>0$, one must consider another type of move. For example, consider the one punctured torus $\CC_{1,1}$. The $S$ generator of the modular group acts by swapping the $a$ and $b$ cycles of the torus. This is a homeomorphism of the torus to itself that is not homotopic to any iterated $4$-move and so must be a generator, which we call the $ab$-move. For a surface, $\CC_{g,s}$, there is a natural generalisation of this move. In the untwisted case, the $ab$-move acts as the $S$-duality $\tau\mapsto -1/\tau$ on the complex gauge coupling associated to the handle.
\begin{figure}[h]
\ctikzfig{figs/4-move}
\caption{The $4$-move acting on four maximal untwisted punctures.}
\label{fig:4-move}
\end{figure}

In the presence of twist lines, trinions can be glued together along either twisted or untwisted maximal punctures. The possible trinions with twisted punctures\footnote{The case of $\df_4$ trinions with non-abelian twists has recently been explored in~\cite{Distler:2021cwz}, but we restrict our attention to the abelian case} have also been classified by Chacaltana, Distler, Tachikawa and Trimm~\cite{Chacaltana:2012ch,Chacaltana:2013oka}. In the terminology of \cite{Chacaltana:2012ch} we will be restricting our attention to the case of regular, typical, twisted punctures.

The presence of twist lines introduces new types of duality transformations, these can once again be split into two classes. On a surface $\CC_{0,m,n}$, there are three types of $4$-move that one can consider. The first acts on a collection of four untwisted punctures and swaps between the various pants decompositions of $\CC_{0,4}$ into untwisted trinions. Similarly, the second acts on two pairs of twisted punctures and swaps between the pants decomposistions of $\CC_{0,0,2}$. The third move is different, it acts on a pair of untwisted punctures and a pair of twisted punctures. the surface $\CC_{0,1,1}$ has two pants decompositions, described in Figure~\ref{fig:ut-move}. The third type of $4$-move swaps between these two frames. Unlike the previous moves, the type of gluing in each frame is different. On the left hand side, one gauges the diagonal $\gf_u$ action whereas one gauges a diagonal $\gf_t$ action on the right. Iterating these $4$-moves allows on to traverse a large strand of the duality web of $\CC_{0,m,n}$. For the non-abelian case of $\df_4$ there are more intricate dualities, explored in~\cite{Distler:2021cwz}, which we will not discuss here.
\begin{figure}[h]
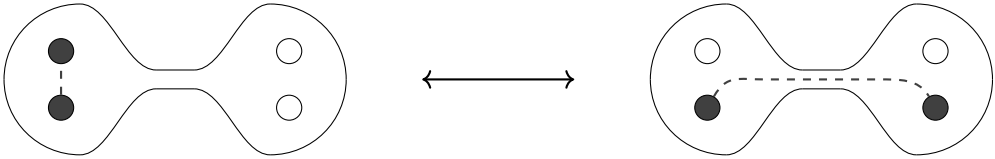

\ctikzfig{figs/ut-move}
\caption{The two degeneration limits of $\CC_{0,2,1}$. Note that the gluing on the left is untwisted but twisted on the right. We mark untwisted punctures by unfilled circles and twisted punctures by filled circles. We connect the twisted punctures by dashed twist-lines for clarity.}
\label{fig:ut-move}
\end{figure}

For twisted surfaces at higher genus, the $ab$-move has another variation. Take, for instance, the $\df_n$ theory on a maximal three punctured sphere. We can construct a gauge theory by self-gluing two punctures on the sphere together, which corresponds to gauging the diagonal action of $Spin(2n)$ on the two theories. We may also gauge with respect to a diagonal action that has been twisted by an outer automorphism,~\ie,~$Spin(2n)$ acts as $g\otimes\sigma(g)$ for (a lift of) an outer automorphism $\sigma$. Pictorially, we represent this by a cylinder, with a twist line around it, connecting the punctures. This results in a genus one surface with a twist line running around the $a$-cycle of the torus.

Now, consider, a $\df_n$ theory with one maximal puncture and two maximal, $\IZ_2$, twisted punctures. Again, we can construct a gauge theory, by self-gluing the two twisted punctures together. This time, we gauge with respect to the diagonal action. This results in a genus one surface with a twist line running along the $b$-cycle. These two theories are known to be $S$-dual, and the associated UV curves are related by the action of the $ab$-move. The two degeneration limits are shown in Figure~\ref{fig:ab-move}. Unlike the untwisted case, the $ab$-move changes the gauge group as well as moving to a weakly coupled point.
\begin{figure}[h]
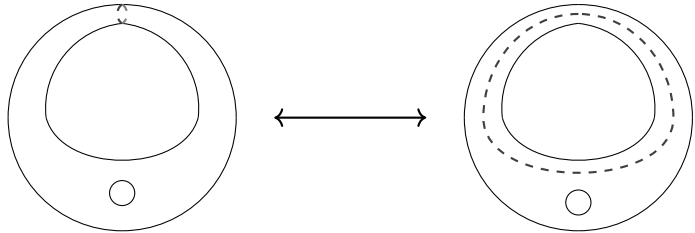

\ctikzfig{figs/ab-move}
\caption{The $ab$-move swapping between two decompositions of the once punctured torus with a twist line.}
\label{fig:ab-move}
\end{figure}
We are primarily interested in observables that do not depend on exactly marginal deformations, such as gauge couplings. Such observables are blind to continuous paths in the moduli space and only see the various degeneration points. In the untwisted case, the $ab$-move is trivial for such observables and the action of the $4$-moves can be phrased in terms of the action of permutations on punctures. The $4$-moves, in the untwisted case, act by permuting the flavour symmetries associated to the punctures, which shows up as permutations on the flavour fugacities (see the next section) of the index or permuting the moment-maps of the Higgs branch and the associated vertex algebra. It is, therefore, useful to phrase the action of $S$-duality on these observables in terms of the action of a permutation group. For a surface $\CC_{g,n}$ the $4$-moves generate an action of $S_n$.

The twisted case, however, is not so simple. The first and second types of $4$-move, acting on four punctures of the same type, can be rephrased as the action of $S_4$ on each collection of untwisted or twisted punctures. The third type of four move,~\ie\ the move of Figure~\ref{fig:ut-move}, cannot be rephrased in this way. The two duality frames are not simply related by the action of a permutation on their moment maps or flavour fugacities. Instead one must check the non-trivial agreement between both types of BRST gauging on each side. Matters are further complicated by the $ab$-move of Figure~\ref{fig:ab-move}. Like the $4$-move of Figure~\ref{fig:ut-move}, both sides of the duality feature a different type of BRST gauging, and again one must check the agreement between both gluings.

In Section~\ref{ssec:duality}, after we have constructed the twisted vertex algebras, we will construct the vertex algebraic version of these moves. We are unable to establish invariance across the full duality web, but we do prove invariance under the various types of $4$-moves we have defined. In the genus zero case, to which we are mostly specialising, this does recover the full duality web (under some reasonable assumptions about the properties of gauging).

\subsection{\label{ssec:twisted_background/index}The superconformal index}

The superconformal index of a four dimensional $\NN=2$ SCFT is defined as a graded trace over the radially quantised Hilbert space of the theory~\cite{Kinney:2005ej,Rastelli:2014jja},
\begin{equation}\label{eq:ind2}
    \mathcal{I}(p,q,t,\mathbf{x}) = \Tr\,(-1)^Fe^{-\beta(E-2j_2-2R+r)}p^{\frac{E+2j_1-2R-r}{2}}q^{\frac{E-2j_1-2R-r}{2}}t^{R+r}\prod_{i=1}^{\mathrm{rk} \mathfrak{g}_F}x_i^{T_i}~,
\end{equation}
where $p,q,t$ are superconformal fugacities and $\mathbf{x}=(x_1,x_2,\dots,x_{\rk\,\gf_F})$ are further fugacities conjugate to generators, $T_i$, of a Cartan subalgebra of the flavour symmetry $\gf_F$. A standard argument implies that the index can only receive a nonvanishing contribution from states obeying
\begin{equation}\label{eq:index_contribute}
    E-2j_2-2R+r=0~,
\end{equation}
and so is actually independent of $\beta$. As written, the index counts minimally supersymmetric states. There are a number of limits with enhanced supersymmetry, and we will be specifically interested in the Schur limit~\cite{Gadde:2011uv},
\begin{equation}\label{eq:limschur}
    q\rightarrow t~,\quad p \text{ arbitrary}~.
\end{equation}
In this limit, superconformal representation theory ensures that the index is in fact independent of $p$ and takes the schematic form
\begin{equation}\label{eq:indschur}
    \mathcal{I}(q) = \Tr(-1)^Fq^{E-R}~,
\end{equation}
with the regularising $\beta$-dependent term suppressed. The Schur index only counts states whose quantum numbers satisfy the further linear relations
\begin{equation}\label{eq:schurindop}
    E-j_1-j_2=2R~,\qquad r = j_1-j_2~.
\end{equation}
This index is a key ingredient in the SCFT/VOA correspondence \cite{Beem:2013sza}. As a graded vector space, the associated VOA can be identified with the space of Schur operators, \emph{i.e.}, the local operators satisfying the above constraints, with the holomorphic dimension of a given state being given by
\begin{equation}\label{eq:holo_dim_schur}
    h = \frac{E + (j_1+j_2)}{2} = E - R~,
\end{equation}
so the Schur index is precisely the vacuum (super)character of the associated vertex algebra~\cite{Beem:2013sza}.

For theories of class $\mathcal{S}$, the form of the full superconformal index has been shown to follow from duality properties \cite{Gaiotto:2012xa}, though for our purposes here we will restrict attention to the Schur limit. The index takes the form of a sum over highest weights in the set of integrable dominant weight representations $P^+$ (\ie, a sum over finite-dimensional $\gf$ representations), weighted by some ``structure constants'' $C_{\lambda\lambda\lambda}$, which are functions of $q$. For a UV curve $\CC_{g,s}$, which can be realised by gluing $2g-2+s$ trinions, the index takes the form~\cite{Aganagic:2005, Gadde:20102d, Gadde:2011uv, Alday:2013kda}
\begin{equation}
\label{eq:indfull}
	\II(q;\mathbf{x}_1,\mathbf{x}_2,\dots,\mathbf{x}_s) = \sum_{\lambda\in P^+}(C_{\lambda\lambda\lambda}(q))^{2g-2+s}\prod_{i=1}^s \KK(q;\mathbf{x}_i)\chi^\lambda(\mathbf{x}_i)~,
\end{equation}
where $\mathbf{x}_i$ are the flavour fugacities arising from the flavour symmetries of the $i\,$th puncture. The $\chi^\lambda$ are Schur polynomials, \ie, characters of the finite-dimensional $\gf$ representation with highest weight $\lambda$. The $\KK$ factors are defined by
\begin{equation}
	\KK(\mathbf{x}) = \sqrt{\II^V(\mathbf{q;x})}~,
\end{equation}
where $\II^V$ is the Schur index for a free vector multiplet in the adjoint representation of $\gf$.

To introduce non-maximal punctures at the level of the index amounts to a fugacity replacement (with a subtraction of certain divergences) that is detailed in~\cite{Gadde:2011uv,Rastelli:2014jja,Beem:2014rza}. Since reducing to an empty puncture amounts to removing the puncture all together, one has the relation
\begin{equation}
	C_{\lambda\lambda\lambda}(q) = \frac{1}{\KK(\times) \chi^\lambda(\times)}~,
\end{equation}
where $\times$ represents the regularised fugacity replacement for the empty puncture, which is purely a function of $q$.

We can recast the above in more representation theoretic terms by using Appendix A of~\cite{Lemos:2014lua}, which contains a useful dictionary. Let $\RR$ denote the set of roots of $\gf$, then
\begin{equation}
	\KK(\mathbf{x}) = \frac{1}{(q;q)_\infty^{\rk~\gf}}\prod_{\alpha\in\RR} \frac{1}{(qe^{ \alpha}(\mathbf{x});q)_\infty}~,
\end{equation}
where $e^{\alpha}:T\rightarrow \IC^\times$ are the roots of the simply connected Lie group, $G$, with $Lie(G)=\gf$ and maximal torus $T$. For the empty puncture, we have
\begin{equation}
	\KK(\times) = \prod_{i=1}^{\rk\,\gf} \frac{1}{(q^{d_i},q)_{\infty}}~,
\end{equation}
which one can identify with the character of the Feigin--Frenkel centre, $\zf(\gf)$. Similarly, the summand, $\KK(q;\mathbf{x})\chi^\lambda(\mathbf{x})$ can be identified with the character of the critical level Weyl module of heighest weight $\lambda$, $\IV_{\lambda}$ as
\begin{equation}
	\KK(q;\mathbf{x})\chi^\lambda(\mathbf{x}) = \Tr_{\IV_\lambda}(q^{D} \mathbf{x}^T)~,
\end{equation}
where $D$ is the quasiconformal weight. For clarity's sake we shall often suppress the flavour fugacities and use the notation $\ch~ V = \Tr_V(q^D)$. We will provide more details on the Feigin--Frenkel centre and Weyl modules in Section~\ref{sec:chiral_background}, but for now we will continue with our rewriting. At the critical level, the Weyl modules are not irreducible---instead they have a unique simple quotient $\IL_\lambda$. The characters of the Weyl module and their simple quotients are related by
\begin{equation}
    \ch~ \IL_\lambda = \frac{ \ch~\IV_{\lambda}}{\ch~\zf_\lambda} = \frac{\KK(q;\mathbf{x}) \chi^\lambda(\mathbf{x})}{\KK(\times)\chi^\lambda(\times)}~,
\end{equation}
where $\zf_\lambda$ is the Drinfel'd Sokolov reduction of $\IV_\lambda$ and will also be introduced in Section~\ref{sec:chiral_background}. The index of $\CC_{g,s}$ can then be completely rewritten in terms of characters of Weyl modules as
\begin{equation}
	\II(q;\mathbf{x_1}\mathbf{x_2},\dots,\mathbf{x}_s) = \sum_{\lambda\in P^+} (\ch~\zf_\lambda)^{2-2g} \prod_{i=1}^{s} \ch~\IL_\lambda = \sum_{\lambda\in P^+}(\ch~\zf_\lambda)^2 \prod_{i=1}^s \frac{\ch~\IV_\lambda}{\ch~\zf_\lambda}~.
\end{equation}

The Macdonald limit (and so Schur limit by further specialisation) of the superconformal index in the twisted setting was studied for type $\df_n$ theories in~\cite{Lemos:2012ph}. According to the analysis there, the presence of an single twisted puncture restricts the sum over $P^+$ to representations that are invariant under the action of the outer automorphism twist, which is equivalent to summing over the set of highest weight representations of the twisted algebra $\gf_t$ (which we denote by $P^+_t$). The overall structure constants are also modified, though they are expressed in terms of the same building blocks. For a surface of genus $g$ with $s$ untwisted punctures and $2s_t$ $\mathbb{Z}_2$-twisted punctures such that no twist lines wrap any cycles, the index then takes the form
\begin{equation}\label{eq:twistind}
    \mathcal{I}(q;\mathbf{x}_1,\dots,\mathbf{x}_s,\mathbf{y}_1,\dots,\mathbf{y}_{2s_t}) =
    \sum_{\lambda^\prime\in P^+_t} \frac{\prod_{i=1}^s\mathcal{K}_{\gf_u}(q;\mathbf{x}_i)\chi^{\lambda=\lambda^\prime}_{\gf_u}(\mathbf{x}_i)\prod_{j=1}^{2s_t}\mathcal{K}_{\gf_t}(q;\mathbf{y}_j)\chi^{ \lambda^\prime}_{\gf_t}(\mathbf{y}_j)}{{(\mathcal{K}_{\gf_u}(\times)\chi^{\lambda=\lambda^\prime}_{\gf_u}(\times))}^{2g-2+s_u+2s_t}}~.
\end{equation}
Here $\mathbf{x}_i$ are fugacities for untwisted punctures and $\mathbf{y}_j$ are fugacities for the twisted ones. We have adopted the notation of~\cite{Lemos:2012ph}, where $\lambda=\lambda'$ means that the highest weight $\lambda$ of $\gf_u$ is the one that restricts to $\lambda'$ when passing to outer automorphism invariants. Later on we shall introduce our own notation for these weights and elaborate on this restriction. For the non-abelian twist of $\df_4$, a TQFT form of the index was proposed in~\cite{Distler:2021cwz}. This agrees with the heuristics we have so far observed---namely the sum is restricted to the integral dominant weights of $\df_4$ that are invariant under the action of the twists that are present.

One can also rewrite the twisted index in terms of Weyl modules as
\begin{equation}
	\mathcal{I}(q;\mathbf{x}_1,\dots,\mathbf{x}_s,\mathbf{y}_1,\dots,\mathbf{y}_{2s_t}) = \sum_{\lambda'\in P_t^+} (\ch~\zf_\lambda^u)^{2-2g} \bigg(\prod_{i=1}^s \IL_{\lambda=\lambda'}^u\bigg)\bigg(\prod_{j=1}^{2s_t} \IL_{\lambda'}^t\bigg)~,
\end{equation}
where we use the superscripts $u,t$ to distinguish between the (simple quotients of) Weyl modules over each algebra.

It may be worth remarking that all structure constants are of ``untwisted type'' in spite of the $\KK$ factors and Schur functions of twisted type. This means that when closing a twisted puncture, the specialised puncture factor in the numerator won't cancel against a corresponding factor in the denominator. This is the index-level incarnation of the fact that minimal twisted punctures are nontrivial and cannot be completely erased from the UV curve.

\subsection{\label{ssec:twisted_background/HL}Residual gauge symmetry and derived structures}

An interesting trichotomy in four-dimensional $\mathcal{N}=2$ theories is between those theories with ``pure'' Higgs branches (there is no unbroken continuous gauge symmetry at generic points), those with ``enhanced'' Higgs branches (generic Higgs vacua still support massless Abelian vector multiplets), and those with interacting Higgs branches (where generic Higgs vacua support a nontrivial interacting SCFT). To the best of our knowledge, in the presence of regular punctures only pure and enhanced Higgs branches can arise, whereas interacting Higgs branches are common in Argyres--Douglas type theories that can be engineered with irregular punctures.

In the untwisted class $\SS$ setting, there is a simple characterisation of under what circumstances a theory will have an enhanced rather than a pure Higgs branch: theories of type $\mathfrak{a_n}$ ,$\mathfrak{d_n}$, or $\mathfrak{e_n}$ with genus $g$ have generic residual gauge symmetry with rank equal $n\times g$. This admits a straightforward heuristic explanation: maximal Higgsing will involve first closing all punctures and then moving to the maximal Higgs branch of the unpunctured theory. For the unpunctured theory, one can adopt a six-dimensional perspective and first consider moving to a point on the tensor branch of the $(2,0)$ where the effective theory is that of $n$ free tensor multiplets, and subsequently compactifying.

For twisted theories, the situation is a bit more complicated. To illustrate, we consider the $\mathfrak{d_2}$ theory. Due to the accidental isomorphism $\df_2\cong\af_1\times \af_1$, we can recast theories of type $\df_2$ into $\af_1$ theories. The twisted subalgebra is just $\af_1$, and a full twisted puncture becomes a conventional (untwisted) puncture of the $\af_1$ theory. In particular, the $Spin(4)$ gauge theory with $N_f=4$ flavours can be engineered via compactification of the $\df_2$ theory on a sphere two maximal and two minimal twisted punctures. Equivalently, it can be described as the $\af_1$ theory compactified on a genus-one surface with two punctures. Thus, the $\af_1$ surface is a double cover of the $\df_2$ surface, treating the twist lines as branch cuts. The $\df_2$ theory has residual gauge symmetry at a generic point of the Higgs branch---despite being superficially of genus zero. We will see that this phenomenon is characteristic of the twisted theories.

Somewhat more generally, for $SO(2n)$ superconformal QCD---realised in type $\df_n$ using a sphere with four twisted punctures (two minimal and two maximal)---a generic point of the Higgs branch has precisely a residual $U(1)$ gauge symmetry~\cite{Argyres:1996hc}. Though for $n\geqslant 3$ there is not an immediate relation to a higher genus class $\mathcal{S}$ theory like the $\df_2$ case, we can nevertheless observe that the presence of outer automorphism twist lines means that there is a natural covering space for the UV curve (thought of as the space where the corresponding local system of Dynkin diagrams has no monodromy), and in this example the covering space is genus one. We believe that in the general case of twisted class $\mathcal{S}$ theories, it is precisely the value of the genus (zero or nonzero) of this covering space that controls whether a theory in question has a pure or an enhanced Higgs branch. More generally, we take away the lesson that in the presence of twisted punctures, genus zero theories should perhaps nevertheless be thought of as more analgous to higher-genus untwisted theories than genus-zero untwisted theories.

From an algebraic perspective, we can think of the presence of an enhanced Higgs branch as a shadow of an underlying \emph{derived structure}. In particular, the Hall--Littlewood chiral ring for higher genus (untwisted) class $\mathcal{S}$ theories is known to include additional fermionic sectors beyond the Higgs branch chiral ring, which can be understood as a consequence of the Higgs branches of these theories being more naturally understood as derived symplectic varieties \cite{CJBDBS:HL}. We propose that for twisted class $\mathcal{S}$ theories as well there will be a correspondence between the presence of an enhanced Higgs branch and higher-fermion-number (derived) elements of the Hall--Littlewood chiral ring.

For our purposes in this paper, this nontrivial derived structure is most relevant as an indicator that the BRST cohomology that encodes the associated vertex operator algebra of a gauge theory (see the review in the next section) is not concentrated in degree zero. Consequently, we expect a construction in the twisted setting to be able to mimic the corresponding untwisted genus-zero story only so long as there are not too many twisted punctures. This expectation will be borne out in the detailed analysis of Section \ref{ssec:gluing_mixed}.


\section{Chiral algebras of class \texorpdfstring{$\SS$}{S} after Arakawa}
\label{sec:chiral_background}

Four-dimensional $\NN=2$ SCFTs give rise to their associated vertex operator algebras by way of the cohomological construction of \cite{Beem:2013sza}. In the case of SCFTs of class $\SS$, the associated vertex algebras are called chiral algebras of class $\SS$ and were studied in \cite{Beem:2014rza}. A systematic and rigorous definition and analysis of these vertex algebras in the untwisted case was developed in \cite{Arakawa:2018egx}. In this section we first present several imporant constructions that are employed in definition of chiral algebras of class $\SS$ in \cite{Arakawa:2018egx}, before ultimately recalling the definition from that work.

\subsection{Critical level current algebras and the Kazhdan--Lusztig category}
\label{subsec:AKM_and_KL}

If an SCFT has a (continuous) global flavour symmetry with Lie algebra $\ff$, then the superconformal primaries in the corresponding conserved current multiplets are moment map operators in the Higgs branch chiral ring, which are Schur operators. Their images in the associated vertex algebra are affine currents that generate an $\hat\ff$ affine Kac-Moody (AKM) vertex algebra at level
\begin{equation}
    k_{2d} = -\frac{1}{2}k_{4d}~,
\end{equation}
where $k_{4d}$ is the (positive for unitary theories) four dimensional flavour central charge. Theories of class $\SS$ have flavour symmetries arising from marked points, and for a maximal puncture the four-dimensional level of the $\gf_u$ or $\gf_t$ flavour symmetry is $k_{4d}=2h^\vee_{\gf_{u,t}}$. The corresponding $\hat\gf_{u,t}$ current subalgebras in the associated VOA are therefore at the \emph{critical level} $k_{2d}=\kappa_c=-h^\vee_{\gf_{u,t}}$. This is an important value of the level where there is a large null ideal.\footnote{In fact, the null ideal is a commutative vertex operator subalgebra (centre) known as the \emph{Feigin--Frenkel centre}, to which we will return momentarily.}
 
Null states in these current subalgebras need not be null with respect to the full VOA. To describe the situation, it is useful to introduce the universal affine vertex algebra $V^{k_{2d}}(\gf)$, where \emph{none} of the null states are set to zero,
\begin{equation}\label{eq:univoa}
    V^{k}(\gf) \coloneqq \mathrm{Ind}_{\gf[\![t]\!]\oplus\IC\id}^{\hat \gf_k}\IC= U(\hat \gf_k) \otimes_{U(\gf[\![t]\!]\oplus\IC K)}\IC~.
\end{equation}
Here $\gf[\![t]\!]$ acts trivially on $\IC$ and $K$ is the central element, which acts as the constant level $k\in\IC$. This can be thought of as the specialisation of the generic-level current algebra to any particular level $k$ without taking any further quotient if this turns out to be reducible. (By contrast, taking the maximal quotient gives the simple affine vertex algebra $V_{k_{2d}}(\gf)$, whose underlying graded vector space can jump as a function of $k$.) $V^k(\gf)$ is generically conformal with stress tensor given by the Sugawara construction. At the critical level, this construction fails and the vertex algebra is no longer conformal. Rather it is \emph{quasiconformal} and still satisfies many nice properties one expects from a conformal vertex algebra \cite{Frenkel:2004}.

Chiral algebras of class $\SS$ for theories with at least one maximal marked point will have the structure of modules over (potentially several copies of) $V^{\kappa_c}(\gf)$, and this turns out to be a useful organising structure. The Kazhdan--Lusztig category (at the critical level), $\KL$, is the full subcategory of modules of the universal affine vertex algebra $V^{\kappa_c}(\gf)$ whose objects are $\IZ$-graded $V^{\kappa_c}(\gf)$ modules that have a locally finite action of $\gf$,\footnote{A module $M$ over $V^{k}(\gf)$ is a smooth $\hat{\gf}_{k}$ modules,~\ie, for each $x\in\gf$ and $m\in M$, $(xt^n)m =0$ for sufficiently large $n\in\IN$.
}
\begin{equation}
	\forall m\in M~,\quad \dim\left(U(\gf)m\right)<\infty~.
\end{equation}
This definition allows a full $\IZ$-grading, so includes modules with conformal weight unbounded below. For a VOA associated to a physical four-dimensional theory thought of as an element of $\KL$, this would be pathological. We take $\KL^{ord}$ to denote the full subcategory of $\KL$ admitting a $\IZ_{\geqslant0}$-grading such that each homogeneous subspace is finite dimensional---these can be thought of as positive energy representations.

An important example of an object in $\KL^{ord}$ are the Weyl modules $\IV_\lambda$, defined by
\begin{equation}
	\mathbb{V}_\lambda \coloneqq \mathrm{Ind}_{\gf [\![t]\!]\oplus \IC K}^{\hat{\gf}_{\kappa_c}} V_\lambda~,
\end{equation}
where $V_\lambda$ is the finite-dimensional irreducible representation of $\gf$ with highest (integral, dominant) weight $\lambda$. The universal affine vertex algebra is therefore the vacuum module, $\IV_0$. The Weyl modules have a unique simple quotient, $\IL_{\lambda}$.

A vertex algebra object in $\KL$ is a vertex algebra $V$ with a vertex algebra morphism $\mu_V: V^{\kappa_c}(\gf)\rightarrow V$, such that $V$ is a direct limit of objects in $\KL^{ord}$. The chiral algebras of class $\SS$ will be vertex algebra objects in $\KL$. Given two vertex algebra objects, $V_1,V_2\in\KL$, the product $V_1\otimes V_2$ is a vertex algebra object in $\KL_{-\kappa_g}$, the Kazhdan--Lusztig category at level $-\kappa_g=-2h^\vee$.

\subsection{The Feigin--Frenkel centre}
\label{subsec:ff_centre}

The critical-level universal affine vertex algebra, $V^{\kappa_c}(\gf)$ has a large centre $\zf(\gf)$ known as the Feigin--Frenkel centre. This centre plays a starring role in Arakawa's construction of chiral algebras of class $\SS$, though its properties may not be common knowledge to all four-dimensional physicists.

The Feigin--Frenkel centre is a commutative vertex algebra with strong generators $\{P_i(z)\}$ of (quasi)conformal weight $\Delta_i = d_i$, where $d_i$ are the degrees of the fundamental invariants of $\gf$~\cite{Frenkel:2007} (see Table \ref{tab:FF-twists}). For many purposes, it is useful to work in terms of the universal enveloping algebra of the Fourier modes of the generators of $\zf(\gf)$, which is denoted by $\ZZ$. This is a (commutative) polynomial ring on infinitely many variables,
\begin{equation}
	\ZZ \coloneqq \IC[P_{i(n)}~|~i=1,\dots,\mathrm{rk}\,\gf,\, n\in\IZ]~,
\end{equation}
where we are adopting the mathematician's conventions for the grading of Fourier modes,
\begin{equation}
	P_i(z) = \sum_{n\in\IZ}P_{i,(n)}z^{-n-1}~.
\end{equation}
By contrast, the physicist's gradings are shifted by the (quasi)conformal weights minus one,
\begin{equation}
	P_i(z) = \sum_{n\in\IZ}P_{i,n}z^{-n-\Delta_i}~.
\end{equation}
Two important subalgebras of $\ZZ$ are as follows,
\begin{equation}
	\ZZ_{<0}=\{P_{i,n}\in\ZZ ~|~ n<0\}~, \quad \ZZ_{(<0)} = \{P_{i,(n)}\in\ZZ ~|~ n<0\}~.
\end{equation}
The second of these can be identified with $\zf(\gf)$ itself with respect to the operation of normally ordered product (which is commutative and associative since $\zf(\gf)$ is a commutative vertex algebra), while the first includes a subset of modes that encode the singular terms subleading to the zero mode in the OPE action of the $P_i(z)$.

A positive-energy representation of $\zf(\gf)$ is a $\ZZ$ module $M$ with grading $M=\bigoplus_{d\in p+\IN} M_d$ for some $p\in\IC$, and any object in $\KL^{ord}$ is in particular a positive-energy representation of $\zf(\gf)$. Henceforth, $\ZZ$-Mod will denote the category of $\ZZ$ modules that are positive-energy representations of $\zf(\gf)$.

Important examples of such objects are the Weyl modules $\mathbb{V}_\lambda$, which is an object in $\ZZ$-Mod, but is simply annihilated by all $P_{i,n}$ with $n >0$. Let $\chi_\lambda:\ZZ\rightarrow \IC$ be the character defined by $P_i V_{\lambda} = \chi_\lambda(P_{i,0}) V_{\lambda}$, where $P_i$ is the $i$th fundamental invariant of $\gf$ and $\chi_\lambda(P_{i,n})=0$ otherwise. The zero modes, $P_{i,0}$, act on $\IV_{\lambda}$ by the character $\chi_\lambda$.
The potential for interesting structure then lies with its behaviour as a module over $\ZZ_{<0}$. Defining the quotient
\begin{equation}\label{eq:ff_quot}
	\zf_{\lambda} \coloneqq \ZZ_{<0}/\mathrm{Ann_{\ZZ_{<0}}}(\IV_{\lambda})~,
\end{equation}
the action of $\ZZ_{<0}$ on $\IV_{\lambda}$ factors through $\zf_\lambda$, which manifestly acts freely.

One can ``glue together'' Weyl modules using the Feigin--Frenkel centre by taking tensor products over $\zf_\lambda$. For example, one can define the bimodule $\IV_{\lambda}\otimes_{\zf_\lambda} \IV_{\lambda^*}$, where $\lambda^*$ is the dual weight to $\lambda$ and where the action of $\zf_\lambda$ on $\IV_{\lambda^*}$ is twisted by the Cartan involution $\tau:\mathfrak{z}\rightarrow\mathfrak{z}$~\cite{frenkel2004}.\footnote{This involution corresponding to $\alpha\mapsto - w_0(\alpha)$, where $w_0$ is the longest element of the Weyl group. The morphism $\tau$ sends a finite dimensional, highest-weight representation $V_\lambda$ of $\gf$ to its dual $V_\lambda^* = Hom_{\gf}(V_{\lambda},\IC)$. For simple Lie algebras, the Weyl group elements $w_0$ can be found in~\cite{Bourbaki:2002}. For non-simply laced Lie algebras and the algebras $\ef_7$ and $\ef_8$, $w_0=-1$ and the involution is trivial. For $\af_n$ and $\df_{2n+1}$ the involution, $\tau$, is precisely the $\IZ_2$ outer automorphism twist. For $\df_{2n}$, the involution is again trivial.} This module then has two critical-level actions of $\hat{\gf}$, one on each Weyl module. But the action of the Feigin--Frenkel centres are identified, so this is a module over $\IV_{2,0}\colonequals (V^{\kappa_c}(\gf))^{\otimes_{\zf(\gf)}2}$.

This generalises to the case of $s$ copies of the critical level current algebra,
\begin{equation}
	\IV_{\lambda,s} = \IV_{\lambda}\otimes_{\zf_\lambda}\IV_{\lambda^*}\otimes_{\zf_\lambda}\dots\otimes_{\zf_\lambda}\IV_{\lambda^{(*)}}~,
\end{equation}
where the final module is $\IV_{\lambda}$ if $s$ is odd, or $\IV_{\lambda^*}$ if $s$ is even. This is a module over $\IV_{s,0}\colonequals	 (V^{\kappa_c}(\gf))^{\otimes_{\zf(\gf)}s}$. We denote the contragredient dual module by $D(\IV_{\lambda,s})$. 

A key property of Weyl modules is that they are projective over $t^{-1}\gf[t^{-1}]$ and their contragredient duals are injective over $t\gf[t]$. Similarly $\IV_{\lambda,s}$ and its contragredient dual are projective/injective over each $t^{-1}\gf[t^{-1}]$ and $t\gf[t]$ action, respectively. These modules play an important role in the analysis of \cite{Arakawa:2018egx}, and this will continue to be the case for twisted chiral algebras of class $\SS$.

\subsection{BRST and semi-infinite cohomology}
\label{subsec:BRST_semiinfinite}

Exactly marginal gauging of flavour symmetries in four dimensions is realised at the level of the associated VOA by a BRST construction \cite{Beem:2013sza}. In particular, the associated VOA of an $\ff$-valued vector multiplet is the subalgebra of an adjoint-valued $(b,c)$ ghost system that is annihilated by the zero mode of the $b$ ghosts---these are sometimes called symplectic fermions, or the ``small algebra'' of the ghost system. The vertex algebra of a gauge theory is realised in terms of the vertex algebra $\VV$ of the original ungauged theory and these symplectic fermions. One constructs a BRST current,
\begin{equation}
	J_{\rm BRST}(z) = (J_ic^i)(z) + \frac{1}{2}f_{ij}\,^k(c^ic^jb_k)(z)~,
\end{equation}
where the $J_i$ generate an $\hat{\ff}$ affine current subalgebra of $\VV$ and $f_{ij}\,^{k}$ are the structure constants of $\ff$. Precisely when that current subalgebra has level $\kappa_{\ff}=-2h^\vee_\ff$, the zero mode squares to zero so defines a differential on $C^\bullet=\VV\otimes \langle b,c \rangle$. The grading on $C^\bullet$ is by ghost number, which in many cases can be identified with $U(1)_r$ charge.

Restricting to the kernel of $b_{0}$ (\ie, dropping the zero-modes of the $c$ ghosts) and taking $\ff$ invariants defines the relative subcomplex $C_{rel}^\bullet$,
\begin{equation}
	C_{rel}^\bullet \coloneqq \{v\in \VV\otimes \langle b,c \rangle~|~ b^i_{(0)}v=0 \text{ for}\, i=1,\dots,\dim\, \ff\}^{\ff}~.
\end{equation}
The BRST differential descends to the relative subcomplex and its cohomology defines the relative BRST cohomology. This gives the vertex algebra of the gauged four dimensional theory. Importantly, one cannot neglect the higher cohomology groups in this construction.\footnote{Indeed, in examples such as $\NN=4$ super Yang-Mills theory and the genus-two theory of type $\af_1$ where explicit computations can be performed, there is cohomology outside of degree zero. More generally we expect cohomology outside degree zero at least for theories with enhanced Higgs branches.}

The above construction can be realised as an example of the semi-infinite cohomology functor of Voronov \cite{Voronov:1993aw}; this more abstract formalism is somewhat useful in the analysis of chiral algebras of class $\SS$. The state space of the $\ff$-valued ghost system is the same thing as the space of \emph{semi-infinite forms} on $\ff$ and denoted $\Li{\bullet}\ff$. For an $\hat\ff$-module $M$ at level $\kappa_\ff$, the \emph{Feigin standard complex} is
\begin{equation}
	C^{\frac{\infty}{2}+\bullet}\left(\hat\ff,M\right)=\left(\Li{\bullet}\hat\ff\otimes M,d\right)~,
\end{equation}
where $d=J_{{\rm BRST},(0)}$. The semi-infinite cohomology of $\hat\ff$ with coefficients in $M$ is just the cohomology of this complex,
\begin{equation}
\label{eq:feig_semi}
    \Hi{\bullet}\left(\hat\ff,M\right) = H^{\bullet}\left(C^{\frac{\infty}{2}}\left(\hat\ff,M\right),d\right)~.
\end{equation}
Voronov identifies this cohomology as a two-sided derived functor of the functor of semi-variants, and proves an important vanishing theorem that characterises situations where semi-infinite cohomology is concentrated in degree zero.

\smallskip

\begin{thm}\label{thm:vor_vanish} {\rm (Theorem 2.1 of~\cite{Voronov:1993aw})} Let $M\in\mathcal{O}$ be a semijective module,~\ie, it is injective as a $\ff_+$ module and projective as a $\ff_-$ module. Then
\begin{equation}
    \Hi{p}(\ff,M) = \begin{cases}
    \makebox[.7cm][l]{$M^{\ff_+}_{\ff_-}$}\quad p=0~,\\
    \makebox[.7cm][l]{$0$}\quad p\neq0~,
\end{cases}
\end{equation}
where $M^{\ff_+}_{\ff_-}$ is the projection of $\ff_+$-invariants onto the $\ff_-$-coinvariants. More concretely,
\begin{equation}
    M^{\ff_+}_{\ff_-} = \{ m\in M\,|\,\ff_+m=0\} {\bigg /} \{m\in M\,|\, \ff_+m=0\quad \text{and} \quad m=xm^\prime \,\,\,\text{for some } x\in \ff_-\}~.
\end{equation}
\end{thm}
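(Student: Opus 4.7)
The plan is to exploit the polarization $\hat{\ff} = \ff_{-}\oplus\ff_{+}$ used to define the semi-infinite forms, and to upgrade the Feigin standard complex into a double complex whose total complex is $C^{\frac{\infty}{2}+\bullet}(\hat{\ff},M)$. Concretely, one can identify $\Li{\bullet}\hat{\ff}\cong\Lambda^{\bullet}\ff_{-}\otimes\Lambda^{\bullet}\ff_{+}^{*}$ (up to the usual shift in grading accounting for the ``half-infinite'' nature of the forms) and decompose the BRST differential as $d = d^{+} + d^{-}$, where $d^{+}$ uses the $c$-ghosts dual to $\ff_{+}$ to apply the $\ff_{+}$-action on $M$, and $d^{-}$ uses the $b$-ghosts associated to $\ff_{-}$ to apply the $\ff_{-}$-action on $M$. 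One checks $(d^{+})^{2}=(d^{-})^{2}=d^{+}d^{-}+d^{-}d^{+}=0$, so along the two ghost degrees the Feigin complex organises as a genuine double complex; $d^{+}$ plays the role of a Chevalley--Eilenberg cohomology differential for $\ff_{+}$ acting on $\Lambda^{\bullet}\ff_{-}\otimes M$, and $d^{-}$ plays the role of a Chevalley--Eilenberg homology differential for $\ff_{-}$ acting on $M\otimes\Lambda^{\bullet}\ff_{+}^{*}$.

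Next I would invoke the spectral sequences associated with the two natural filtrations (see Appendix \ref{app:spectral_sequences}) and use each semijectivity hypothesis to collapse one of them. Filtering by $\ff_{+}$-ghost degree gives a spectral sequence whose first page reads
\begin{equation}
E_{1}^{p,q} \;=\; \Lambda^{p}\ff_{-}\otimes H^{q}(\ff_{+},M)~,
\end{equation}
which by the injectivity of $M$ as an $\ff_{+}$-module is concentrated in $q=0$ and equals $\Lambda^{\bullet}\ff_{-}\otimes M^{\ff_{+}}$. The induced $d_{1}$ differential on this row is exactly the Chevalley--Eilenberg homology differential for $\ff_{-}$ with coefficients in $M^{\ff_{+}}$. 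Filtering instead by $\ff_{-}$-ghost degree gives a second spectral sequence whose first page is
\begin{equation}
\widetilde{E}_{1}^{p,q} \;=\; \Lambda^{q}\ff_{+}^{*}\otimes H_{p}(\ff_{-},M)~,
\end{equation}
which by projectivity of $M$ over $\ff_{-}$ is concentrated in $p=0$ and equals $M_{\ff_{-}}\otimes\Lambda^{\bullet}\ff_{+}^{*}$, with $d_{1}$ the Chevalley--Eilenberg cohomology differential for $\ff_{+}$ with coefficients in $M_{\ff_{-}}$.

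Since both spectral sequences converge to the same total cohomology $\Hi{\bullet}(\hat{\ff},M)$, comparing their abutments shows that they must collapse on the second page and the total cohomology is concentrated in degree zero, reproducing the expected $M^{\ff_{+}}_{\ff_{-}}$. Concretely, the first spectral sequence already forces the cohomology to vanish in all negative degrees (only $q=0$ survives after $E_{1}$), while the second forces it to vanish in all positive degrees; together these squeeze the answer to live only at ghost degree zero, where the result coincides canonically with $M^{\ff_{+}}_{\ff_{-}}$.

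The main subtlety I anticipate is the compatibility of the two $E_{2}$ calculations: strictly speaking, running the first spectral sequence to the end requires that the $d_{1}$-cohomology $H_{\bullet}(\ff_{-},M^{\ff_{+}})$ itself vanish in positive degree, which is not an immediate consequence of projectivity of $M$ alone, since $\ff_{+}$-invariants need not preserve $\ff_{-}$-projectivity. The cleanest way around this is precisely the comparison above, in which neither spectral sequence is asked to degenerate on its own; the hypotheses are used only to concentrate each $E_{1}$ in one degree, and degeneration of both then follows formally from the coincidence of their abutments and the bidegree constraints. This comparison argument is the real technical heart of the proof, and checking carefully that the filtrations are exhaustive, Hausdorff, and bounded in the relevant sense (so that convergence holds) is the step where most care is required.
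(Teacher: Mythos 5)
First, a point of order: the paper does not prove this statement at all --- it is imported verbatim as Theorem 2.1 of Voronov's work and used as a black box --- so there is no in-paper argument to compare yours against, and I can only assess your sketch on its own terms. That said, your overall strategy (split $\Li{\bullet}\hat\ff\cong\Lambda^{\bullet}\ff_-\otimes\Lambda^{\bullet}\ff_+^*$, decompose $d=d^++d^-$ by bidegree so that each piece squares to zero, and play the two spectral sequences of the bicomplex off against each other so that each half of semijectivity is used only to concentrate one $E_1$ page in a single degree) is the standard route to this kind of vanishing theorem and is essentially Voronov's own; your closing remark correctly identifies the comparison of abutments, rather than degeneration of either spectral sequence on its own, as the crux.

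There are, however, concrete gaps. First, your $E_1$ pages are misidentified: because the cross-brackets $[\ff_+,\ff_-]$ project nontrivially onto $\ff_-$, the differential $d^+$ acts on the full coefficient module $\Lambda^{\bullet}\ff_-\otimes M$ and not on $M$ alone (you say exactly this when describing $d^+$, and then contradict it in the displayed formula). The column cohomology is therefore $H^q(\ff_+,\Lambda^p\ff_-\otimes M)$, not $\Lambda^p\ff_-\otimes H^q(\ff_+,M)$, and concentration in $q=0$ does not follow from injectivity of $M$ by itself; one needs the tensor identity (a coinduced/injective module tensored with an arbitrary module remains acyclic), which requires checking in this infinite-dimensional graded setting. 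The symmetric remark applies to the other filtration. Second, the convergence you defer to the last sentence is not a routine verification: both filtrations are unbounded, the bicomplex fills an entire quadrant, and every total degree $q-p$ receives infinitely many contributions, which is precisely the situation in which column- and row-filtration spectral sequences of a double complex can fail to converge to the same thing (or at all). What rescues the argument is the auxiliary conformal-weight grading, under which each homogeneous piece of the complex is finite-dimensional; without invoking that explicitly the "formal" comparison of abutments does not close. Finally, identifying the surviving $E_2^{0,0}$ term --- the quotient of $M^{\ff_+}$ by the image of $d_1$ out of $(\ff_-\otimes M)^{\ff_+}$ --- with the space $M^{\ff_+}_{\ff_-}$ exactly as defined in the statement requires one more (short but nontrivial) step that the sketch omits.
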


The relative BRST cohomology described above is also understood as relative semi-infinite cohomology with respect to the $\ff\subset\hat{\ff}_\kappa$. As in~\cite{Voronov:1993aw,Arakawa:2018egx}, we denote this by
\begin{equation}\label{eq:feig_rel}
    \Hi{\bullet}(\hat{\ff}_\kappa,\ff,M)~,
\end{equation}
so the prescription for gauging a flavour symmetry $\ff$ in the associated VOA amounts to taking the relative semi-infinite cohomology. The procedure for gluing together two chiral algebras of class $\SS$ $V_1$ and $V_2$ of type $\gf$ then amounts to taking the relative semi-infinite cohomology,
\begin{equation}
\label{eq:gauge_gluing}
    V_1\circ V_2\colonequals\Hi{\bullet}(\hat{\gf}_{-\kappa_g},\gf,V_1\otimes V_2)~.
\end{equation}
An application of Theorem~\ref{thm:vor_vanish} in the relative context gives the following vanishing theorem, which is important in establishing generalized $S$-duality at the level of vertex algebras.

\begin{thm}~\label{thm:gauge-van} {\rm (Section 3.11.2 of~\cite{Voronov:1993aw} and Proposition 3.4 of~\cite{Arakawa:2018egx})} Suppose $V\in \KL_{-\kappa_g}$ and $V$ is projective as a $U(t^{-1}\gf[t^{-1}])$ module and injective as a $U(t\gf[t])$ module,~\ie, $V$ is a semijective object in $\KL$. Then,
\begin{equation}
	\Hi{p}(\hat{\gf}_{-\kappa_g},\gf,V) = 0\quad \text{for } p\neq 0~.
\end{equation}
\end{thm}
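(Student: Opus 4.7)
The plan is to reduce the relative vanishing statement to Voronov's absolute vanishing theorem (Theorem~\ref{thm:vor_vanish}) by means of a Hochschild--Serre-type comparison between absolute and relative semi-infinite cohomology, combined with classical facts about the Lie algebra cohomology of the semisimple Lie algebra $\gf$. The finite-dimensional analogue to keep in mind is that, for a reductive subalgebra inside a Lie algebra, absolute Chevalley--Eilenberg cohomology factors through relative cohomology tensored with the cohomology of the subalgebra.

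Concretely, I would separate the $(b,c)$ ghost system for $\hat\gf_{-\kappa_g}$ into its zero-mode and non-zero-mode pieces. The zero modes assemble into a finite Clifford algebra on $\gf \oplus \gf^*$, while the non-zero modes together with $V$ assemble (after imposing $b_0$-annihilation and $\gf$-invariance) into the relative complex $C^{\frac{\infty}{2}+\bullet}(\hat\gf_{-\kappa_g}, \gf, V)$. Filtering the absolute complex $C^{\frac{\infty}{2}+\bullet}(\hat\gf_{-\kappa_g}, V)$ by $c_0$-ghost number then yields a spectral sequence
\[
E_2^{p,q} \;=\; \Hi{p}(\hat\gf_{-\kappa_g}, \gf, V) \otimes H^q(\gf, \IC) \;\Longrightarrow\; \Hi{p+q}(\hat\gf_{-\kappa_g}, V),
\]
where the $H^q(\gf, \IC)$ factor arises from the Chevalley--Eilenberg complex produced by the zero-mode ghosts upon passing to $\gf$-invariants. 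Since $V$ is semijective by hypothesis, Theorem~\ref{thm:vor_vanish} forces the abutment to vanish outside total degree zero.

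Assuming (or establishing) degeneration at $E_2$, I would then read off the $(p, 0)$ column of $E_\infty$. Since $\gf$ is semisimple, $H^0(\gf, \IC) = \IC$, so $E_2^{p, 0} = \Hi{p}(\hat\gf_{-\kappa_g}, \gf, V)$. Degeneration identifies this with $E_\infty^{p,0}$, which must vanish for every $p \neq 0$ by the vanishing of the abutment in nonzero total degree. If one is uncomfortable asserting $E_2$-degeneration directly, the vanishing of $\Hi{p}(\hat\gf_{-\kappa_g}, \gf, V)$ can alternatively be obtained by an induction on $|p|$ using the exterior-algebra structure of $H^\bullet(\gf, \IC)$ on primitive generators of odd degrees $2d_i - 1$ together with Whitehead's lemma $H^1(\gf, \IC) = 0$, ruling out any surviving class on the basis that it would produce an uncancellable contribution to $\Hi{p+q}(\hat\gf_{-\kappa_g}, V)$ in some nonzero total degree.

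The main obstacle I anticipate is controlling convergence and degeneration of this spectral sequence in the semi-infinite setting. Convergence in the $q$-direction is automatic since $0 \leq q \leq \dim\gf$, but in the $p$-direction one must invoke the positive-energy structure inherited from $V \in \KL_{-\kappa_g}$, decomposing into $\KL^{ord}$ pieces and filtering by quasiconformal weight so that each $E_r$-page is bounded weight-by-weight. For degeneration, the standard mechanism is that the $\gf$-action on the complex is semisimple (by complete reducibility of finite-dimensional $\gf$-representations), collapsing Hochschild--Serre at $E_2$ exactly as in the reductive-extension case. These are essentially the same technical inputs already deployed in~\cite{Arakawa:2018egx, Voronov:1993aw}, and one expects them to go through unchanged in the current setting; the genuinely new work of this paper will occur in the mixed Feigin--Frenkel gluing arguments that rely on this vanishing.
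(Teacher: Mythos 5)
The paper does not actually prove this statement --- it is imported verbatim from Voronov (Section 3.11.2) and Arakawa (Proposition 3.4) --- so the only question is whether your argument stands on its own, and unfortunately it does not. The fatal problem is that your spectral sequence, if it exists with the stated $E_2$ page and degenerates there, proves far too much. Vanishing of the abutment $\Hi{n}(\hat\gf_{-\kappa_g},V)$ for all $n\neq 0$ together with $E_2=E_\infty$ forces $E_2^{p,q}=\Hi{p}(\hat\gf_{-\kappa_g},\gf,V)\otimes H^q(\gf,\IC)=0$ whenever $p+q\neq 0$; taking $p=0$, $q=3$ (recall $H^3(\gf,\IC)\cong\IC$ for $\gf$ simple) this kills $\Hi{0}(\hat\gf_{-\kappa_g},\gf,V)$ as well, which is absurd --- for $V=\D_G\otimes M$ that degree-zero cohomology is $M$. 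If instead you drop degeneration, then $E_\infty^{p,0}$ is only a subquotient of $E_2^{p,0}$ (the $q=0$ column receives differentials $d_r\colon E_r^{p-r,r-1}\to E_r^{p,0}$), so vanishing of the abutment in nonzero total degree tells you nothing about $\Hi{p}(\hat\gf_{-\kappa_g},\gf,V)$ itself. Either way the argument does not close.

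The underlying conceptual error is the assumption that Theorem~\ref{thm:vor_vanish} applies to the $V$ of Theorem~\ref{thm:gauge-van}. The absolute vanishing theorem requires semijectivity with respect to a decomposition $\ff=\ff_+\oplus\ff_-$ of the \emph{whole} algebra, so the zero modes $\gf\subset\hat\gf_{-\kappa_g}$ must be absorbed into (say) $\ff_+$, and injectivity over $U(\gf[t])$ is strictly stronger than injectivity over $U(t\gf[t])$. The modules relevant here fail the stronger condition, and their absolute semi-infinite cohomology is genuinely not concentrated in a single degree: the correct relation is $\Hi{n}(\hat\gf_{-\kappa_g},V)\cong\bigoplus_{p+q=n}\Hi{p}(\hat\gf_{-\kappa_g},\gf,V)\otimes H^q(\gf,\IC)$, so the absolute answer is smeared over the degrees of $H^\bullet(\gf,\IC)$. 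The implication therefore runs in the opposite direction to the one you want, and the relative statement must be proved directly, as Voronov does: one exhibits relative semi-infinite cohomology as a two-sided derived functor of the relative semiinvariants functor $M\mapsto \bigl(M^{t\gf[t]}_{t^{-1}\gf[t^{-1}]}\bigr)^{\gf}$ and checks that objects which are injective over $U(t\gf[t])$, projective over $U(t^{-1}\gf[t^{-1}])$, and semisimple over the reductive zero-mode subalgebra $\gf$ (automatic in $\KL$) are acyclic for it, via an explicit semijective resolution. That is the content of the citation, and it is not recoverable from the absolute theorem by a Hochschild--Serre comparison.
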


\subsection{Drinfel'd--Sokolov reduction}
\label{subsec:DS_reduction}

Starting with chiral algebras of class $\SS$ for theories with only maximal punctures, one realises the VOAs with non-maximal punctures by a chiral analogue of the Higgsing procedure described in Section \ref{sec:twisted_background}. Namely, this type of Higgsing is accomplished via quantum Drinfel'd--Sokolov (DS) reduction, which we review briefly below.\footnote{This identification of Higgsing with DS reduction was presented as a \emph{conjecture} in ~\cite{Beem:2014rza}. It remains an open problem to prove this from first principles---there is by now extensive evidence for the claim.}

A nontrivial homomorphism $\Lambda:\suf(2)\hookrightarrow \gf$ induces a grading on $\gf$ by the image of the Cartan generator,\footnote{If the grading is genuinely half-integral and not integral, then there is a subtlety and one must introduce an alternate Cartan generator, whose grading is integral, and impose modified constraints in what follows. See~\cite{Feher:1992ed,deBoer:1992sy,deBoer:1993iz} for more details.}
\begin{equation}
    \gf = \bigoplus_{m\in\frac{1}{2}\IZ}\gf_m \,,\quad \gf_m \coloneqq \{t\in\gf\,|\, \mathrm{ad}_{\Lambda(h)}\,(t) = mt\}~.
\end{equation}
Let $\{t_\alpha\}$ denote a basis for the negatively graded part of $\gf$, with corresponding affine currents $J_\alpha$ in $V^{k_c}(\gf)$. Drinfel'd--Sokolov reduction can be thought of as imposing the constraint
\begin{equation}
    J_\alpha = \delta_{\alpha\alpha_-}~,
\end{equation}
where $t_{\alpha_-}$ is the image of the negative weight generator $f\in\suf(2)$. This constraint is imposed via BRST by introducing a ghost system $(c^\alpha,b_\alpha)$ valued in the adjoint of $\bigoplus_{m<0}\gf_{m}$ and forming the BRST current
\begin{equation}
   J_{DS}(z) = \sum_\alpha\left((J_\alpha - \delta_{\alpha\alpha_-})c^\alpha\right)(z) - \frac{1}{2}\sum_{\alpha\beta}^{\gamma} f_{\alpha\beta}^{\phantom{\alpha\beta}\gamma}(b_\gamma c^\alpha c^\beta)(z)~,
\end{equation}
with the BRST differential $d_{DS} = J_{{\rm DS},(0)}$. We define the DS reduction of a vertex algebra $\VV$, which has the structure of a $V^{k}(\gf)$ module, as the cohomology
\begin{equation}
	H^\bullet_{f,DS}(V) = H_{\rm BRST}^\bullet(V\otimes\langle c^\alpha,b_\beta\rangle,d_{DS})~.
\end{equation}
Taking the coefficients to be exactly $V^{k}(\gf)$ gives the $\WW$-algebra $\WW(f,\hat{\gf}_{\kappa})=H^0_{f,DS}(V^{\kappa}(\gf))$.

Like the BRST operation associated to gauging, DS reduction can be phrased in terms of semi-infinite cohomology (see, \eg,~\cite{Arakawa2017introduction}). We will make extensive use of the result~\cite{Arakawa:2007rep}.

\smallskip

\begin{thm}[\textup{Main Theorem 1 of~\cite{Arakawa:2007rep}}]\label{thm:ds_vanish}
     For any $M\in\KL$, $H^{i}_{f,DS}(M)=0$ for $i\neq0$.
\end{thm}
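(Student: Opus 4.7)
The plan is to establish the vanishing by a spectral sequence argument in which the first page reduces to a Koszul-type complex whose acyclicity outside cohomological degree zero yields the result. This mirrors the classical Hamiltonian reduction story, where imposing the character-valued constraint $J_{\alpha_-}\mapsto 1$ is realised by a Koszul resolution while the remaining Chevalley--Eilenberg piece imposes $\gf_-$-invariance.

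First, I would decompose the BRST current as $J_{DS} = J_{CE} + J_\chi$, where $J_{CE}(z) = \sum_\alpha (J_\alpha c^\alpha)(z) - \tfrac{1}{2}\sum f_{\alpha\beta}^{\gamma}(b_\gamma c^\alpha c^\beta)(z)$ is a semi-infinite Chevalley--Eilenberg-type differential for the (affinized) nilpotent subalgebra $\gf_-$ acting on $M$, while $J_\chi(z) = -c^{\alpha_-}(z)$ encodes the Lie algebra character $\chi\colon \gf_-\to\IC$ with $\chi(f) = 1$. The corresponding zero modes satisfy $d_{CE}^2 = d_\chi^2 = \{d_{CE},d_\chi\} = 0$: the first vanishes because the Killing form restricts trivially to the nilpotent $\gf_-$ (so no level obstruction arises), the second because $c^{\alpha_-}$ is fermionic, and the third is then forced by $d_{DS}^2 = 0$.

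Next, I would introduce a $\IZ$-grading on the complex---call it $\deg_\chi$---assigning nontrivial weight to the modes of $c^{\alpha_-}, b_{\alpha_-}$ and to $M$ via (twice) the $\Lambda(h)$-eigenvalue, chosen so that $d_{CE}$ preserves $\deg_\chi$ while $d_\chi$ shifts it by one. The associated decreasing filtration yields a convergent spectral sequence $E_r^{p,q}\Rightarrow H_{f,DS}^\bullet(M)$ whose $E_0$-differential is $d_\chi$ alone. The $E_1$-page is then the cohomology of an affine Koszul complex generated by the currents $J_{\alpha_-,(n)}-\delta_{n,-1}$ for $n\in\IZ$; because these modes form a regular sequence in the relevant polynomial ring of affine currents, the $E_1$-page is concentrated in cohomological degree zero. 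The subsequent differential imposes $\gf_-$-invariance, and since $\gf_-$ acts locally nilpotently on $M$ (as $M\in\KL$ has locally finite $\gf$-action), $E_2 = E_\infty$ also remains in degree zero.

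Convergence rests crucially on the $\KL$-hypothesis: the $\IZ$-grading of $M$ together with local finiteness of the $\gf$-action ensures that each conformal-weight subspace of the complex is finite dimensional and that the filtration is bounded on it, which delivers strong convergence. For general $M\in\KL$, where the grading need not be bounded below, one may exhaust $M$ by $\KL^{ord}$ subquotients and pass to the colimit, using that semi-infinite cohomology commutes with filtered colimits. The chief technical obstacle will be verifying regularity of the infinite sequence $\{J_{\alpha_-,(n)}-\delta_{n,-1}\}_{n\in\IZ}$ in the semi-infinite setting---an infinite Koszul-acyclicity statement that does not automatically reduce to its finite-dimensional counterpart. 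The standard remedy is to factorize the Koszul complex as a completed tensor product of finite-dimensional Koszul complexes (one per Fourier mode) and apply a Künneth-type argument, a step that fundamentally exploits the positive-energy structure underlying $\KL^{ord}$.
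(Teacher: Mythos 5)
The first thing to note is that the paper offers no proof of this statement at all: it is imported verbatim as Main Theorem 1 of Arakawa's \emph{Representation theory of $\WW$-algebras} and used as a black box, so your sketch can only be measured against Arakawa's original argument. The skeleton you describe --- split $d_{DS}=d_{CE}+d_\chi$, filter so that the character part acts first, invoke positive energy for convergence --- is indeed the folklore strategy going back to Feigin--Frenkel and de Boer--Tjin, and your preliminary checks ($d_{CE}^2=d_\chi^2=\{d_{CE},d_\chi\}=0$, which rest on $\gf_-$ being isotropic for the invariant form and on $\chi$ killing $[\gf_-,\gf_-]$ for grading reasons) are fine.

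The gaps sit exactly at the two steps where the content of the theorem lives. (i) The assertion that the $E_1$-page is the cohomology of a Koszul complex on a regular sequence ``in the relevant polynomial ring of affine currents'' only makes sense for the vacuum module, where PBW identifies $V^{\kappa}(\gf)$ with a polynomial algebra on negative modes. A general $M\in\KL$ is not free over the subalgebra generated by the constrained modes, and the acyclicity of this Koszul-type complex on $M$ is essentially equivalent to the theorem itself; in particular it cannot be obtained by factorising into one-mode Koszul complexes and applying a K\"unneth argument, because the modes $J_{\alpha,(n)}$ do not act independently on an arbitrary module. (ii) The claim that local nilpotence of the $\gf_-$-action forces $E_2$ into degree zero is a false principle: cohomology of a nilpotent Lie algebra acting locally nilpotently (or even trivially) is generically nonzero in many degrees --- already $H^1(\mathfrak{n},\IC)=(\mathfrak{n}/[\mathfrak{n},\mathfrak{n}])^*\neq 0$. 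What is actually required is a semijectivity statement in the sense of the Voronov criterion (Theorem~\ref{thm:vor_vanish}) for the $\chi$-twisted module, and proving that is precisely where the nondegeneracy of $\chi$ and the $\KL$ hypothesis must enter; as written, your argument uses $\KL$ only for convergence and never uses $\chi\neq 0$ outside the unproven regularity claim. Finally, the convergence step is also too quick: after shifting conformal weights by the $\Lambda(h)$-eigenvalue, the ghost system contributes infinitely many states of bounded weight, so the graded pieces of the complex are not finite dimensional and boundedness of the filtration on weight spaces is itself a nontrivial issue that any complete proof has to confront.
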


The case of principal DS reduction at the critical level is especially interesting because $\WW(f_{prin},\hat{\gf}_{\kappa_c})\cong \zf(\gf)$. For non vacuum modules, there is an isomorphism~\cite{Frenkel:2010}
\begin{equation}\label{eq:ds_weyl}
	H^0_{f_{prin},DS}(\IV_\lambda) \cong \zf_\lambda~.
\end{equation}
In general we get an exact functor $H^0_{f_{prin},DS}: \KL\rightarrow \ZZ$-Mod. We will mostly be interested in principal DS reduction in what follows, and will suppress the explicit $f_{prin}$ argument from now on. 

Pictorially, principal DS reduction corresponds to removing a puncture. The key idea in the construction of chiral algebras of class $\mathcal{S}$ is to develop a method to \emph{invert} this operation and add punctures back onto the UV curve. We discuss this in the following sections.

\subsection{Feigin--Frenkel gluing}
\label{ssec:chiral_background/ff-gluing}

For chiral algebras of class $\SS$ with maximal punctures, the current subalgebras are all at the critical level, so there could theoretically be several copies of the Feigin--Frenkel centre present. This turns out to not be the case; in fact the current subalgebras all \emph{share a common Feigin--Frenkel centre}, so the chiral algebras of class $\SS$ with $s$ punctures are modules over $\IV_{0,s}$.

This phenomenon of shared FF centres is a chiral analogue of certain well-known Higgs branch chiral ring relations for theories of class $\SS$. We recall that for theory of type $\af_n$ associated to a UV curve with $s$ punctures, there are $s$ moment map operators $\mu_s$ subject to the relations
\begin{equation}\label{eq:Higgs_FF_relations}
	\Tr \mu_1^k=\Tr \mu_2^k=\cdots=\Tr \mu_s^k~,\qquad k=2,\ldots,n+1~.
\end{equation}
More generally, for the $\df_n$ and $\ef_n$ theories there are analogous relations corresponding to the respective fundamental invariants of those algebras. These play a central role in the construction of Higgs branches of class $\SS$ by Ginzburg--Kazhdan~\cite{Ginzburg:2018}.

It is not immediate that these Higgs chiral ring relations lead to the identification of Feigin--Frenkel centres. This is because the Feigin--Frenkel generators are related to, but not equal to, the corresponding Higgs branch operators.\footnote{To be precise, the Feigin--Frenkel operators are identified with the corresponding Higgs branch operators upon passing to the associated graded of the $R$ filtration \cite{Beem:2017ooy}.} To illustrate, consider the case $k=2$ of \eqref{eq:Higgs_FF_relations}. The Higgs branch operators associated to the quadratic fundamental invariant are related to the Segal--Sugawara operators, $P_{1,i}$ but also receive a nonzero contribution from the VOA stress-energy tensor $T$~\cite{Beem:2018duj},
\begin{equation}
	P_{1,i}= \Tr \mu^2_i  + \alpha T~,
\end{equation}
where $\alpha$ is a fixed (nonzero) constant computed in \cite{Beem:2018duj} and $P_{1,i}$ is the quadratic Feigin--Frenkel generator associated to the $i$'th puncture. The Higgs chiral ring relations set the $\Tr \mu^2_i$ equal and, importantly, there is a \emph{unique} $\hat{\CC}_{0(0,0)}$ multiplet (the four-dimensional stress tensor multiplet) so the operator $T$ is the same for each $i$. As a result, the quadratic generators of the Feigin--Frenkel centre are identified across different punctures,~\ie, $P_{1,1}=P_{1,2}=\dots P_{1,s}$. For higher order invariants, the same argument doesn't go through without additional information about the structure of the VOA; in the case of the cubic invariant of $\af_n$ there will be mixing between Higgs branch operators ($\widehat\BB_3$ multiplets) and $\CC_{1(0,0)}$ multiplets, but the uniqueness of the latter is not apparent. Nevertheless, precisely for the class $\mathcal{S}$ theories, the known expression for the Schur index indicates that the identification of the higher Feigin--Frenkel generators should indeed hold.

As in the definition of $\IV_{0,2}$, given two modules $M,N\in\ZZ$-Mod, we can equip $M\otimes N$ with a $\ZZ$-Mod structure via a twisted diagonal action,
\begin{equation}
	P\otimes1 - 1\otimes\tau(P)\quad \forall P\in\ZZ~.
\end{equation}
The twist by $\tau$ is a matter of convention, but will prove convenient in later constructions when Feigin--Frenkel gluing and gauging appear together---roughly speaking, one can anticipate this issue by observing that when gauging together two VOAs, the gauge invariant states arise from pairing representations with highest weights $\lambda$ and $\lambda^*$ respectively, so if one wishes to be able to identify a common action of $\ZZ$ on these two representations then there should be a twist by $\tau$.

As an abelian Lie algebra, $\ZZ$ has a semi-infinite structure and so can be used to define semi-infinite cohomology with coefficients in some object of $\ZZ$-Mod. Specifically, we will have coefficients of the form $M\otimes N$ for $M,N\in\ZZ$-Mod. The Feigin standard complex is then given by
\begin{equation}
	C^\bullet(\ZZ,M\otimes N) \coloneqq  M\otimes N \otimes\Li{\bullet}\ZZ~,
\end{equation}
with BRST current
\begin{equation}
	Q(z) = \sum_{i=1}^{\mathrm{rk}\,\gf}(P_ic^i)(z)~,
\end{equation}
whose zero mode $Q_{(0)}$ acts as a differential for the cochain complex. Suppose we have two objects $V_1$ and $V_2$ in $\ZZ$-Mod, we define Feigin--Frenkel (FF) gluing as
\begin{equation}
	V_1\ast V_2\coloneqq \Hi{0}(\ZZ,V_1\otimes V_2)~.
\end{equation}
In the case when $V_1$ and $V_2$ are vertex algebras, $V_1\ast V_2$ is also a vertex algebra. The vector space $\V_1\otimes V_2$ also has the structure of a $\ZZ$-module via the left (or right) action of $\ZZ$ on the tensor factors, which have become identified through the gluing.

As was the case with the gauge theory BRST problem, we have a vanishing theorem here.
\begin{thm}[Theorem 9.10 of~\cite{Arakawa:2018egx}]~\label{thm:van-ff}
    Let $M\in\ZZ$-Mod be free as a $\ZZ_{(<0)}$ module, then
    \begin{equation*}
        \Hi{i}(\ZZ,M) = 0 \quad \textrm{for}~~i<0~.
    \end{equation*}
\end{thm}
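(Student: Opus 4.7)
The approach is to equip the Feigin standard complex $C^{\frac{\infty}{2}+\bullet}(\ZZ, M) = M \otimes \Li{\bullet}\ZZ$ with a spectral sequence that reduces the semi-infinite computation to ordinary Koszul-type (co)homology for the commutative polarization halves $\ZZ = \ZZ_{(<0)} \oplus \ZZ_{(\geq 0)}$, and then to feed in the freeness hypothesis on the ``creation'' side.

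Concretely, the polarization induces a bigrading on $\Li{\bullet}\ZZ$ coming from a decomposition of the ghost sector, in which one family of fermions contributes non-negatively to the semi-infinite degree and the other non-positively. Accordingly, the BRST differential $d = Q_{(0)}$ --- which, because $\ZZ$ is commutative, has no quadratic ghost term and is built purely from $P_{i,(n)} c^i_{(-n-1)}$ pairs --- decomposes into two anticommuting, square-zero pieces: a differential $d'$ that, up to the vacuum shift placing the ghost vacuum in semi-infinite degree zero, is precisely the Koszul differential for the action of the polynomial algebra $\ZZ_{(<0)}$ on $M$, and a complementary differential $d''$ that is a Chevalley--Eilenberg-type differential along $\ZZ_{(\geq 0)}$. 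Filtering by the non-negative part of the ghost bigrading, one has $E_0$-differential $d'$, so the $E_1$-page computes $\mathrm{Tor}_\bullet^{\ZZ_{(<0)}}(\IC, M)$ tensored with the remaining ghost factor. Freeness of $M$ over $\ZZ_{(<0)}$ gives flatness, whence all higher Tor vanish, so $E_1$ is concentrated in Koszul degree zero, i.e.\ in non-negative semi-infinite degree. This support property persists on all subsequent pages and so on $E_\infty = \Hi{\bullet}(\ZZ, M)$, yielding the asserted vanishing in negative degrees.

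The main technical obstacle is convergence. The complex is infinite-dimensional in both ghost directions and the filtration above is not finite globally, so the spectral sequence is not \emph{a priori} strongly convergent. This is resolved by decomposing everything into conformal-weight-homogeneous subcomplexes: since $M \in \ZZ$-Mod is a positive-energy $\zf(\gf)$-module and $\Li{\bullet}\ZZ$ carries a compatible weight grading, each fixed-weight subspace is finite-dimensional and preserved by both $d'$ and $d''$. Within each such subspace the filtration is finite, the spectral sequence converges strongly, and the flatness argument applies; assembling the weight pieces yields the full statement. A secondary but routine bookkeeping check is that the vacuum shift defining the semi-infinite grading is consistent with the identification of non-positive Koszul degree on $M \otimes \Lambda^\bullet \ZZ_{(<0)}$ with non-negative semi-infinite degree, as claimed above.
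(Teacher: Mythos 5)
First, a point of orientation: the paper does not actually prove this statement --- it is imported verbatim as Theorem 9.10 of \cite{Arakawa:2018egx} --- so the comparison is really with Arakawa's argument. Your core mechanism coincides with it: because $\ZZ$ is abelian, the differential on $M\otimes\Li{\bullet}\ZZ$ has no ghost--ghost term and splits into an anticommuting pair consisting of the Koszul differential for the $\ZZ_{(<0)}$-action on $M$ (contracting the negatively graded ghost factor) and a Chevalley--Eilenberg differential along $\ZZ_{(\geqslant 0)}$; filtering by the Chevalley--Eilenberg ghost degree puts $\mathrm{Tor}_\bullet^{\ZZ_{(<0)}}(M,\IC)$ on the first page, and freeness (hence flatness) kills everything contributing to negative semi-infinite degree. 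That part is correct and is the right proof.

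The step that does not survive scrutiny as written is your resolution of the convergence issue. You justify finiteness of the filtration in each conformal weight by asserting that the fixed-weight subspaces of $M\otimes\Li{\bullet}\ZZ$ are finite-dimensional. They are not for precisely the modules to which this theorem is applied in the paper: $\W_u$, $\D_t$ and their Feigin--Frenkel gluings all have infinite-dimensional weight spaces (the paper notes this when explaining why the characters of the cap and cylinder do not exist), and the definition of $\ZZ$-Mod only requires the grading to be bounded below. Fortunately, finite-dimensionality is not what you need; what you need is that the ghost bidegree $(p,q)$ is bounded within each fixed total weight, and this holds for a different reason. Every generator of $\ZZ_{(<0)}$ has conformal weight $d_i-1-n\geqslant d_i\geqslant 2$ for $n<0$, so at fixed total weight the Koszul degree $q$ is bounded once $M$ is bounded below; and among the $c$-type creation operators dual to $\ZZ_{(\geqslant 0)}$, only the finitely many dual to $P_{i,(n)}$ with $0\leqslant n\leqslant d_i-1$ have non-positive weight, each occurring at most once in the exterior algebra, so their total negative contribution is bounded by a constant and $p$ is likewise bounded at fixed weight. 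Substituting this for the finite-dimensionality claim makes the filtration finite weight-by-weight, the spectral sequence converges, and your argument closes.
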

This is a weaker conclusion than in the vanishing theorem \ref{thm:gauge-van}, as the cohomology is not necessarily concentrated in degree zero. Nevertheless for many purposes it is sufficient.\footnote{From a derived perspective, truncating the cohomology at degree zero is somewhat unnatural. We will return to this point in a later section.}

We see that the BRST procedure enforces that the action of the Feigin--Frenkel centre on $V_1$ and $V_2$ are identified. This is a chiral version of the Hamiltonian reduction procedure described in~\cite{Ginzburg:2018}.

\subsection{Vertex algebras at genus zero}
\label{ssec:chiral_background/genus-zero}

The pieces are now all in place to recount the construction of (genus zero) untwisted chiral algebras of class $\SS$ from \cite{Arakawa:2018egx}.

\subsubsection*{Chiral differential operators and the cylinder}
\label{ssec:chiral_background/cylinder}

The starting point of the construction of~\cite{Arakawa:2018egx} is the cylinder VOA; from here one can define the cap chiral algebra by Drinfel'd--Sokolov reduction and, it turns out, construct all genus zero VOAs by FF gluing. The form of the cylinder algebra for $\af_n$ theories was identified concretely in \cite{Beem:2014rza}, but it was subsequently recognised in \cite{Arakawa:2015ro} that this reproduced a more general, and purely algebraic, construction that makes no explicit reference to four dimensional physics. The construction is universal and depends only on the choice of an algebraic group $G$, which here we take to be semisimple.

Starting from such a $G$, the arc space $J_\infty G$ is a scheme over $\IC$ defined as
\begin{equation}
    J_\infty G \coloneqq Hom_{Sch/\IC}(D, G)~,
\end{equation}
where $D$ is the formal disc $D=\mathrm{Spec}\, \IC[\![t]\!]$. See~\cite{Arakawa2017introduction} for more concrete discussion of arc spaces. The Lie algebra $\gf=Lie(G)$ acts on $\OO(G)$ via derivations, and this action lifts to one of $J_\infty(\gf)\cong \gf[\![t]\!]$ on $\OO(J_\infty G)$. Therefore, $\OO(J_\infty G)$ has the structure of a $\gf[\![t]\!]\oplus \IC K$ module where $K$ acts as the level $k\in\IC$. One can further produce a $\hat{\gf}_{\kappa_c}$ module via induction, which defines the \emph{chiral differential operators} on $G$ (at the critical level)~\cite{ag} (see also~\cite{Gorbounov:1992,Gorbounov:2001,Gorbounov:2004, Malikov:1999}),
\begin{equation}
	\D_G = U(\hat{\gf}_{\kappa_c})\otimes_{U(\gf[\![t]\!]\oplus \IC K)}\OO(J_\infty G)~.
\end{equation}
This has the structure of a conformal vertex algebra~\cite{Gorbounov:2001} with central charge
\begin{equation}
	c_{2d} = 2\,\dim\, G~,
\end{equation}
which matches with the central charge for a twice punctured sphere (see, \eg,~\cite{Chacaltana:2012zy}).

By construction, $\D_G$ is a vertex algebra object in $\KL$ and there is an embedding of the universal affine vertex algebra $\pi_L:V^{\kappa_c}(\gf)\hookrightarrow\D_G$. This vertex algebra homomorphism is induced by the embedding of $\gf$ as left invariant vector fields of $G$. The Lie algebra $\gf$ is also isomorphic to the right invariant vector fields of $G$ and this embedding is also lifted to a vertex algebra homomorphism $\pi_R:V^{\kappa_c}(\gf)\hookrightarrow \D_G$, such that the images of $\pi_L$ and $\pi_R$ commute~\cite{Gorbounov:2001}. The left and right embeddings of $V^{\kappa_c}(\gf)$ restrict to embeddings of the Feigin--Frenkel centre $\mathfrak{z}(\gf)$, and the two embeddings of the Feigin--Frenkel centre coincide~\cite{frenkel2004},
\begin{equation}
	\pi_L(\mathfrak{z}(\gf))\cong\pi_R(\mathfrak{z}(\gf)) \cong (\D_G)^{\gf[t]\times\gf[t]}~.
\end{equation}
$\D_G$ is free as a module over $U(t^{-1}\gf[t^-1])$ and cofree over $U(t\gf[t])$~\cite{Arakawa:2018egx}; thus the conditions of Theorem~\ref{thm:gauge-van} are met and for any $M\in \KL$, the cohomology $\D_G\circ M$ is concentrated in degree zero, and furthermore~\cite{Arakawa:2018egx},
\begin{equation}
	\Hi{i}(\hat{\gf}_{-\kappa_g},\gf,\D_G\otimes M) \cong \Hi{i}(\hat{\gf}_{-\kappa_g},\gf, M\otimes\D_G) \cong \delta_{i,0}M~.
\end{equation}
It is easy to see that $\D_G$ is the unique vertex algebra object in $\KL$, which satisfies this property. Physically, the chiral algebra of the cylinder must act as the identity object under $\circ$, when restricted to the chiral algebras of class $\SS$. This is a weaker condition than what is satisfied by $\D_G$. One could a imagine situation in which the chiral algebra of the cylinder acts as the identity for the chiral algebras of class $\SS$ but not on all of $\KL$. In this case, the chiral algebra of the cylinder may not necessarily be isomorphic to $\D_G$. However, this pathological possibility seems to not be realised.

\subsubsection*{Equivariant affine $\WW$-algebras and the cap}

Starting from the cylinder VOA, the cap algebra is recovered by completely reducing one maximal puncture. Arakawa has named the resulting VOA the (principal) \emph{equivariant affine $W$-algebra} $\W_G$ (it is an affine analogue of the equivariant $W$-algebra of \cite{Losev:2007Qua}),
\begin{equation}\label{eq:defcap}
	\W_G \coloneqq H^{0}_{DS}(\D_G)~.
\end{equation}
In the usual way, this vertex algebra inherits a conformal structure from the cylinder, with central charge
\begin{equation}
	c_{\W_G} = \dim\,\gf +\mathrm{rk}\,\gf+ 24 \kappa_g(\rho,\rho^\vee)~,
\end{equation}
where $\rho$ is half the sum of all positive roots and $\rho^\vee$ is the half sum of all positive coroots.

By Propositions $6.4$ and $6.5$ of~\cite{Arakawa:2018egx}, $\W_G$ is free over $U(t^{-1}\gf[t^{-1}])$ and cofree over $U(t\gf[t])$, and so is the $\hat{\gf}_{-\kappa_g}$ module $\W_G\otimes M$. The cohomology when gauging is therefore concentrated in degree zero. Theorem $6.8$ of~\cite{Arakawa:2018egx} then establishes that for any $M\in\KL$, and in particular for vertex algebra objects in $\KL$,
\begin{equation}
    \Hi{\bullet}(\hat{\gf}_{-\kappa_g},\gf,\W_G\otimes M) = H^0_{DS}(M)~.
\end{equation}
When $M$ is a chiral algebra of class $\SS$, this corresponds, pictorially, to the fact that gluing a cap and a surface together along a maximal puncture has the effect of closing the puncture. 

Feigin--Frenkel gluing a cap onto another vertex algebra provides a sort of inverse to the principal DS reduction functor; by FF gluing a cap onto a vertex algebra $V\in\ZZ$-Mod we provide it with a $V^{\kappa_c}(\gf)$ action, and it becomes a vertex algebra object in $\KL$. The cap is free over $\ZZ_{(<0)}$~\cite{Arakawa:2018egx} and so $\W_G\ast-:\ZZ\text{-Mod}\rightarrow\KL$ is a left-exact functor that acts, almost, as an inverse to $H^0_{DS}$. For any vertex algebra object $V$ in $\KL$, Theorem $9.11$ of~\cite{Arakawa:2018egx} guarantees that
\begin{equation}
\label{eq:dsid}
	V\cong \Hi{0}(\ZZ,\W_G\otimes H^{0}_{DS}(V))~,
\end{equation}
and for the subcategory of $V$-module objects in $\KL$,
\begin{equation}
	\Hi{0}(\ZZ,\W_G\otimes H^0_{DS}(-))~,
\end{equation}
is the identity functor. Indeed, this means that the cylinder VOA can be recovered from the equivariant affine $W$-algebra by FF gluing two caps together \cite{Arakawa:2018egx},
\begin{equation}
	\Hi{0}(\ZZ,\W_G\otimes\W_G)\cong \D_G~.
\end{equation}
This is only almost an inverse functor because $\KL$ and $\ZZ$-Mod are not equivalent. If we restrict to the subcategory $\KL_0$ of $\ZZ$-Mod, which is the image of $H^0_{DS}$, then $H^0_{DS}(\Hi{0}(\ZZ,\W_G\otimes V))\cong V$ for any $V\in\KL_0$.

\subsubsection*{General genus zero VOAs}

This situation leads to a conceptually straightforward construction of all genus zero chiral algebras of class $\SS$: one starts with $\D_G$ and repeatedly applies $\W\ast-$ to add punctures. If one takes for granted the general expectations of chiral algebras of class $\SS$ as outlined in \cite{Beem:2014rza}, then this must give the correct answer. However, the formalism of \cite{Arakawa:2018egx} allows Arakawa to go further and demonstrate, \emph{a priori}, the expected behaviour of these VOAs \emph{vis \`a vis} their gluing and duality properties.

To this end, the construction given in~\cite{Arakawa:2018egx} is superficially different, amounting to gluing together $s$ cap VOAs \emph{at once} rather than iteratively adding punctures. Take the chain complex
\begin{equation}
\label{eq:ara_chain_cons}
	C^\bullet\left(\bigoplus_{i=1}^{s-1}\ZZ^{i,i+1},\W^{s}\right)\coloneqq \W^{s}\otimes\bigg( \Li{\bullet}(\mathfrak{z}(\gf))\bigg)^{s-1}~,
\end{equation}
for $s>1$, with differential
\begin{equation}\label{eq:ffmulti}
	\begin{split}
		Q(z)  &= \sum_{i=1}^{s-1} Q^{i,i+1}(z)~,\\
		Q^{i,i+1}(z) &= \sum_{j=1}^{\mathrm{rk}\,\gf} (\rho_i(P_j)-\rho_{i+1}(\tau(P_j)))\rho_{gh,i}(c^j)(z)~,
	\end{split}
\end{equation}
where $\rho_i$ represents the action of $\mathfrak{z}(\gf)$ on the $i$-th factor of $\W$ and $\rho_{gh,i}(c^j)$ acts on the $i$-th factor of the ghost system $\Li{\bullet}(\mathfrak{z}(\gf))$. The vertex algebra of a sphere of type $\gf$ with $s$ maximal punctures is then defined to be
\begin{equation}\label{eq:voaspheres}
\begin{split}
    \V_{G,1} &= \W~,\\
    \V_{G,s} &= \Hi{0}(\bigoplus_{i=1}^{s-1}\ZZ^{i,i+1},\W^{s})~.
\end{split}
\end{equation}
It is a lemma of~\cite{Arakawa:2018egx}, that this agrees with the recursive construction of the genus zero vertex algebras.

As a consequence of their construction in terms of the cylinder VOA, the vertex algebras $\V_{G,s}$ are endowed with an increasing filtration~\cite{Arakawa:2018egx}, $\FF^{\bullet}\V_{G,s}$ such that $\FF^0\V_{G,s}=0$ and
\begin{equation}
	\frac{\FF^i\V_{G,s}}{ \FF^{i-1}\V_{G,s} }\cong \IV_{s,\lambda}~,
\end{equation}
for some integral, dominant weight $\lambda$. They further possess a decreasing filtration $\GG^{\bullet}\V_{G,s}$, with $\GG^{0}\V_{G,s} = \V_{G,s}$ and successive quotients isomorphic to $D(\IV_{s,\lambda})$ for some integral dominant $\lambda$. This means that the genus zero chiral algebras are limits of projective and injective objects in $\KL$, and so, importantly, satisfy the requirements of Theorem~\ref{thm:gauge-van}. This means that when gluing these VOAs together via gauge theory BRST, the cohomology $\Hi{\bullet}(\hat{\gf}_{-\kappa_g},\gf,\V_{G,s}\otimes \V_{G,s'})$ is concentrated in degree zero, and via a spectral sequence argument this establishes the associativity of gauging for the genus zero chiral algebras.\footnote{Furthermore, all cohomologies being concentrated in degree zero is compatible with the expectation that in genus zero there is no residual gauge symmetry on the Higgs branch, and so no Hall--Littlewood chiral ring beyond the Higgs chiral ring.}

An interesting feature of this construction is that it trivialises $S$-duality. The caps involved in the construction are identical and the gluing is done simultaneously. The $\V_{G,s}$ are manifestly invariant under permutations. Since the construction makes no reference to a choice of pants decomposition, there is a disconnect to the physical chiral algebras of class $\SS$---built out of the BRST gauging of trinion algebras. Arakawa has shown in~\cite{Arakawa:2018egx} that the operations of FF-gluing and BRST gauging are compatible, in the sense that the cap gluing construction agrees with the pants decomposition.

\section{\label{sec:fixtures}Twisted trinions from mixed Feigin--Frenkel gluing}

Arakawa's construction is universal and is well defined for any semisimple $\gf$---including non simply laced Lie algebras.\footnote{The same is true for the putative construction of Moore--Tachikawa symplectic varieties in \cite{Ginzburg:2018}.} At the same time, there is no construction for vertex algebras that have both $\gf_u$ and $\gf_t$ symmetries (both twisted and untwisted punctures). A calculation comparing the character of~\cite{Arakawa:2018egx} and the index of~\cite{Lemos:2012ph} shows that the two vertex algebras cannot be isomorphic; we show this in Appendix~\ref{app:rewriting_characters}. There are some notable exceptions; namely, the index of the twisted cylinder and the character of $\D_{G_t}$ agree. Consequently, so must the index of the twisted cap and the character of $\W_{G_t}$.\footnote{Strictly speaking, the characters of the cylinder and cap VOAs don't exist due to infinite-dimensional weight spaces. However, one can proceed formally by working term-by-term in the sum over integral dominant weights; this can be understood from a vertex algebra perspective as considering the decomposition into blocks belonging to $\KL^\lambda$ for each $\lambda\in P^+$.} This suggests that $\D_{G_t}$ and $\W_{G_t}$ will be key building blocks for our construction; indeed, the algebras associated to the sphere with two maximal twisted punctures and one maximal/one minimal twisted puncture should be identified with $\D_{G_t}$ and $\W_{G_t}$, respectively.

In this section, we posit a construction for more general genus zero mixed vertex algebras. Our proposal takes advantage of Feigin--Frenkel gluing to define the mixed trinion with one untwisted and two twisted punctures. From this, in principle, all other genus zero vertex algebras can be constructed. Recalling our discussion of residual gauge symmetries in Section~\ref{ssec:twisted_background/HL}, we have some expectations regarding the behaviour of these mixed vertex algebras under gauging. In particular, unlike the untwisted case, we expect that the BRST cohomology of the \emph{untwisted} gauging of two such vertex algebras will not be concentrated in degree zero, but will rather enjoy some derived enhancement.

To establish such properties regarding untwisted and twisted gauging, we need to examine the decomposition of the mixed vertex algebras into semijective objects in both twisted and untwisted $\KL$ categories. This will be depend in a crucial way on the structure of critical-level Weyl modules over $\hat\gf_{t,\kappa_c}$ as modules over the untwisted Feigin--Frenkel centre $\zf(\gf_u)$. This will lead us to a technical interlude involving opers.

In this section we assume that we are dealing with commuting twists, either $\IZ_2$ or $\IZ_3$. We will remark on some peculiarities of the $\IZ_3$ case near the end of the section.

\subsection{\label{subsec:twisted_untwisted_FF}The (un)twisted Feigin--Frenkel centre and the mixed trinion}

The mixed vertex algebras should simultaneously be vertex algebra objects in $\KL_u$ and in $\KL_t$, so they will admit actions of both Feigin--Frenkel centres. The construction of $\V_{G,s}$ suggests that the action of these Feigin--Frenkel centres should be identified, but of course the twisted and untwisted centres are not isomorphic. It will be useful, therefore, to first examine how the actions of these two Feigin--Frenkel centres interact with each other.

Let $\sigma\in\mathrm{Out}(\gf_u)$ be an outer automorphism, $\gf_u^\sigma$ be the $\sigma$-invariant subalgebra of $\gf_u$, and $\gf_t = \left( \gf_u^\sigma \right)^\vee$. There exists a projection~\cite{Fuchs1996},
\begin{equation}
	\pi_\sigma:\mathfrak{h}_u\twoheadrightarrow \mathfrak{h}_t~,
\end{equation}
from the Cartan subalgebra of $\gf_u$ to that of $\gf_t$ that projects to elements that are invariant under $\sigma$. The outer automorphism lifts to an automorphism of $U(\gf_u)$ and we have a surjection
\begin{equation}
	Z(U(\gf_u))\twoheadrightarrow Z(U(\gf_t))~,
\end{equation}
which is just the projection of the centre of $U(\gf_u)$ to its $\langle \sigma\rangle$-coinvariants,~\ie,~we set the $\sigma$-non-invariant generators of $Z(\gf_u)$ to zero. The action of $\sigma$ can be lifted to $\hat{\gf}_{u,\kappa_c}$ according to $\sigma(xt^n)=\sigma(x)t^n$. This gives a projection
\begin{equation}\label{eq:FF_projection}
	\ZZ_u \twoheadrightarrow \ZZ_t\cong (\ZZ_u)_\sigma~,
\end{equation}
where $(\ZZ_u)_\sigma$ is the space of $\langle\sigma\rangle$-coinvariants of the untwisted Feigin--Frenkel centre.

The projection $\pi_\sigma$ also induces an embedding of weight spaces
\begin{equation}
	\iota: P^+_t \hookrightarrow P^+_u~,
\end{equation}
with image $\iota(P_t^+)$ equal to the subset of elements in $P^+_u$ that are invariant under the action of $\sigma$. For example if $\gf_u=D_n$ and $\gf_t=C_{n-1}$ (so $\sigma$ is the generator of the $\IZ_2$ outer automorphism), we have
\begin{equation}
	\iota(\lambda_1,\lambda_2,\dots,\lambda_{n-1}) = (\lambda_1,\lambda_2,\dots,\lambda_{n-1},\lambda_{n-1})~.
\end{equation}
In~\cite{Lemos:2012ph}, this was indicated with the notation $\lambda'=\lambda$. We will abuse notation and use $\lambda$ for both the weight in $P_t^+$ and its image under $\iota:P_t^+\hookrightarrow P_u^+$. So $V_\lambda^t$ denotes the finite dimensional irreducible representation of $\gf_t$ with highest weight $\lambda$, and $V^u_{\lambda}$ is the finite dimensional irreducible representation of $\gf_u$ with highest weight $\iota(\lambda)$.

Given an object $M\in\ZZ_t$-Mod, we can lift it to a module in $\ZZ_u$-Mod via the restriction of scalars associated to $\ZZ_u\twoheadrightarrow \ZZ_t$, giving a functor $\ZZ_t\text{-Mod}\rightarrow\ZZ_u\text{-Mod}$. The mixed vertex algebras must simultaneously be objects in the $\KL$ categories of $\gf_u$ and $\gf_t$. In the case where there are no untwisted punctures, the algebra must still have a $\ZZ_u$-Mod structure. This is because any such algebra can be obtained via DS reduction of a vertex algebra object in $\KL_u$, and the DS reduction functor lands in $\ZZ_u$-Mod.

The first mixed object we would like to construct is the vertex algebra corresponding to $\CC_{0,1,1}$, which has two $\KL_t$ actions and one $\KL_u$ action. The most natural guess for a construction of this algebra would be to FF glue an untiwsted cap ($\W_u$) to the twisted cylinder ($\D_t$) via the untwisted Feigin--Frenkel centre. The preceeding discussion assures us that $\D_t$ can be lifted from $\ZZ_t$-Mod to $\ZZ_u$-Mod, and as such we propose that the mixed trinion associated to $\CC_{0,1,1}$ is
\begin{equation}
	V_{1,1}\coloneqq\Hi{0}(\ZZ_u,\W_u\otimes\D_t)~.
\end{equation}
This mixed trinion is the most important vertex algebra of our construction. Indeed, it is the basic building block of the TQFT along with the untwisted trinion $\V_{G_u,3}$. Before proceeding to more general surfaces, it will be useful to establish the key properties of $\V_{1,1}$.

The genus zero vertex algebras, $\V_{G,s}$ have filtrations whose succesive quotients are $\IV_{\lambda,s}$ or $D(\IV_{\lambda,s})$, for some integral dominant $\lambda$. This is key in establishing that $\V_{G,s}$ are semijective in $\KL_u$, which in turn implies that gauging is concentrated in degree zero due to Theorem~\ref{thm:gauge-van}. Such a filtration would be extremely useful in establishing the properties of $\V_{1,1}$ under the various types of gluing.

Ideally, in the mixed case the successive quotients would simultaneously be objects in $\KL_u$ and $\KL_t$, and the natural proposal for such an object would be of the form
\begin{equation}
	\IV_{\lambda'}^u\otimes_{\zf^u_{\lambda'}}\IV_{\lambda}^t
\end{equation}
where $\lambda\in P_t^+$ and $\lambda'\in P_u^+$. While it is obvious that $\ZZ_t$ modules can be lifted to $\ZZ_u$ modules by way of the projection $\ZZ_u\twoheadrightarrow \ZZ_t$, it is not so obvious that this should hold for the quotient modules $\zf^u_{\lambda'}$, which are more complicated. Furthermore, there is a question of how $\lambda$ and $\lambda'$ are related. Altogether, there is room for doubt over whether the suggested tensor product over $\zf^u_{\lambda'}$ is even well-defined. To show that such a lift does indeed exist---and that the tensor product is well defined---for special values of $\lambda'$, we will work in the language of opers.

\begin{table}
	\centering
	{\setlength{\extrarowheight}{2.0pt}
\begin{tabular}{ c | c | c | c }
 $\gf_u$ & $\mathfrak{z}(\gf_u)$ & $\gf_t$ & $\mathfrak{z}(\gf_t)$\\
\hline
\hline
$\af_{2n-1}$ & $P_2, P_3,\dots,P_{2n}$ & ${\mathfrak b}_n$ & $P_2,P_4,\dots,P_{2n}$ \\ \hline
$\af_{2n}$ & $P_2, P_3,\dots,P_{2n+1}$ & ${\mathfrak c}_n$ & $P_2,P_4,\dots,P_{2n}$\\ \hline
$\df_{n}$ & $\tilde{P}_n;P_2,P_4,\dots,P_n,\dots,P_{2n-2}$ & ${\mathfrak c}_{n-1}$ & $P_2,P_4,\dots,P_{2n-2}$\\ \hline
$\ef_{6}$ & $P_2,P_5,P_6,P_8,P_9,P_{12}$ & $\ff_4$ & $P_2,P_6,P_8,P_{12}$\\ \hline
$\df_4$ & $\tilde{P}_4,P_2,P_4,P_6$ & $\gf_2$ & $P_2,P_6$
\end{tabular}}

	\caption{\label{tab:FF-twists}The monomial generators of the Feigin--Frenkel centres of the untwisted algebra $\gf_u$ and its associated twisted algebra $\gf_t$. Note that algebras of type $\df_n$ have two generators of degree $n$, only one of which is invariant under the outer automorphism. For the $\df_4$ case, neither generator of degree four is invariant under the $\IZ_3$ outer automorphism.}
\end{table}

\subsection{\label{subsec:opers_and_FF}Opers and the Feigin--Frenkel centre}

This subsection is largely self-contained and may be skipped over by the reader who is less interested in these technical details. At the end of this subsection, we formulate a theorem in the language of opers that implies a lifting of $\zf_{t,\lambda}$ modules to $\zf_{u,\lambda'}$ modules. In the following subsection we will move on to investigate the properties of $\V_{1,1}$.

We will not require the full machinery of opers (as can be defined on general algebraic curves), but instead will proceed with only a selective review of opers on the (punctured) disk with a choice of local coordinate. For more details, we direct the reader to~\cite{Frenkel:2004,Frenkel:2007}.

Let $D=\mathrm{Spec}\, \IC[\![t]\!]$ be the formal disc and $D^\times=\mathrm{Spec}\, \IC(\!(t)\!)$ be the formal punctured disc. Let $G$ be a semisimple algebraic Lie group with $Lie(G)=\gf$, and $B\subset G$ a Borel subgroup of $G$ with $Lie(B)=\mathfrak{b}$. A $G$-oper on $D$ is the triple,
\begin{equation}
	(\mathscr{F},\nabla,\mathscr{F}_B)~,
\end{equation}
where $\mathscr{F}$ is a principal $G$-bundle on $D$, $\nabla$ is a $G$-connection on $\mathscr{F}$, and $\mathscr{F}_B$ is a reduction of $\mathscr{F}$ to the Borel subgroup $B$ such that $\mathscr{F}_B$ is transversal to $\nabla$. A $G$-oper on $D^\times$ is defined analogously.

The space of opers on the disc can be made more concrete as follows. Let $\Lambda_p(h)$ be the image of the Cartan generator of $\mathfrak{sl}_2$ under the principal embedding, $\Lambda_{p}:\mathfrak{sl}_2\rightarrow \gf$. This defines gradings
\begin{equation}
	\gf = \bigoplus_{i} \gf_i~,\quad \mathfrak{b} = \bigoplus_i\mathfrak{b}_i~,
\end{equation}
on $\gf$ and $\mathfrak{b}$ by eigenvalues of $\mathrm{ad}\, \Lambda_p(h)$. Let $p_{-1} = \Lambda_p(f)$ be the image of the $\mathfrak{sl}_2$ lowering operator and let $p_1=\Lambda_p(e)$ be the image of the raising operator. We define $V^{can}$ to be the subspace of $\mathfrak{n}= \bigoplus_{i>0}\mathfrak{b}_i$ that is invariant under $p_1$,~\ie,~it is the subspace spanned by highest weight vectors with respect to the embedded $\mathfrak{sl}_2$. The subspace has the decomposition
\begin{equation}
	V^{can} = \bigoplus_{d_i} V_{d_i}^{can}~,
\end{equation}
where $d_i$ are the exponents of $\gf$. Let $\{p_i\}$ be a basis of $V^{can}$ such that $p_i\in V^{can}_{d_i}$.\footnote{In the case of $\df_{2n}$, the subspace $V^{can}_{2n}$ is two dimensional and we have two linearly independent vectors $p_{2n}$ and $\tilde{p}_{2n}$.} There is a transparent similarity to the generators of $\zf(\gf)$.

We denote the space of all such $G$-opers on the $D$ and $D^\times$ by $\mathrm{Op}_{G}\, D$ and $\mathrm{Op}_{G}\, D^\times$, respectively. Let $t$ be a formal coordinate on $D$. By an appropriate gauge transformation, any $G$-oper can be put into the canonical form~\cite{Frenkel:2007},
\begin{equation}
	\nabla = d + p_{-1} + \sum_{i=1}^{\mathrm{rk}\,\gf}v_i(t)p_i~,
\end{equation}
where $v_i(t)\in\IC[\![t]\!]$ or $v_i(t)\in\IC(\!(t)\!)$ for $D$ or $D^\times$, respectively. A celebrated theorem of Feigin and Frenkel states that~\cite{Feigin:1984},
\begin{equation}
	\zf(\gf) \cong \mathrm{Fun}( \mathrm{Op}_{^L G}\, D)~,
\end{equation}
and
\begin{equation}
	\ZZ \cong \mathrm{Fun}(\mathrm{Op}_{^L G}\, D^\times)~,
\end{equation}
where $^L G$ is the Langlands dual group of $G$.

The Feigin--Frenkel centre acts on the Weyl modules $\IV_\lambda$, and this action factors through a certain quotient of the algebra of functions on opers~\cite{Frenkel:2010}. Namely,
\begin{equation}\label{eq:weyl-proj}
	\ZZ \xrightarrow{\sim} 	\mathrm{Fun}\,\mathrm{Op}_{^L G}(D^\times) \twoheadrightarrow\mathrm{Fun}\, \mathrm{Op}_{^L G}^{\lambda} \cong \mathrm{End}_{\hat{\gf}_{\kappa_c}}(\IV_\lambda)~.
\end{equation}
The quotient arises as follows. The subscheme $\mathrm{Op}_{^L G} (D)^{\lambda, reg}\subset\mathrm{Op}_{^L G}(D)$ is that of opers on the disc with \emph{regular singularity, with residue $\lambda$, and with no monodromy} (see~\cite{Frenkel:2007} for more details). The subscheme $\mathrm{Op}_{^L G}^{\lambda}\subset \mathrm{Op}_{^L G}(D^\times)$ is the image of $\mathrm{Op}_{^L G} (D)^{\lambda, reg}$ under the natural inclusion
\begin{equation}
	\mathrm{Op}_{^L G}(D)\hookrightarrow \mathrm{Op}_{^L G}(D^\times)~.
\end{equation}
The kernel of the projection \eqref{eq:weyl-proj} onto the opers without monodromy is the annihilator ideal
\begin{equation}
	I_\lambda= \mathrm{Ann}_{\zf}\IV_\lambda~.
\end{equation}
Thus, we have
\begin{equation}
	\zf_\lambda \coloneqq \ZZ_{<0}/I_\lambda \cong \mathrm{Fun}\,\mathrm{Op}_{^L G}^{\lambda}~~,
\end{equation}
where $\zf_{\lambda}$ is the quotient we introduced in \eqref{eq:ff_quot}.

The particulars of our case lead to some simplifications. We take $^LG_u$ to be the Langlands dual of the simply connected Lie group, $G_u$, with Lie algebra $\gf_u$. Since $\gf_u$ is simply laced, $\mathrm{Lie}(^LG_u)\cong\gf_u$.

From our discussion in Section~\ref{subsec:twisted_untwisted_FF}, we have the following projection
\begin{equation}
	\mathrm{Fun\, Op}_{^LG_u}(D^\times)\twoheadrightarrow(\mathrm{Fun\, Op}_{^LG_u}(D^\times))_\sigma\cong \mathrm{Fun\, Op}_{^LG_t}(D^\times)~,
\end{equation}
which gives a closed embedding of the schemes
\begin{equation}
	\mathrm{Op}_{^LG_t}(D^\times)\cong(\mathrm{Op}_{^LG_u}(D^\times))^\sigma \hookrightarrow \mathrm{Op}_{^LG_u}(D^\times)~,
\end{equation}
where $(\mathrm{Op}_{^LG_u}(D^\times))^\sigma$ is the spectrum of $(\mathrm{Fun\, Op}_{^LG_u}(D^\times))_\sigma$.
This is nothing more than the surjection $\ZZ_u\twoheadrightarrow \ZZ_t$ of Section~\ref{subsec:twisted_untwisted_FF}, rephrased in the language of opers.

To restrict this surjection to the no-monodromy opers, it is necessary that $\lambda'$ be outer automorphism invariant, specifically $\lambda'=\iota(\lambda)$. We will continue to abuse notation and denote $\iota(\lambda)$ by $\lambda$. We then have the following.
\begin{thm}\label{thm:closed_immersion}
The restriction of the closed immersion
\begin{equation*}
	\mathrm{Op}_{ ^L G_t}(D^\times) \hookrightarrow\mathrm{Op}_{ ^L G_u}(D^\times)~.
\end{equation*}
to the subscheme $\mathrm{Op}_{^L G_t}^{\lambda}$ factors as
\begin{equation*}
	\mathrm{Op}_{^L G_t}^{\lambda}\hookrightarrow\mathrm{Op}_{^L G_u}^{\lambda}\hookrightarrow\mathrm{Op}_{ ^L G_u}(D^\times)~,
\end{equation*}
with each map a closed immersion. Equivalently, the natural surjection
\begin{equation*}
	\mathrm{Fun\, Op}_{^LG_u}(D^\times)\twoheadrightarrow (\mathrm{Fun\, Op}_{^LG_u}(D^\times))_\sigma~,
\end{equation*}
restricts to a surjection
\begin{equation*}
	\mathrm{Fun\, Op}^\lambda_{^LG_u}\twoheadrightarrow (\mathrm{Fun\, Op}^\lambda_{^LG_u})_\sigma~,
\end{equation*}
on the quotient algebras.
\end{thm}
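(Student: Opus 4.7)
The strategy is to recast the claim entirely in terms of the subschemes inside $\mathrm{Op}_{^L G_u}(D^\times)$ and to verify that the scheme-theoretic intersection of $\mathrm{Op}^{\lambda}_{^L G_u}$ with the $\sigma$-fixed locus $\mathrm{Op}_{^L G_t}(D^\times)$ equals $\mathrm{Op}^{\lambda}_{^L G_t}$. Because $\mathrm{Op}_{^L G_t}(D^\times) \hookrightarrow \mathrm{Op}_{^L G_u}(D^\times)$ is already a closed immersion (being cut out by the ideal of $\sigma$-non-invariants, dual to $\ZZ_u \twoheadrightarrow \ZZ_t$), once the image of $\mathrm{Op}^\lambda_{^L G_t}$ is identified with a closed subscheme of $\mathrm{Op}^\lambda_{^L G_u}$ the factorization follows automatically, and the closed-immersion property of each arrow is inherited from pullback of closed subschemes. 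In the dual language of ideals, writing $I_{\lambda,u} \subset \mathrm{Fun\,Op}_{^L G_u}(D^\times)$ and $I_{\lambda,t} \subset \mathrm{Fun\,Op}_{^L G_t}(D^\times)$ for the annihilators of $\IV^u_{\iota(\lambda)}$ and $\IV^t_\lambda$ respectively, the task is to show that the image of $I_{\lambda,u}$ under the $\sigma$-coinvariants projection coincides with $I_{\lambda,t}$.

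The first step would be to verify the set-theoretic inclusion of points. A $\IC$-point of $\mathrm{Op}^{\lambda}_{^L G_t}$ is a $^L G_t$-oper on $D^\times$ extending to one on $D$ with regular singularity, residue $\iota_{\mathrm{res}}(\lambda)$, and with trivialisable underlying $^L G_t$-connection. Pushing the oper forward along the Lie subgroup inclusion $^L G_t = (^L G_u)^\sigma \hookrightarrow {^L G_u}$ (which preserves the principal $\mathfrak{sl}_2$ triples on both sides, so preserves the oper structure) manifestly preserves regular singularity and triviality of monodromy, and the residue transforms via the embedding $\iota$ of the Cartan weight spaces, which is exactly dual to $\pi_\sigma$. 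So the image lands set-theoretically inside $\mathrm{Op}^{\lambda}_{^L G_u}$.

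The second and harder step is to upgrade this to scheme-theoretic equality. The cleanest approach is to use the Frenkel--Feigin isomorphism $\zf_\lambda \cong \mathrm{Fun\,Op}^\lambda_{^L G}$ on both sides together with the freeness/projectivity of $\IV_\lambda$ over $\zf_\lambda$. Explicitly, one examines the action of $\ZZ_u$ on the Weyl module $\IV^t_\lambda$ (lifted via the restriction of scalars associated to $\ZZ_u \twoheadrightarrow \ZZ_t$) and shows that it factors through $\mathrm{End}_{\hat{\gf}_{u,\kappa_c}}(\IV^u_{\iota(\lambda)})$; equivalently, that $I_{\lambda,u}$ annihilates $\IV^t_\lambda$. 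Combining this with surjectivity of $\ZZ_u \twoheadrightarrow \ZZ_t$ then yields the induced surjection $\mathrm{Fun\,Op}^\lambda_{^L G_u} \twoheadrightarrow \mathrm{Fun\,Op}^\lambda_{^L G_t}$.

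The main obstacle I anticipate is ruling out spurious relations: \emph{a priori} the image of $I_{\lambda,u}$ could lie strictly inside $I_{\lambda,t}$, in which case the map on quotients would not be a surjection onto the twisted oper algebra. To exclude this one needs a matching of Hilbert series, which in turn can be read off from the explicit tables of degrees of generators of $\zf(\gf_u)$ and $\zf(\gf_t)$ (Table~\ref{tab:FF-twists}) together with the known freeness of $\zf_\lambda$ on generators in the same degrees. A secondary subtlety, requiring care in the case where $\gf_u$ is of type $\df_n$, is that one of the two Feigin--Frenkel generators of degree $n$ is not $\sigma$-invariant and must be checked to lie in $I_{\lambda,u}$ modulo the $\sigma$-invariant generators before passing to coinvariants.
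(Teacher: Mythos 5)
Your proposal has the right skeleton---you correctly translate the theorem into the statement that the image of $I_{\lambda,u}=\mathrm{Ann}_{\ZZ_u}\,\IV^u_{\iota(\lambda)}$ under $\ZZ_u\twoheadrightarrow\ZZ_t$ lands inside $I_{\lambda,t}=\mathrm{Ann}_{\ZZ_t}\,\IV^t_\lambda$, after which the induced map on quotient algebras exists and is automatically surjective---but the step that carries all of the content is asserted rather than proved. You write that one ``shows that $I_{\lambda,u}$ annihilates $\IV^t_\lambda$''; that \emph{is} the theorem, and nothing in your outline supplies a mechanism for it. Note that $\IV^t_\lambda$ is a module over $\hat{\gf}_{t,\kappa_c}$, where $\gf_t$ is the Langlands \emph{dual} of the invariant subalgebra $\gf_u^\sigma$, so it is not a restriction, submodule, or quotient of $\IV^u_{\iota(\lambda)}$ in any sense that would let you transport annihilators representation-theoretically; the only access to the statement is through the geometry of opers. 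The paper's proof supplies precisely the missing mechanism: via the Miura transform it realises $\mathrm{Fun\, Op}^\lambda_{^LG}$ as the joint kernel of the screening charges $V_i[\lambda_i+1]$ acting on $\mathrm{Fun\, Conn}(\Omega^{\rho})^{\lambda}_{D^\times}$, and the crux is the explicit lemma that a $\sigma$-invariant polynomial annihilated by the symmetrised charges $\tilde{V}_i=\tfrac{1}{|\langle\sigma\rangle|}\sum_{\sigma'}V_{\sigma'(i)}$ is annihilated by each $V_i$ separately, because $\sigma(V_iP)=V_{\sigma(i)}P$ and an inspection of the charges shows $V_{\sigma(i)}P$ cannot equal $-V_iP$ (or a nontrivial cube root of unity times it) unless both vanish. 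Some analogue of this computation, or another concrete handle on the no-monodromy ideal, is indispensable.

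Two further points. First, your ``main obstacle'' is misdiagnosed: if $p(I_{\lambda,u})$ were \emph{strictly} contained in $I_{\lambda,t}$, the induced map $\zf^u_\lambda\to\zf^t_\lambda$ would still exist and still be surjective (it is the factorisation of the surjection $\ZZ_u\to\ZZ_t\to\ZZ_t/I_{\lambda,t}$ through $\ZZ_u/I_{\lambda,u}$), so the Hilbert-series matching you propose is not needed for the theorem as stated---and would be hard to execute anyway, since it requires the $\sigma$-module structure of $\zf^u_\lambda$ rather than merely the degrees listed in Table~\ref{tab:FF-twists}. Second, your set-theoretic step (pushing no-monodromy opers forward along ${}^LG_t\hookrightarrow{}^LG_u$) is a sensible sanity check, but upgrading pointwise containment to the scheme-theoretic statement here requires reducedness of $\mathrm{Op}^{\lambda}_{^LG_t}$ together with a Nullstellensatz for polynomial rings on countably many variables over $\IC$; you correctly flag that this upgrade is nontrivial, but your proposed route around it is exactly the unproven annihilation claim above.
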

We present a proof of this theorem that makes use of Miura opers and Cartan connections in Appendix~\ref{sec:proof_closed_immersion}. The theorem amounts precisely to the statement that there is a surjection $\zf^u_{\lambda}\twoheadrightarrow \zf^t_{\lambda}$, and this will be crucial in establishing that our construction of mixed vertex algebras of class $\SS$ makes sense and that the vertex algebras so defined satsify certain desirable properties.

\subsection{\label{subsec:mixed_trinion_properties}Properties of the mixed trinion}

In what follows we will make use of the following technical proposition.
\begin{prop}\label{prop:kl_u}
Let $N$ be a $\zf^t_\lambda$-module. Then $N$ is an object of $KL_{u,0}$,~\ie,~there exists some $M\in\KL_u$ such that
\begin{equation*}
	N = H^0_{DS}(u,M)~.
\end{equation*}
\end{prop}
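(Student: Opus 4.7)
The plan is to realise $N$ explicitly as the Drinfel'd--Sokolov reduction of a Weyl-type module over $\hat{\gf}_u$, leveraging Theorem~\ref{thm:closed_immersion} to transfer the $\zf^t_\lambda$-action on $N$ to one over $\zf^u_\lambda$. Concretely, Theorem~\ref{thm:closed_immersion} supplies a surjection $\zf^u_\lambda\twoheadrightarrow \zf^t_\lambda$, so restriction of scalars endows $N$ with the structure of a $\zf^u_\lambda$-module, and any tensor product over $\zf^u_\lambda$ with $N$ coincides with the corresponding tensor product over $\zf^t_\lambda$. The candidate preimage is then
\begin{equation*}
    M \colonequals \IV^u_\lambda \otimes_{\zf^u_\lambda} N~,
\end{equation*}
on which $\hat{\gf}_u$ acts through the left tensor factor, with $N$ carrying no $\hat{\gf}_u$-action.

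I would next check that $M\in\KL_u$. The $\hat{\gf}_u$-action on the left factor is smooth at critical level and the $\gf_u$-action is locally finite; since $\zf^u_\lambda$ lies in the centraliser of $\gf_u$, tensoring over it with $N$ preserves both properties. The $\IZ_{\geqslant 0}$-grading of $\IV^u_\lambda$ combined with the (positive-energy) grading of $N$ as a $\ZZ_u$-module yields a suitable grading on $M$, and writing $N$ as the directed colimit of its finitely generated $\zf^u_\lambda$-submodules exhibits $M$ as a direct limit of objects in $\KL_u^{ord}$.

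The core computation is that $H^0_{DS}(M)\cong N$. The inputs are exactness of DS reduction on $\KL_u$ (Theorem~\ref{thm:ds_vanish}), the isomorphism $H^0_{DS}(\IV^u_\lambda)\cong \zf^u_\lambda$ of~\eqref{eq:ds_weyl}, and the $\zf^u_\lambda$-flatness of $\IV^u_\lambda$ (immediate from the free $\zf^u_\lambda$-action below~\eqref{eq:ff_quot}). Starting from any free presentation
\begin{equation*}
    \bigoplus_\alpha \zf^u_\lambda \longrightarrow \bigoplus_\beta \zf^u_\lambda \longrightarrow N \longrightarrow 0~,
\end{equation*}
applying the exact functor $\IV^u_\lambda\otimes_{\zf^u_\lambda}-$ gives a presentation of $M$ by direct sums of copies of $\IV^u_\lambda$, and applying the exact, additive functor $H^0_{DS}$ restores the original presentation with $\zf^u_\lambda$ in place of $\IV^u_\lambda$, yielding $H^0_{DS}(M) \cong \zf^u_\lambda \otimes_{\zf^u_\lambda} N \cong N$ as $\ZZ_u$-modules.

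The main obstacle in this argument is establishing that $H^0_{DS}$ genuinely commutes with the tensor product $\IV^u_\lambda\otimes_{\zf^u_\lambda}-$; this is where $\zf^u_\lambda$-flatness of $\IV^u_\lambda$ is essential, since it guarantees that the presentation of $N$ survives intact after applying the composite functor. A secondary and largely routine issue is verifying the $\KL_u$ membership of $M$, which reduces to the grading and local finiteness bookkeeping above, both of which are automatic once $N$ is taken to be a $\ZZ_t$-module in the positive-energy sense adopted throughout the paper.
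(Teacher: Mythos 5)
Your proposal is correct and follows essentially the same route as the paper: you lift the $\zf^t_\lambda$-action to $\zf^u_\lambda$ via Theorem~\ref{thm:closed_immersion}, take $M=\IV^u_\lambda\otimes_{\zf^u_\lambda}N$, and compute $H^0_{DS}(u,M)\cong\zf^u_\lambda\otimes_{\zf^u_\lambda}N\cong N$ using $H^0_{DS}(\IV^u_\lambda)\cong\zf^u_\lambda$ and exactness of DS reduction. The only cosmetic difference is that where the paper invokes the K\"unneth theorem (justified by freeness of $\zf^u_\lambda$ over itself), you unpack the same fact through a free presentation of $N$ and the $\zf^u_\lambda$-freeness of $\IV^u_\lambda$.
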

\begin{proof}
We prove this by explicitly constructing an object in $\KL_u$ whose DS reduction is isomorphic to $N$ as an object of $\zf^t_\lambda$-mod. From Theorem~\ref{thm:closed_immersion}, the $\zf^t_{\lambda}$ action can be lifted to an action of $\mathfrak{z}^u_{\lambda}$.

Let $\IV_\lambda^u$ be the Weyl module of $\hat{\gf}_{u,\kappa_c}$ with highest weight $\iota(\lambda)$. The tensor product,
\begin{equation*}
	\IV_\lambda^u\otimes_{\zf_\lambda^u} N~,
\end{equation*}
is well-defined, where the $\zf_\lambda^u$ action on $N$ is from the lift. By construction, this is an object in $\KL_u$ with respect to the $\hat{\gf}_{u,\kappa_c}$ action on the untwisted Weyl module. Let us consider its DS reduction,
\begin{equation*}
	H^0_{DS}(u,\IV_\lambda^u\otimes_{\zf_\lambda^u} N)~.
\end{equation*}
By way of the K\"unneth theorem, noting that $H^0_{DS}(\IV_\lambda)\cong\zf_\lambda^u$~\cite{Frenkel:2010} is manifestly free over itself, we have that
\begin{equation*}
	H^0_{DS}(u,\IV_\lambda^u\otimes_{\zf_\lambda^u} N) \cong \zf^u_\lambda\otimes_{\zf_\lambda^u}N \cong N~,
\end{equation*}
as desired.
\end{proof}
With this in hand we can establish an important theorem concerning the DS reduction of our mixed trinion vertex algebra.
\begin{thm}\label{thm:cyl_kl0}
We have the following isomorphism:
\begin{equation*}
H^0_{DS}(u,\V_{1,1}) \cong \D_t~,
\end{equation*}
so $\D_t \in \KL_{u,0}$.
\end{thm}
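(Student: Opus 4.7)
The plan is to reduce the isomorphism to an application of the inversion identity~\eqref{eq:dsid} by first establishing that $\D_t$ lies in $\KL_{u,0}$.

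First, I observe that the twisted cylinder $\D_t$ should carry a filtration completely analogous to the one satisfied by the untwisted cylinder in~\cite{Arakawa:2018egx}: its successive quotients are the doubly-glued critical-level twisted Weyl modules
\begin{equation*}
  \IV^t_{\lambda,2} \coloneqq \IV^t_\lambda \otimes_{\zf^t_\lambda} \IV^t_{\lambda^*}~,\qquad \lambda\in P^+_t~.
\end{equation*}
Each such quotient carries a natural action of $\zf^t_\lambda$, which by Theorem~\ref{thm:closed_immersion} lifts through the surjection $\zf^u_{\iota(\lambda)}\twoheadrightarrow\zf^t_\lambda$ to an action of $\zf^u_{\iota(\lambda)}$. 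Proposition~\ref{prop:kl_u} then places each $\IV^t_{\lambda,2}$ in $\KL_{u,0}$, supplying the explicit preimage under $H^0_{DS}(u,-)$,
\begin{equation*}
  M_\lambda \coloneqq \IV^u_{\iota(\lambda)} \otimes_{\zf^u_{\iota(\lambda)}} \IV^t_{\lambda,2} \in \KL_u~.
\end{equation*}

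Second, I would assemble the family $\{M_\lambda\}_{\lambda\in P^+_t}$ into a single object $M\in\KL_u$ with a filtration by the $M_\lambda$ whose extensions, under $H^0_{DS}(u,-)$, reproduce those of the filtration of $\D_t$. Since $H^0_{DS}(u,-)$ is exact on $\KL_u$ by Theorem~\ref{thm:ds_vanish}, such an $M$ satisfies $H^0_{DS}(u,M)\cong\D_t$, and hence $\D_t\in\KL_{u,0}$.

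The theorem then follows at once from the inversion identity: the text surrounding~\eqref{eq:dsid} guarantees that on $\KL_{u,0}$ the composite functor $H^0_{DS}\bigl(u,\,\Hi{0}(\ZZ_u,\W_u\otimes -)\bigr)$ is the identity, so
\begin{equation*}
  H^0_{DS}(u,V_{1,1}) \;=\; H^0_{DS}\bigl(u,\,\Hi{0}(\ZZ_u,\W_u\otimes\D_t)\bigr) \;\cong\; \D_t~.
\end{equation*}

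The delicate step is the assembly: producing a single $M$ whose filtration realises the same extension classes as $\D_t$. This amounts to showing that the classes in $\mathrm{Ext}^1_{\ZZ_u\text{-Mod}}$ gluing the successive quotients of $\D_t$ are in the image of $\mathrm{Ext}^1_{\KL_u}$ of the $M_\lambda$ under $H^0_{DS}(u,-)$. Exactness of $H^0_{DS}$ should preclude higher-$\mathrm{Ext}$ obstructions, but carefully verifying the compatibility---especially given the additional complication of the nontrivial surjection $\zf^u_{\iota(\lambda)}\twoheadrightarrow\zf^t_\lambda$ on the central algebras---is the main technical hurdle.
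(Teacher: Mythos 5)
Your overall architecture is plausible, but the step you yourself flag as ``the main technical hurdle''---assembling the $M_\lambda$ into a single object $M\in\KL_u$ whose filtration reproduces the extension classes of $\D_t$---is a genuine gap rather than a routine verification, and it is precisely the point where the difficulty of the theorem lives. Exactness of $H^0_{DS}(u,-)$ gives you an induced map on $\mathrm{Ext}^1$ groups, but nothing forces that map to be surjective, so there is no reason the (infinitely many) extension classes gluing the subquotients $\IV^t_{\lambda,2}$ of $\D_{t,[\lambda]}$ inside $\ZZ_u$-Mod should lift to extensions of the $M_\lambda$ inside $\KL_u$; you would additionally have to control the direct limit over the unbounded increasing filtration and check that the resulting object still lies in $\KL_u$. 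Until this is done you have not established $\D_t\in\KL_{u,0}$, so the appeal to the inversion identity around~\eqref{eq:dsid} cannot get started. (Proving $\D_t\in\KL_{u,0}$ first and deducing the isomorphism is a legitimate alternative ordering---the paper obtains $\D_t\in\KL_{u,0}$ only as a consequence, with $\V_{1,1}$ itself as the preimage---but the alternative ordering requires you to actually exhibit the preimage.)

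The paper's proof sidesteps the lifting problem entirely. It sets $\FF=H^0_{DS}(u,\Hi{0}(\ZZ_u,\W_u\otimes-))$ and exploits only that $\FF$ is \emph{left} exact: applying $\FF$ to the filtration of $\D_{t,[\lambda]}$ gives $\FF(N_i)/\FF(N_{i-1})\subseteq\FF(N_i/N_{i-1})$, and since each subquotient lies in $\KL_{u,0}$ (your Proposition~\ref{prop:kl_u} step, which is correct), this yields only the character \emph{inequality} $\ch\,\FF(\D_{t,[\lambda]})\leqslant\ch\,\D_{t,[\lambda]}$. The isomorphism is then forced by the simplicity of $\D_t$ once a nonzero homomorphism $\D_t\rightarrow\FF(\D_t)$ is exhibited; that map is constructed by commuting $(-)^{\gf_t^+}$ past both cohomologies via two spectral-sequence lemmas, identifying the weight-zero subspace of $\FF(\D_t)$ with $\OO(G_t)$ as a $G_t\times G_t$-algebra, and running the argument of Theorem 9.9 of~\cite{Arakawa:2018egx}. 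If you want to salvage your route, you would need to replace the Ext-lifting step with something of comparable strength; in practice you would end up reproving the character bound together with the nonzero-map construction.
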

This is the statement that the mixed vertex algebra we have constructed can indeed be identified with the UV curve $\CC_{0,1,1}$ insofar as closing the maximal untwisted puncture results in the cylinder of type $\gf_t$. The proof of the above theorem is not entirely straightforward because $H^0_{DS}(u,\Hi{0}(\ZZ_u,\W_u\otimes -))$ is not necessarily the identity on a generic object in $\ZZ_u$-Mod. The full proof of the the theorem is relegated to Appendix~\ref{app:proof_cyl}; here we provide a sketch.

The proof proceeds by first establishing that at the level of formal characters,
\begin{equation}\label{eq:character_inequality}
\ch~ H^0_{DS}(u,\V_{1,1})_{\lambda} \leqslant \ch~ \D_{t,[\lambda]}~.
\end{equation}
In words, each weight space (with fixed generalised eigenvalue under the action of the Feigin--Frenkel zero modes) of $H^0_{DS}(u,\V_{1,1})$ is of dimension less than or equal to that of the corresponding weight space of $\D_t$. This is argued by leveraging the fact that $\D_t$ has an increasing filtration with subquotients $\IV^t_{\lambda,2}$, which are in $\KL_{u,0}$ by Proposition~\ref{prop:kl_u}. We show that passing (in a careful sense) to the associated graded of this filtration can only increase the dimensions of the weight spaces, and on the associated graded the composition of FF gluing and DS reduction acts as the identity; this leads to \eqref{eq:character_inequality}. Since $\D_t$ is simple, we need only construct a non-zero homomorphism $\D_t\rightarrow H^0_{DS}(u,\V_{1,1})$ to establish the isomorphism. The construction of such a homomorphism follows an adaptation of the proof of Theorem $9.9$ of~\cite{Arakawa:2018egx} to this twisted setting.

Theorem \ref{thm:cyl_kl0} will serve as the foundation which lets us build up a number of other important properties of the genus zero mixed vertex algebras.
\begin{prop}\label{prop:trin_free_cofree}
The mixed trinion $\V_{1,1}$ is semijective in $\KL_t$.
\end{prop}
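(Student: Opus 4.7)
The proof should parallel Arakawa's semijectivity argument for the untwisted genus-zero VOAs $\V_{G,s}$, which proceeds by exhibiting an increasing filtration $\FF^\bullet\V_{G,s}$ with successive quotients $\IV_{\lambda,s}$ and a decreasing filtration $\GG^\bullet\V_{G,s}$ with successive quotients $D(\IV_{\lambda,s})$. Since each $\IV_{\lambda,s}$ is projective over $U(t^{-1}\gf[t^{-1}])$ (for each of the $s$ copies of the $\hat\gf$-action) and its dual is injective over $U(t\gf[t])$, semijectivity follows. My plan is to execute this playbook with $\gf_t$ in place of $\gf$ for each of the two twisted punctures of $\V_{1,1}$.

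The input data is that the twisted cylinder $\D_t=\D_{G_t}$, being the chiral differential operators on $G_t$, comes equipped with analogous increasing and decreasing filtrations whose successive quotients are $\IV^t_{\lambda,2}$ and $D(\IV^t_{\lambda,2})$. These are manifestly compatible with both of its two commuting $\hat\gf_t$-actions and with its internal $\ZZ_t$-action, and via the surjection $\ZZ_u\twoheadrightarrow\ZZ_t$ from Section~\ref{subsec:twisted_untwisted_FF} we may regard them as $\ZZ_u$-compatible filtrations on $\D_t$.

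The heart of the argument is to show that these filtrations descend through the functor $\W_u\ast(-)=\Hi{0}(\ZZ_u,\W_u\otimes-)$ to filtrations on $\V_{1,1}$ with the expected subquotients. Exactness is handled as in Arakawa's setting: the cap $\W_u$ is free over $\ZZ_{u,(<0)}$, so each $\W_u\otimes M$ is too, and Theorem~\ref{thm:van-ff} ensures that higher cohomology vanishes; hence $\W_u\ast(-)$ is exact on the modules in question. Because the two $\hat\gf_t$-actions on $\D_t$ commute with the $\ZZ_u$-BRST differential, they descend unchanged to the cohomology, yielding $\hat\gf_t\oplus\hat\gf_t$-compatible filtrations of $\V_{1,1}$ whose successive quotients are $\W_u\ast\IV^t_{\lambda,2}$ and $\W_u\ast D(\IV^t_{\lambda,2})$.

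The principal obstacle is then to verify that $\W_u\ast\IV^t_{\lambda,2}$ is isomorphic to $\IV^t_{\lambda,2}$ as a $\hat\gf_t\oplus\hat\gf_t$-module, the $\hat\gf_u$-action supplied by the cap being additional data that does not affect the $\hat\gf_t$-structure relevant for our semijectivity claim. Theorem~\ref{thm:closed_immersion} is the essential input: it guarantees that the $\zf^u_\lambda$-action on $\IV^t_{\lambda,2}$ factors through $\zf^t_\lambda$, so the FF gluing is well-defined and one anticipates an identification of the form $\W_u\ast\IV^t_{\lambda,2}\cong\IV^u_\lambda\otimes_{\zf^u_\lambda}\IV^t_{\lambda,2}$ which leaves the $\hat\gf_t$-structure intact. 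Establishing this identification rigorously should rest on the same techniques used in the proof of Theorem~\ref{thm:cyl_kl0}, namely a character bound obtained by passing to an associated graded together with the construction of an explicit nonzero $\hat\gf_t\oplus\hat\gf_t$-module homomorphism, now generalised from the vacuum case to arbitrary $\lambda\in P_t^+$.
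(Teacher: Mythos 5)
Your skeleton is the paper's: filter $\D_t$ by subquotients $\IV^t_{\lambda,2}$, push the filtration through $\W_u\ast_u(-)$ using freeness of $\W_u$ over $\ZZ_{u,(<0)}$ and the vanishing theorem, identify the resulting subquotients as $\IV^u_\lambda\otimes_{\zf^u_\lambda}\IV^t_{\lambda,2}$, and conclude projectivity over $U(t^{-1}\gf_t[t^{-1}])$ from projectivity of the twisted Weyl modules. The one step where your proposed method would actually fail is the identification $\W_u\ast_u\IV^t_{\lambda,2}\cong\IV^u_\lambda\otimes_{\zf^u_\lambda}\IV^t_{\lambda,2}$. You suggest generalising the proof of Theorem~\ref{thm:cyl_kl0} --- a character bound plus an explicit nonzero homomorphism --- from the vacuum to arbitrary $\lambda$. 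But the final step of that proof upgrades a nonzero homomorphism to an isomorphism by invoking the \emph{simplicity} of $\D_t$, and $\IV^t_{\lambda,2}$ is not simple, so the argument does not transport. The paper gets this identification essentially for free from Proposition~\ref{prop:kl_u}: any $\zf^t_\lambda$-module $N$ satisfies $N\cong H^0_{DS}(u,\IV^u_\lambda\otimes_{\zf^u_\lambda}N)$ by the K\"unneth theorem (with Theorem~\ref{thm:closed_immersion}, which you correctly flagged, supplying the lift of the action), and then the fact that $\W_u\ast_u H^0_{DS}(u,-)$ is the identity on $\KL_{u,0}$ gives the subquotient directly. The heavy machinery of Theorem~\ref{thm:cyl_kl0} is needed only because $\D_t$ itself is an infinite extension of such modules rather than a single one; for each individual subquotient it is the wrong tool.

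A smaller divergence: for injectivity over $U(t\gf_t[t])$ you propose to push the decreasing filtration of $\D_t$, with subquotients $D(\IV^t_{\lambda,2})$, through the gluing, which obliges you to separately control $\W_u\ast_u D(\IV^t_{\lambda,2})$ and its compatibility with contragredient duality. The paper sidesteps this by running the increasing-filtration argument on $(\D_t)^{op}\cong\D_t$.
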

\begin{proof}
The cylinder $\D_t$ has an increasing filtration~\cite{frenkel2004},
\begin{equation*}
	0=N_0\subset N_1\subset N_2 \subset \dots ~,\quad N = \bigcup N_i \cong \D_t~,
\end{equation*}
whose successive quotients take the form
\begin{equation*}
	N_{i}/N_{i-1} \cong \IV_\lambda^t\otimes_{\zf^t_\lambda}\IV_{\lambda^*}^t~,
\end{equation*}
for some $\lambda\in P_t^+$ and $\lambda^*$ the dual representation. Each $N_i/N_{i-1}$ is an object in $\zf_{\lambda}$, so Proposition~\ref{prop:kl_u} applies and $N_i/N_{i-1}\in\KL_{u,0}$. Applying our Theorem~\ref{thm:cyl_kl0} and Theorem $9.14$ of~\cite{Arakawa:2018egx}, the mixed vertex algebra $\V_{1,1} \cong \W_u\ast_u \D_t$ has an increasing filtration
\begin{equation*}
	0= M_0\subset M_1 \subset M_2 \subset \dots ~,\quad M = \bigcup M_i \cong \V_{1,1}~,
\end{equation*}
with successive quotients $M_i/M_{i-1}$ isomorphic to $\Hi{0}(\ZZ_u,\W_u\otimes N_{i}/N_{i+1})$. But from Proposition~\ref{prop:kl_u}, $N_i/N_{i-1}$ is in $\KL_{u,0}$ with $N_i/N_{i-1}\cong H^0_{DS}(u,\IV^u_\lambda\otimes_{\zf^u_{\lambda}}N_{i}/N_{i-1})$. Thus
\begin{equation*}
	M_i/M_{i-1} \cong \IV_\lambda^u \otimes_{\zf^u_\lambda} \left( \IV^t_\lambda\otimes_{\zf^t_\lambda}\IV_{\lambda^*}^t \right)~.
\end{equation*}
The action of $U(t^{-1}\gf_t[t^{-1}])$ on either of the Weyl modules $\IV^t_{\lambda,\lambda^*}$ is projective, so $\V_{1,1}$ is projective over $U(t^{-1}\gf_t[t^{-1}])$. To establish that $\V_{1,1}$ is injective over $U(t\gf_t[t])$, we can repeat the same argument after taking $(\D_t)^{op}$ and using the identification
\begin{equation*}
	(\D_t)^{op}\cong \D_t~.
\end{equation*}
\end{proof}
\vspace{-6pt}

We observe that $\V_{1,1}$ is not semijective in $\KL_u$. Inituitively, this is because the extra generators of the Feigin--Frenkel centre must be set to zero when glued to the twisted cylinder, and these relations spoil projectivity. More precisely, we consider the 
vacuum vector $\ket{0}$. Any state element $P_{i,-n}\ket{0}$ can be written as $\phi\ket{0}$ for some $\phi\in U(t^{-1}\gf_u[t^{-1}])$ a regular element. However, the modes $P_{i,-n}$, which are not invariant under $\sigma$, must act as zero.
As there are regular elements in $U(t^{-1}\gf_u[t^{-1}])$ which act as zero, 
$\V_{1,1}$ cannot be torsion free---so cannot be projective---over $U(t^{-1}\gf_u[t^{-1}])$.

The semijectivity of $\V_{1,1}$ in $\KL_t$ is in accordance with our expectations regarding enhanced Higgs branches/residual gauge symmetries. The twisted class $\SS$ theories for surfaces $\CC_{0,m,1}$ formed by gluing $\CC_{0,1,1}$ along twisted punctures have no residual gauge symmetry, so the gauge theory gluing $\V_{1,1}\circ_t-$ should be concentrated in cohomological degree zero. We have just established this for our $\V_{1,1}$ algebra by showing semijiectivity in $\KL_t$. On the other hand, gluing along the untwisted puncture may lead to a higher-genus local system covering space (\cf\ \ref{ssec:twisted_background/HL}), which falls in line with our observation that $\V_{1,1}$ is not semijective in $\KL_u$.

Having shown that $\V_{1,1}$ has the expected properties under $\circ_t$ and $\circ_u$, we move on to some more intrinsic properties of $\V_{1,1}$. Though our construction ensures that $\V_{1,1}$ is a vertex algebra, it is not at all clear that it has the properties expected from four-dimensional unitarity. Namely, $\V_{1,1}$ must be a conical, conformal vertex algebra with negative central charge. Let us first address the issue of the character.

\begin{prop}\label{prop:mixed_trin_char}
The character of the vertex algebra $\V_{1,1}$ is given by
\begin{equation*}
	\mathrm{Tr}_{\V_{1,1}}(q^{L_0}\mathbf{a}\,\mathbf{b}_1\mathbf{b}_2) = \sum_{\lambda\in P_t^+} \frac{\KK_u(\mathbf{a})\chi^\lambda_u(\mathbf{a})\KK_t(\mathbf{b}_1)\chi_t^\lambda(\mathbf{b}_1)\KK_t(\mathbf{b}_2) \chi_t^\lambda(\mathbf{b}_2)}{\KK_u(\times)\chi_u^\lambda(\times)} ~,
\end{equation*}
where $\mathbf{a}$ is a $G_u$ fugacity and the $\mathbf{b}_i$ are $G_t$ fugacities. Furthermore, $\V_{1,1}$ is conical.
\end{prop}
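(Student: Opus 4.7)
The plan is to compute $\ch \V_{1,1}$ by exploiting the increasing, exhaustive filtration of $\V_{1,1}$ constructed in the proof of Proposition~\ref{prop:trin_free_cofree}, whose successive quotients are $\IV_\lambda^u \otimes_{\zf^u_\lambda}(\IV^t_\lambda \otimes_{\zf^t_\lambda} \IV^t_{\lambda^*})$ for $\lambda \in P^+_t$. Since characters are additive on short exact sequences, $\ch \V_{1,1}$ is the sum over $\lambda$ of the characters of these successive quotients. For the twisted algebras appearing in Table~\ref{tab:outer_aut}, the longest Weyl-group element satisfies $w_0 = -1$, so the Cartan involution on $\gf_t$ is trivial and $\IV^t_{\lambda^*} \cong \IV^t_\lambda$, giving the desired symmetric dependence on $\mathbf{b}_1$ and $\mathbf{b}_2$.

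Each summand is then computed using the freeness of the Weyl modules over their respective Feigin--Frenkel centres, together with the identifications $\ch \IV^{u,t}_\lambda = \KK_{u,t}(\mathbf{x})\chi^\lambda_{u,t}(\mathbf{x})$ and $\ch \zf^{u,t}_\lambda = \KK_{u,t}(\times)\chi^\lambda_{u,t}(\times)$ from Section~\ref{ssec:twisted_background/index}. Reorganising the resulting expression and summing over $\lambda$ should directly reproduce the Schur-index formula~\eqref{eq:twistind} specialised to $\CC_{0,1,1}$ (where $g = 0$, $s_u = 1$, $s_t = 1$, giving denominator exponent $1$), which is precisely the formula in the statement.

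For conicality, each successive quotient inherits a bounded-below quasiconformal grading. The minimum weight in the $\lambda$-summand grows without bound as $\lambda$ ranges over $P^+_t$, so the total character is a well-defined formal power series in $q$ with finite-dimensional homogeneous components. The $\lambda = 0$ summand contributes the vacuum vector in degree zero as a unique ground state, and higher $\lambda$ contribute only at strictly positive weight, establishing conicality.

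The main obstacle is the careful bookkeeping required to evaluate the character of $\IV^u_\lambda \otimes_{\zf^u_\lambda}(\IV^t_\lambda \otimes_{\zf^t_\lambda} \IV^t_\lambda)$. The outermost tensor product makes use of the lift of the $\zf^t_\lambda$-action on the twisted factor to a $\zf^u_\lambda$-action via the surjection guaranteed by Theorem~\ref{thm:closed_immersion}, and one must take care to correctly account for this lift when forming the quotient so that the resulting summand reassembles into the expected index form. Modulo this accounting, the remaining steps are routine character manipulations together with a standard bounded-below argument for conicality.
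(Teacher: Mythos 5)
Your strategy---decompose $\V_{1,1}$ into $\lambda$-blocks using the filtration of Proposition~\ref{prop:trin_free_cofree}, evaluate each subquotient $\IV^u_\lambda\otimes_{\zf^u_\lambda}(\IV^t_\lambda\otimes_{\zf^t_\lambda}\IV^t_{\lambda^*})$, and resum---is essentially the route the paper takes: there the same decomposition is organised through Theorem~\ref{thm:cyl_kl0} together with Proposition 8.4 of \cite{Arakawa:2018egx}, which gives $\ch\,\V_{1,1,\lambda}=q^{\lambda(\rho^\vee)}\,\ch\,\IL^u_\lambda\cdot\ch\,H^0_{DS}(u,\V_{1,1,\lambda})$ with $H^0_{DS}(u,\V_{1,1,\lambda})=\D_{t,[\lambda]}$. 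Your remarks that $w_0=-1$ for $\gf_t$ (so $\lambda^*=\lambda$) and that Theorem~\ref{thm:closed_immersion} is what makes the outer tensor product meaningful are both correct.

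The gap sits exactly in the step you defer as ``routine bookkeeping''. If you compute the subquotient character ``using the freeness of the Weyl modules over their respective Feigin--Frenkel centres'', each relative tensor product contributes a division by the character of the centre being tensored over, and you land on
\[
\frac{\KK_u(\mathbf{a})\chi^\lambda_u(\mathbf{a})\,\KK_t(\mathbf{b}_1)\chi^\lambda_t(\mathbf{b}_1)\,\KK_t(\mathbf{b}_2)\chi^\lambda_t(\mathbf{b}_2)}{\KK_u(\times)\chi^\lambda_u(\times)\;\KK_t(\times)\chi^\lambda_t(\times)}\,,
\]
which differs from the claimed summand by an extra factor of $\KK_t(\times)\chi^\lambda_t(\times)$ in the denominator. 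The correct normalisation of the twisted factor is not obtained from naive freeness: one needs that $\ch\,\D_{t,[\lambda]}$ equals the $\lambda$-summand of the twisted-cylinder Schur index, i.e.\ $\KK_t(\mathbf{b}_1)\chi^\lambda_t(\mathbf{b}_1)\KK_t(\mathbf{b}_2)\chi^\lambda_t(\mathbf{b}_2)$ with \emph{no} structure-constant denominator (this is the $g=0$, $s_u=0$, $2s_t=2$ specialisation of \eqref{eq:twistind}, and is the fact the paper leans on when it asserts that the index of the twisted cylinder agrees with the character of $\D_{G_t}$). Only the single untwisted gluing of $\W_u$ to $\D_t$ produces the lone $\KK_u(\times)\chi^\lambda_u(\times)$ downstairs. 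A related issue infects your conicality argument: the claim that the minimal weight of the $\lambda$-summand grows without bound is false for the twisted factor on its own, since every block of $\D_t$ contains the weight-zero subspace $V^t_\lambda\otimes V^t_{\lambda^*}\subset\OO(G_t)$ (the cylinder is not conical). The positivity for $\lambda\neq 0$ is supplied entirely by the $q^{\lambda(\rho^\vee)}$ offset attached to the untwisted factor $\ch\,\IL^u_\lambda$ in Proposition 8.4 of \cite{Arakawa:2018egx}, and this input needs to be identified explicitly for both the finiteness of the weight spaces and the uniqueness of the vacuum.
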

\begin{proof}
By Theorem~\ref{thm:cyl_kl0}, we have that
\begin{equation*}
	H^0_{DS}(u, \V_{1,1}) \cong \D_t~.
\end{equation*}
As a graded vector space $\V_{1,1}\cong \bigoplus_{\lambda\in P_u^+} \V_{1,1,\lambda}$, and by Proposition $8.4$ of~\cite{Arakawa:2018egx}.
\begin{equation*}
	\mathrm{ch} \V_{1,1,\lambda} = q^{\lambda(\rho^\vee)}\mathrm{ch}\, \mathbb{L}_\lambda \mathrm{ch}H^0_{DS}(u,\V_{1,1,\lambda})~.
\end{equation*}
From the structure of the cylinder, we know that $H^0_{DS}(u,\V_{1,1,\lambda})$ is zero unless it is in the image of $\iota:P_t^+ \hookrightarrow P_u^+$. Therefore, we have
\begin{equation*}
	\mathrm{ch}\, \V_{1,1,\lambda} = q^{\lambda(\rho^\vee)}\mathrm{ch}\, \mathbb{L}^u_\lambda \mathrm{ch} \IV^t_\lambda \otimes_{\zf^t_\lambda}\IV_{\lambda^*}^t~.
\end{equation*}
Recalling Section~\ref{ssec:twisted_background/index} and the appendix of~\cite{Lemos:2014lua}, we can rewrite this in the notation of $\KK$-factors, giving the desired result.

To show that $\V_{1,1}$ is conical, note that the cylinder, $\D_t$, is non-negatively graded and $\lambda(\rho^\vee)\geqslant 0$ since it is integral dominant, with equality only for $\lambda=0$. This establishes that $\V_{1,1}$ is non-negatively graded by weight. The character $\ch~ \V_{1,1,0}=1+\dots$ since $\IL_{\lambda=0}^u$ and $\D_{t,\lambda=0}$ are both conical. Thus the mixed trinion is conical.
\end{proof}

\begin{prop}\label{prop:mixed_trin_conf}
The vertex algebra $\V_{1,1}$ is conformal with central charge
\begin{equation*}
	c_{\V_{1,1}} = 2\dim\,\gf_t + \dim \gf_u -\rk \gf_u -24 \rho_u\cdot \rho_u^\vee~.
\end{equation*}
\end{prop}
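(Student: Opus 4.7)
The plan is to equip the Feigin standard complex computing $\V_{1,1} = \Hi{0}(\ZZ_u,\W_u\otimes\D_t)$ with a total stress tensor, verify that it is a BRST cocycle, and descend it to cohomology. Each of the three factors of the complex comes with a canonical stress tensor: $T_{\W_u}$ on the cap, with central charge $c_{\W_u}=\dim\gf_u+\rk\gf_u+24\kappa_g(\rho_u,\rho_u^\vee)$; $T_{\D_t}$ on the twisted cylinder, with $c_{\D_t}=2\dim\gf_t$; and $T_{\text{gh}}$ on the ghost system $\Li{\bullet}\zf(\gf_u)$, defined by assigning conformal weights $(d_i,1-d_i)$ to each pair $(b^i,c^i)$. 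Take $T = T_{\W_u}+T_{\D_t}+T_{\text{gh}}$.

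The cocycle condition $[Q_{(0)},T]=0$ reduces to the statement that the BRST current $Q(z)=\sum_i(P_i c^i)(z)$ is primary of conformal weight $1$ with respect to $T$. This in turn reduces to the matching of conformal weights: each generator $P_i$ of $\zf(\gf_u)$ must act with weight $d_i$ on both $\W_u$ and $\D_t$. On $\W_u$ this is built in. On $\D_t$ the action of $\ZZ_u$ factors through the surjection $\ZZ_u\twoheadrightarrow\ZZ_t$ of Section~\ref{subsec:twisted_untwisted_FF}; generators lying in the kernel act as zero (so they carry the assigned weight vacuously), while those mapping to generators of $\ZZ_t$ retain the quasiconformal weight $d_i$, as can be read off Table~\ref{tab:FF-twists}. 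Combining the descent of $T$ to cohomology with the conicality of $\V_{1,1}$ established in Proposition~\ref{prop:mixed_trin_char} yields the desired conformal vertex algebra structure.

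The central charge is then additive,
\begin{equation*}
    c_{\V_{1,1}} = c_{\W_u} + c_{\D_t} + \sum_{i=1}^{\rk\gf_u}\bigl(-12 d_i^2 + 12 d_i - 2\bigr),
\end{equation*}
and the stated formula follows after substituting the three pieces and applying the standard identities $\sum_i d_i = \tfrac{1}{2}(\dim\gf_u + \rk\gf_u)$ and $\sum_i d_i(d_i-1) = 4\,\rho_u\cdot\rho_u^\vee$ (the latter being a corollary of the Freudenthal--de Vries strange formula for simply-laced $\gf_u$), together with the value $\kappa_g(\rho_u,\rho_u^\vee)=\rho_u\cdot\rho_u^\vee$ in Arakawa's conventions. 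The only genuinely non-trivial point is the weight-matching step across $\W_u$ and $\D_t$; once it is in hand, the remainder is routine verification of primarity via OPE and a combinatorial simplification. I expect no serious obstacle here beyond carefully tracking conventions.
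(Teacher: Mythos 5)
There is a genuine gap at the central step of your argument. You claim that $[Q_{(0)},T]=0$ reduces to the BRST current being primary of weight one, and that this in turn reduces to matching the conformal weights of the $P_i$ on $\W_u$ and $\D_t$. But the generators of the Feigin--Frenkel centre are \emph{not} primary fields: at the critical level they transform like coefficients of opers under coordinate changes (the Segal--Sugawara vector picks up a Schwarzian-type anomaly), so the OPE $T(z)P_i(w)$ contains poles of order up to $d_i+2$ with coefficients $q_j^{(i)}\in\zf(\gf_u)$ beyond the first- and second-order poles. Weight matching only controls the quadratic pole. Consequently $Q_{(0)}\omega\neq 0$; one finds
\begin{equation*}
Q_{(0)}\,\omega \;=\; \sum_{i}\sum_{j\geqslant 2}\partial^j\bigl(\rho_{\W}(q_j^{(i)})-\rho_{\D_t}(\tau(q_j^{(i)}))\bigr)c^i~,
\end{equation*}
and the real content of the proof is showing that this is itself BRST-exact, so that a corrected vector $\widetilde\omega=\omega+\chi$ descends to cohomology. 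That exactness requires knowing that the anomalous terms on the $\D_t$ side are precisely the images $\pi(q_j^{(i)})$ under $\zf(\gf_u)\twoheadrightarrow\zf(\gf_t)$, which is established by showing that the $\mathrm{Der}(\OO_D)$ action (infinitesimal coordinate changes on opers) intertwines the projection --- the content of the commutative diagram \eqref{eq:der_od_comm} proved via Miura opers in Appendix~\ref{sec:proof_closed_immersion}. Your proposal never confronts this, and the step ``$Q$ is primary because the weights match'' would simply fail.

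A secondary consequence is that your additive central charge computation is no longer justified as stated: once $\omega$ must be corrected to $\widetilde\omega=\omega+\chi$, the quartic pole $\widetilde\omega_{(3)}\widetilde\omega$ is not manifestly $\omega_{(3)}\omega$ plus nothing, so $c_{\V_{1,1}}=c_{\W_u}+c_{\D_t}+c_{gh}$ requires an argument. (Your arithmetic does land on the right number, and the identities $\sum_i d_i=\tfrac12(\dim\gf_u+\rk\gf_u)$ and $\sum_i d_i(d_i-1)=4\rho_u\cdot\rho_u^\vee$ are correct.) The paper's proof avoids this entirely: it first shows $\widetilde\omega_{(i)}=\omega_{(i)}$ for $i=0,1$ and uses conicality plus Lemma 3.1.2 of~\cite{Frenkel:2007} to see that $\widetilde\omega$ is conformal, then proves \emph{uniqueness} of the conformal vector compatible with the grading, and finally extracts the central charge from the known reduction $H^0_{DS}(u,\V_{1,1})\cong\D_t$ with $c_{\D_t}=2\dim\gf_t$ together with the central charge shift of principal DS reduction. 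If you want to salvage the additive route you would need to show directly that $\chi$ does not contribute to the quartic pole; the indirect route via $\D_t$ is cleaner.
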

\begin{proof}
This proof relies on ideas from the proof of Proposition $10.7$ of~\cite{Arakawa:2018egx}, but requires modifications. The vertex algebras $\W_u$, $\D_t$, and the ghost system $\Li{\bullet}(\zf(\gf_u))$ are all conformal, and we denote their respective conformal vectors by $\omega_\W$, $\omega_{\D}$ and $\omega_{gh}$. Clearly, $\omega = \omega_\W+\omega_{\D}+\omega_{gh}$ is a conformal vector for the complex, $\W_u\otimes \D_t \otimes \Li{\bullet}(\zf(\gf_u))$. We write
\begin{equation*}
	\omega(z) = \sum_{m\in\IZ} L_m z^{-m-1}~,
\end{equation*}
for the associated field.

By Lemma $9.4$ of~\cite{Arakawa:2018egx}, the Feigin--Frenkel centre of $\W_u$ is preserved by the action of $L_m$ for $m\geqslant -1$. For a generator $P_i\in\zf(\gf_u)$,
\begin{equation*}
	\omega(z) P_i(w) \sim \frac{\partial P_i}{z-w} +\frac{(d_i+1) P_i}{(z-w)^2}+\sum_{j=2}^{d_i+2}\frac{(-1)^j j!}{(z-w)^{j+1}} q_j^{(i)}(w)~,
\end{equation*}
where $q_j^{(i)}$ is some homogeneous state in $\zf(\gf_u)$ with weight $d_i-j+2$. Let us denote by $\widetilde{P_i}$ the image of $P_i$ under the projection $\zf(\gf_u)\twoheadrightarrow \zf(\gf_t)$. One then has
\begin{equation*}
	\omega(z) \widetilde{P_i}(w) \sim \frac{\partial \widetilde{P_i}}{z-w} +\frac{(d_i+1) \widetilde{P_i}}{(z-w)^2}+\sum_{j=2}^{d_i+2}\frac{(-1)^j j!}{(z-w)^{j+1}} \widetilde{q_j}^{(i)}(w)~,
\end{equation*}
where we think of $\widetilde{P_i}$ as a state in $\zf(\gf_t)\subset\D_t$. Let $Q(z)$ be the BRST differential for Feigin--Frenkel gluing. We have that
\begin{equation*}
	Q_{(0)}(z) \omega(w) = \sum_{i=1}^{\mathrm{rk}\,\gf}\sum_{j=2}^{d_i+1} \partial^j \big(\rho_\W(q_j^{(i)}) -\rho_{\D_t}(\tau(q_j^{(i)}))\big)c_i~,
\end{equation*}
where $\rho_W:\zf(\gf_u)\hookrightarrow \W_u$ and $\rho_{\D_t}:\zf(\gf_u)\hookrightarrow \D_t$ denote the action of the untwisted Feigin-Frenkel centres on $\W_u$ and on $\D_t$ via the projection to $\zf(\gf_t)$. Unfortunately, $\omega$ does not descend directly to cohomology, so correction terms must be introduced to construct a putative conformal vector in cohomology.

If the right hand side of the above equation equals $Q_{(0)}\chi$ for some state $\chi$, then $\widetilde{\omega} = \omega + \chi$ is $Q$-closed and defines a vector in $\V_{1,1}$. to show that such a $\chi$ exists, it is sufficient to show that $\widetilde{q_j}^{(i)}= \pi(q_j^{(i)})$.

The action of $L_m$ for $m\geqslant -1$ on $\zf(\gf_u)$ is given by the action of $\mathrm{Der}(\OO_D)$ on $\mathrm{Op}_{^LG_u}(D)$, which correspond to infinitesmal coordinate changes on the formal disc~\cite{Frenkel:2007}. We postpone the full argument to Appendix~\ref{sec:proof_closed_immersion} for brevity, but by \eqref{eq:der_od_comm} the action of $\mathrm{Der}(\OO_D)$ intertwines and $\widetilde{q_j}^{(i)}= \pi(q_j^{(i)})$. Thus $\widetilde{\omega}\in \V_{1,1}$.

Now, we wish to show that $\widetilde{\omega}$ is a conformal vector. The vector $\chi$ can be written as
\begin{equation*}
	\chi = \sum_{i=1}^{\mathrm{rk}\, \gf_u}\sum_{j=2}^{d_i+2} \partial^j (\rho_{\W}\otimes \rho_{\D_t}\otimes \rho_{gh})(z_{ij})~,
\end{equation*}
for some $z_{ij} \in \zf(\gf_u)\otimes\zf(\gf_u)\otimes \Li{0}(\zf(\gf_u))$. Therefore, $\widetilde{\omega}_{(i)}=\omega_{(i)}$ for $i=0,1$, so the OPEs agree up to the quadratic pole. Since $\V_{1,1}$ is non-negatively graded by Proposition~\ref{prop:mixed_trin_char}, Lemma 3.1.2 of~\cite{Frenkel:2007} says that all we need to check is that $\widetilde{\omega}_{(3)}\widetilde{\omega}= c/2\ket{0}$ for some $c\in\IC$.,~\ie,~the quartic pole in the $Vir\times Vir$ OPE is a multiple of the identity. However, as $\V_{1,1}$ is conical, the only operator of dimension zero that can appear in the OPE is the identity. Thus, $\widetilde{\omega}$ is a conformal vector of $\V_{1,1}$.

Finally, we wish to show that $\widetilde{\omega}$ and $\omega$ have the same central charge in cohomology. Note that $\V_{1,1}=\sum_{\Delta\in\IN} \V_{1,1}^{\Delta}$ with $\dim\, \V_{1,1}^\Delta<\infty$ and is conical---so $\V_{1,1}$ is of CFT type. As a result, Lemma 4.1 of~\cite{Moriwaki_2020} applies. Namely, for any $x\in \V_{1,1}$, if $\Delta(x)\geqslant 2$ and $x_{(0)}\widetilde{\omega}=0$ then $x \in \mathrm{im}_{\widetilde{\omega}_{(-1)}}(\V_{1,1})$.

Now suppose $\omega'$ was some other conformal vector in $\V_{1,1}$, such that $\omega'_{(1)}$ agrees with $\widetilde{\omega}_{(1)}$. Then, we have that $(\omega'-\widetilde{\omega})_{(0)}\widetilde{\omega}=0$, so $\omega'-\widetilde{\omega} = \partial x$ for some $x\in\V_{1,1}$ with $\Delta=1$. Since $\widetilde{\omega}_{(i)}=\omega'_{(i)}$ for $i=0,1$, we must have that $\partial x_{(i=0,1)}=0$ so $x$ is central in $\V_{1,1}^{\Delta=1}$. Repeating the argument with $(\omega-\widetilde{\omega})_{(1)}$ and using the Borcherds identities, leads us to conclude that $\partial x =0$, so $\widetilde{\omega}$ is unique.

Under DS reduction, $H^0_{DS}(u,\V_{1,1})\cong \D_t$ and the image of $\widetilde{\omega}$ gives rise to a conformal vector in $\D_t$. By a similar argument as the preceeding, one can show that this is the unique conformal vector which agrees with the grading on $\D_t$ (see the proof of Proposition $10.7$ in~\cite{Arakawa:2018egx}). The central charge of $\D_t$ is $c_{\D_t}=2\dim\,\gf_t$ and the central charge of the image of $\widetilde{\omega}$ is related to $c_{\V_{1,1}}$ by
\begin{equation*}
	c_{\D_t} = c_{\V_{1,1}} +\rk\,\gf_u - \dim \gf_{u} +24 \rho_{u}\cdot\rho^\vee_{u} ~,
\end{equation*}
under DS reduction.
\end{proof}
\vspace{-6pt}

With these propositions established, we know that $\V_{1,1}$ obeys many of the desirable properties one would expect from the chiral algebra associated to $\CC_{0,1,1}$. However, we have not explicitly tied this object to the construction of~\cite{Beem:2013sza}. One might reasonably wonder whether the object we have constructed is necessarily the mixed trinion of class $\SS$. We have the following uniqueness result.
\begin{prop}~\label{prop:mixed_unique}
The mixed trinion $\V_{1,1}$ is the unique vertex algebra object in $\KL_u$ such that
\begin{equation*}
	H^0_{DS}(u,\V_{1,1})\cong \D_t~.
\end{equation*}
\end{prop}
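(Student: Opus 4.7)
The plan is to derive uniqueness as a short formal consequence of Arakawa's inverse-functor identity~\eqref{eq:dsid}, which states that for any vertex algebra object $V$ in $\KL_u$ one has
\begin{equation*}
V \cong \Hi{0}(\ZZ_u, \W_u \otimes H^0_{DS}(u, V))~.
\end{equation*}
This means that, in the untwisted setting, a vertex algebra object in $\KL_u$ is reconstructible from its DS reduction by FF-gluing with the cap; so any two vertex algebra objects in $\KL_u$ with isomorphic untwisted DS reductions must themselves be isomorphic.

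My first step would be to confirm that $\V_{1,1}$ is a valid candidate. By construction $\V_{1,1} = \Hi{0}(\ZZ_u, \W_u \otimes \D_t)$ is a vertex algebra object in $\KL_u$ via its $\W_u$ factor, and Theorem~\ref{thm:cyl_kl0} supplies $H^0_{DS}(u, \V_{1,1}) \cong \D_t$. Then, for any other vertex algebra object $V$ in $\KL_u$ satisfying $H^0_{DS}(u, V) \cong \D_t$, I would apply~\eqref{eq:dsid} directly to obtain
\begin{equation*}
V \cong \Hi{0}(\ZZ_u, \W_u \otimes H^0_{DS}(u, V)) \cong \Hi{0}(\ZZ_u, \W_u \otimes \D_t) = \V_{1,1}~,
\end{equation*}
where the middle step uses functoriality of FF-gluing applied to the hypothesised isomorphism of $\ZZ_u$-modules.

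There is no serious obstacle in this uniqueness argument itself; essentially all of the real content has been front-loaded into Theorem~\ref{thm:cyl_kl0}, which certifies that $\D_t$ actually sits in the image $\KL_{u,0}$ of $H^0_{DS}(u,-)$ and therefore can be used as input to~\eqref{eq:dsid} in a way that returns something with the correct DS reduction. The main conceptual point worth flagging is that the argument uses the \emph{untwisted} DS reduction and FF-gluing throughout, reflecting the fact that uniqueness is only being claimed in $\KL_u$ and not in $\KL_t$; an analogous uniqueness statement in the twisted category would not follow from this proof since no twisted analogue of~\eqref{eq:dsid} is available.
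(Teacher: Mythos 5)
Your proof is correct and is essentially identical to the paper's: both rest on the fact that $\W_u\ast_u H^0_{DS}(u,-)$ is the identity on vertex algebra objects in $\KL_u$ (Arakawa's Theorem 9.11, your equation~\eqref{eq:dsid}), so any $V$ with $H^0_{DS}(u,V)\cong\D_t$ satisfies $V\cong\W_u\ast_u\D_t=\V_{1,1}$. Your remark that the real content is front-loaded into Theorem~\ref{thm:cyl_kl0} (which places $\D_t$ in $\KL_{u,0}$ and provides the existence half) matches the paper's own framing exactly.
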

\begin{proof}
This follows easily from the fact that $\D_t$ is an object in $\KL_{u,0}$. Suppose $V\in\KL_u$ is a vertex algebra object such that $H^0_{DS}(u,V)\cong \D_t$. Then it must be the case that $V \cong \W_u\ast_u \D_t $, since $\W_u\ast_u(H^0_{DS}(u,-))$ is the identity in $\KL_u$. However, $\V_{1,1}\coloneqq \W_u \ast_u \D_t$, so indeed $V\cong \V_{1,1}$.
\end{proof}
\vspace{-6pt}

Suppose now that $\widetilde{\V}_{1,1}$ is the vertex algebra canonically associated to $\CC_{0,1,1}$ via the four-dimensional construction of~\cite{Beem:2013sza}. This must be a vertex algebra object in $\KL_u$, since it has an action of $V^{\kappa_c}(\gf_u)$ coming from the untwisted puncture and inherits a suitable grading from the physical grading of superconformal quantum numbers. Performing untwisted DS reduction on $\tilde{\V}_{1,1}$ must produce the cylinder $\CC_{0,0,1}$, which has corresponding vertex algebra $\D_t$.\footnote{To see that one must recover the cylinder upon DS reduction, even though this doesn't correspond to a physical four-dimensional theory, one may proceed as follows. The physical vertex algebra $\widetilde{\V}_{1,1}$ must satisfy $\widetilde{\V}_{1,1}\circ_t \widetilde{\V}_{1,1}$. Performing DS reduction, we must have $H^0_{DS}(u,\widetilde{\V}_{1,1}\circ_t\widetilde{\V}_{1,1})\cong \widetilde{\V}_{1,1}$. By means of a spectral sequence argument (we delay this until the next subsection) one can rearrange the order of cohomologies to show $H^0_{DS}(u,\widetilde{\V}_{1,1})\circ_t \widetilde{\V}_{1,1}\cong\widetilde{\V}_{1,1}$, which implies that $H^0_{DS}(u,\widetilde{\V}_{1,1})=\D_t$.}
Proposition~\ref{prop:mixed_unique} then applies, so we have
\begin{equation}
	\widetilde{\V}_{1,1}\cong \W_u\ast_u \D_t \cong \V_{1,1}~.
\end{equation}

Thus far, this has been fairly abstract. Let us provide some concrete observations and predictions. We have argued that the mixed trinion $\V_{1,1}$, as we have constructed it, is the unique vertex algebra that could be associated to $\CC_{0,1,1}$. Let us consider the vertex algebra associated to $\CC_{f}$, which is a genus zero surface with one maximal untwisted puncture, one maximal twisted puncture and an empty twisted puncture. The corresponding vertex algebra is $\V_{f}=H^0_{DS}(t,\V_{1,1})$.

The surface $\CC_f$ does correspond to a physical SCFT, and in particular, when $\gf_u=\df_n$ and $\gf_t=\mathfrak{c}_{n-1}$ the corresponding SCFT is a free hypermultiplet theory (hence the subscript $f$). In this case, $\V_f$ should be a symplectic boson vertex algebra with a commuting $\hat{\gf}_{u,\kappa_c}\times\hat{\gf}_{t,\kappa_c}$ subalgebra. Unfortunately, this does not hold for the other choices of $\gf_u$ (at generic rank). There is, however, another example that has appeared in recent literature. The even rank $\af$-type Lie algebras $\gf_u=\af_{2n}$ have, as their twisted algebras, $\gf_t=\mathfrak{c}_{n}$---unlike the $\df_n$ theories, these SCFTs have global Witten anomalies. For $\gf_u=\af_2$ and $\gf_t=\mathfrak{c}_1=\af_1$, the SCFT is the $\TT_X$ theory of~\cite{Buican:2017fiq}. This was identified as the rank-two $H_2$ $F$-theory SCFT \cite{Beem:2019snk}, and the class $\SS$ realisation was given in~\cite{Beem:2020pry}. In the $\ef_6$ and $\af_{2n+1}$ cases, these vertex algebras remain unstudied to the best of our knowledge.

It is perhaps of technical interest to note that $\V_f$ should be conical, despite the fact that it is obtained from the Feigin--Frenkel gluing of just two caps. In the cases where the two caps are of the same type, one obtains the cylinder $\D_u$ or $\D_t$---neither of which are conical.

The identification of $\V_{f}$ with a symplectic boson system for $\gf_u=\mathfrak{d}_n$ leads to a curious observation. Take the symplectic boson system, $\SS\BB$ associated to the bifundamental representation of $\gf_u\times \gf_t$, its DS reduction must be isomorphic to the DS reduction of $\V_f$. This implies that for $\gf_u = \df_n$,
\begin{equation}\label{eq:ds_free}
	H^0_{DS}(u, \SS\BB) \cong H^0_{DS}(u,\V_f)\cong \W_t~.
\end{equation}
Indeed, this presents an alternate construction for the equivariant affine $\WW$-algebra, $\W_{\cf_{n-1}}$. While we have not established this isomorphism rigorously, we present some supporting evidence for the lowest rank cases in Appendix~\ref{app:ds_free}.

\subsection{\label{ssec:duality/rearr}Rearrangment lemmas}

Having established many key properties of the mixed trinion, which is the building block of the twisted chiral algebras of class $\SS$, we would like to extend our results to other genus zero, mixed vertex algebras. As we increase the number of twisted and untwisted punctures, we are naturally required to consider how the various types of gluing interact with each other. It will be useful in this endeavour to have a collection of rearrangement lemmas that establish the extent to which the various gluings associate.

In this subsection we establish a series of technical rearrangement lemmas concerning the interplay between the various cohomological operations we have defined thus far. The proofs of these lemmas, which are modifications of proofs of~\cite{Arakawa:2018egx}, rely heavily on the machinery of spectral sequences. The reader who is uninterested in highly technical details may wish to skip this section and pick up in the following, where we extend our construction to the vertex algebras associated to $\CC_{0,m,n}$.

Since we are interested in the interplay of gluings, we will need to consider objects that have multiple, commuting $\hat{\gf}_{\kappa_c}$ actions. To be completely precise, one should decorate each $\circ$ in this section with subscripts to indicate the diagonal action that is being gauged. This would be somewhat cumbersome, so we will overload the $\circ$ notation and rely on context to make the relevant actions clear. Our lemmas only ever concern two such actions, for the sake of argument we call them the left and right actions. Suppose $U,V,W$ are in $\KL$, such that $V$ has two actions of $\hat{g}_{\kappa_c}$ and $V$ is in $\KL$ with respect to both actions. One should then interpret the symbol $U\circ(V\circ W)$ as the semi-infinite cohomology $\Hi{\bullet}(\hat{\gf}_{-\kappa_g},\gf,U\otimes\Hi{\bullet}(\hat{\gf}_{-\kappa_g},\gf,V\otimes W))$, where the diagonal action of $\hat{\gf}_{-\kappa_g}$ is with respect to the right $\hat{\gf}_{\kappa_c}$ action on $V$ and the sole $\hat{\gf}_{\kappa_c}$ action on $W$. Similarly, the diagonal action on $U\otimes \Hi{\bullet}(\hat{\gf}_{-\kappa_g},\gf,V\otimes W)$ is with respect to the sole $\hat{\gf}_{\kappa_c}$ action on $U$ and the $\hat{\gf}_{\kappa_c}$ action on $V\circ W$ induced by the left action on $V$.

Similarly, for objects $U,V,W\in\ZZ$-Mod, we will denote the semi-infinite cohomology $\Hi{0}(\ZZ,U\otimes\Hi{0}(\ZZ,V\otimes W))$ by $U\ast(V\ast W)$. Here one should take the action on $V\otimes W$ for the first cohomology and on $U\otimes V$ for the second---recall that the $\ZZ$-action on $V$ descends to the cohomology $V\ast W$.

Hereafter, it should be understood that when there are many $\KL_u$ or $\KL_t$ actions present, we choose two such actions for the purposes of the rearrangement lemmas. For the chiral algebras of class $\SS$ (both untwisted and twisted), all the moment maps (from the same algebra) are related by discrete automorphisms, and one can make such a choice without loss of generality.

First, we recast some of the results of~\cite{Arakawa:2018egx} as rearrangement lemmas.

\begin{lem}\label{lem:rearr_gauge}
Let $U,V,W$ be vertex algebra objects in $\KL$ such that $U$ is semijective in $\KL$ and $V$ has two $\KL$ actions, then
\begin{equation*}
	U\circ(V\circ W)\cong (U\circ V)\circ W~.
\end{equation*}
\end{lem}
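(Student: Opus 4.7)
The plan is to realise both sides of the claimed isomorphism as the total cohomology of a single bicomplex, then exploit the semijectivity of $U$ to show that each of the two associated spectral sequences collapses at its $E_2$ page.

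Concretely, I would assemble the bicomplex
\begin{equation*}
C^{p,q} = U \otimes V \otimes W \otimes \Li{p}\,\hat{\gf}_{(1)} \otimes \Li{q}\,\hat{\gf}_{(2)}~,
\end{equation*}
where the first semi-infinite form factor carries the BRST ghosts for the diagonal gauging between $U$ and the first $\KL$-action on $V$, and the second carries the ghosts for the diagonal gauging between the second $\KL$-action on $V$ and $W$. The two BRST differentials $d_1$, $d_2$ use disjoint $\hat{\gf}_{-\kappa_g}$ actions and disjoint ghost systems, so with a standard sign convention they anti-commute, making $(C^{\bullet,\bullet},d_1,d_2)$ a genuine bicomplex. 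Filtering each bidegree by conformal weight (each graded piece bounded below and finite dimensional, via the $\KL^{ord}$ structure) ensures the two filtration spectral sequences of $\mathrm{Tot}^\bullet(C)$ converge to a common total cohomology $H^\bullet(\mathrm{Tot}(C))$.

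The two $E_2$ pages are
\begin{align*}
{}^{I}\!E_2^{p,q} &= \Hi{p}\bigl(\hat{\gf}_{-\kappa_g},\gf,\,U \otimes \Hi{q}(\hat{\gf}_{-\kappa_g},\gf,\,V \otimes W)\bigr)~,\\
{}^{I\!I}\!E_2^{p,q} &= \Hi{q}\bigl(\hat{\gf}_{-\kappa_g},\gf,\,\Hi{p}(\hat{\gf}_{-\kappa_g},\gf,\,U \otimes V) \otimes W\bigr)~.
\end{align*}
Since $U$ is semijective in $\KL$, so is $U \otimes M$ for any $M \in \KL$ (a standard PBW-type argument shows that projectivity over $U(t^{-1}\gf[t^{-1}])$ and injectivity over $U(t\gf[t])$ of $U$ are inherited by $U \otimes M$ with respect to the diagonal action). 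Invoking Theorem~\ref{thm:gauge-van} then forces ${}^{I}\!E_2^{p,q}$ to vanish for $p \neq 0$, identifying $H^\bullet(\mathrm{Tot}(C))$ with $U \circ (V \circ W)$, while ${}^{I\!I}\!E_2^{p,q}$ vanishes for $p \neq 0$, identifying $H^\bullet(\mathrm{Tot}(C))$ with $(U \circ V) \circ W$. Comparing the two yields the desired isomorphism.

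The main technical obstacle is the one needed for ${}^{I}\!E_2$ to degenerate: verifying that semijectivity passes through the inner cohomology, \emph{i.e.}, that $U \otimes \Hi{q}(V \otimes W)$ is semijective with respect to the remaining $\hat{\gf}$ action on $V \circ W$ inherited from the first $\KL$-action on $V$. This reduces to checking that the modes in $t^{-1}\gf[t^{-1}]$ and $t\gf[t]$ for that action commute with the BRST differential of the other gauging, so freeness and cofreeness of $U$ descend through the cohomology; this mirrors the argument used in Arakawa's proofs of the analogous untwisted rearrangement lemmas~\cite{Arakawa:2018egx}. A secondary concern is ensuring strict compatibility of the conformal filtration with the bigrading for convergence, which is automatic for vertex algebra objects in $\KL$ built as direct limits from $\KL^{ord}$.
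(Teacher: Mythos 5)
Your proposal is correct and follows essentially the same route as the paper's proof: the same bicomplex $U\otimes V\otimes W\otimes \Li{\bullet}(\gf)\otimes\Li{\bullet}(\gf)$, the same pair of filtration spectral sequences, and the same application of Theorem~\ref{thm:gauge-van} to $U\otimes(-)$ (using that semijectivity of $U$ passes to $U\otimes M$ under the diagonal action) to collapse both at the second page and identify each with the total cohomology. The ``technical obstacle'' you flag is no more than what the paper itself leaves implicit.
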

\begin{proof}
This proof is similar to the proof of Theorem 10.11 of~\cite{Arakawa:2018egx}. Consider the bicomplex
\begin{equation*}
	C^{\bullet\bullet}= U\otimes V\otimes W\otimes \Li{\bullet}(\gf)\otimes\Li{\bullet}(\gf)~,
\end{equation*}
with the differential $d_1$ acting on $U\otimes V$ and the first $\Li{\bullet}(\gf)$ and $d_2$ acting on $V\otimes W$ and the second $\Li{\bullet}(\gf)$. It is easy to see that $d_1d_2+d_2d_1=0$, so the total complex $C_{tot}^p = \bigoplus_{m+n=p}C^{m,n}$ is a cochain complex with the differential $d = d_1 + (-1)^m d_2$~\cite{Weibel:1994}. There are two spectral sequences converging to the total cohomology of $(C_{tot},d)$
\begin{equation}\nonumber
	\begin{split}
	_IE^{p,q}_2 = \Hi{p}(\hat{\gf}_{-\kappa_g},\gf,U\otimes\Hi{q}(\hat{\gf}_{-\kappa_g},\gf,V\otimes W))~,\\
	_{II}E^{p,q}_2 = \Hi{p}(\hat{\gf}_{-\kappa_g},\gf,W\otimes\Hi{q}(\hat{\gf}_{-\kappa_g},\gf,U\otimes V))~.
	\end{split}
\end{equation}

By Theorem \ref{thm:gauge-van}, the cohomologies $\Hi{p}(\hat{\gf}_{-\kappa_g},\gf,U\otimes-)$ are concentrated in degree zero, so both spectral sequences collapse at the second page. The only nonzero entries are $_{I} E_2^{0,q}$ and $_{II}E_2^{p,0}$. Thus, we have
\begin{equation*}
	_IE^{0,p}_2\cong\ _{II}E^{p,0}_2 \cong H^p_{tot}(C_{tot},d)~,
\end{equation*}
which gives the desired isomorphism.
\end{proof}
\vspace{-6pt}

The vertex algebra objects $\V_{G,s}$ are semijective in $\KL$ so the composition of vertex algebras is associative---in the untwisted setting. Furthermore, gauging at genus zero is always concentrated in degree zero. In the twisted setting, we will have to work harder.

We have a similar result for Feigin--Frenkel gluing.
\begin{lem}\label{lem:rearr_ff}
Suppose $M_1,M_2,M_3\in\ZZ$-Mod and suppose that $M_1$ and $M_3$ are free over $\ZZ_{(<0)}$. Then
\begin{equation*}
	M_1\ast (M_2\ast M_3)\cong (M_1\ast M_2)\ast M_3~.
\end{equation*}
\end{lem}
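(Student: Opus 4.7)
The plan is to adapt the bicomplex argument from the proof of Lemma~\ref{lem:rearr_gauge}, with Theorem~\ref{thm:van-ff} replacing Theorem~\ref{thm:gauge-van}. Treating $M_2$ as carrying two commuting $\ZZ$-actions (per the conventions laid out above), I would form
\begin{equation*}
    C^{p,q} = M_1 \otimes M_2 \otimes M_3 \otimes \Li{p}(\ZZ) \otimes \Li{q}(\ZZ),
\end{equation*}
with $d_1$ of bidegree $(1,0)$ gluing $M_1$ to the left $\ZZ$-action on $M_2$ via the first semi-infinite form factor, and $d_2$ of bidegree $(0,1)$ gluing the right action on $M_2$ to $M_3$ via the second factor. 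Since the two actions on $M_2$ commute and the differentials involve disjoint ghost factors, $d_1 d_2 + d_2 d_1 = 0$, so $(C_{tot},\,d=d_1+(-1)^p d_2)$ is a well-defined cochain complex whose column and row filtrations give two spectral sequences converging to $H^{\bullet}_{tot}$ with
\begin{equation*}
    {}_I E_2^{p,q} = \Hi{p}\!\bigl(\ZZ,\,M_1\otimes\Hi{q}(\ZZ,M_2\otimes M_3)\bigr), \qquad {}_{II} E_2^{p,q} = \Hi{q}\!\bigl(\ZZ,\,M_3\otimes\Hi{p}(\ZZ,M_1\otimes M_2)\bigr).
\end{equation*}

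The main structural step is to show that both $E_2$ pages are supported in the first quadrant. Freeness of $M_3$ over $\ZZ_{(<0)}$ is inherited by $M_2\otimes M_3$ under the twisted diagonal action (by the standard change of variables that identifies the diagonal $\ZZ_{(<0)}$-action with a one-sided action on a free module), so Theorem~\ref{thm:van-ff} kills the rows $q<0$ of ${}_I E_2$; freeness of $M_1$ similarly propagates to $M_1\otimes N$ for any $\ZZ$-module $N$, so the outer cohomology kills the columns $p<0$. The analogous argument with the roles of $M_1$ and $M_3$ swapped handles ${}_{II}E_2$.

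Finally, for a spectral sequence whose $E_2$ page is supported in the first quadrant, no differential $d_r$ with $r\geqslant 2$ has source or target at $(0,0)$, so $E_{\infty}^{0,0} = E_2^{0,0}$ for both sequences; moreover $(0,0)$ is the only bidegree contributing to total degree zero, so $H^0_{tot}\cong E_2^{0,0}$. Comparing the two spectral sequences therefore yields
\begin{equation*}
    M_1\ast(M_2\ast M_3) = {}_I E_2^{0,0} \cong H^0(C_{tot},d) \cong {}_{II} E_2^{0,0} = (M_1\ast M_2)\ast M_3,
\end{equation*}
as desired. The principal obstacle relative to Lemma~\ref{lem:rearr_gauge} is that Theorem~\ref{thm:van-ff} gives only half-vanishing (in strictly negative cohomological degree) rather than the strict concentration in degree zero used in the gauge case, so neither spectral sequence collapses completely at $E_2$; the argument nonetheless goes through because first-quadrant support alone is enough to isolate the total cohomology in degree zero, which is exactly what the $\ast = \Hi{0}$ statement demands.
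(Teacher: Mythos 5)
Your proof is correct and follows essentially the same route as the paper's: the same bicomplex $M_1\otimes M_2\otimes M_3\otimes\Li{\bullet}(\zf(\gf))\otimes\Li{\bullet}(\zf(\gf))$, the same pair of spectral sequences, and the same use of Theorem~\ref{thm:van-ff} to confine both $E_2$ pages to the first quadrant so that $E_2^{0,0}=E_\infty^{0,0}=H^0_{tot}$. The only difference is that you spell out why freeness of $M_1$ and $M_3$ over $\ZZ_{(<0)}$ propagates to the tensor products appearing as coefficients, a point the paper leaves implicit.
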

\begin{proof}
This proof is similar to the proof of Proposition 10.2 of~\cite{Arakawa:2018egx}. Consider the bicomplex
\begin{equation*}
	C^{\bullet\bullet} = M_1\otimes M_2\otimes M_3 \otimes \Li{\bullet}(\mathfrak{z}(\gf))\otimes\Li{\bullet}(\mathfrak{z}(g))~,
\end{equation*}
with differentials $d_{12}$ acting on $M_1\otimes M_2 \otimes \Li{\bullet}(\mathfrak{z}(\gf))$ and $d_{23}$ acting on $M_2\otimes M_3\otimes \Li{\bullet}(\mathfrak{z}(\gf))$. The two differentials anticommute, so we can form the total complex $C_{tot}^n = \bigoplus_{p+q=n}C^{p,q}$ with total differential $d_{tot} = d_{12}+(-1)^q d_{23}$. There are two spectral sequences converging to the total cohomology $H_{tot}$ of $C_{tot}$, whose second pages are given by
\begin{equation}\nonumber
	\begin{split}
	_{I} E_2^{p,q} \coloneqq \Hi{p}(\ZZ,M_1\otimes\Hi{q}(\ZZ,M_2\otimes M_3))~,\\
	_{II} E_2^{p,q} \coloneqq \Hi{p}(\ZZ,\Hi{q}(\ZZ,M_1\otimes M_2)\otimes M_3)~.
	\end{split}
\end{equation}
By Theorem \ref{thm:van-ff}, the entries $E^{p,q}_2$ in either spectral sequence vanish if $p<0$ or $q<0$. Thus, we have $E_2^{00}=E_\infty^{00}$, which gives the isomorphism
\begin{equation*}
	H^0_{tot}(C_{tot},d_{tot}) \cong \Hi{0}(\ZZ,M_1\otimes\Hi{0}(\ZZ,M_2\otimes M_3))\cong\Hi{0}(\ZZ,\Hi{0}(\ZZ,M_1\otimes M_2)\otimes M_3)~,
\end{equation*}
as desired.
\end{proof}
\vspace{-6pt}

Since the algebras $\V_{G,s}$ are free over $\ZZ_{<0}$ (Proposition $10.2$ of~\cite{Arakawa:2018egx}), the above lemma applies, and Feigin--Frenkel gluing $*_u$ is associative. We hold off on analysing associativity of $\ast$ for the twisted algebras, since it is a challenge to understand the twisted FF gluing between two mixed vertex algebras.

We next consider the combination of the two gluing operations, $\ast$ and $\circ$.
\begin{lem}~\label{lem:rearr_gauge_ff}
Suppose $U\in\ZZ$-Mod, $V\in\KL$ and $\W\in\KL$. Additionally, suppose $U$ is free over $\ZZ_{(<0)}$ and $W$ is semijective in $\KL$. Then we have the isomorphism
\begin{equation*}
	U\ast(V\circ W) \cong (U\ast V)\circ W~.
\end{equation*}
\end{lem}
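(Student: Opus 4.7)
The plan is to mimic the bicomplex plus spectral sequence arguments used in Lemmas \ref{lem:rearr_gauge} and \ref{lem:rearr_ff}. Specifically, I would form the bicomplex
\begin{equation*}
C^{p,q} = U\otimes V\otimes W\otimes \Li{p}(\mathfrak{z}(\gf))\otimes \Li{q}(\gf)~,
\end{equation*}
with horizontal differential $d_\ast$ the Feigin--Frenkel differential pairing the $\ZZ$-actions on $U$ and $V$ via the first ghost factor, and vertical differential $d_\circ$ the gauge-theory BRST differential pairing the $\hat\gf$-actions on $V$ and $W$ via the second ghost factor. Since $\mathfrak{z}(\gf)$ is central in $V^{\kappa_c}(\gf)$, the modes of the $P_i$ acting on $V$ commute with the modes of the currents acting on $V$, so with the usual sign convention $d_\ast d_\circ + d_\circ d_\ast = 0$ and the total complex $(C_{tot}^n = \bigoplus_{p+q=n} C^{p,q}, d = d_\ast + (-1)^p d_\circ)$ is well-defined.

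Next, I would analyse the two spectral sequences associated to the canonical filtrations. For the first, taking $d_\circ$ first, semijectivity of $W$ in $\KL$ transfers to semijectivity of $V\otimes W$ under the diagonal $\hat\gf_{-\kappa_g}$-action (tensoring with an object of $\KL$ preserves projectivity over $U(t^{-1}\gf[t^{-1}])$ and injectivity over $U(t\gf[t])$), so by Theorem \ref{thm:gauge-van} the inner cohomology is concentrated in degree zero. The outer FF cohomology then has only $q=0$ contributions, and freeness of $U$ over $\ZZ_{(<0)}$ transfers to freeness of $U\otimes(V\circ W)$ over $\ZZ_{(<0)}$ (the twisted diagonal action on a free factor remains free), so Theorem \ref{thm:van-ff} kills negative degrees. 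The first SS is therefore first-quadrant with $_IE_2^{0,0} = U\ast(V\circ W)$. For the second SS, taking $d_\ast$ first, freeness of $U\otimes V$ over $\ZZ_{(<0)}$ gives $\Hi{q}(\ZZ, U\otimes V) = 0$ for $q<0$ by Theorem \ref{thm:van-ff}, while semijectivity of $\Hi{q}(\ZZ,U\otimes V)\otimes W$ over the diagonal $\hat\gf_{-\kappa_g}$-action (inherited from $W$) gives vanishing in nonzero gauge degree by Theorem \ref{thm:gauge-van}. This SS is again first-quadrant with $_{II}E_2^{0,0} = (U\ast V)\circ W$. At the $(0,0)$ corner of any first-quadrant SS, no differentials enter or leave, so $E_2^{0,0} = E_\infty^{0,0} = H^0_{tot}$ in both cases, yielding $U\ast(V\circ W)\cong H^0_{tot}\cong (U\ast V)\circ W$.

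The main obstacle I anticipate is verifying the two auxiliary closure properties cleanly: that $W$ semijective in $\KL$ implies $V\otimes W$ semijective for the diagonal $\hat\gf$-action, and that $U$ free over $\ZZ_{(<0)}$ implies $U\otimes V$ and $U\otimes(V\circ W)$ free over $\ZZ_{(<0)}$ for the appropriate twisted diagonal action. These are tensor-product arguments of a standard flavour but require care because the other factor is only assumed to lie in $\KL$ or $\ZZ$-Mod, and because the Cartan-involution twist $\tau$ enters the diagonal $\ZZ$-action; once they are in hand, the first-quadrant collapse at $(0,0)$ is automatic and the isomorphism follows without further computation.
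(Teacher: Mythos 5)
Your proposal is correct and follows essentially the same route as the paper: the same bicomplex, the same two spectral sequences, with Theorem~\ref{thm:gauge-van} applied to the $W$-factor and Theorem~\ref{thm:van-ff} applied to the $U$-factor to force degeneration and identify both sides with $H^0_{tot}$. The closure properties you flag (semijectivity of $V\otimes W$ and freeness of $U\otimes(-)$ over $\ZZ_{(<0)}$) are indeed the implicit content of the paper's one-line appeals to the two vanishing theorems, so making them explicit is a refinement rather than a deviation.
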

\begin{proof}
Consider the bicomplex
\begin{equation*}
	C^{\bullet\bullet} = U\otimes V\otimes W \Li{\bullet} (\zf(\gf))\otimes \Li{\bullet} (\gf)~,
\end{equation*}
with differentials $d_\gf$ acting on $V\otimes W\otimes \otimes \Li{\bullet}(\gf)$ and $d_\zf$ acting on $U\otimes V \otimes \Li{\bullet}(\zf(\gf))$. The differentials anticommute, so we can form the total complex $C^p_{tot} = \bigoplus_{m+n=p} C^{p,q}$ with differential $d_{tot} = d_{\gf}+(-1)^q d_\zf$. There are two spectral sequences converging to the total cohomology:
\begin{equation}
	\begin{split}
	_{I} E_{2}^{p,q} = \Hi{p}(\ZZ, U \otimes \Hi{q}(\hat{\gf}_{-\kappa_g},\gf, V\otimes W))~,\\
	_{II} E_2^{p,q} = \Hi{p}(\hat{\gf}_{-\kappa_g},\gf,\Hi{q}(\ZZ,U \otimes V)\otimes W)~.
	\end{split}
\end{equation}
The cohmology $\Hi{p}(\ZZ,U\otimes - )$ vanishes for $p<0$ and the cohomology $\Hi{p}(\hat{\gf}_{-\kappa_g},\gf, - \otimes W)$ is concentrated in degree zero. Thus both spectral sequences will collapse at the second page and we have
\begin{equation}
	_{I} E_2^{0,0} \cong\ _{II} E_2^{0,0} \cong H^0_{tot}(C_{tot},d_{tot})~,
\end{equation}
as desired.
\end{proof}
\vspace{-6pt}

Now we come to the rearrangement of twisted and untwisted gluing. The results here are more limited; the obvious generalisations of the above spectral sequence arguments often fail in the twisted setting, since the mixed vertex algebras are not semijective in $\KL_u$. Nevertheless, we will manage to demonstrate that some properties of the gluing of twisted algebras are as we expect. To start with, we have a result for the interchange of twisted and untwisted gauging.
\begin{lem}\label{lem:gauge_ut}
Suppose $V_1$ is in $\KL_u$, $V_2$ is in $\KL_u$ and in $\KL_t$, and $V_3$ is in $\KL_t$. Furthermore, suppose that $V_3$ is semijective for $\gf_t$. Then, we have the isomorphism
\begin{equation*}
	V_1\circ_u(V_2\circ_t V_3) \cong (V_1\circ_uV_2)\circ_tV_3~.
\end{equation*}
\end{lem}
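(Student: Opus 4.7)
The approach is a direct analogue of Lemma~\ref{lem:rearr_gauge}, adapted to the situation where the two BRST operations are with respect to the commuting actions of $\hat{\gf}_u$ and $\hat{\gf}_t$ rather than two diagonal actions of the same algebra. The plan is to form the bicomplex
\begin{equation*}
C^{p,q} = V_1\otimes V_2\otimes V_3 \otimes \Li{p}(\gf_u)\otimes \Li{q}(\gf_t)~,
\end{equation*}
equipped with a differential $d_u$ acting as the $\circ_u$ BRST on $V_1\otimes V_2\otimes \Li{\bullet}(\gf_u)$ and a differential $d_t$ acting as the $\circ_t$ BRST on $V_2\otimes V_3\otimes\Li{\bullet}(\gf_t)$. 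Since $V_2$ lies simultaneously in $\KL_u$ and $\KL_t$, the $\hat{\gf}_u$ and $\hat{\gf}_t$ module structures on $V_2$ commute, so after the standard Koszul sign convention $d_u$ and $d_t$ anticommute. The total complex $C_{tot}^n=\bigoplus_{p+q=n}C^{p,q}$ with differential $d_u+(-1)^p d_t$ then admits two spectral sequences converging to the total cohomology $H^\bullet_{tot}$.

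First one filters so that $d_t$ is taken first:
\begin{equation*}
_IE_2^{p,q}=\Hi{p}(\hat{\gf}_{u,-\kappa_g},\gf_u, V_1\otimes \Hi{q}(\hat{\gf}_{t,-\kappa_g},\gf_t, V_2\otimes V_3))~.
\end{equation*}
Because $V_3$ is semijective in $\KL_t$ and semijectivity is preserved under tensoring with any object of $\KL_t$, the module $V_2\otimes V_3$ is semijective in $\KL_t$. By Theorem~\ref{thm:gauge-van} the inner cohomology vanishes for $q\neq 0$, so $_IE_2=\,_IE_\infty$ and $H^n_{tot}\cong \Hi{n}(\hat{\gf}_{u,-\kappa_g},\gf_u, V_1\otimes(V_2\circ_tV_3))$, which is the degree-$n$ component of $V_1\circ_u(V_2\circ_tV_3)$. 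Next filter so that $d_u$ is taken first:
\begin{equation*}
_{II}E_2^{p,q}=\Hi{p}(\hat{\gf}_{t,-\kappa_g},\gf_t, \Hi{q}(\hat{\gf}_{u,-\kappa_g},\gf_u, V_1\otimes V_2)\otimes V_3)~.
\end{equation*}
The inner cohomology $\Hi{q}(\hat{\gf}_u,\gf_u, V_1\otimes V_2)$ inherits a $\hat{\gf}_t$ module structure from the commuting $\gf_t$ action on $V_2$, placing it in $\KL_t$. Tensoring with the semijective $V_3$ produces a semijective object in $\KL_t$, so another application of Theorem~\ref{thm:gauge-van} forces $_{II}E_2^{p,q}$ to vanish whenever $p\neq 0$. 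The spectral sequence collapses at the second page and $H^n_{tot}$ is identified with the degree-$n$ component of $(V_1\circ_uV_2)\circ_tV_3$. Comparing the two computations of $H^\bullet_{tot}$ delivers the asserted isomorphism.

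The main technical point to verify carefully is the compatibility of the two BRST differentials, namely that the $\hat{\gf}_u$ and $\hat{\gf}_t$ current actions on $V_2$ genuinely commute---this is essentially the statement that $V_2$ is a vertex algebra object in both Kazhdan--Lusztig categories at once, so the currents generating the two affine subalgebras have trivial OPE---and that the $\gf_t$ action descends to $\Hi{\bullet}(\hat{\gf}_u,\gf_u, V_1\otimes V_2)$. Beyond this there is a mild convergence issue for the spectral sequences in the infinite-dimensional setting, which is handled in the same way as in \cite{Arakawa:2018egx} and the review in Appendix~\ref{app:spectral_sequences}: both filtrations respect the conformal (or quasi-conformal) grading, so within each weight space the bicomplex is a finite direct sum and standard finite-dimensional convergence applies term by term.
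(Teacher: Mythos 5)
Your proposal is correct and follows essentially the same route as the paper: the same bicomplex $V_1\otimes V_2\otimes V_3\otimes\Li{\bullet}(\gf_u)\otimes\Li{\bullet}(\gf_t)$, the same pair of spectral sequences, and the same use of Theorem~\ref{thm:gauge-van} via the semijectivity of $V_3$ (and of $(-)\otimes V_3$) to collapse both at the second page while still allowing cohomology in nonzero total degree. The additional remarks on the commutativity of the two current actions on $V_2$ and on weight-space-wise convergence are points the paper leaves implicit but are consistent with its treatment.
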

\begin{proof}
As usual, the proof is via a spectral sequence, but with the new feature that cohomology is not necessarily concentrated in degree zero. We define the bicomplex
\begin{equation*}
	C^{p,q} = V_1\otimes V_2\otimes V_3 \otimes \Li{p}(\hat{\gf}_u)\otimes \Li{q}(\hat{\gf}_t)~,
\end{equation*}
with the differential $d_u$ and $d_t$ acting on $V_1\otimes V_2\otimes\Li{\bullet}(\hat{\gf}_u)$ and $V_2\otimes V_2\otimes \Li{\bullet}(\hat{\gf}_t)$ respectively. The differentials anticommute, so we can form the total complex $C_{tot}^m = \bigoplus_{p+q=m}C^{p,q}$ with the differential $d_{tot}= d_u +(-1)^p d_t$. There are two spectral sequences converging to the total cohomology, given by
\begin{equation}\nonumber
	\begin{split}
	_{I} E_2^{p,q} = \Hi{p}(\hat{\gf}_{u,-\kappa_g},\gf_u,V_1\otimes\Hi{p}(\hat{\gf}_{t,-\kappa_g},\gf_t,V_2\otimes V_3))~,\\
	_{II} E^{p,q}_2 = \Hi{p}(\hat{\gf}_{t,-\kappa_g},\gf_t,\Hi{q}(\hat{\gf}_{u,-\kappa_g},\gf_u,V_1\otimes V_2)\otimes V_3)~.
	\end{split}
\end{equation}
Since $V_3$ satisfies the conditions of Theorem~\ref{thm:gauge-van}, the spectral sequences will collapse on the second page and $_{I} E^{p,q}_2=0$ for $q\neq0$ and $_{II} E^{p,q}_2 =0$ for $p\neq0$. Thus we have the isomorphism
\begin{equation*}
	H^m_{tot} =\, _{I} E^{m,0}_2 =\, _{II} E^{0,m}_2~.
\end{equation*}
\end{proof}
\vspace{-6pt}

We can also show that the twisted gauging and untwisted Feigin--Frenkel gluing are nicely compatible,
\begin{lem}\label{lem:ut-cap-gauge}
Let $M$ be an object of $\ZZ_u$-Mod such that it is free over $\ZZ_{u,<0}$ and let $W$ be semijective in $\KL_t$. Suppose, $V$ is a vertex algebra object in $\KL_t$, then we have the follwoing isomorphism.
\begin{equation*}
	M\ast_u(V\circ_t W)\cong (M\ast_u V)\circ_t W~.
\end{equation*}
\end{lem}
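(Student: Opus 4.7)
The plan is to mimic the bicomplex/spectral sequence strategy of Lemmas \ref{lem:rearr_gauge_ff} and \ref{lem:gauge_ut}, with one differential implementing untwisted Feigin--Frenkel gluing and the other implementing twisted gauge gluing. Concretely, I would introduce the bicomplex
\begin{equation*}
C^{p,q} \;=\; M\otimes V\otimes W\otimes \Li{p}(\zf(\gf_u))\otimes \Li{q}(\hat{\gf}_t)~,
\end{equation*}
with $d_u$ the FF--gluing BRST differential acting on the $M\otimes V\otimes\Li{\bullet}(\zf(\gf_u))$ factors and $d_t$ the gauge BRST differential acting on $V\otimes W\otimes \Li{\bullet}(\hat{\gf}_t)$. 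The two differentials anticommute because the $\ZZ_u$ and $\hat{\gf}_{t,\kappa_c}$ actions on $V$ commute (this is exactly the double--module structure that makes $M\ast_u V$ a $\KL_t$ object in the first place). Assembling the total complex $C^n_{tot}=\bigoplus_{p+q=n} C^{p,q}$ with differential $d_{tot}=d_u+(-1)^q d_t$, I would then exploit the two standard spectral sequences converging to $H^\bullet_{tot}(C_{tot},d_{tot})$.

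The first spectral sequence has second page
\begin{equation*}
{}_I E_2^{p,q} \;=\; \Hi{p}\!\left(\ZZ_u,\,M\otimes \Hi{q}(\hat{\gf}_{t,-\kappa_g},\gf_t,\,V\otimes W)\right)~.
\end{equation*}
Since $W$ is semijective in $\KL_t$, Theorem~\ref{thm:gauge-van} kills all $q\neq 0$ entries (applied row by row to the tensor with $V$, whose semijectivity is inherited from $W$ via the diagonal action, as in the proof of Lemma~\ref{lem:rearr_gauge_ff}). The surviving row $q=0$ has $p$--degree zero term equal to $M\ast_u(V\circ_t W)$. The second spectral sequence has second page
\begin{equation*}
{}_{II} E_2^{p,q} \;=\; \Hi{p}\!\left(\hat{\gf}_{t,-\kappa_g},\gf_t,\,\Hi{q}(\ZZ_u,\,M\otimes V)\otimes W\right)~.
\end{equation*}
Freeness of $M$ over $\ZZ_{u,(<0)}$ propagates to $M\otimes V$, so Theorem~\ref{thm:van-ff} kills the $q<0$ entries; and semijectivity of $W$ in $\KL_t$ kills the $p\neq 0$ entries via Theorem~\ref{thm:gauge-van}. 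Only the column $p=0$, $q\geqslant 0$ survives, whose bottom entry is $(M\ast_u V)\circ_t W$.

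In either case the nonzero entries lie on a single line, so all higher differentials vanish and the spectral sequences collapse on the second page. Reading off $H^0_{tot}$ from each yields
\begin{equation*}
M\ast_u(V\circ_t W)\;\cong\; H^0_{tot}(C_{tot},d_{tot})\;\cong\; (M\ast_u V)\circ_t W~,
\end{equation*}
which is the desired isomorphism. The one substantive technicality—which I expect to be the main point to verify carefully—is that semijectivity of $W$ in $\KL_t$ can be transferred to $V\otimes W$ (for the first sequence) and to $\Hi{q}(\ZZ_u, M\otimes V)\otimes W$ (for the second), so that Theorem~\ref{thm:gauge-van} applies. This is the same transfer already invoked implicitly in the proofs of Lemmas \ref{lem:rearr_gauge} and \ref{lem:rearr_gauge_ff}, where the $\KL$--semijectivity of one factor gives concentration in degree zero of the gauge cohomology regardless of the second factor; the argument is structural and uses only that projectivity over $U(t^{-1}\gf_t[t^{-1}])$ and injectivity over $U(t\gf_t[t])$ are preserved under taking (external) tensor products with an arbitrary $\KL_t$ module and under passing to the $\Hi{q}(\ZZ_u,-)$ cohomology, since the $\ZZ_u$ and $\hat{\gf}_t$ actions are independent.
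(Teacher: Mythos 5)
Your proof is correct and is essentially the paper's own argument: the same bicomplex $M\otimes V\otimes W\otimes \Li{\bullet}(\zf(\gf_u))\otimes\Li{\bullet}(\hat{\gf}_t)$, the same pair of spectral sequences, with collapse at the second page forced by the semijectivity of $W$ in $\KL_t$ (the paper states only that $\Hi{q}(\hat{\gf}_{t,-\kappa_g},\gf_t,-\otimes W)$ is concentrated in degree zero, which already collapses both sequences; your additional appeal to Theorem~\ref{thm:van-ff} for the ${}_{II}E$-page and your explicit remark on transferring semijectivity through the tensor product are consistent with how the paper uses these results elsewhere). No gap.
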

\begin{proof}
Let $C^{p,q}$ be the bicomplex
\begin{equation*}
	C^{p,q} = M\otimes V\otimes W \otimes \Li{p}(\mathfrak{z}(\gf_u))\otimes \Li{q}(\hat{\gf}_t)~,
\end{equation*}
with the differentials $d_\mathfrak{z}$ acting on $M\otimes V\otimes \Li{\bullet}(\mathfrak{z}(\gf_u))$ and $d_{\gf}$ acting on $V\otimes W \Li{\bullet}(\hat{\gf_t})$. The differentials $d_\mathfrak{z}$ and $d_\gf$ anticommute, so we can form the total complex $C^m_{tot} = \bigoplus_{p+q=m}C^{p,q}$ with the total differential $d_{tot}=d_\mathfrak{z}+(-1)^p d_\gf$. There are two spectral sequences converging to the total cohomology, given by
\begin{equation}\nonumber
	\begin{split}
	_{I} E^{p,q}_2 = \Hi{p}(\ZZ_u,M\otimes\Hi{q}(\hat{\gf}_{t,-\kappa_g},\gf_t,V\otimes W))~,\\
	_{II} E^{p,q}_2 = \Hi{p}(\hat{\gf}_{t,-\kappa_g},\gf_t,\Hi{q}(\ZZ_u,M\otimes V)\otimes W)~.
	\end{split}
\end{equation}
The cohomology $\Hi{q}(\hat{\gf}_{t,-\kappa_g},\gf_t,-\otimes W)$ is concentrated in degree zero, so both spectral sequence will collapse in the second page. Thus, we have the isomorphism
\begin{equation*}
	H^0_{tot}(C_{tot},d_{tot}) \cong _{I} E^{0,0}_2 \cong _{II} E^{0,0}_2~.
\end{equation*}
\end{proof}

\subsection{\label{subsec:mixed_at_genus_zero}Mixed vertex algebras at genus zero}

We first give a construction of the vertex algebras associated to spheres with only one pair of twisted punctures: $\CC_{0,m,1}$ before considering the more general case. As in the untwisted case, one could provide a recursive definition of $\V_{m,1}$ by repeatedly gluing untwisted caps. We elect, instead, to perform a simultaneous gluing and will show the equivalence between the two definitions later on. We define the vertex algebras $\V_{m,1}$ by
\begin{equation}
	\begin{split}
	\V_{0,1} &\coloneqq \D_t~,\\
	\V_{m,1} &\coloneqq H^0(C_{m,1},Q^m)~,
	\end{split}
\end{equation}
where $C_{m,1}$ is the chain complex
\begin{equation}
	C^{\bullet}_{m,1} = \W_u^{\otimes m} \otimes \D_t \otimes\bigg( \Li{\bullet}(\zf(\gf_u))\bigg)^{m} ~,
\end{equation}
with differential
\begin{equation}
	\begin{split}
	Q^{m}(z) &= \sum_{i=1}^{m}Q^{i,i+1}~,\\
	Q^{i,i+1}(z) &= \sum_{j=1}^{\mathrm{rk}\,\gf_u} : (\rho_i(P_j) -\rho_{i+1}(\tau(P_j)))\rho_{gh_{i}}(c^j):(z)~.
	\end{split}
\end{equation}
Here $\rho_i$, for $i\leqslant m$ denotes the action of $\zf(\gf_u)$ on the $i$th factor of $\W_u$ and $\rho_{m+1}$ denotes the action of $\zf(\gf_u)$ on $\D_t$ along the projection $\ZZ_u\twoheadrightarrow \ZZ_t$. 
The vertex algebras $\V_{m,1}$ live entirely in cohomological degree zero no fermions because we restrict to the zeroth cohomology---this is compatible with our expectation on the basis of residual gauge symmetries. We will later reinforce this by showing that the cohomology of $\V_{m,1}\circ_t\V_{n,1}$ is concentrated in degree zero.

The naive generalisation to $\V_{m,n}$ of our previous construction would be to take the zeroth cohomology of
\begin{equation}
	C^{\bullet}_{m,n} = \W^{m}_u\otimes \V_{G_t,2n} \otimes \bigg( \Li{\bullet}(\zf(\gf_u))\bigg)^{m}~.
\end{equation}
However, this cannot be quite right. Indeed, the vertex algebras associated to $\CC_{0,m,n}$ for $n>1$ should be supported outside of cohomological degree zero in order to express the presence of enhanced Higgs branches for the corresponding SCFTs. On the other hand, the vertex algebras, $\W_u$ and $\V_{G_t,2n}$, lie in degree zero, and the truncation to zeroth cohomology means this will persist. One should also expect to see a $\IZ_2$ symmetry exchanging positive and negative cohomological degree (a shadow of CPT in four dimensions). However, the untwisted caps $\W_u$ are projective over $\ZZ_{u,(<0)}$ and as a consequence the cohomology vanishes in negative degree. Therefore, even if we do not truncate to degree zero, the resulting vertex algebra would not have the right form. 

Thus we will have to define the vertex algebras $\V_{m,n}$, by going to some (non-canonically chosen) duality frame. We will choose to recursively consider the decomposition of $\CC_{0,m+1,n-1}$ as $\CC_{0,m+1,n-1}$ and $\CC_{0,1,1}$ connected by an untwisted cylinder. This gives us our definition:
\begin{equation}\label{eq:higher_puncture_mixed_def}
	\V_{m,n}\coloneqq \V_{1,1}\circ_u\V_{m+1,n-1}~.
\end{equation}
As defined, it is not clear whether the $\V_{m,n}$ are independent of our choice of duality frame in which to define them. For example, we could have also obtained this from a twisted gluing of $\V_{1,1}$ to $\V_{m-1,n}$. This is just the vertex algebra question of generalised $S$-duality, which is now not made manifest by our definition \eqref{eq:higher_puncture_mixed_def}. In the following sections, we shall work to establish how our definitions fit in with the duality web of class $\SS$.

\subsection{\label{ssec:gluing_mixed}Properties of the genus zero mixed vertex algebras}

First, we will check that our definition of the mixed vertex algebras with $n=1$ agrees with the recursive definition, $\V_{m,1}\overset{?}{\cong} \W_u\ast_u \V_{m-1,1}$. We have an extension of Lemma $10.1$ of Arakawa to the twisted case.
\begin{lem}\label{lem:mix_ff_cap}
For $m\geqslant 1$, we have that
\begin{enumerate}[(i)]
	\item $H^{n}(C_{m,1},Q^{m-1}_{(0)}) \cong 0\text{ for } n<0~,$
	\item $\W_u\ast_u \V_{m-1,1} \cong \V_{m,1}~.$
\end{enumerate}
\end{lem}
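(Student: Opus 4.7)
The proof proceeds by induction on $m$, adapting Lemma~10.1 of~\cite{Arakawa:2018egx} to the mixed setting. The base case $m=1$ is immediate: $Q^{0}\equiv 0$ makes part~(i) a triviality (the complex being non-negatively graded in ghost number), and part~(ii) reduces to the definition $\W_u\ast_u\V_{0,1}=\W_u\ast_u\D_t=\V_{1,1}$.

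For part~(i) in the inductive step, I will decompose $Q^{m-1} = Q^{m-2} + Q^{m-1,m}$ and analyse the associated bicomplex on $C_{m,1}$. Computing the inner $Q^{m-2}$ cohomology, which involves only $\W_u^{(1)},\ldots,\W_u^{(m-1)}$ and the first $m-2$ ghost copies, Arakawa's Lemma~10.1 yields $\V_{G_u,m-1}$ in degree zero with vanishing in negative degrees. Taking subsequent $Q^{m-1,m}$ cohomology gives $\V_{G_u,m-1}\ast_u\W_u^{(m)}\cong\V_{G_u,m}$ in degree zero, tensored with the spectator factors $\D_t$ and the $m$-th ghost copy $\Li{\bullet}(\zf(\gf_u))$. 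Negative-degree vanishing follows from Theorem~\ref{thm:van-ff} (applicable since $\V_{G_u,m-1}$ is free over $\ZZ_{u,(<0)}$, inherited from $\W_u$) together with the non-negative grading of the spectator ghost copy.

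For part~(ii), I will consider the bicomplex on $C_{m,1}$ with $d_1 = Q^{1,2}_{(0)}$ and $d_2 = (Q^m-Q^{1,2})_{(0)}$. Computing $d_2$ first gives $\W_u^{(1)}\otimes H^\bullet(C_{m-1,1},Q^{m-1}_{(0)})\otimes\Li{\bullet}(\zf(\gf_u))$; the negative-degree vanishing of this inner cohomology follows from the same argument as for part~(i), applied to the full differential $Q^{m-1}$ on $C_{m-1,1}$. With negative-degree vanishing in both the $d_1$ and $d_2$ gradings, the spectral sequence collapses so that $E_\infty^{0,0}=E_2^{0,0}=\W_u\ast_u\V_{m-1,1}$, which must therefore coincide with the total cohomology $\V_{m,1}=H^0(C_{m,1},Q^m_{(0)})$.

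The main obstacle will be establishing that $\Hi{q}(\ZZ_u,V)$ vanishes for $q>0$ when $V$ is a module like $\W_u\otimes\D_t$ or $\V_{G_u,m}\otimes\D_t$; this positive-degree vanishing is needed for the full collapse of the various spectral sequences and goes beyond what Theorem~\ref{thm:van-ff} alone provides. The plan is to resolve it, following~\cite{Arakawa:2018egx}, by exploiting the filtration on $\D_t$ whose successive quotients lie in $\KL_{u,0}$ (Proposition~\ref{prop:kl_u}) together with the semijectivity properties of $\W_u$, thereby reducing the calculation to the Weyl module case.
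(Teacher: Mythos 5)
Your part (ii) is, after unwinding, exactly the paper's argument: splitting off $Q^{1,2}$ realises $C_{m,1}$ as the bicomplex $\W_u\otimes C_{m-1,1}^{\bullet}\otimes\Li{\bullet}(\zf(\gf_u))$ with inner differential the full $Q^{m-1}_{(0)}$, and the induction closes on the corner entry of the resulting spectral sequence. However, the ``main obstacle'' you flag --- positive-degree vanishing of $\Hi{q}(\ZZ_u,-)$ --- is not an obstacle, and leaving the proof contingent on it makes your write-up incomplete for no reason. Once $E_2^{p,q}$ vanishes for $p<0$ (Theorem~\ref{thm:van-ff}, since $\W_u$ is free over $\ZZ_{u,(<0)}$) and for $q<0$ (inductive hypothesis plus left-exactness of $\W_u\ast_u-$), the spectral sequence is first-quadrant: every differential into or out of $E_r^{0,0}$ for $r\geqslant 2$ has source or target outside the first quadrant, so $E_2^{0,0}=E_\infty^{0,0}$, and $H^0_{tot}$ receives a contribution only from bidegree $(0,0)$ while $H^{n}_{tot}$ vanishes for $n<0$. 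No collapse, and no vanishing in positive cohomological degree, is required; the paper uses only stability of the corner entry. Your proposed detour through the filtration on $\D_t$ and semijectivity is the right tool for Theorem~\ref{thm:cyl_kl0}, not for this lemma, and the word ``collapses'' should be replaced by ``stabilises at $(0,0)$''.

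Second, your argument for part (i) contains a genuine error: you deduce negative-degree vanishing from ``the non-negative grading of the spectator ghost copy,'' but $\Li{\bullet}(\zf(\gf_u))$ is a space of semi-infinite forms and carries states of arbitrarily negative ghost number (the $b$-modes lower the degree); an undifferentiated tensor factor of ghosts destroys the vanishing rather than preserving it. The statement you are chasing is an off-by-one artefact of the notation: as the paper's own proof makes clear (the inductive hypothesis is applied to $C_{m-1,1}$ equipped with its \emph{full} differential), item (i) should be read as $H^{n}(C_{m,1},Q^{m}_{(0)})=0$ for $n<0$, so that no ghost copy is a spectator. With that reading, (i) and (ii) are proved simultaneously by the single first-quadrant spectral sequence of your part (ii), since all $E_\infty^{p,q}$ with $p+q<0$ vanish; and the base case of (i) should be justified not by a grading claim but by Theorem~\ref{thm:van-ff} applied to $\Hi{n}(\ZZ_u,\W_u\otimes\D_t)$, using again that $\W_u$ is free over $\ZZ_{u,(<0)}$.
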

\begin{proof}
This proof is largely adapted from the proof of Lemma $10.1$ in~\cite{Arakawa:2018egx}.
We proceed by induction on $m$. For the base case, $m=1$, $(i)$ is true, since $\W_u$ is projective over $\ZZ_{u,(<0)}$. The second statement is true by definition, since $\V_{1,1}\coloneqq \W_u\ast_u\D_t$. Next, suppose $m>1$ and consider the bicomplex
\begin{equation*}
	C^{\bullet\bullet} = \W_u\otimes C_{m-1,1}^\bullet \otimes \Li{\bullet}(\zf(\gf_u))~,
\end{equation*}
with differentials $Q^{m-1}_{(0)}$ acting on $C_{m-1,1}$ and $d$ acting on $\W_u\otimes C_{m-1,1}\otimes \Li{\bullet}(\zf(\gf_u))$. The two differentials anticommute and the corresponding total complex is just $C_{m,1}$ with differential $Q^{m}_{(0)}$. There is a spectral sequence, with second page
\begin{equation*}
	E_{2}^{p,q} = \Hi{p}(\ZZ_u, \W_u\otimes H^{q}(C_{m-1,1},Q^{m-1}_{(0)}))~,
\end{equation*}
which converges to the total cohomology. By the inductive assumption, $H^{q}(C_{m,1},Q^{m-1}_{(0)})$ vanishes for $q<0$ and $\W_u\ast_u-$ is left exact. Therefore, $E_{2}^{p,q}=0$ for $p,q<0$, so $H^{n}(C_{m,1},Q^m_{(0)})$ vanishes for $n<0$. Moreover, the entry $E_{2}^{0,0}$ is stable and we have
\begin{equation}
	E_2^{0,0} = \W_u\ast_u \V_{m-1,1} \cong H^{0}(C_{m,1},Q^{m}_{(0)})\cong \V_{m,1}~,
\end{equation}
as desired.
\end{proof}
\vspace{-6pt}

Next, we examine the case of untwisted DS reduction.
\begin{lem}\label{lem:mix_kl0}
The mixed vertex algebras $\V_{m,n}$ are objects in $\KL_{u,0}$. In particular, for $m\geqslant 0$ and $n\geqslant 1$,
\begin{equation*}
	H^0_{DS}(u, \V_{m+1,n}) \cong \V_{m,n}~.
\end{equation*}
\end{lem}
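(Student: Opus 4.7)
The plan is to establish the isomorphism $H^0_{DS}(u,\V_{m+1,n})\cong\V_{m,n}$ by induction on $n$; membership of $\V_{m,n}$ in $\KL_{u,0}$ then follows immediately, since $\V_{m,n}$ is realised as the image of $H^0_{DS}(u,-)$ applied to $\V_{m+1,n}$.

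For the base case $n=1$ I sub-induct on $m$. The sub-base $m=0$ is Theorem~\ref{thm:cyl_kl0}. For the sub-inductive step, Lemma~\ref{lem:mix_ff_cap} gives $\V_{m+2,1}\cong\W_u\ast_u\V_{m+1,1}$, and I form the bicomplex
\begin{equation*}
	\W_u\otimes\V_{m+1,1}\otimes\Li{\bullet}(\zf(\gf_u))\otimes\Li{\bullet}(\gf_u)~,
\end{equation*}
with $d_{FF}$ implementing the cap-gluing and $d_{DS}$ implementing untwisted Drinfel'd--Sokolov reduction along an $\hat{\gf}_u$-action on the $\V_{m+1,1}$ factor; for $m\geqslant 1$ one takes an action not involved in the FF-gluing, while for $m=0$ one uses $\V_{1,1}$'s unique (FF-glued) action, with anticommutation with $d_{FF}$ still holding because the $\ZZ_u$-central modes used by $d_{FF}$ commute with all $\hat{\gf}_u$-modes. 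Both spectral sequences associated to the bicomplex collapse at $E_2$ via Theorem~\ref{thm:ds_vanish}: the $d_{DS}$-first filtration produces $\Hi{\bullet}(\ZZ_u,\W_u\otimes H^0_{DS}(u,\V_{m+1,1}))$, which by the sub-inductive hypothesis equals $\Hi{\bullet}(\ZZ_u,\W_u\otimes\V_{m,1})$, and whose degree-zero piece is $\V_{m+1,1}$ by Lemma~\ref{lem:mix_ff_cap}. The $d_{FF}$-first filtration gives $H^0_{DS}(u,\V_{m+2,1})$ in degree zero, and equating the two descriptions of $H^0_{tot}$ yields the desired isomorphism.

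For the inductive step $n\geqslant 2$, I apply the recursive definition $\V_{m+1,n}=\V_{1,1}\circ_u\V_{m+2,n-1}$ and form the analogous bicomplex
\begin{equation*}
	\V_{1,1}\otimes\V_{m+2,n-1}\otimes\Li{\bullet}(\gf_u)\otimes\Li{\bullet}(\gf_u)~,
\end{equation*}
with $d_{\text{gauge}}$ implementing the BRST gauging and $d_{DS}$ reducing along one of the $m+1$ external $\hat{\gf}_u$-actions on $\V_{m+2,n-1}$ distinct from the gauged one. Both spectral sequences collapse at $E_2$ by Theorem~\ref{thm:ds_vanish}: the $d_{DS}$-first filtration produces $\V_{1,1}\circ_u H^0_{DS}(u,\V_{m+2,n-1})$ in degree zero, which by the inductive hypothesis equals $\V_{1,1}\circ_u\V_{m+1,n-1}=\V_{m,n}$; the $d_{\text{gauge}}$-first filtration produces $H^0_{DS}(u,\V_{m+1,n})$ in degree zero. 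Crucially, one does not need $\V_{1,1}$ to be semijective in $\KL_u$---as noted after Proposition~\ref{prop:trin_free_cofree}, it is not---because only the degree-zero pieces of the total cohomology are being compared, rather than concentration of the full bicomplex cohomology in a single degree.

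The main obstacle throughout is arranging the bicomplexes so that the two differentials genuinely anticommute, which depends on a careful choice of which $\hat{\gf}_u$-action each differential employs. With the choices described above, $d_{DS}$ acts via an $\hat{\gf}_u$-action that is either strictly independent of, or only centrally related to, the action appearing in $d_{FF}$ or $d_{\text{gauge}}$; in both scenarios the relevant modes commute---distinct punctures contribute distinct commuting $\hat{\gf}_u$-actions, and $\ZZ_u$ is central within a single such action---while the ghost systems are disjoint, so the anticommutation is automatic.
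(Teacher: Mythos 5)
Your proof is correct and follows essentially the same strategy as the paper's: a double induction anchored at Theorem~\ref{thm:cyl_kl0}, with each step carried by a bicomplex whose two spectral sequences collapse because one of the differentials (Drinfel'd--Sokolov reduction) has cohomology concentrated in degree zero. The only difference is packaging: you keep $d_{DS}$ as an explicit differential and invoke Theorem~\ref{thm:ds_vanish} directly, whereas the paper substitutes $H^0_{DS}(u,-)\cong\W_u\circ_u-$ and lets the semijectivity of $\W_u$ (via Lemma~\ref{lem:rearr_gauge} and Theorem 9.11 of~\cite{Arakawa:2018egx}) play the same role---your remark that semijectivity of $\V_{1,1}$ in $\KL_u$ is never needed is precisely the point the paper exploits there.
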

\begin{proof}
We proceed by double induction on $m$ and $n$, first examining the base case of $n=1$ and arbitrary $m$. Note that for $m=0$, $H^0_{DS}(u,\V_{1,1})=\D_t$ from Theorem~\ref{thm:cyl_kl0}. Next, suppose $m>0$; for any object in $\KL_u$, $H^0_{DS}(u,-)$ and $\W_u\circ_u-$ are isomorphic. Therefore,
\begin{equation*}
	H^0_{DS} (u,\V_{m+1,1}) \cong \W_u\circ_u \V_{m+1,1}\cong \W_u \circ_u(\W_u\ast_u \V_{m,1})~,
\end{equation*}
where we have used Lemma~\ref{lem:mix_ff_cap}. Consider the bicomplex
\begin{equation*}
	C^{\bullet \bullet} =\W_u\otimes \V_{m,1}\otimes \W_u \otimes \Li{\bullet}(\zf(\gf_u))\otimes \Li{\bullet}(\hat{\gf}_{u,-\kappa_g})~,
\end{equation*}
with differential $d_\gf$ acting on $\W_u\otimes \V_{m,1} \otimes \Li{\bullet}(\hat{\gf}_{u,-\kappa_g})$ and $d_\zf$ acting on $\V_{m,1}\otimes \W_u \otimes \Li{\bullet}(\zf(\gf_u))$. The two differentials anticommute, so we form the total complex $C^n_{tot}=\bigoplus_{p+q=n} C^{pq}$, with total differential $d_{tot} = d_{\gf} + (-1)^q d_\zf$. There are two spectral sequences converging to the total cohomology, given by
\begin{equation}\nonumber
	\begin{split}
	_{I} E^{pq}_2 &= \Hi{p}(\ZZ_u,\W_u\otimes \Hi{q}(\hat{\gf}_{u,-\kappa_g},\gf_u, \V_{m,1}\otimes \W_u))~,\\
	_{II} E^{pq}_2 &= \Hi{p}(\hat{\gf}_{u,-\kappa_g},\gf_u ,\Hi{q}(\ZZ_u,\W_u\otimes \V_{m,1})\otimes W_u)~.
	\end{split}
\end{equation}
The cohomology, $\Hi{\bullet}(\hat{\gf}_{u,-\kappa_g},\gf_u,-\otimes \W_u)$ is concentrated in degree zero, and $\Hi{i}(\ZZ_u \W_u\otimes -)$ vanishes for $i<0$. Therefore, both spectral sequences collapse at the second page and we have
\begin{equation}\nonumber
	\W_u\ast_u(\W_u\circ_u \V_{m,1})=\ _{I} E^{00}_2\cong\ _{II} E^{00}_2 =\W_u\circ_u(\W_u\ast_u \V_{m,1})~.
\end{equation}
Therefore,
\begin{equation*}
	H^0_{DS}(u,\V_{m+1,1}) \cong \W_u\ast_u (\W_u\circ_u\V_{m,1})\cong \W_u \ast_u (H^0_{DS}(u,\V_{m,1})~,
\end{equation*}
but $\W_u\ast_u(H^0_{DS}(u,\V_{m,1}))\cong \V_{m,1}$ by Theorem 9.11 of~\cite{Arakawa:2018egx}.

Now suppose $n>1$. Then, we have that
\begin{equation*}
	H^0_{DS}(u,\V_{m+1,n}) \cong \W_u\circ_u(\V_{1,1}\circ_u \V_{m+2,n-1})~.
\end{equation*}
Since, $\W_u$ is semijective in $\KL_u$, Lemma~\ref{lem:rearr_gauge} applies and we have that
\begin{equation}\nonumber
	\begin{split}
	H^0_{DS}(u,\V_{m+1,n}) &\cong \V_{1,1}\circ_u(\W_u\circ_u\V_{m+2,n-1})~,\\
	&\cong \V_{1,1} \circ_u (H^0_{DS}(u,\V_{m+2,n-1}))~,\\
	&\cong \V_{1,1}\circ_u \V_{m+1,n-1}~,
	\end{split}
\end{equation}
where we have used the inductive assumption. Of course, $\V_{1,1}\circ_u \V_{m+1,n-1}$ is $\V_{m,n}$ by definition and we are done.
\end{proof}
Our uniqueness result for $\V_{1,1}$ can now be extended to all of the genus zero vertex algebras $\V_{m,n}$. To reiterate, $\V_{m,n}$ is the unique vertex algebra object in $\KL_u^{\otimes m}$ such that $H^0_{DS}(u, \V_{m-1,n})$.

Using Lemma~\ref{lem:mix_kl0}, we can present our version of Proposition $10.10$ of~\cite{Arakawa:2018egx}.
\begin{prop}\label{prop:mixed_free_cofree}
The vertex algebras $\V_{m,1}$ have an ascending filtration whose successive quotients are
\begin{equation*}
	\IV^u_{\lambda,m}\otimes_{\zf^u_{\lambda}}(\IV^t_{\lambda}\otimes_{\zf^t_\lambda}\IV^t_{\lambda*})~,
\end{equation*}
for some $\lambda\in P_t^+$ and $\iota(\lambda)\in P_u^+$. Additionally, the vertex algebras $\V_{m,1}$ have a descending filtration whose successive quotients are
\begin{equation*}
	D\big(\IV^u_{\lambda,m}\otimes_{\zf^u_{\lambda}}(\IV^t_{\lambda}\otimes_{\zf^t_\lambda}\IV^t_{\lambda*})\big)~.
\end{equation*}
Therefore, the vertex algebras $\V_{m,1}$ are semijective in $\KL_t$, so we have:
\begin{equation*}
	\V_{m,1}\circ_t V \cong \Hi{0}(\hat{\gf}_{t,\kappa_g},\gf,\V_{m,1}\otimes V)~.
\end{equation*}
\end{prop}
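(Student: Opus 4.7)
The plan is to prove both filtration statements simultaneously by induction on $m$, using the recursive identification $\V_{m,1}\cong \W_u\ast_u\V_{m-1,1}$ from Lemma~\ref{lem:mix_ff_cap} to reduce each step to an application of the corresponding filtration of the untwisted cap $\W_u$. This is directly modelled on the strategy of Proposition~10.10 of~\cite{Arakawa:2018egx}, with the extra bookkeeping needed to track the $\IV^t_\lambda\otimes_{\zf^t_\lambda}\IV^t_{\lambda^*}$ factor that was established in the proof of Proposition~\ref{prop:trin_free_cofree}.

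The base case $m=0$ is simply the ascending and descending filtrations of $\D_t$ (the former was recalled in the proof of Proposition~\ref{prop:trin_free_cofree}; the latter follows from the identification $(\D_t)^{op}\cong\D_t$). For the inductive step, I assume $\V_{m-1,1}$ carries filtrations $\{N_\bullet\}$ and $\{N^\bullet\}$ whose subquotients are as claimed with $m-1$ in place of $m$, and I combine these with the ascending/descending filtrations of $\W_u$ from Arakawa (whose subquotients are $\IV^u_\lambda$ and $D(\IV^u_\lambda)$). A double-filtration spectral sequence argument in the spirit of Lemma~\ref{lem:rearr_ff}, using Theorem~\ref{thm:van-ff} at each layer to ensure vanishing of higher cohomology (the relevant freeness over $\ZZ_{u,(<0)}$ holds layer-by-layer by freeness of the Weyl modules), shows that $\W_u\ast_u-$ preserves these filtrations on $\V_{m-1,1}$ and that the induced subquotients of $\V_{m,1}$ are obtained from the subquotients of $\V_{m-1,1}$ by $\W_u\ast_u(-)$.

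It then remains to evaluate the gluing on the subquotients. Since the $\IV^t_\lambda\otimes_{\zf^t_\lambda}\IV^t_{\lambda^*}$ tensor factor carries the $\zf^u_\lambda$ action only through the projection $\ZZ_u\twoheadrightarrow\ZZ_t$ from Theorem~\ref{thm:closed_immersion}, and this action commutes with the $U(t^{-1}\gf_u[t^{-1}])$ action on $\IV^u_{\lambda,m-1}$, it behaves as a bystander for the FF-gluing; consequently
\begin{equation*}
\W_u\ast_u\Big(\IV^u_{\lambda,m-1}\otimes_{\zf^u_\lambda}(\IV^t_\lambda\otimes_{\zf^t_\lambda}\IV^t_{\lambda^*})\Big)\cong \big(\W_u\ast_u \IV^u_{\lambda,m-1}\big)\otimes_{\zf^u_\lambda}(\IV^t_\lambda\otimes_{\zf^t_\lambda}\IV^t_{\lambda^*})~,
\end{equation*}
and the first factor on the right is $\IV^u_{\lambda,m}$ by the analogous lemma already used by Arakawa in the untwisted genus-zero construction. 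The dual computation with $D(\IV^u_\lambda)$ handles the descending filtration.

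With the two filtrations in hand, semijectivity of $\V_{m,1}$ in $\KL_t$ is immediate: each ascending subquotient is projective over $U(t^{-1}\gf_t[t^{-1}])$ because $\IV^t_\lambda$ is, and the tensor product with $\IV^u_{\lambda,m}$ over $\zf^u_\lambda$ does not spoil projectivity since the $\gf_t$-action runs through the rightmost tensor factors only; the dual argument applied to the descending filtration gives injectivity over $U(t\gf_t[t])$. Theorem~\ref{thm:gauge-van} then yields the asserted concentration in cohomological degree zero. The main technical obstacle I anticipate is the careful bookkeeping of the commuting $\zf^u$- and $\zf^t$-actions through the double spectral sequence, in particular checking that the Feigin--Frenkel gluing cohomologies appearing at each layer of the filtration genuinely do collapse as claimed; this is where the bystander property of the twisted tensor factor, mediated by Theorem~\ref{thm:closed_immersion}, does all the work.
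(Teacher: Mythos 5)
Your proposal is correct and follows essentially the same route as the paper: induction on $m$ using $\V_{m,1}\cong\W_u\ast_u\V_{m-1,1}$ (Lemma~\ref{lem:mix_ff_cap}), the fact that cap gluing carries filtrations to filtrations with subquotients $\W_u\ast_u(\,\cdot\,)$, and the identification of the resulting subquotients via the $\zf^u_\lambda$-module structure of the twisted factor supplied by Theorem~\ref{thm:closed_immersion}. The only cosmetic differences are that the paper starts the induction at $m=1$ (Proposition~\ref{prop:trin_free_cofree}) and evaluates $\W_u\ast_u$ on the subquotients by observing they lie in $\KL_{u,0}$ and citing Proposition~9.14 of~\cite{Arakawa:2018egx}, rather than re-running the spectral sequence and the ``bystander'' tensor-factor argument you sketch.
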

\begin{proof}
First, note that the objects $\IV_{\lambda,m}\otimes_{\zf^u_{\lambda}}(\IV^t_{\lambda}\otimes_{\zf^t_\lambda}\IV^t_{\lambda*})\big)$ are in $\KL_u,0$, by the same argument as Proposition~\ref{prop:kl_u}. We proceed via induction, noting that for the base case $m=1$ the statement is true by way of Proposition~\ref{prop:trin_free_cofree}.

Suppose the statement is true for some $m>1$,~\ie,~$\V_{m,1}$ has a filtration
\begin{equation*}
0=N_0\subset N_1\subset \dots \subset N=\bigcup_i N_i~,
\end{equation*}
with successive quotients $N_i/N_{i-1}\cong \IV_{\lambda,m}\otimes_{\zf^u_\lambda}((\IV^t_{\lambda}\otimes_{\zf^t_\lambda}\IV^t_{\lambda*})$ for some $\lambda\in P_t^+$ and $\iota(\lambda)\in P_u^+$.
we have
\begin{equation*}
	\V_{m+1,1} \cong \W_u\ast_u\V_{m,1}~,
\end{equation*}
by Lemma~\ref{lem:mix_ff_cap}. By Proposition $9.14$ of~\cite{Arakawa:2018egx}, $\V_{m+1,1}$ has an ascending filtration
\begin{equation*}
	0=M_0\subset M_1\subset\dots\subset M = \bigcup_i M_i ~,
\end{equation*}
with successive quotients $M_i/M_{i-1}\cong \W_u\ast_u N_i/N_{i-1}$. Since $N_i/N_{i-1}$ are objects in $\KL_{u,0}$, we have that
\begin{equation*}
	M_i/M_{i-1} \cong \IV_{\lambda,m+1}\otimes_{\zf^u_\lambda}((\IV^t_{\lambda}\otimes_{\zf^t_\lambda}\IV^t_{\lambda*})~,
\end{equation*}
for some $\lambda\in P_t^+$ and $\iota(\lambda)\in P_u^+$.
\end{proof}
\vspace{-6pt}

One could hope to strengthen this to the statement that all $\V_{m,n}$ are semijective in $\KL_t$ but such a result is beyond what is easily realised by our technologies. With the current arguments, we would have to independently establish that the vertex algebras $\V_{0,n}$ are semijective in $\KL_{u,0}$. Given that our definition of $\V_{0,n}$ involves an unbounded cohomology, it seems difficult to verify such a property.

From a similar argument as with $\V_{1,1}$, the mixed vertex algebra, $\V_{m,1}$ cannot be semijective in $\KL_u$, so the derived functor $\V_{1,1}\circ_u-$, which increases the number of pairs of twisted punctures by one, is not concentrated in degree zero. By construction, the $\V_{m,n}$ cannot be concentrated in cohomological degree zero for $n>1$. This is, in one sense, a good thing---going back to our discussion of residual gauge symmetry, these vertex algebras should have fermionic operators lying in non-zero cohomological degree. On the other hand, this is a rather large roadblock to our spectral sequence powered proofs of associativity. We will only be able to provide partial results for how $\V_{m,n}$ fit in the duality web.

Similar to the mixed trinion, we can derive the characters and central charges of the mixed vertex algebras.
\begin{prop}\label{prop:mixed_char}
The character of the vertex algebra $\V_{m,1}$ is given by
\begin{equation*}
	\Tr_{\V_{m,1}}\bigg(q^{L_0}\prod_{i=1}^{m} \mathbf{a}_i \,\,\mathbf{b}_1\mathbf{b}_2\bigg) = \sum_{\lambda\in P_t^+} \frac{\prod_{i=1}^m \KK_u(\mathbf{a}_i)\chi^\lambda_u(\mathbf{a}_i) \KK_t(\mathbf{b}_1)\chi^\lambda_t(\mathbf{b}_1) \KK_t(\mathbf{b}_2)\chi^\lambda_t(\mathbf{b}_2) }{\big( \KK_u(\times)\chi^\lambda_u(\times) \big)^{m}}~.
\end{equation*}
so the vertex algebra, $\V_{m,1}$ is conical for all $m\in \IN$.
\end{prop}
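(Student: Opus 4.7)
The plan is to proceed by induction on $m$, with base case $m=1$ supplied by Proposition~\ref{prop:mixed_trin_char}. The inductive step closely mirrors the argument used there, taking Lemma~\ref{lem:mix_kl0} as the key additional input: it asserts $\V_{m,1}\in\KL_{u,0}$ with $H^0_{DS}(u,\V_{m,1})\cong \V_{m-1,1}$, so we can reconstruct the character of $\V_{m,1}$ from that of its untwisted DS reduction via Proposition $8.4$ of~\cite{Arakawa:2018egx}.

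Concretely, viewing $\V_{m,1}$ as a $V^{\kappa_c}(\gf_u)$-module with respect to any chosen untwisted puncture, it decomposes into blocks $\V_{m,1}\cong\bigoplus_{\lambda\in P^+_u}\V_{m,1,\lambda}$, and the cited proposition supplies
\begin{equation*}
\ch \V_{m,1,\lambda} = q^{\lambda(\rho^\vee)}\ch \IL^u_\lambda \cdot \ch H^0_{DS}(u,\V_{m,1,\lambda})~.
\end{equation*}
The right-hand DS reduction is precisely the $\lambda$-block of $\V_{m-1,1}$, which by the inductive hypothesis vanishes unless $\lambda\in\iota(P^+_t)$. Substituting the inductive formula for $\ch \V_{m-1,1,\lambda}$ and using the identifications of the $\KK$-factor and Schur polynomial expressions with characters of Weyl modules and of their Feigin--Frenkel quotients (cf.~Section~\ref{ssec:twisted_background/index}), the contribution associated to the $m$-th untwisted puncture is supplied by $\ch \IL^u_\lambda = \KK_u(\mathbf{a}_m)\chi^\lambda_u(\mathbf{a}_m)/(\KK_u(\times)\chi^\lambda_u(\times))$; combining this with the $m-1$ factors from the inductive hypothesis yields the claimed expression after summing over $\lambda\in P^+_t$.

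Conicality then follows by the same reasoning as in the proof of Proposition~\ref{prop:mixed_trin_char}. The modules $\IL^u_\lambda$ are non-negatively graded, $\lambda(\rho^\vee)\geqslant 0$ with equality only for $\lambda=0$, and $\V_{m-1,1}$ is conical by induction. Consequently $\V_{m,1}$ is non-negatively graded and the $\lambda=0$ block contributes $1+O(q)$ while all other blocks contribute in strictly positive degree. The one mild subtlety to manage is notational bookkeeping: one must check that the block decomposition used in applying Proposition $8.4$ is compatible with both the DS reduction and the inductive structure on $\V_{m-1,1}$. This compatibility is guaranteed because Feigin--Frenkel gluing identifies all the untwisted $\ZZ_u$ actions across the $m$ untwisted punctures (and further identifies them with the twisted $\ZZ_t$ action via the surjection of Theorem~\ref{thm:closed_immersion}), so the highest-weight label $\lambda$ is unambiguous across the various choices of puncture with respect to which one decomposes.
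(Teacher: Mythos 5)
Your proposal is correct and follows essentially the same route as the paper's proof: both rest on Lemma~\ref{lem:mix_kl0} (i.e.\ $H^0_{DS}(u,\V_{m,1})\cong\V_{m-1,1}$) together with Proposition~8.4 of~\cite{Arakawa:2018egx} to reconstruct $\ch\,\V_{m,1,\lambda}$ from $\ch\,\IL^u_\lambda$ and the character of the DS reduction, with conicality handled exactly as in Proposition~\ref{prop:mixed_trin_char}. The only cosmetic difference is that the paper substitutes the explicit graded decomposition of $\V_{m-1,1}$ into $\IV^u_{\lambda,m-1}\otimes_{\zf^u_\lambda}(\IV^t_\lambda\otimes_{\zf^t_\lambda}\IV^t_{\lambda^*})$ from Proposition~\ref{prop:mixed_free_cofree}, whereas you substitute the inductive character formula directly; the content is the same.
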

\begin{proof}
For $m=1$ this is just Proposition~\ref{prop:mixed_trin_char}, so we take $m>1$. We have that $H^0_{DS}(u,\V_{m,1})\cong \V_{m-1,1}$. As graded vector spaces, we have the decomposition
\begin{equation*}
	\V_{m-1,1} \cong \sum_{\lambda\in P_t^+} \IV^u_{\lambda,m-1} \otimes_{\zf_\lambda^u}(\IV^t_{\lambda}\otimes_{\zf_\lambda^t}\IV^t_{\lambda*})~.
\end{equation*}
Applying Proposition~8.4 of~\cite{Arakawa:2018egx} gives us the desired result. The fact that $\V_{m,1}$ are conical follows from the same argument as Proposition~\ref{prop:mixed_trin_char}.
\end{proof}

The vertex algebra $\V_{m,n}$ is constructed by repeatedly gauging $\V_{m+n-1,1}$ with copies of $\V_{1,1}$. We know how the character behaves under gauging, so this result can be extended to
\begin{equation}\label{eq:mixed_char}
	\Tr_{\V_{m,n}}\bigg(q^{L_0}\prod_{i=1}^{m} \mathbf{a}_i \,\prod_{j=1}^{2n}\mathbf{b}_j\bigg)=\frac{\prod_{i=1}^m \KK_u(\mathbf{a}_i)\chi^\lambda_u(\mathbf{a}_i) \prod_{j=1}^{2n} \KK_t(\mathbf{b}_j)\chi^\lambda_t(\mathbf{b}_j)} {\big( \KK_u(\times)\chi^\lambda_u(\times) \big)^{m+2n-2}}~,
\end{equation}
which agrees with the expression in~\cite{Lemos:2012ph}.

\begin{prop}
The vertex algebras $\V_{m,1}$ are conformal with central charge
\begin{equation*}
	c_{\V_{m,1}} = m\,\dim \gf_u + 2\dim \gf_t -24m\,\rho_u\cdot\rho_u^\vee -m\, \rk\,\gf_u~.
\end{equation*}
\end{prop}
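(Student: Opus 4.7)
The proof will proceed by induction on $m$, taking Proposition~\ref{prop:mixed_trin_conf} as the base case $m=1$. Both the existence and the central charge computation of the conformal structure will be established at each inductive step by directly generalising the argument of Proposition~\ref{prop:mixed_trin_conf}, with Lemma~\ref{lem:mix_ff_cap} (identifying $\V_{m,1} \cong \W_u \ast_u \V_{m-1,1}$) and Lemma~\ref{lem:mix_kl0} (identifying $H^0_{DS}(u,\V_{m,1}) \cong \V_{m-1,1}$) playing the analogous roles that the definition of $\V_{1,1}$ and Theorem~\ref{thm:cyl_kl0} played there.

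For the inductive step, assume $\V_{m-1,1}$ is conformal with the stated central charge. On the complex $\W_u\otimes \V_{m-1,1}\otimes \Li{\bullet}(\zf(\gf_u))$ computing $\W_u\ast_u \V_{m-1,1}$, the naive candidate for a conformal vector is $\omega = \omega_\W + \omega_{\V_{m-1,1}} + \omega_{gh}$. As in Proposition~\ref{prop:mixed_trin_conf}, $\omega$ fails to be $Q$-closed but the obstruction is cohomologically trivial: one shows that the OPE tail coefficients $q^{(i)}_j \in \zf(\gf_u)$ produced by $\omega_\W(z)P_i(w)$ and those produced by $\omega_{\V_{m-1,1}}(z)P_i(w)$ (via the puncture along which we glue) agree after applying the relevant projection/lift. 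This intertwining is precisely the content of Lemma 9.4 of~\cite{Arakawa:2018egx} combined with the fact, used already in the base case and recorded in \eqref{eq:der_od_comm} of Appendix~\ref{sec:proof_closed_immersion}, that the $\mathrm{Der}(\OO_D)$ action commutes with the surjection $\zf(\gf_u)\twoheadrightarrow \zf(\gf_t)$. Once the intertwining is established, a correction $\chi$ exists with $Q_{(0)}\chi$ cancelling the anomaly, so $\tilde\omega = \omega+\chi$ is $Q$-closed and defines a vector in $\V_{m,1}$. That $\tilde\omega$ is a conformal vector (not just a Virasoro-like element) then reduces, via conicality of $\V_{m,1}$ (Proposition~\ref{prop:mixed_char}) and Lemma 3.1.2 of~\cite{Frenkel:2007}, to checking the quartic pole coefficient, which by conicality must be a multiple of the identity. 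Uniqueness of $\tilde\omega$ within its grading class follows verbatim from the CFT-type argument using Lemma 4.1 of~\cite{Moriwaki_2020}.

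For the central charge, Lemma~\ref{lem:mix_kl0} gives $H^0_{DS}(u,\V_{m,1})\cong \V_{m-1,1}$, and the image of $\tilde\omega$ under DS reduction must coincide with the unique grading-compatible conformal vector of $\V_{m-1,1}$, which by induction has the stated central charge. The shift in central charge under principal untwisted DS reduction is the same as in the base case, namely
\begin{equation*}
    c_{\V_{m-1,1}} = c_{\V_{m,1}} + \rk\,\gf_u - \dim\,\gf_u + 24\,\rho_u\cdot\rho_u^\vee~.
\end{equation*}
Substituting the inductive expression $c_{\V_{m-1,1}} = (m-1)\dim\,\gf_u + 2\dim\,\gf_t - 24(m-1)\rho_u\cdot\rho_u^\vee - (m-1)\rk\,\gf_u$ and solving yields the claimed formula.

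The main obstacle is ensuring that the OPE-intertwining argument (the existence of $\chi$) genuinely goes through at each step, which amounts to verifying that the conformal vector produced at stage $m-1$ continues to preserve the image of $\zf(\gf_u)$ used in the subsequent FF gluing. Since the inductive conformal vector is itself a BRST-corrected sum of the previous pieces, this should follow by unwinding the correction and appealing again to the $\mathrm{Der}(\OO_D)$-equivariance established in Appendix~\ref{sec:proof_closed_immersion}; nevertheless, making this bookkeeping fully rigorous is the technical heart of the argument.
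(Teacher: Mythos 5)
Your overall architecture (induction on $m$, conicality plus Lemma 3.1.2 of~\cite{Frenkel:2007} to upgrade the candidate vector to a conformal vector, uniqueness via Lemma 4.1 of~\cite{Moriwaki_2020}, and the central-charge recursion through $H^0_{DS}(u,\V_{m,1})\cong\V_{m-1,1}$) matches the paper, and the final bookkeeping of the central charge is correct. But you build the conformal vector recursively on the complex $\W_u\otimes\V_{m-1,1}\otimes\Li{\bullet}(\zf(\gf_u))$, taking $\omega=\omega_{\W}+\omega_{\V_{m-1,1}}+\omega_{gh}$ with $\omega_{\V_{m-1,1}}=\widetilde{\omega}_{m-1}$ the \emph{corrected} conformal vector from the previous step. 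The paper instead works directly on the simultaneous gluing complex $C_{m,1}=\W_u^{\otimes m}\otimes\D_t\otimes(\Li{\bullet}(\zf(\gf_u)))^{m}$ and takes $\omega_m=\omega_{\D_t}+\sum_i\omega_{\W_i}+\sum_i\omega_{gh,i}$, so that every summand is an \emph{elementary} conformal vector acting on its own copy of $\zf(\gf_u)$ by the universal $\mathrm{Der}(\OO_D)$ action; the obstruction $Q^m_{(0)}\omega_m$ then splits into terms $\partial^j\bigl(\rho_l(q_j^{(i)})-\rho_{l+1}(\tau(q_j^{(i)}))\bigr)c_i$ which are manifestly coboundaries for $l\neq m$, with the $l=m$ term handled exactly as in Proposition~\ref{prop:mixed_trin_conf}.

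The gap in your version is precisely the point you flag at the end, and it is not cosmetic. The tail coefficients $q_j^{(i)}$ arise from $\omega_{(j)}P_i$ with $j\geqslant 2$, i.e.\ from the modes $L_{j-1}$ with $j-1\geqslant 1$, and these are exactly the modes on which $\widetilde{\omega}_{m-1}$ and $\omega_{m-1}$ \emph{differ}: the correction $\chi_{m-1}=\sum\partial^j(\cdots)$ satisfies $\chi_{(0)}=\chi_{(1)}=0$ but contributes nontrivially to $\chi_{(n)}$ for $n\geqslant 2$. So to run your inductive step you must show that $\widetilde{\omega}_{m-1}$ still preserves the distinguished copy of $\zf(\gf_u)$ attached to the remaining untwisted puncture of $\V_{m-1,1}$ and still acts on it by the same $\mathrm{Der}(\OO_D)$ coordinate-change action as $\omega_{\W}$ does on its copy --- Lemma 9.4 of~\cite{Arakawa:2018egx} and \eqref{eq:der_od_comm} give you this for the \emph{uncorrected} pieces, not for $\widetilde{\omega}_{m-1}$. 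Unwinding $\chi_{m-1}$ back to elementary pieces is possible in principle, but doing so is essentially equivalent to passing to the simultaneous complex $C_{m,1}$, which is why the paper sets the induction up that way from the start. As written, your "should follow by unwinding the correction" leaves the existence of $\chi$ at stage $m$ unproved, and with it the existence of the conformal vector.
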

\begin{proof}
From Proposition~\ref{prop:mixed_trin_conf}, we know that $\V_{m,1}$ is conformal with central charge $c_{\V_{1,1}} = 2\dim\, \gf_t + \dim\,\gf_u -\rk\,\gf_u -24 \rho_u\cdot\rho_u^\vee$ so the statement is true of $m=1$.

We shall first show that $\V_{m,1}$ has a conformal vector, then show that it is the unique conformal vector whose grading agrees with the character and finally show that this results in the correct central charge.

Consider the vector $\omega_m\in C_{m,1}$, defined by
\begin{equation*}
	\omega_m = \omega_{\D_t} + \sum_{i=1}^{m} \omega_{\W_i}+\sum_{i=1}^m \omega_{gh,i} ~,
\end{equation*}
where $\omega_{\D_t}$ is the conformal vector of $\D_t$, $\omega_{\W_i}$ is the conformal vector of the $i$th factor of $\W_u$ and $\omega_{gh,i}$ is the conformal vector of the $i$th ghost system, $\Li{\bullet}(\zf(\gf_u))$. Clearly, $\omega_m$ defines a conformal vector on the complex, $C_{m,1}$. Like we did for $\V_{1,1}$ we shall argue that this descends to a conformal vector in cohomology.

For an element $P_i\in\zf(\gf_u)$, where we think of $\zf(\gf_u)$ as a subalgebra of one of the $\W_u$ factors, $\omega_m$ acts as
\begin{equation*}
	\omega_m(z) P_i(w) \sim \frac{\partial P_i}{z-w} +\frac{(d_i+1) P_i}{(z-w)^2}+\sum_{j=2}^{d_i+2}\frac{(-1)^j j!}{(z-w)^{j+1}} q_j^{(i)}(w)~,
\end{equation*}
where $q_j^{(i)}$ is some homogeneous state in $\zf(\gf_u)$ with weight $d_i-j+2$. Let us denote by $\widetilde{P_i}$ the image of $P_i$ under the projection $\zf(\gf_u)\twoheadrightarrow \zf(\gf_t)$.
One then has
\begin{equation*}
	\omega(z) \widetilde{P_i}(w) \sim \frac{\partial \widetilde{P_i}}{z-w} +\frac{(d_i+1) \widetilde{P_i}}{(z-w)^2}+\sum_{j=2}^{d_i+2}\frac{(-1)^j j!}{(z-w)^{j+1}} \widetilde{q_j}^{(i)}(w)~,
\end{equation*}
where we think of $\widetilde{P_i}$ as a state in $\zf(\gf_t)\subset\D_t$. Let $Q^m$ be the differential of $C_{m,1}$; the action of $Q^m$ on the vector $\omega_m$ is
\begin{equation*}
	Q^m_{(0)}(z) \omega(w) = \sum_{l=1}^{m}\sum_{i=1}^{\mathrm{rk}\,\gf_u}\sum_{j=2}^{d_i+1} \partial^j \big(\rho_l(q_j^{(i)}) -\rho_{l+1}(\tau(q^{(i)}))\big)c_i ~,
\end{equation*}
where $\rho_i$ for $i\leqslant m$ denotes the action of the Feigin-Frenkel centre on the $i$th $\W_u$ factor and $\rho_{m+1}$ once again denotes the action of $\zf(\gf_u)$ on $\D_t$ along the projection to $\zf(\gf_t)$.

If the right hand side of the above equation equals $Q^m_{(0)}\chi$ for some state $\chi$, then $\widetilde{\omega}_m = \omega_m + \chi$ is $Q$-closed and defines a vector in $\V_{m,1}$.
For $l\neq m$ it is clear that $\rho_l(q_j^{(i)})-\rho_{l+1}(q_j^{(i)})$ is a coboundary, and we have addressed the $l=m$ case in the proof of Proposition~\ref{prop:mixed_trin_conf}. Therefore, such a $\chi$ exists and may be written as
\begin{equation*}
	\chi = \sum_{l=1}^m\sum_{i=1}^{\mathrm{rk}\, \gf_u}\sum_{j=2}^{d_i+2} \partial^j (\rho_{l}\otimes \rho_{l+1}\otimes \rho_{gh,l})(z_{ij})
\end{equation*}
for some $z_{ij} \in \zf(\gf_u)\otimes\zf(\gf_u)\otimes \Li{0}(\zf(\gf_u))$. Therefore, $\widetilde{\omega}_{m,(i)}=\omega_{m,(i)}$ for $i=0,1$, so the OPEs agree up to the quadratic pole. Since $\V_{m,1}$ is conical by Proposition~\ref{prop:mixed_char}, Lemma $3.1.2$ of~\cite{Frenkel:2007} applies once more and we can conclude that $\widetilde{\omega}_m$ is a conformal vector in $\V_{m,1}$.

Now we wish to show that $\widetilde{\omega}_m$ is the unique conformal vector whose $L_0$-grading agrees with $\omega_m$. The argument from Proposition~\ref{prop:mixed_trin_conf} using Lemma 4.1 of~\cite{Moriwaki_2020} still works, with minor alteration, since $\V_{m,1}$ are conical.

The DS reduction of $\widetilde{\omega}_m$ gives a conformal vector in $\V_{m-1,1}$ with central charge
\begin{equation*}
	c_{\V_{m-1,1}} = c_{\V_{1,1}} +\rk\,\gf_u - \dim \gf_{u} +24 \rho_{u}\cdot\rho^\vee_{u}~,
\end{equation*}
and which agrees with the grading by $\omega_{m-1}$. But, by the inductive assumption, such a conformal vector on $\V_{m-1,1}$ is unique.

Unwrapping the induction from the base case of $\V_{0,1}=\D_t$, we get that
\begin{equation*}
c_{\V_{m,1}} = m\,\dim \gf_u + 2\dim \gf_t -24m\,\rho_u\cdot\rho_u^\vee -m\, \rk\,\gf_u~.
\end{equation*}
\end{proof}
\vspace{-6pt}

To extend this to $\V_{m,n}$, we shall once more make use of the fact the $\V_{m,n}$ are constructed by repeated twisted gaugings of the $\V_{m,1}$ with copies of $\V_{1,1}$. It is well known that the conformal vector
\begin{equation}
	T = \widetilde{\omega}_{\V_{k,l}} + \widetilde{\omega}_{\V_{1,1}} +\omega_{gh} ~,
\end{equation}
where $\omega_{gh}$ is the conformal vector of the $b,c$ ghost system, descends to a conformal vector in the BRST cohomology with central charge equal to the central charge of $T$. Therefore the vertex algebras $\V_{m,n}$ are conformal, with central charge
\begin{equation}
	c_{\V_{m,n}} = 2n\, \dim\,\gf_t + m\, \dim\,\gf_u - (m+2n-2)\rk\,\gf_u -24(m+2n-2)\,\rho_u\cdot\rho_u^\vee~.
\end{equation}

Having established a number of intrinsic properties of the genus zero, mixed vertex algebras, let us examine how they interact with each other under gluing. This will shed some light on how the $\V_{m,n}$ fit into the class $\SS$ duality web.

Given our rearrangement lemmas, we can show that the mixed vertex algebras of the previous section have the expected behaviour under $\circ_u$ and $\circ_t$. First, we establish the partial result

\begin{lem}\label{lem:gauge_mix_trin}
We have the isomorphism
\begin{equation*}
	\V_{1,1}\circ_t \V_{m,n} \cong \V_{m+1,n}~.
\end{equation*}
\end{lem}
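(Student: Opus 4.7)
The plan is to proceed by induction on $n$, with the base case $n = 1$ handled by unpacking the mixed trinion and commuting the outer twisted gauging past the untwisted Feigin--Frenkel gluing, and the inductive step reduced to the base case by invoking the rearrangement lemma for mixed gauging.

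For the base case $n = 1$, I would write $\V_{1,1} = \W_u \ast_u \D_t$ and then apply Lemma~\ref{lem:ut-cap-gauge} with $M = \W_u$, $V = \D_t$, $W = \V_{m,1}$. The hypotheses are satisfied: $\W_u$ is free over $\ZZ_{u,(<0)}$ by~\cite{Arakawa:2018egx}, and $\V_{m,1}$ is semijective in $\KL_t$ by Proposition~\ref{prop:mixed_free_cofree}. This yields $\V_{1,1} \circ_t \V_{m,1} \cong \W_u \ast_u (\D_t \circ_t \V_{m,1})$. Because the twisted cylinder $\D_t$ acts as the identity under $\circ_t$ on objects of $\KL_t$, the inner expression collapses to $\V_{m,1}$, and Lemma~\ref{lem:mix_ff_cap}(ii) then identifies $\W_u \ast_u \V_{m,1}$ with $\V_{m+1,1}$, giving the claim at $n = 1$.

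For the inductive step, assume $n \geq 2$ and that $\V_{1,1} \circ_t \V_{m', n-1} \cong \V_{m'+1, n-1}$ for all $m'$. Unfolding the recursive definition $\V_{m,n} \coloneqq \V_{1,1} \circ_u \V_{m+1, n-1}$, the left-hand side becomes $\V_{1,1} \circ_t (\V_{1,1} \circ_u \V_{m+1, n-1})$. Using the symmetry $A \circ_t B \cong B \circ_t A$ in the choice of twisted puncture to gauge and then applying Lemma~\ref{lem:gauge_ut}---whose semijectivity hypothesis for the outer $\V_{1,1}$ is ensured by Proposition~\ref{prop:trin_free_cofree}---this is isomorphic to $\V_{1,1} \circ_u (\V_{1,1} \circ_t \V_{m+1, n-1})$. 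The inductive hypothesis rewrites the inner term as $\V_{m+2, n-1}$, and one more appeal to the recursive definition gives $\V_{1,1} \circ_u \V_{m+2, n-1} = \V_{m+1, n}$, as required.

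The only genuine subtlety is bookkeeping: each $\circ$ is implicitly taken along a specific puncture, and the rearrangement lemmas must be applied with matching choices of which puncture on which factor is being gauged. Since the semijectivity inputs ($\V_{m,1}$ in $\KL_t$, and $\V_{1,1}$ in $\KL_t$) as well as the identity property of $\D_t$ under $\circ_t$ were established in earlier results, no new cohomological input is required here; the statement is a formal consequence of the rearrangement results and the recursive definition of $\V_{m,n}$.
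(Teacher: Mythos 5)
Your proposal is correct and follows essentially the same route as the paper's own proof: the base case unpacks $\V_{1,1}=\W_u\ast_u\D_t$ and applies Lemma~\ref{lem:ut-cap-gauge} together with the identity property of $\D_t$ under $\circ_t$ and Lemma~\ref{lem:mix_ff_cap}, while the inductive step uses the recursive definition of $\V_{m,n}$ and Lemma~\ref{lem:gauge_ut} with the semijectivity of $\V_{1,1}$ in $\KL_t$. Your explicit tracking of the semijectivity hypotheses and of which twisted puncture is being gauged is slightly more careful than the paper's write-up, but it is the same argument.
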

\begin{proof}
First, let us treat the base case of $n=1$. We have that
\begin{equation*}
	\V_{1,1}\circ_t \V_{m,1} \cong (\W_u \ast_u \D_t)\circ_t \V_{m,1}~,
\end{equation*}
so $\V_{1,1}\circ_t \V_{m,1} \cong \W_u\ast_u \V_{m,1} \cong \V_{m+1,1}$. By Lemma~\ref{lem:ut-cap-gauge}, we have that
\begin{equation*}
	(\W_u \ast_u \D_t)\circ_t \V_{m,1}\cong \W_u\ast_u(\D_t \circ_t \V_{m,1}) \cong \W_u \ast_u \V_{m,1}\cong \V_{m+1,1}~,
\end{equation*}
where we have used Lemma~\ref{lem:mix_ff_cap}. Now, we proceed by induction on $n$. We have just established the base case for $n=1$, so suppose $n>1$. Then,
\begin{equation}\nonumber
	\begin{split}
	\V_{1,1}\circ_t \V_{m,n} &\cong \V_{1,1}\circ_t (\V_{m+1,n-1}\circ_u \V_{1,1})~, \\
	&\cong (\V_{1,1}\circ_t \V_{m+1,n-1})\circ_u \V_{1,1} ~,\\
	&\cong \V_{m+2,n-1}\circ_u \V_{1,1}~, \\
	&\cong \V_{m+1,n}~,
	\end{split}
\end{equation}
where in the second line we have used Lemma~\ref{lem:gauge_ut}.
\end{proof}
\vspace{-6pt}

\begin{prop}\label{prop:gauge_mix}
Under gauging, the vertex algebras $\V_{m,1}$ behave as expected, namely,
\begin{equation}\nonumber
	\begin{split}
	\V_{m,1}\circ_t \V_{p,q}&\cong \V_{m+p,q}~,\\
	\V_{m,1}\circ_u\V_{p,q}&\cong \V_{p+m-2,q+1}~.
	\end{split}
\end{equation}
\end{prop}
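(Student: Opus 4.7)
My plan is to establish both isomorphisms by induction on $m$, with the crucial move being the rewriting $\V_{m,1}\cong \V_{1,1}\circ_t \V_{m-1,1}$, which follows from Lemma~\ref{lem:gauge_mix_trin} applied to the case $n=1$. This presentation exposes a twisted gauging inside $\V_{m,1}$, and it is the $\circ_t$-associativity (rather than $\circ_u$- or $\ast_u$-associativity) that will carry the argument through.

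For the first identity $\V_{m,1}\circ_t \V_{p,q}\cong \V_{m+p,q}$, the base case $m=1$ is precisely Lemma~\ref{lem:gauge_mix_trin}. For the inductive step, I first substitute $\V_{m,1}\cong \V_{1,1}\circ_t \V_{m-1,1}$, then reassociate the resulting triple product $(\V_{1,1}\circ_t \V_{m-1,1})\circ_t \V_{p,q}$ using Lemma~\ref{lem:rearr_gauge}; the semijectivity hypothesis there is satisfied by $\V_{1,1}$ in $\KL_t$ thanks to Proposition~\ref{prop:trin_free_cofree}. The inductive hypothesis then identifies $\V_{m-1,1}\circ_t \V_{p,q}$ with $\V_{m+p-1,q}$, and a final application of Lemma~\ref{lem:gauge_mix_trin} closes the step.

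For the second identity $\V_{m,1}\circ_u \V_{p,q}\cong \V_{p+m-2,q+1}$, the base case $m=1$ is just the defining equation~\eqref{eq:higher_puncture_mixed_def} (read as $\V_{1,1}\circ_u \V_{p,q}\cong \V_{p-1,q+1}$). For the inductive step I again write $\V_{m,1}\cong \V_{1,1}\circ_t \V_{m-1,1}$ and, using the symmetry of gauging, recast the expression as $\V_{p,q}\circ_u (\V_{1,1}\circ_t \V_{m-1,1})$. I then invoke Lemma~\ref{lem:gauge_ut} to commute the outer untwisted gauging past the inner twisted one, producing $(\V_{p,q}\circ_u \V_{1,1})\circ_t \V_{m-1,1}$; the lemma's requirement that $\V_{m-1,1}$ be semijective in $\KL_t$ is provided by Proposition~\ref{prop:mixed_free_cofree}. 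The base case reduces $\V_{1,1}\circ_u \V_{p,q}$ to $\V_{p-1,q+1}$, and one last application of the first identity (already proved) delivers $\V_{m+p-2,q+1}$.

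The main obstacle one must sidestep is that $\V_{m,1}$ is \emph{not} semijective in $\KL_u$, so the otherwise natural attempt of writing $\V_{m,1}\cong \W_u \ast_u \V_{m-1,1}$ and then applying Lemma~\ref{lem:rearr_gauge_ff} fails: the latter's semijectivity hypothesis on $\V_{p,q}$ in $\KL_u$ is never met in our setting. The argument circumvents this obstruction by trading the obstructed $\ast_u$-presentation of $\V_{m,1}$ for the $\circ_t$-presentation $\V_{1,1}\circ_t \V_{m-1,1}$, which replaces all required associativity/commutation steps by ones that only involve $\KL_t$-semijectivity and the $\circ_t$-well-behaved mixed-type rearrangement (Lemma~\ref{lem:gauge_ut}). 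In effect, the proof relies on the one duality frame in which every intermediate functor behaves as expected, even though the resulting statement holds beyond that frame.
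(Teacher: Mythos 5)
Your proof is correct and follows essentially the same strategy as the paper: both identities are driven by the presentation $\V_{m,1}\cong \V_{1,1}\circ_t\V_{m-1,1}$ from Lemma~\ref{lem:gauge_mix_trin}, with the first identity handled verbatim via Lemma~\ref{lem:rearr_gauge} and the semijectivity of $\V_{1,1}$ in $\KL_t$. The only (harmless) divergence is in the second identity: the paper applies a ``slight modification'' of Lemma~\ref{lem:gauge_ut} so as to group $\V_{m-1,1}\circ_u\V_{p,q}$ and then invokes the inductive hypothesis for the $\circ_u$ statement, whereas you group $\V_{p,q}\circ_u\V_{1,1}$ --- a direct, unmodified application of Lemma~\ref{lem:gauge_ut} with $V_3=\V_{m-1,1}$ semijective by Proposition~\ref{prop:mixed_free_cofree} --- and then finish with the defining relation \eqref{eq:higher_puncture_mixed_def} plus the already-proven $\circ_t$ identity, so no induction is actually needed there. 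Both routes implicitly rely on the freedom to choose which untwisted moment map participates in $\circ_u$ (permutation invariance of untwisted punctures), which the paper also takes for granted at this stage.
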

\begin{proof}
We proceed via induction for each type of gluing, noting that the base case $m=1$ is true, either by Lemma~\ref{lem:gauge_mix_trin} or by definition.

Suppose $m>1$. Then,
\begin{equation}\nonumber
	\begin{split}
	\V_{m,1}\circ_t\V_{p,q} &\cong (\V_{1,1}\circ_t\V_{m-1,1})\circ_t\V_{p,q}~,\\
	&\cong \V_{1,1}\circ_t(\V_{m-1,1}\circ_t\V_{p,q})~,\\
	&\cong \V_{p+m,q}~,
	\end{split}
\end{equation}
where in the second line, we have used Lemma~\ref{lem:rearr_gauge} to arrive at the desired result.

Next we treat the $\circ_u$ case. Again, suppose $m>1$. Then,
\begin{equation}\nonumber
	\begin{split}
	\V_{m,1}\circ_u\V_{p,q} &\cong (\V_{1,1}\circ_t\V_{m-1,1})\circ_u\V_{p,q}~,\\
	&\cong \V_{1,1}\circ_t(\V_{m-1,1}\circ_u\V_{p,q})~,\\
	&\cong \V_{p+m-2,q+1}~,
	\end{split}
\end{equation}
where, in the second line we have used a slight modification of Lemma~\ref{lem:gauge_ut}---which applies, since the cohomology $\V_{1,1}\circ_t-$ is concentrated in degree zero, so the spectral sequences will collapse at the second page.
\end{proof}

Of course, we expect that these results should extend to the general case,
\begin{equation}
	\V_{m,n}\circ_t \V_{p,q} \cong \V_{m+p,n+q-1}~, \quad \V_{m,n}\circ_u \V_{p,q} \cong \V_{m+p-2,q+n}~.
\end{equation}
The obstructions to proving this are as follows. In the case of $\circ_t$, the inductive step is $\V_{m,n}\circ_t(\V_{p+1,q-1}\circ_u \V_{1,1})$ and the corresponding spectral sequence is unbounded and does not collapse at the second page. Similarly, for the untwisted gluing we have not been able to establish the putative isomorphism
\begin{equation}
	(\V_{1,1}\circ_u\V_{m+1,n-1})\circ_u\V_{p,q}\cong \V_{1,1}\circ_u(\V_{m+1,n-1})\circ_u\V_{p,q}~,
\end{equation}
for the inductive step. Neither cohomology is concentrated in degree zero, so the second page of the spectral sequence is unbounded. To make progress seems to require more sophisticated machinery or a different strategy.

\begin{prop}~\label{prop:ff-untwisted-mixed}
We have the isomorphism
\begin{equation*}
	\V_{G_u,s}\ast_u \V_{m,n} \cong \V_{m+s,n}~.
\end{equation*}
\end{prop}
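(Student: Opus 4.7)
The plan is to proceed by induction on $s$, using the relation $\V_{G_u, s} \cong \W_u \ast_u \V_{G_u, s-1}$ arising from Arakawa's iterative construction (see the discussion in Section \ref{ssec:chiral_background/genus-zero}) together with the associativity of untwisted Feigin--Frenkel gluing (Lemma \ref{lem:rearr_ff}). Associativity applies provided $\W_u$ and $\V_{m, n}$ are both free over $\ZZ_{u, (<0)}$; freeness of $\W_u$ is established in \cite{Arakawa:2018egx}, while freeness of $\V_{m, n}$ should follow from the filtration of Proposition \ref{prop:mixed_free_cofree} (for $n = 1$) together with the freeness of the Weyl modules $\IV^u_\lambda$ over $\zf^u_\lambda$, promoted to general $n$ by a separate induction. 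The inductive step then reduces the statement to the base case $s = 1$, namely $\W_u \ast_u \V_{m, n} \cong \V_{m+1, n}$.

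For the base case, the subcase $n = 1$ is Lemma \ref{lem:mix_ff_cap}. For $n \geq 2$, I would invoke the uniqueness of $\V_{m+1, n}$ noted following Lemma \ref{lem:mix_kl0}---it is the unique vertex algebra object in $\KL_u^{\otimes m+1}$ whose untwisted DS reduction is $\V_{m, n}$. Since $\W_u \ast_u \V_{m, n}$ manifestly carries $m + 1$ commuting $\KL_u$ actions (one from $\W_u$ and $m$ from $\V_{m, n}$, with Feigin--Frenkel centres identified), it suffices to exhibit an isomorphism $H^0_{DS}(u, \W_u \ast_u \V_{m, n}) \cong \V_{m, n}$. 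Writing $H^0_{DS}(u, -) \cong \W_u \circ_u -$ and performing the DS reduction on one of the $m$ actions inherited from $\V_{m, n}$, Lemma \ref{lem:rearr_gauge_ff}---with $U = \W_u$ free over $\ZZ_{u, (<0)}$ and $W = \W_u$ semijective in $\KL_u$---rearranges the computation as
\begin{equation*}
    H^0_{DS}(u, \W_u \ast_u \V_{m, n}) \cong \W_u \ast_u (\V_{m, n} \circ_u \W_u) \cong \W_u \ast_u \V_{m - 1, n} \cong \V_{m, n}~,
\end{equation*}
where the second isomorphism uses Lemma \ref{lem:mix_kl0} and the third is the inductive hypothesis on $m$.

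The main technical obstacle is the innermost base case $m = 0$, where the rearrangement above degenerates because $\V_{0, n}$ carries no $\KL_u$ action along which to perform the DS reduction. One must therefore treat $\W_u \ast_u \V_{0, n} \cong \V_{1, n}$ by a separate induction on $n$: unfolding $\V_{0, n} \coloneqq \V_{1, 1} \circ_u \V_{1, n-1}$ and analyzing the triple bicomplex combining the FF gluing by $\W_u$ with the untwisted gauging of $\V_{1, 1}$ against $\V_{1, n-1}$. Because $\V_{1, n-1}$ is not semijective in $\KL_u$ for $n > 2$, neither of the resulting spectral sequences collapses at the $E_2$ page in the clean manner of the earlier rearrangement lemmas, and a more delicate filtration-and-character argument---analogous to the one used to bound the cohomology below degree zero in the proof of Lemma \ref{lem:mix_ff_cap}---will be required to conclude. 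This is precisely the same flavour of obstruction flagged in Section \ref{ssec:gluing_mixed} for proving the most general duality invariances.
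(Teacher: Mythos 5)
There is a genuine gap in your inductive step. Lemma~\ref{lem:rearr_ff} requires the two \emph{outer} modules to be free over $\ZZ_{u,(<0)}$, and in your application $(\W_u\ast_u\V_{G_u,s-1})\ast_u\V_{m,n}\cong\W_u\ast_u(\V_{G_u,s-1}\ast_u\V_{m,n})$ one of those outer modules is $\V_{m,n}$. But $\V_{m,n}$ is \emph{not} free over $\ZZ_{u,(<0)}$: the untwisted Feigin--Frenkel action on any mixed vertex algebra factors through the projection $\ZZ_u\twoheadrightarrow\ZZ_t$, so the non-$\sigma$-invariant generators act as zero and the module has a large annihilator. This is precisely the torsion phenomenon the paper invokes to argue that $\V_{1,1}$ is not semijective in $\KL_u$, so the ``separate induction'' you hope will establish freeness cannot succeed. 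Without freeness of the $\V_{m,n}$ factor, the second spectral sequence in the proof of Lemma~\ref{lem:rearr_ff} is not first-quadrant --- the outer cohomology $\Hi{p}\big(\ZZ_u,\Hi{q}(\ZZ_u,\W_u\otimes\V_{G_u,s-1})\otimes\V_{m,n}\big)$ is uncontrolled for $p<0$ once $q>0$ --- and the rearrangement is not justified.

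The paper distributes the work in the opposite way and thereby avoids this. The base case $s=1$ is immediate: Lemma~\ref{lem:mix_kl0} already gives $\V_{m,n}\cong H^0_{DS}(u,\V_{m+1,n})\in\KL_{u,0}$ for \emph{all} $m\geqslant0$ and $n\geqslant1$, so $\W_u\ast_u\V_{m,n}\cong\V_{m+1,n}$ follows at once from the identity $\W_u\ast_u H^0_{DS}(u,-)\cong\mathrm{id}$ of \eqref{eq:dsid}. In particular your ``main technical obstacle'' at $m=0$ is spurious: $\V_{0,n}\cong H^0_{DS}(u,\V_{1,n})$ is covered by the same lemma, and no filtration-and-character argument is needed. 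The spectral sequence instead lives in the inductive step: one applies $H^0_{DS}(u,-)\cong\W_u\circ_u-$ to $\V_{G_u,s}\ast_u\V_{m,n}$ and commutes it past the gluing via the bicomplex $\W_u\otimes\V_{G_u,s}\otimes\V_{m,n}\otimes\Li{\bullet}(\zf(\gf_u))\otimes\Li{\bullet}(\gf_u)$. Both pages collapse because $\W_u$ is semijective in $\KL_u$ and DS reduction is exact, and every Feigin--Frenkel cohomology that appears has $\W_u$ or $\V_{G_u,s}$ --- which \emph{are} free over $\ZZ_{u,(<0)}$ --- as a tensor factor, so no freeness of $\V_{m,n}$ is ever required. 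This yields $H^0_{DS}(u,\V_{G_u,s}\ast_u\V_{m,n})\cong\V_{G_u,s-1}\ast_u\V_{m,n}\cong\V_{m+s-1,n}$ by the inductive hypothesis, and applying $\W_u\ast_u-$ (together with the base case) recovers $\V_{G_u,s}\ast_u\V_{m,n}\cong\V_{m+s,n}$.
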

\begin{proof}
We proceed by induction on $s$, noting that, for $s=1$, the statement is true since $V_{m,n}$ are in $\KL_{u,0}$. Now suppose $s>1$, and consider $H^0_{DS}(u,\V_{G_u,s}\ast_u \V_{m,n})$. Of course, $H^0_{DS}(u,-)$ and $\W_u\circ_u -$ are isomorphic. We form the bicomplex
\begin{equation*}
	C^{\bullet, \bullet} = \W_{u}\otimes \V_{G_u,s} \otimes \V_{m,n}\otimes \Li{\bullet}(\zf(\gf_u)))\otimes \Li{\bullet}(\gf_u)~,
\end{equation*}
with differentials $d_\zf$ acting on $\V_{G_u,s} \otimes \V_{m,n}\otimes \Li{\bullet}(\zf(\gf_u))$ and $d_{\gf}$ acting on $\W_u\otimes\V_{G_u,s}\otimes \Li{\bullet}(\gf_u)$. The differentials anticommute and we can form the total complex as usual. The two relevant spectral sequences are
\begin{equation}\nonumber
	\begin{split}
	_{I} E_{2}^{p,q} = \Hi{p}(\hat{\gf}_{u,-\kappa_g},\gf_u, \W_u \otimes \Hi{q}(\ZZ_u,\V_{G_u,s}\otimes \V_{m,n}))~,\\
	_{II} E_{2}^{p,q} = \Hi{p}(\ZZ_u,\Hi{q}(\hat{\gf}_{u,-\kappa_g},\gf_u, \W_u\otimes \V_{G_u,s})\otimes \V_{m,n})~.
	\end{split}
\end{equation}
The functor of DS-reduction is exact, so both sequences collapse at page two with $_{I} E_{2}^{0,q}$ and $_{II}E_{2}^{p,0}$ being the only non-zero entries. This gives the isomorphism,
\begin{equation*}
H^0_{DS}(u,\V_{G_u,s}\ast_u\V_{m,n}) \cong \V_{G_u,s-1}\ast_u\V_{m,n} \cong \V_{m+s-1,n}~,
\end{equation*}
where we used the inductive hypothesis. Acting by $\W_u\ast_u-$, we have that
\begin{equation}
	\V_{G_u,s} \ast_u \V_{m,n} \cong \V_{m+s,n}~,
\end{equation}
as desired.
\end{proof}
\vspace{-6pt}

Finally, let us consider gauging the untwisted and mixed vertex algebras together.
\begin{prop}
Under untwisted gauging of untwisted vertex algebras, the mixed vertex algebras behave as expected,~\ie,
\begin{equation*}
\V_{G_u,s} \circ_u \V_{m,n} \cong \V_{m+s-2,n}
\end{equation*}
\end{prop}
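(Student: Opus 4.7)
The plan is to proceed by induction on $s\geq 2$ (assuming $m\geq 1$ so that the untwisted gauging is well-defined), mirroring the strategy of Proposition~\ref{prop:ff-untwisted-mixed}. For the base case $s=2$, note that $\V_{G_u,2}\cong\D_{G_u}$ is the untwisted cylinder, and gluing it to any vertex algebra object in $\KL_u$ acts as the identity. Hence $\D_{G_u}\circ_u \V_{m,n}\cong \V_{m,n}=\V_{m+s-2,n}$ directly.

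For the inductive step I would apply untwisted DS reduction to both sides and use the identification $H^0_{DS}(u,-)\cong \W_u\circ_u -$. Since $\W_u$ is semijective in $\KL_u$, Lemma~\ref{lem:rearr_gauge} applies with $U=\W_u$, $V=\V_{G_u,s}$ and $W=\V_{m,n}$ (with the outer gauging performed along an untwisted current descending from the $\V_{G_u,s}$ factor), yielding
\begin{equation*}
H^0_{DS}(u,\V_{G_u,s}\circ_u \V_{m,n})\cong (\W_u\circ_u \V_{G_u,s})\circ_u \V_{m,n}\cong \V_{G_u,s-1}\circ_u \V_{m,n}\cong \V_{m+s-3,n}~,
\end{equation*}
where the last step is the inductive hypothesis. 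On the other hand, Lemma~\ref{lem:mix_kl0} gives $H^0_{DS}(u,\V_{m+s-2,n})\cong \V_{m+s-3,n}$, so the untwisted DS reductions of the two candidates coincide. To lift back to an isomorphism of the full vertex algebras, I would invoke Theorem 9.11 of~\cite{Arakawa:2018egx}: for any vertex algebra object $V$ in $\KL_u$, one has $V\cong \W_u\ast_u H^0_{DS}(u,V)$, so both sides are isomorphic to $\W_u\ast_u \V_{m+s-3,n}$.

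The main obstacle is verifying that $\V_{G_u,s}\circ_u \V_{m,n}$ is genuinely a vertex algebra object in $\KL_u$ with respect to the remaining untwisted current chosen for DS reduction, which is the hypothesis needed to invoke Theorem 9.11 of~\cite{Arakawa:2018egx}. This should follow from the fact that $\V_{G_u,s}$ is semijective in $\KL_u$: by Theorem~\ref{thm:gauge-van} the untwisted BRST cohomology is concentrated in degree zero, and the spectator $V^{\kappa_c}(\gf_u)$ subalgebras on the ungauged untwisted punctures descend unchanged to the gauge cohomology, producing the required $\KL_u$-structure. Once this is in place, the remainder of the argument is a direct parallel of the inductive arguments already used for Propositions~\ref{prop:gauge_mix} and~\ref{prop:ff-untwisted-mixed}.
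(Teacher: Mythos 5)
Your argument is correct, but it takes a genuinely different route from the paper. The paper's proof is a direct, non-inductive rearrangement: it writes $\V_{G_u,s}\cong \V_{G_u,s-1}\ast_u\W_u$, commutes the Feigin--Frenkel gluing past the gauging using Lemma~\ref{lem:rearr_gauge_ff} so that the cap implements untwisted DS reduction on $\V_{m,n}$, giving $\V_{G_u,s-1}\ast_u\V_{m-1,n}$, and then concludes immediately from Proposition~\ref{prop:ff-untwisted-mixed}. You instead run an induction on $s$ using the ``reduce and lift'' strategy: DS-reduce both candidates, match the reductions, and reconstruct via $\W_u\ast_u H^0_{DS}(u,-)$. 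Both work at the paper's level of rigour. What the paper's route buys is that it never needs to certify that $\V_{G_u,s}\circ_u\V_{m,n}$ is itself a vertex algebra object in $\KL_u$ before invoking Theorem 9.11 of~\cite{Arakawa:2018egx} --- the step you correctly identify as your main obstacle --- since the only gauging it ever performs is against the semijective $\W_u$, and all remaining work happens on the Feigin--Frenkel side where Proposition~\ref{prop:ff-untwisted-mixed} has already been established. What your route buys is uniformity: it is the same template used for Lemma~\ref{lem:mix_kl0} and Proposition~\ref{prop:ff-untwisted-mixed}, and your justification of the $\KL_u$-structure (semijectivity of $\V_{G_u,s}$ forces concentration in degree zero by Theorem~\ref{thm:gauge-van}, and the spectator critical-level currents descend) is consistent with the uniqueness statement the paper itself asserts for all $\V_{m,n}$ after Lemma~\ref{lem:mix_kl0}. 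One small caveat: for $n>1$ the objects $\V_{m+s-2,n}$ are not concentrated in cohomological degree zero, so applying Theorem 9.11 to them requires the same mild extension of ``vertex algebra object in $\KL_u$'' that the paper already takes for granted; the published proof quietly sidesteps this by never reconstructing the right-hand side from its DS reduction.
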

\begin{proof}
We have the following chain of isomorphisms,
\begin{equation}\nonumber
	\begin{split}
	\V_{G_u,s}\circ_u\V_{m,n} &\cong (\V_{G_u,s-1}\ast_u \W_u)\circ_u \V_{m,n}~,\\
	&\cong \V_{G_u,s-1} \ast_u (\W_{u}\circ_u \V_{m,n})~,\\
	&\cong \V_{G_u,s-1} \ast_u \V_{m-1,n}~,\\
	&\cong \V_{m+s-2,n}~,
	\end{split}
\end{equation}
where in the second line we have used Lemma~\ref{lem:rearr_gauge_ff} and in the third we have used Proposition~\ref{prop:ff-untwisted-mixed}.
\end{proof}

To conclude this subsection, we should comment on the general issue of associativity. In~\cite{Arakawa:2018egx}, the cohomology $\V_{G,s}\circ\V_{G,s'}$ was concentrated in degree zero, so ``gauging is associative'' as a consequence of a by-now-standard spectral sequence argument. In our case, the argument is not so simple---we have repeatedly remarked that zero genus is no longer sufficient for a gluing to be concentrated in degree zero. The obvious spectral sequence no longer collapses on the second page, so the proofs of rearrangement lemmas no longer hold. Nevertheless, it is our belief that associativity must still hold in general. Yanagida~\cite{Yanagida:2020kim} has defined a derived version of the construction of~\cite{Arakawa:2018egx} in the sense of derived algebraic geometry. In addition, that work also imported the Moore--Tachikawa TQFT to the derived setting. In the derived Moore--Tachikawa analysis, associativity of gauging follows from general properties of the derived pushforward---even for nonzero genus. However, the notion of associativity in the derived setting is not quite what would be expected by physicists. In short, the physical prescription is normally to pass cohomology before the second gauging---unlike the derived case, where one does not pass to cohomology but instead holds on to the full data of the chain complex (as an object in an appropriate derived category). A sufficient condition for the derived associativity to imply our version of associativity is to show that the relevant chain complexes are formal,~\ie,~are isomorphic to their cohomology (thought of as a complex with zero differential) in that derived category. While this is an interesting problem, we do not address it in this work.

\subsection{\label{ssec:duality}Generalised \texorpdfstring{$S$}{S}-duality and 4-moves}

We will restrict ourselves to the sub-web of generalised $S$-dualities that can be reached by repeated application of the various $4$-moves. Assuming gauging is associative, this recovers a large swathe of the $S$-duality landscape. For the untwisted, genus zero vertex algebras, the operation of gauging was shown to be associative in~\cite{Arakawa:2018egx}.

Let $\V$ be some mixed vertex algebra, for now we assume genus zero with only maximal punctures. There are three different types of $4$-move that can act on $\V$. The first acts on a collection of four untwisted punctures; the second acts on two pairs of twisted punctures and the third acts on a pair of twisted punctures and two untwisted punctures. We examine each case in turn to establish invariance.

In the purely untwisted case, invariance under the $4$-move is baked into Arakawa's construction, which is manifestly permutation symmetric in the moment maps. Let us present an alternate formulation, which is generalisable to the twisted case. We may endow $V_{G,s}$ with the action\footnote{As written, the action of the $4$-move is not an automorphism but an isomorphism to some vertex algebra object in $\KL$. To correct this to an automorphism, we appeal to the uniqueness property of $\V_{G,s}$ (see Remark 10.13 of~\cite{Arakawa:2018egx}) and fix an isomorphism from this vertex algebra back to $\V_{G,s}$. One can perform a similar trick with $\V_{m,n}$ using the uniqueness statements of the previous section. To avoid having to make such a choice one can work with the Moore--Seiberg groupoid~\cite{Moore:1104178762}, but we will not do so here.} of a permutation group as follows. First, we close off the punctures labelled $2$ and $3$ in sequence,~\ie,~we perform DS reduction with respect to the moment maps $\iota_2$, associated to puncture $2$, and then with respect to $\iota_3$, associated to puncture $3$. We can invert this procedure by Feigin--Frenkel gluing a cap with moment map $\iota_3$ and then Feigin--Frenkel gluing a cap with moment map $\iota_2$. Instead, we reverse the order of inversions, that is to say we first glue a cap with moment map $\iota_2$ and then a cap with moment map $\iota_3$. This is an isomorphism since
\begin{equation}
    \W_{G,3}\ast_u(\W_{G,2}\ast_u(\W_{G,3}\circ_u(\W_{G,2}\circ_u \V))) \cong \W_{G,2} \ast_u (\W_{G,3}\ast_u   (\W_{G,3}\circ_u(\W_{G,2}\circ_u \V)))\cong \V~,
\end{equation}
where we have used Lemma~\ref{lem:rearr_ff} to swap the order of Feigin--Frenkel gluings. The subscripts on the caps keep track of the labelling of the moment maps. One can realise the actions of the other transpositions in the same way, and thus generate the action of the full symmetric group on $s$ punctures. The action of the symmetric group should be understood as swapping the labellings of the moment maps associated to each puncture. This argument also establishes invariance under the four move of the first type for the mixed vertex algebras $\V_{m,n}$.

Now let us consider the case $\V=\V_{0,2}$, the mixed vertex algebra with two pairs of twisted punctures and no untwisted punctures. We can define the action of transpositions, as in the untwisted case, by closing pairs of punctures and gluing caps. Again, let us pick two twisted punctures, labelled $2$ and $3$, with moment maps $j_2$ and $j_3$ respectively. We perform $DS$ reduction once more, closing the punctures labelled $2$ and then $3$, in order. Once again, we restore the punctures by Feigin--Frenkel gluing, via the twisted Feigin--Frenkel centre, two twisted caps with moment maps $j_2$ and $j_3$. We have
\begin{equation}
    \W_{t,2}\ast_t (\W_{t,3}\ast_2(\W_{t,3}\circ_t(\W_{t,2}\circ_t \V))) \cong \W_{t,3}\ast_2(\W_{t,3}\ast_2(\W_{t,3}\circ_t(\W_{t,2}\circ_t \V)))\cong \V~.
\end{equation}
Once more, by Lemma~\ref{lem:rearr_ff}, this results in a vertex algebra that is isomorphic to $\V_{0,2}$. All transpositions of twisted punctures can be arrived at using this method and we can generate the full symmetric group on $2n$ twisted punctures. It should be noted that the automorphism group allows swaps between twisted punctures regardless of whether one has connected them by twist lines (the twist lines in this abelain setting are a fiction anyways; they just record monodromies of the punctures). The automorphisms on $\V_{0,2}$ is thus the full $S_4$.

For the $4$-move of the third kind, our analysis in terms of permutations fails. This move swaps between the degeneration limits shown in Figure~\ref{fig:ut-move}---unlike the other cases the decompositions are no longer related by a simple permutation on the punctures. Instead, we appeal to Proposition~\ref{prop:ff-untwisted-mixed}, which states that the two BRST gluings $\V_{m-1,n}\circ_u\V_{G_u,3}$ and $\V_{1,1}\circ_t\V_{m-1,n}$ are isomorphic.

These $4$-moves, in addition to the assumption that $\circ_t$ and $\circ_u$ are always associative, recovers all of the landscape of generalised $S$-dualities that don't require the $ab$-moves (which appear at higher genus). A proof of invariance under the $ab$-move eludes us but we pose this as a conjecture in the language of semi-infinite cohomology.

\begin{conj}
Let $\V_{G_u,3}$ be the trinion vertex algebra with maximal untwisted punctures and $\V_{1,1}$ be the mixed trinion vertex algebra as before. Let $i_1,i_2,i_3$ denote the three actions of $V^{\kappa_c}(\gf_u)$ on $\V_{G_u,3}$. Similarly, let $j_2,j_3$ be the actions of $V^{\kappa_c}(\gf_t)$ on $\V_{1,1}$. Then the following isomorphisms hold,
\begin{equation*}
\Hi{\bullet}(\hat{g}_{u,-\kappa_g},\gf_u,\V_{G_u,3})\cong \Hi{\bullet}(\hat{\gf}_{t,-\kappa_g},\gf_t,\V_{1,1})~,
\end{equation*}
where $\hat{\gf}_{u,-\kappa_g}$ acts on $\V_{G_u,3}$ via $i_2\otimes (i_3\circ \sigma)$, with $\sigma$ the $\IZ_2$ outer-automorphism, and $\hat{\gf}_{t,-\kappa_g}$ acts on $\V_{1,1}$ via $j_2\otimes j_3$.
\end{conj}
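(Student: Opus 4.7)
The natural strategy is to apply Drinfel'd--Sokolov reduction to both sides, thereby reducing the conjecture to a statement about the chiral algebra of the unpunctured torus with a twist line. For the left-hand side, one wants to commute $H^0_{DS}(u,i_1,-)$ past the semi-infinite gauging of $i_2\otimes(i_3\circ\sigma)$, using a spectral sequence in the spirit of Section~\ref{ssec:duality/rearr}. Since $H^0_{DS}(u,i_1,\V_{G_u,3})\cong\V_{G_u,2}\cong\D_u$, this would identify the DS reduction of the LHS with a ``$\sigma$-twisted trace'' $\Hi{\bullet}(\hat{\gf}_{u,-\kappa_g},\gf_u,\D_u)$ pairing the left-invariant $\hat{\gf}_u$ action with the $\sigma$-twist of the right-invariant one. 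On the right-hand side, Theorem~\ref{thm:cyl_kl0} gives $H^0_{DS}(u,\V_{1,1})\cong \D_t$, and a parallel rearrangement would reduce the conjecture to the ``untwisted trace'' $\Hi{\bullet}(\hat{\gf}_{t,-\kappa_g},\gf_t,\D_t)$ of the twisted cylinder along its two $\hat{\gf}_t$ actions.

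After these rearrangements, the conjecture collapses to the chiral analogue of the classical statement that the $\sigma$-twisted class functions on $G_u$ agree with the untwisted class functions on $G_t$. The plan is then to construct an explicit vertex algebra morphism between the two traces, using the surjection $\ZZ_u\twoheadrightarrow\ZZ_t$ of~\eqref{eq:FF_projection} and Theorem~\ref{thm:closed_immersion} to identify the Feigin--Frenkel data that survives, and to promote it to an isomorphism by comparing characters---here one computes both sides in terms of the twisted class $\SS$ index~\eqref{eq:twistind} specialised to the once-punctured torus with one twist line, together with the character identities of Appendix~\ref{app:rewriting_characters}. A clean inverse construction, FF-gluing a cap $\W_u$ back on, would then recover $T_u\cong T_t$ at the level of the full vertex algebras, \emph{provided} that the traces determine their preimages under untwisted FF gluing.

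The hard part will be precisely the rearrangement step. Neither the LHS nor the RHS is expected to be concentrated in cohomological degree zero, since the genus-one theory has an enhanced Higgs branch (\cf\ Section~\ref{ssec:twisted_background/HL}), so the bicomplex used to interchange $H^0_{DS}(u,-)$ with semi-infinite gauging has unbounded contributions on both sheets and the spectral sequences do not collapse on the second page, unlike in Lemma~\ref{lem:rearr_gauge_ff} or Lemma~\ref{lem:gauge_ut}. Moreover, the self-glued objects cannot be expected to be semijective in $\KL_u$, so the uniqueness-of-FF-inversion arguments used throughout Section~\ref{sec:fixtures} (notably Proposition~\ref{prop:mixed_unique} and its extensions) are unavailable. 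Circumventing these obstructions seems to require either (a) establishing formality of the relevant bicomplexes, so that the derived-Moore--Tachikawa framework of~\cite{Yanagida:2020kim} descends to the underived statement, or (b) a direct chain-level quasi-isomorphism, most plausibly by realising $\D_u$ together with its $\sigma$-twisted bimodule structure as an induction from $\D_t$ along a suitable embedding of loop algebras. Either route goes meaningfully beyond the machinery developed in this paper, which is why we state the result only as a conjecture.
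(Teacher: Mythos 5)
You should first be aware that the paper does not prove this statement at all: it is posed explicitly as a Conjecture, and the authors state that ``a proof of invariance under the $ab$-move eludes us.'' The only supporting evidence offered in the paper is the computation at the end of Section~\ref{ssec:duality} showing that the Schur indices $\II_a(\mathbf{a})$ and $\II_b(\mathbf{a})$ of the two degenerations of the once-punctured torus agree, using orthonormality of characters to restrict the sum to $\sigma$-invariant weights. So there is no paper proof to compare against, and your proposal should be judged as a standalone attempt.

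As such, your submission is a proof \emph{strategy} rather than a proof, and you say as much; the honest diagnosis of the obstructions is its main virtue. The reduction you propose --- DS-reduce the surviving untwisted puncture on both sides, so that the conjecture becomes a comparison of the $\sigma$-twisted self-gluing of $\D_u$ with the untwisted self-gluing of $\D_t$ (a chiral twining-character statement) --- is plausible and consistent with the paper's six-dimensional heuristics, and your character comparison essentially reproduces the index check the paper already performs. But every step that would upgrade this to a proof is missing, for exactly the reasons you name plus one you underplay. First, commuting $H^0_{DS}(u,i_1,-)$ past the self-gluing is not covered by Lemma~\ref{lem:rearr_gauge} or~\ref{lem:gauge_ut}: the self-glued object is not semijective (the twisted diagonal copy of $\hat\gf_u$ in $\D_u\,$ is being gauged against itself), the semi-infinite complex is unbounded in both directions, and even though $H^\bullet_{DS}$ is exact by Theorem~\ref{thm:ds_vanish}, convergence of the resulting unbounded spectral sequences to the total cohomology is precisely the issue the paper flags as unresolved. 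Second, even granting the reduced ``trace'' isomorphism, your inverse step --- recovering the full statement by applying $\W_u\ast_u-$ --- requires knowing that both sides lie in $\KL_{u,0}$, i.e.\ are in the image of $H^0_{DS}(u,-)$, which is the content of results like Theorem~\ref{thm:cyl_kl0} and Lemma~\ref{lem:mix_kl0} and is established in the paper only for the genus-zero objects $\V_{m,n}$; for these genus-one, derived objects no such statement is available. Third, the reduced statement itself (the chiral analogue of ``$\sigma$-twisted class functions on $G_u$ equal class functions on $G_t$'') is nowhere proved and is not obviously easier than the original conjecture, since both sides are again non-semijective self-gluings supported in many cohomological degrees. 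In short: your route is a sensible one and correctly locates the difficulties, but it does not close the conjecture, and at the level of what is actually established it adds nothing beyond the character identity the paper already records.
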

Establishing invariance under the $ab$-move for the one punctured torus is sufficient to ensure invariance for the vertex algebras of all other surfaces. This construction is only relevant at nonzero genus, so states of higher cohomological degrees will be present. Of course, replacing the maximal puncture with a minimal puncture in this duality just recovers $S$-duality for non-simply-laced $\NN=4$ super Yang-Mills theory; this also remains an open conjecture at the level of associated vertex algebras.

While we are unable to show that the vertex algebras themselves are invariant under the $ab$ move, it is straightforward to observe that the Schur index is. Clearly from the form of \eqref{eq:indfull}, the index is invariant under any permutation of identical punctures and therefore is invariant under the $4$-move. To establish invariance under the $ab$-move, all we have to show is that the indices of the one punctured tori agree. The following is similar to an argument in~\cite{Agarwal:2013uga}. The index of an untwisted trinion with maximal punctures is
\begin{equation}
    \mathcal{I}(\mathbf{a},\mathbf{y},\mathbf{z}) = \sum_{\lambda\in P_{u}^+} \frac{\mathcal{K}_u(\mathbf{a})\chi^\lambda_u(\mathbf{a})\mathcal{K}_u(\mathbf{y})\chi^\lambda_u(\mathbf{y})\mathcal{K}_u(\mathbf{z})\chi^\lambda_u(\mathbf{z})}{\mathcal{K}_u(\times)\chi^\lambda_u(\times)}~.
\end{equation}
The gauging procedure is then
\begin{equation}
    \mathcal{I}_a(\mathbf{a}) = \oint [\mathrm{d}\mathbf{y}][\mathrm{d}\mathbf{z}]\Delta(\mathbf{y}) \mathcal{I}(\mathbf{a},\mathbf{y},\mathbf{z}) \mathcal{I}^V(\mathbf{y}) \delta(\mathbf{y},\sigma(\mathbf{z}^{-1}))~.
\end{equation}
Notice, that the usual propagator has been modified to $\delta(\mathbf{y}_1,\sigma(\mathbf{x}_2^{-1}))$ to account for the fact that the diagonal action has been twisted. Expanding the above expression, we have
\begin{equation}
\begin{split}
    \mathcal{I}_a(\mathbf{a}) &= \oint[\mathrm{d}\mathbf{z}]\Delta(\mathbf{z}) \mathcal{I}^V(\mathbf{z})\delta(\mathbf{y},\sigma(\mathbf{z}^{-1})) \sum_{\lambda\in P_u^+} \frac{ \mathcal{K}_{u}(\mathbf{y})\chi^{\lambda}_{u}(\mathbf{y})\mathcal{K}_{u}(\mathbf{z})\chi^{\lambda}_{u}(\mathbf{z})}{\mathcal{K}_{u}(\times)\chi^{\lambda}_{u}(\times)}~,\\
    &= \sum_{\lambda\in P_u^+}\frac{\oint[\mathrm{d}\mathbf{z}]\Delta(\mathbf{z}) \mathcal{I}^V(\mathbf{z}) \mathcal{K}_{u}(\mathbf{z})\mathcal{K}_{u}(\sigma(\mathbf{z})^{-1})\chi^{\lambda}_{u}(o(\mathbf{z}^{-1}))\chi^{\lambda}_{u}(\mathbf{z})}{\mathcal{K}_{u}(\times)\chi^{\lambda}_{u}(\times)}~,\\
    &= \sum_{\lambda\in P_u^+}\frac{\oint[\mathrm{d}\mathbf{z}]\Delta(\mathbf{z}) \mathcal{I}^V(\mathbf{z}) \mathcal{K}_{u}(\mathbf{z})\mathcal{K}_{u}(\mathbf{z}^{-1})\chi^{\sigma(\lambda)}_{u}(\mathbf{z}^{-1}))\chi^{\lambda}_{u}(\mathbf{z})}{\mathcal{K}_{u}(\times)\chi^{\lambda}_{u}(\times)}~,
    \end{split}
\end{equation}
where in the third line, we have used the transformation property of the group characters to move the outer automorphism to the Dynkin label. The orthonormality property of characters restricts the sum to only those weights $\lambda\in P_u^+$ that are invariant under $\sigma$. This is equivalent to summing over $P_t^+$, so by our usual abuse of notation, we have
\begin{equation}
    \mathcal{I}_a(\mathbf{a}) = \sum_{\lambda\in P_t^+}\frac{\mathcal{K}_u(\mathbf{a})\chi^\lambda_u(\mathbf{a})}{\mathcal{K}_u(\times)\chi^\lambda_u(\times)}~.
\end{equation}
The index of $\CC_{0,1,1}$ is
\begin{equation}
    \II_{\CC_{0,1,1}}(\mathbf{a},\mathbf{b}_1,\mathbf{b}_2) = \sum_{\lambda\in P_t^+} \frac{\KK_u(\mathbf{a})\chi_u^\lambda(\mathbf{a})\KK_t(\mathbf{b}_1)\chi_t^\lambda(\mathbf{b}_1) \KK_t(\mathbf{b}_2)\chi_t^\lambda(\mathbf{b}_2)}{\KK_u(\times)\chi_u^\lambda(\times)}.
\end{equation}
Gauging the diagonal $\gf_t$ action on $\CC_{0,1,1}$ results in the character:
\begin{equation}
    \II_b(\mathbf{a}) = \oint[\mathrm{d}\mathbf{b}_1] \II_{\CC_{0,1,1}}(\mathbf{a},\mathbf{b}_1,\mathbf{b}_2)\II^V(\mathbf{b}_1)\delta(\mathbf{b}_2,\mathbf{b}_1^{-1})~,
\end{equation}
where the contour integral is over the maximal torus of the twisted algebra. It is then immediate to confirm that $\II_a(\mathbf{a})=\II_b(\mathbf{a})$, as expected.

The machinery of Feigin-Frenkel gluing has been invaluable to our proof of invariance under the $4$-move. Yet, in some sense it is an unnatural construction in that we must truncate to the zeroth cohomology by hand. The natural construction would involve some functor, which is exact in the genus-zero cases of Arakawa. In the twisted case, the higher derived functors would hopefully capture the fermions that arise from residual gauge symmetry in the genus-zero case. Unfortunately, we do not currently have a good candidate for such a functor.

\subsection{\label{ssec:z3-twist}The \texorpdfstring{$\IZ_3$}{Z3} twist of \texorpdfstring{$\df_4$}{d4}}

Much of the preceeding discussion goes through for case of $\IZ_3$-twisted punctures in the $\df_4$ theory, but there are some new features that are worth mentioning. First, let us lay out the details of the construction.

The chiral differential operators over $G_2$ are well defined. Here we take $G_2$ to be the exponentiated form of the $\gf_2$ Lie algebra,~\ie,~$G_2$ is a simply connected, semi-simple, algebraic Lie group. The superconformal index assigned to the twisted cylinder agrees (summand by summand) with the character of $\V_{G_2,2}$, and this once more motivates our construction.

Pictorially, one imagines the $G_2$ cylinder as having a puncture twisted by $\omega$ and the other by $\omega^2$. One might, \textit{a priori}, expect that there are two possible $\gf_2$ caps, $\W_\omega$ and $\W_{\omega^2}$, depending on which puncture is closed. Yet, from the uniqueness argument of~\cite{Arakawa:2018egx}, the two putative caps are isomorphic. The outer automorphism, $\sigma$, that exchanges $\omega$ with $\omega^2$ should therefore lift to an isomorphism of vertex-algebras $\W_{\omega}\xrightarrow{\sigma}\W_{\omega^2}$.

The total monodromy around all punctures must be trivial. This can be satisfied in a number of ways, but for now we will restirct our attention to the case where punctures labelled by $\omega$ and $\omega^2$, respectively, come in pairs (one might think of them as having twist lines connecting them pairwise). We denote a genus zero surface with $m$ untwisted punctures and $n$ $\omega,\omega^2$ pairs of punctures by $\CC_{0,m,n}$. We define the mixed trinion, $\V_{1,1}$ as
\begin{equation}
    \V_{1,1} \coloneqq \W_{\df_4}\ast_u\D_{G_2}~,
\end{equation}
and the construction of the $\V_{m,n}$ proceeds analogously. The ambiguity in the two versions of the cap is present here again. Namely, is there an $S$-duality move that swaps two punctures with $\omega$ and $\omega^2$ labels? We will show that the vertex algebras $\V_{m,n}$ are indeed invariant under such a move, though there is no expectation that the underslying SCFTs will enjoy the same symmetry.

Let us first establish some rearrangement lemmas for the $\IZ_3$ case. The isomorphism of vertex algebras $\W_\omega\cong\W_{\omega^2}$ lifts to a natural transformation of functors,
\begin{equation}
    \Hi{0}(\ZZ_t,\W_\omega\otimes-) \simeq \Hi{0}(\ZZ_t,\W_{\omega^2}\otimes-)~.
\end{equation}
Though unphysical, one can consider the vertex algebra $\widetilde{\V}_{1,1}\coloneqq \W_{\df_4}\ast_u(\W_\omega\ast_t\W_\omega)$. This would, naively, correspond to the trinion with two maximal punctures twisted by $\omega$ and one untwisted puncture.
From the natural isomorphism, however, we have that
\begin{equation}
    \widetilde{\V}_{1,1}\cong \V_{1,1}~.
\end{equation}
We can, therefore, use the trinion $\widetilde{\V}_{1,1}$ as the building block for an equivalent construction of genus zero vertex algebras, which are isomorphic to the $\V_{1,1}$ construction. This construction, however, makes manifest the enhanced symmetry of the vertex algebras,~\ie,~the labelling by $\omega$ \emph{versus} $\omega^2$ is redundant.

Let us reiterate, as this strikes us as a surprising result. At the level of the associated vertex algebra, there is are additional automorphisms that swap $\omega$ and $\omega^2$ punctures which, as far as we know, don't arise from an underlying $S$-duality of the four-dimensional physics. For example, the naive symmetry of $\CC_{0,0,2}$ is $\IZ_2\times\IZ_2$, which swaps punctures with the same label. This is enhanced to $S_4$---swapping between all four punctures---as if these were all identical punctures!

In addition to the mixed trinion that we described above, one can compactify on a curve with a three pronged twist line---connecting three punctures each with $\omega$ (or $\omega^2$) monodromy. The trinion with three $\omega$ punctures does indeed correspond to a physical SCFT, and one might expect that this trinion is $\V_{G_2,3}$. Comparing the superconformal index of this trinion theory with the character of $\V_{G_2,3}$, however, exposes this as wishful thinking. Indeed, this trinion is something of a mystery, and we currently have no way of constructing it with the machinery of~\cite{Arakawa:2018egx}. For now, we shall refer to the surface with the three pronged twist line by $\CC_{0,0,3\omega}$ and its corresponding vertex algebra by $\V_{0,3\omega}$. We also introduce their $\sigma$-conjugates $\CC_{0,0,3\omega^2}$ and $\V_{0,3\omega^2}$, which correspond to the surface with a three pronged twist line between three $\omega^2$ twisted punctures and its associated vertex algebra.

Unlike all of the other trinions we have considered, we expect $\V_{0,3\omega}$ to be derived/contain fermionic operators/have a cohomological grading in which it is not concentrated in degree zero. (This is predicted by our proposed diagnostic concerning the covering space of the UV curve, and is compatible with some speculative analyses of the superconformal index.) Any construction involving Feigin--Frenkel gluing is forced to be in degree zero, since we manually truncate to the zeroth cohomology, so this expectation implies the necessity of other tools to get at this vertex algebra.

\section{\label{sec:conc_remarks}Further observations}

\subsection{\label{ssec:conc_remarks/3d}Three dimensional mirrors and the non-simply laced case}

Having identified the appropriate vertex algebras to identify with the twisted theories of type $\CC_{0,m,n}$, there remains a question of how to understand Arakawa's $\V_{G_t,s}$ for non-simply laced $G$. We wish to speculatively suggest a physical interpretation of these vertex algebras, but our suggestion will require a diersion to three dimensions.

The Higgs branches of class $\SS$ theories, which in the mathematical literature have come to be known as \emph{Moore--Tachikawa varieties} following the work of \cite{Moore:2011ee}, are at present most uniformly understood in terms of circle reduction to three-dimensions. Reducing a four-dimensional $\mathcal{N}=2$ theory on a circle results in a three dimensional $\mathcal{N}=4$ theory with the same Higgs branch, and when that three-dimensional theory has a mirror dual then the Coulomb branch of that dual theory will reproduce the Higgs branch in question. The three dimensional mirrors of the $\af_n$ and $\df_n$ series of class $\SS$ theories were derived in~\cite{Benini:2010uu} using brane web technology. In the same work, mirror theories for $\df_n$ theories with twisted punctures were also presented. The twisted $\af_{2n}$ case has also been explored recently in~\cite{Beratto:2020wmn}.

In all of these cases, the three dimensional mirrors are quiver gauge theories---they are \emph{star shaped quivers} with a central node from which tails radiate outwards, one for each puncture. Each type of puncture gives rise to a different tail. For example, the maximal untwisted punctures in type $\df_n$ give rise to tails matching the Lagrangian description for the theory $T[SO(2n)]$ seen in Figure~\ref{fig:u-tail}~\cite{Gaiotto:2009sym}. For a genus zero theory with maximal punctures, the quiver is a central $SO(2n)$ node with $T[SO(2n)]$ tails radiating off. For genus $g>0$, the quiver is the same but with the addition of $g$ many hypermultiplets valued in the adjoint of $SO(2n)$, which look like $g$ loops starting and ending on the central $SO(2n)$ node.

When there is a mixture of twisted and untwisted punctures, the central node is then replaced by $O(2n-1)$, and the tails for maximal twisted punctures correspond to the Lagrangian for the $T[SO(2n-1)]$ theory, which is displayed in Figure~\ref{fig:t-tail}. Finally, in addition to the extra adjoint matter arising from positive genus, there are an extra $2s_t+2g-2$ fundamental hypermultiplets of $O(2n-1)$, where $2s_t$ is the number of twisted punctures. Note that at genus zero, these additional fundamental hypermultiplets only appear in the presence of \emph{four or more twisted punctures}. In light of our discussions on residual gauge symmetries, these extra fundamental hypers might be seen as indicative of the residual gauge symmertry/derived structure that arises in the twisted setting. (An interesting special case is for the $\df_2$ theory, where the extra fundamentals can be directly identifid with extra adjoints of the twisted algebra $\af_1$.)

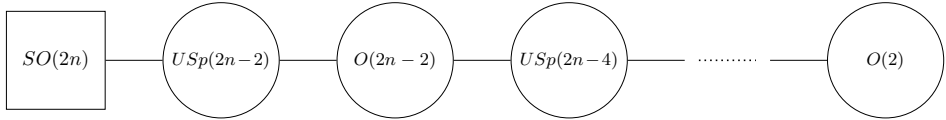
\begin{figure}[!t]
	\centering
	\scalebox{.75}{\begin{tikzpicture}[
	rnd/.style={circle, draw=black, text width={width("USp(2n-2)")+4pt},font=\small,align=center,},
	sqr/.style={rectangle, draw=black, minimum size={width("USp(2n-2)")+4pt}},]

\node[sqr] (centre) {$SO(2n)$};
\node[rnd] (t1) [right=of centre]  {$USp(2n-2)$};
\node[rnd] (t2) [right=of t1]  {$O(2n-2)$};
\node[rnd] (t3) [right=of t2]  {$USp(2n-4)$};
\node[minimum size=0] (d1) [right=of t3] {};
\node[minimum size=0] (d2) [right=of d1] {};
\node[rnd] (t4) [right=of d2]  {$O(2)$};

\draw[-] (centre.east)--(t1.west);
\draw[-] (t1.east)--(t2.west);
\draw[-] (t2.east)--(t3.west);
\draw[-] (t3.east)--(d1.west);
\draw[dotted, thick] (d1.east)--(d2.west);
\draw[-] (d2.east)--(t4.west);
\end{tikzpicture}}
	\caption{\label{fig:u-tail}The Lagrangian for the $T[SO(2n)]$ theory that is used to introduce a maximal untwisted puncture in a theory of type $\df_n$.}
\end{figure}

\begin{figure}[!t]
	\centering
	\scalebox{.75}{	\begin{tikzpicture}[
	rnd/.style={circle, draw=black, text width={width("USp(2n-2)")+4pt},font=\small,align=center,},
	sqr/.style={rectangle, draw=black, minimum size={width("USp(2n-2)")+4pt}},]

	\node[sqr] (centre) {$O(2n-1)$};
	\node[rnd] (t1) [right=of centre]  {$USp(2n-2)$};
	\node[rnd] (t2) [right=of t1]  {$O(2n-3)$};
	\node[rnd] (t3) [right=of t2]  {$O(2n-4)$};
	\node[minimum size=0] (d1) [right=of t3] {};
	\node[minimum size=0] (d2) [right=of d1] {};
	\node[rnd] (t4) [right=of d2]  {$O(1)$};

	\draw[-] (centre.east)--(t1.west);
	\draw[-] (t1.east)--(t2.west);
	\draw[-] (t2.east)--(t3.west);
	\draw[-] (t3.east)--(d1.west);
	\draw[dotted, thick] (d1.east)--(d2.west);
	\draw[-] (d2.east)--(t4.west);
	\end{tikzpicture}}
	\caption{\label{fig:t-tail}The Lagrangian for $T[SO(2n-1)]$ theory that is used to introduce a maximal $\mathbb{Z}_2$-twisted puncture in a theory of type $\df_n$.}
\end{figure}
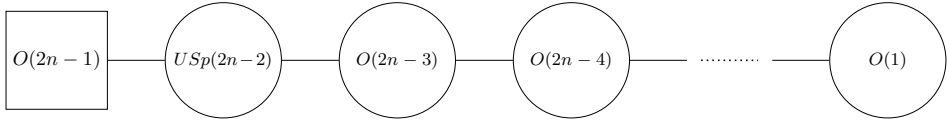

In a series of papers~\cite{Nakajima:2015txa,Braverman:2016wma,Braverman:2016pwk, Braverman:2017ofm}, (subsets of) Braverman, Finkelberg, and Nakajima (BFN) have introduced a mathematical construction of the Coulomb branches of three dimensional $\mathcal{N}=4$ gauge theories. For the $\af_n$ case, one can apply the techniques of~\cite{Braverman:2017ofm} to construct the Coulomb branches of star shaped quivers via the geometric Satake correspondence. Similarly, Ginzburg and Kazhdan~\cite{Ginzburg:2018}, in unpublished work, have provided a systematic construction for the Moore--Tachikawa varieties, which agrees with the construction of~\cite{Braverman:2017ofm}. Their construction---much like Arakawa's---is well-defined for any semisimple Lie algebra $\gf$. It stands to reason that the associated varieties of $\V_{G,s}$ should be the Coulomb branches of the star shaped quivers when $G$ is simply laced.

As we have argued, the vertex algebras $\V_{G,s}$ for non-simply laced $G$ do not correspond to SCFTs in class $\SS$. Nevertheless, they might be understood as boundary vertex algebras in the sense of \cite{Gaiotto:2016wcv,Costello:2018swh,Costello:2019} for a suitable three-dimensional $\NN=4$ theory. Arakawa has shown that the associated varieties for these algebras match the Ginzburg--Kazhdan construction for non-simply laced $\gf$, and we would suggest that these algebras $\V_{G,2s}$ should then be expected to arise as boundary VOAs in the $C$-twist for the star shaped quiver mirrors of the twisted $\CC_{0,0,s}$ theories \textit{without the extra fundamental matter}.

For example, consider the $\df_n$ theory on $\CC_{0,0,2}$. The corresponding vertex algebra has four actions of $\gf_t=\mathfrak{usp}(2n-2)$, and the three dimensional mirror of the theory is given on the left-hand side of Figure~\ref{fig:CnDn-four-punctured}. We propose to identify Arakawa's VOA $\V_{{\mathfrak c}_{n-1},4}$ as a boundary VOA for the three dimensional quiver on the right-hand side of Figure~\ref{fig:CnDn-four-punctured}.

It would be interesting to identify some indication that the $\V_{G_t,s}$ VOAs are not related to four-dimensional physics. At face value they have no serious pathologies---they are conical with negative central charge and satisfy all known unitarity bounds. These vertex algebras, therefore, may warrant some attention with an eye towards characterising precisely what vertex algebras have parent four-dimensional SCFTs.

We note that our construction also descends to a construction of the Moore--Tachikawa varieties of $\CC_{0,m,n}$---via the associated variety functor. Indeed, the associated variety functor commutes with DS reduction~\cite{Arakawa:2015ro}, in the sense that $X_{H^0_{DS}(V)} \cong H^0_{DS}(X_V))$ for any $V\in\KL$, where finite Drinfel'd--Sokolov reduction is used on the right. For the special case of $\CC_{0,m,1}$, whose vertex algebras can be constructed purely by Feigin--Frenkel gluing, this can be seen as a twisted generalisation of the Ginzburg--Kazhdan construction~\cite{Ginzburg:2018}. For the most general $\CC_{0,m,n}$, one would require a derived approach to capture the extra information present in the Hall--Littlewood chiral ring.

From a physics perspective, these associated varieties should be the Coulomb branches of the twisted star-shaped quivers \textit{including} the extra matter in the fundamental. This is of course subject to the Higgs Branch conjecture of~\cite{Beem:2017ooy}. There is scope for future work, both in ironing out the details of the dechiralised construction and recasting it in terms of the three-dimensional mirror, \textit{\`a la} Braverman--Finkelberg--Nakajima.

\begin{figure}[ht]
\begin{minipage}{.45\textwidth}
\centering
\resizebox{.9\textwidth}{!}{	\begin{tikzpicture}[
	rnd/.style={circle, draw=black, text width={width("TtO(2n-1)")+12pt},font=\small,align=center,},
	sqr/.style={rectangle, draw=black, minimum size=5mm},]

	\node[rnd] (centre) {$O(2n-1)$};
	\node[rnd] (t1) [above right=of centre] {$T_t[O(2n-1)]$};
	\node[rnd] (t2) [above left=of centre] {$T_t[O(2n-1)]$};
	\node[rnd] (t3) [below left=of centre] {$T_t[O(2n-1)]$};
	\node[rnd] (t4) [below right=of centre] {$T_t[O(2n-1)]$};


	\node[sqr] (f1) [above=of centre] {$USp(2))$};

	\draw[-] (centre.north east)--(t1.south west);
	\draw[-] (centre.north west)--(t2.south east);
	\draw[-] (centre.south west)--(t3.north east);
	\draw[-] (centre.south east)--(t4.north west);


	\draw[-] (centre.north)--(f1.south);


	\end{tikzpicture}}
\end{minipage}
\begin{minipage}{.1\textwidth}
\phantom{abc}
\end{minipage}
\begin{minipage}{.45\textwidth}
\centering
\resizebox{.9\textwidth}{!}{	\begin{tikzpicture}[
	rnd/.style={circle, draw=black, text width={width("TtO(2n-1)")+12pt},font=\small,align=center,},
	sqr/.style={rectangle, draw=black, minimum size=5mm},]

	\node[rnd] (centre) {$O(2n-1)$};
	\node[rnd] (t1) [above right=of centre] {$T_t[O(2n-1)]$};
	\node[rnd] (t2) [above left=of centre] {$T_t[O(2n-1)]$};
	\node[rnd] (t3) [below left=of centre] {$T_t[O(2n-1)]$};
	\node[rnd] (t4) [below right=of centre] {$T_t[O(2n-1)]$};


	\draw[-] (centre.north east)--(t1.south west);
	\draw[-] (centre.north west)--(t2.south east);
	\draw[-] (centre.south west)--(t3.north east);
	\draw[-] (centre.south east)--(t4.north west);




	\end{tikzpicture}}
\end{minipage}
\caption{\label{fig:CnDn-four-punctured}The three dimensional mirrors of the $D_n$ theory on $\CC_{0,0,2}$ (left) and of Arakawa's $\V_{{\mathfrak c}_{n-1},4}$ (right). We claim that the quiver variety of this mirror corresponds to the associated variety of $\V_{{\mathfrak c}_{n-1},4}$.}
\end{figure}
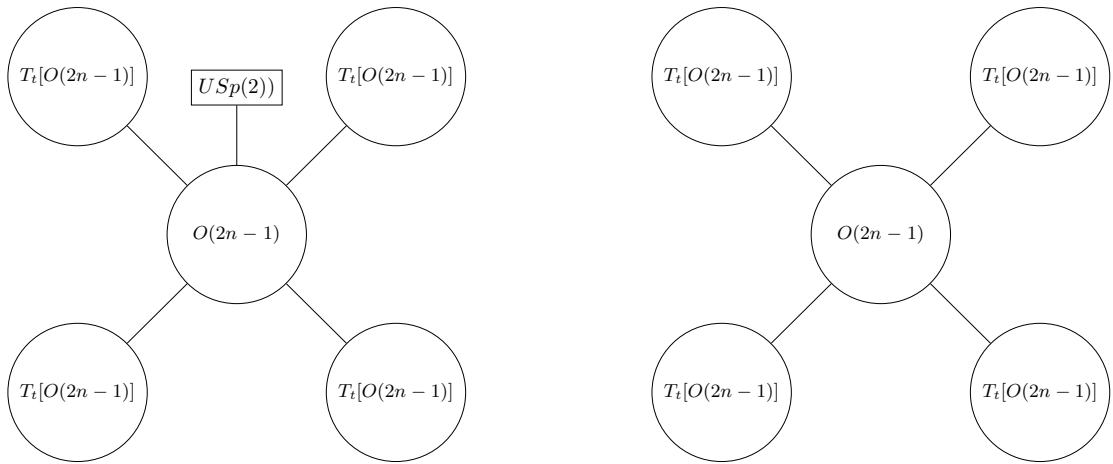

\subsection{\label{ssec:acd_gluing}Unexpected Feigin--Frenkel gluings}

Looking at Table~\ref{tab:outer_aut}, one sees that both $\af_{2n}$ and $\df_n$ share the same twisted algebra, $\mathfrak{c}_{n-1}$. Our Theorem~\ref{thm:closed_immersion} is quite flexible, and in principle allows for gluings beyond the mixed vertex algebras we have described. The vertex algebra object $\W_{\df_n}\ast_{\df_n} \W_{{\mathfrak c}_{n-1}}$ is part of our construction, and indeed corresponds to a nice physical SCFT. However, since this object is in $\KL_{{\mathfrak c}_{n-1}}$ and so in $\ZZ_{{\mathfrak c}_{n-1}}-$Mod, there is nothing stopping us from considering the gluing
\begin{equation}
	\V_{\af\ccf\df}\colonequals\W_{\af_{2n}}\ast_{\af_{2n}}(\W_{\df_n}\ast_{\df_n}\W_{{\mathfrak c}_{n-1}})~,
\end{equation}
where $\ZZ_{\af_{2n}}$ acts on the right by the surjection, $\ZZ_{\af_{2n}}\twoheadrightarrow \ZZ_{{\mathfrak c}_{n-1}}$.

This might seem like a forced construction, but it is part of a larger ``duality'' web. Consider, for example, the mixed trinions $\V_{1,1}^{\af_{2n}}$ and $\V_{1,1}^{\df_n}$. Both of these are vertex algebra objects in $\KL_{{\mathfrak c}_{n-1}}$ and so, in principle, there is nothing stopping us from considering the BRST gauging
\begin{equation}\label{eq:illegal_gauging}
	\V_{\af_{2n},1,1}\circ_{{\mathfrak c}_{n-1}}\V_{\df_n,1,1}~.
\end{equation}
This is a vertex algebra with $\KL$ actions of $V^{-\kappa_c}(\gf)$ for $\gf=\mathfrak{a}_{2n},\mathfrak{d}_{n},$ and $\mathfrak{c}_{n-1}$. This does not, obviously, correspond to any of our---or Arakawa's---constructions. Thinking pictorially, this vertex algebra should have another decomposition, given by the gauging $\V_{\af\ccf\df}\circ_{{\mathfrak c}_{n-1}} \V_{{\mathfrak c}_{n-1},3}$, where $\V_{{\mathfrak c}_{n-1},3}$ is the genus zero trinion of type ${\mathfrak c}_{n-1}$---constructed as in~\cite{Arakawa:2018egx}.

From a physical perspective, the gauging~\eqref{eq:illegal_gauging} is inconsistent; the theory associated to $\CC_{0,1,1}$ for type $\af_{2n}$ has a $\IZ_2$ Witten anomaly, while the $\df_n$ theory does not. The diagonal action of $\mathfrak{c}_{n-1}$ on the product is anomalous and cannot be gauged. Yet, at the level of vertex algebras, there seem to be no immediate obstructions to this.

Since we have opened the doors, one might wonder whether there are other such unexpected/unphysical constructions, but we don't believe this to be the case. Indeed, let $\gf_1$ and $\gf_2$ be Lie algebras such that there exists some surjection $\zf(\gf_1)\twoheadrightarrow \zf(\gf_2)$. Unlike the case where $\gf_2$ is the twisted algebra, there is no guarantee that this surjection survives to the quotients $\zf(\gf_1)_\lambda$. While the operation $\W_{\gf_1}\ast_{\gf_1}\W_{\gf_2}$ is well-defined, we expect that it will not share many of the desirable properties of the mixed vertex algebras. As a graded vector space, the zeroth cohomology decomposes into a sum $\bigoplus_{\lambda\in P^+_2} \IV_{\lambda}^1\otimes_{\zf(\gf_1)_\lambda}\IV_{\lambda}^2$. Here, we seek an embedding $P^+_{2}\hookrightarrow P_1^+$ such that the image of $\lambda\in P^+_1$ is the highest weight of a finite dimensional, irreducible $\gf_2$ representation on which the $P_{i}$ act as zero if $P_{i,(0)}$ is in the kernel of $\zf(\gf_1)\twoheadrightarrow \zf(\gf_2)$. When this embedding does not arise from an outer-automorphism, the set of valid $\lambda$ is small and possibly contains only the trivial representation.

Indeed, the constuction of $\af\ccf\df$ vertex algebras represents the only real flexibility offered by Theorem~\ref{thm:closed_immersion}. Repeating our arguments for the characters of $\V_{m,n}$, we can derive the character of the vertex algebra $\V_{\af\ccf\df}$. This character is given by
\begin{equation}
	\ch\V_{\af\ccf\df}	= \sum_{\lambda\in P_{{\mathfrak c}_{n-1}}^+}\frac{\KK_\af(\mathbf{a})\chi^\lambda_A(\mathbf{a})\KK_\ccf(\times)\chi^\lambda_\ccf(\times)\KK_\ccf(\mathbf{c})\chi^\lambda_\ccf(\mathbf{c})\KK_\df(\mathbf{d})\chi^\lambda_\df(\mathbf{d})}{\KK_\af(\times)\chi^\lambda_\af(\times)\KK_\df(\times)\chi^\lambda_\df(\times)}~,
\end{equation}
where we use the subscripts $\af,{\mathfrak c},\df$ to distinguish between the $\KK$-factors and Schur polynomials of each type and $\mathbf{a}$, $\mathbf{c}$, and $\mathbf{d}$ are fugacities of $\af_n$, ${\mathfrak c}_n$, and $\df_n$ respectively.

These triply mixed vertex algebras form another class of examples that seem to have no corresponding four-dimensional SCFT. Nevertheless, one can imagine associating (mirror) quivers, in the obvious way, to these vertex algebras in three dimensions. So it is conceivable that these vertex algebras arise from the boundary construction of~\cite{Gaiotto:2016wcv,Costello:2018swh,Costello:2019} as well.

\subsection{\label{ssec:fixtures/new-perspective}A six dimensional perspective on the index}

There is a particularly nice interpretation of the twisted class $\SS$ characters in light of our undersatnding of Feigin--Frenkel gluing and the six-dimensional origins of these SCFTs. The character for an untwisted three punctured sphere, $\V_{G,3}$ can be rewritten in terms of characters of Weyl modules as
\begin{equation}
	\ch \, \V_{G,3} = \sum_{\lambda \in P_G^+} \frac{ \ch \, \IV_\lambda \ch \, \IV_\lambda\ch \, \IV_\lambda}{\ch \, \zf_\lambda}~.
\end{equation}
Each Weyl module in the numerator arises from the decomposition of $\W$ into $\KL^{ord}$ objects. We then identify each pair of caps through their Feigin--Frenkel centres, and in doing so we quotient by a copy of $\zf_\lambda$ for each gluing. However, we have not accounted for the generators of the cap that are not present in the Weyl module. These additional generators can be heuristically understood in terms of descendants of local operators in six dimensions.

We aim to interpret Schur operators in four dimensions as being realised directly as descendants of Schur-type operators in six dimensions, possibly in the presence of wrapped defects.\footnote{Such an interpretation has been explored previously in \cite{gaioul} and \cite{BRZ}. The organisation of local operator indices (in the presence of intersecting defects) in six dimensions in terms of affine Kac--Moody and principal $W$-algebra characters has also appeared previously in \cite{Bullimore:2014upa}.} The Schur-type operators in six-dimensions themselves can be given the structure of a vertex algebra---specifically the vertex algebra $\WW(\gf,f_{prin})$~\cite{Beem:2014kka}---though in the present arrangement they should be thought of as realising the commutative limit of said $\WW$-algebra. When reducing to four dimensions, local operators in six dimensions can be directly reduced (integrated against the holomorphic volume form on the UV curve), or they can be reduced after performing supersymmetric descent and integrating against closed one- or two-forms on the UV curve, with one-forms getting paired with fermionic descendants. Consequently, as a graded vector space, the six-dimensional $\WW(\gf, f_{prin})$ operators give rise to four-dimensional operators with character
\begin{equation}
	\ch\WW(\gf_u,f_{prin})^{\chi(\CC)} = (\ch\, \zf_0)^{2-2g}~,
\end{equation}
and for $g\neq0$ this includes fermionic operators that reflect the derived structure/residual gauge symmetry of these theories. There are additional contributions from wrapped codimension two defects, which in six dimensions support the vertex algebra $V^{\kappa_c}(\gf_u)$~\cite{Beem:2014kka}. These contribute factors of $\ch\, \IV_0$ to the character. However, the Feigin--Frenkel centres of the subalgebras are identified through the bulk of the UV curve, so the contribution is $\ch\, \IV_0/\ch\,\zf_0$. For a general surface $\CC_{g,s}$, this gives
\begin{equation}
	 \left(\frac{\ch \IV_0}{\ch \zf_0}\right)^{s}\cdot (\ch \zf_0)^{2-2g}~.
\end{equation}
The four-dimensional descendants of local operators in six dimensions (in the presence of codimension-two defects), therefore account for the $\lambda=0$ summand of the index. The higher terms in the TQFT expansion of the index arise from codimension four defects, which are labelled by dominant integral weights $\lambda\in P^+$. Upon compactification, the codimension four defects can wrap around the surface, $\CC_{g,s}$ and intersect with the codimension two defects, leading to a $\hat{\gf}_{\kappa_c}$ module structure. Indeed, these modules are classified by modules of highest weight $\lambda$. The fact that the Feigin--Frenkel centre is shared can be, in a loose sense, seen as a consequence of the fact that the modules all arise from the same wrapped defect. Similarly, the local operators in six-dimensions can also be inserted on the wrapped defects, leading to modules---of highest weight $\lambda$---of the $\WW$-algebra.

After descent, the image of these operators account for the $\zf_\lambda$ factors in the summand, and for a general surface $\CC_{g,s}$, we have
\begin{equation}
	\ch\, \V(\CC_{g,s}) = \sum_{\lambda \in P_G^+} \left(\frac{\ch \IV_\lambda} {\ch \zf_\lambda}\right)^{s} \cdot (\ch \zf_\lambda)^{2-2g}~.
\end{equation}

So far, this has amounted to little more than a refactoring and reinterpretation of the summands in the untwisted index. To reap some intuitive benefit from this perspective, we will consider the twisted vertex algebras. Let us start with the mixed trinion $\V_{1,1}$, whose character is
\begin{equation}
	\ch\, \V_{1,1} = \sum_{\lambda \in P_t^+} \ch\, \IV^u_\lambda \ch\,\IV^t_\lambda \ch\, \IV^t_\lambda \cdot \left( \frac{1}{\ch\, \zf^u_\lambda \ch\, \zf^t_\lambda} \right) \cdot \ch\,\zf^t_\lambda~.
\end{equation}
The Weyl module characters in the numerator arise from the intersection of the codimension four and two defects, as in the untwisted case. Since there are twisted punctures present, only the invariant generators of the six-dimensional $\WW$-algebra can survive the compactification, so we have extra characters of $\WW(\gf_t,f_{prin})$-modules appearing in the numerators. The characters of the vertex algebras of the form $\V_{m,1}$ can all be interpreted in the same way.

Now let us consider introducing more twisted punctures. Consider $\V_{0,2}$, whose character we can suggestively organise as follows,
\begin{equation}
	\ch \V_{0,2} = \sum_{\lambda \in P_t^+} \ch \,\IV_\lambda^t \ch \,\IV_\lambda^t\ch \,\IV_\lambda^t\ch \,\IV_\lambda^t \cdot \left( \frac{1}{\ch\, \zf_\lambda^t} \right)^3 \cdot \ch\, \zf_\lambda^t \cdot \left( \frac{ \ch\,\zf^t_\lambda}{\ch\,\zf^u_\lambda} \right)^2~.
\end{equation}
Ignoring the final factor of $\ch\, \zf^t_\lambda/\ch\,\zf_\lambda^u$, we see a match to the character of $\V_{G_t,4}$,~\ie\ the genus zero vertex algebra of Arakawa. Unlike the $\CC_{m,1}$ case, from a six-dimensional perspective, there should now be one-forms with $\mathbb{Z}_2$ monodromy around the twisted punctures. Not all generators of $\WW(\gf_u,f_{prin})$ can be paired with these one-forms. Rather, it is precisely those in the kernel of $\zf_{\lambda}^u\twoheadrightarrow \zf_{\lambda}^t$. Like the positive genus case, these pairings with one-forms give rise to fermionic Schur operators in four dimensions. The kernel of $\zf^u_{\lambda}\twoheadrightarrow \zf^t_{\lambda}$ has character given by $\ch\,\zf_\lambda^u/\ch\,\zf_\lambda^t$ but since these are now descending as fermionic operators in four dimensions, they contribute to the index with minus signs, and hence show up in the denominator. The number of one-forms available for pairing can be determined via a Riemann--Hurwitz type computation. In the example above, the factor of $\ch\,\zf_\lambda^t/\ch\,\zf_\lambda^u$ should be interpreted as the contribution from these non-invariant pairings; the power of two comes from the number of (holomorphic and antiholomorphic) one-forms with monodromy.

This six dimensional perspective has been present as an undercurrent throughout this work, and certainly warrants further investigation in the future.


\section*{Acknowledgments}
The authors are grateful to
Tomoyuki Arakawa,
David Ben-Zvi,
Diego Berdeja-Su\'arez,
Dylan Butson,
Wolfger Peelaers,
Shlomo Razamat,
Leonardo Rastelli,
Pavel Safronov,
and Gabi Zafrir
for helpful discussions and comments on this and related work. Christopher Beem's work is supported in part by grant \#494786 from the Simons Foundation, by ERC Consolidator Grant \#864828 ``Algebraic Foundations of Supersymmetric Quantum Field Theory'' (SCFTAlg), and by the STFC consolidated grant ST/T000864/1. Sujay Nair is supported in part by EPSRC studentship \#2272671.
\appendix

\section{\label{app:spectral_sequences}Spectral sequences}

As in \cite{Arakawa:2018egx}, the rearrangement lemmas of Section~\ref{ssec:duality} rely primarily upon the machinery of spectral sequences. The most general treatment of spectral sequences is beyond our needs, but for the unfamiliar reader we review some key statements that underlie our results.

We will be exclusively concerned with spectral sequences associated to a bicomplex. Let $\AA$ be an abelian category. Suppose
\begin{equation}
	C^{\bullet\bullet} = \bigoplus_{p,q}C^{p,q}~,
\end{equation}
is a bicomplex, \ie,~$C^{p,q}$ are objects of $\AA$ with differentials
\begin{equation}
	\begin{split}
	d_h: C^{p,q}\rightarrow C^{p+1,q}~,\\
	d_v: C^{p,q}\rightarrow C^{p,q+1}~,
	\end{split}
\end{equation}
satisfying
\begin{equation}
	d_h\circ d_h =d_v\circ d_v = d_h\circ d_v + d_v\circ d_h =0~.
\end{equation}
There are two notions of total complex of a bicomplex that are common.
\begin{equation}
	\begin{split}
	\mathrm{Tot}_\oplus(C)^p \coloneqq \bigoplus_{m+n=p}C^{m,n}~,\\
	\mathrm{Tot}_\otimes(C)^p \coloneqq \bigotimes_{m+n=p}C^{m,n}~,
	\end{split}
\end{equation}
and for our purposes we will only require the former, and hereafter we omit the $\oplus$ subscript. The total complex is equipped with a differential
\begin{equation}
	d_{tot} = d_h +(-1)^qd_v~,
\end{equation}
making it an honest cochain complex. In this setting, spectral sequences are a tool to ``approximate'' the total cohomology,
\begin{equation}
	H_{tot}^\bullet = H^\bullet(C_{tot},d_{tot})~,
\end{equation}
in terms of the cohomology of $C^{\bullet\bullet}$ with respect to $d_h$ and $d_v$ in an iterative process which converges on to the total cohomology $H_{tot}$.

We follow~\cite{Weibel:1994} for our definition of a (cohomology) spectral sequence. A spectral sequence is a family of objects $\{E_{r}^{p,q}\}$ with differentials
\begin{equation}
	d_r^{pq}:E_{r}^{p,q}\rightarrow E_{r}^{p+r,q-r+1}~,
\end{equation}
satsifying $d^{p+r,q-r+1}_r\circ d^{p,q}_r=0$ and isomorphisms
\begin{equation}
	E^{p,q}_{r+1} =\frac{ \mathrm{ker}(d_{r}^{p,q})}{\mathrm{im}(d_r^{p-r,q+r-1})}.
\end{equation}
For certain spectral sequences there exists some finite $r_0$ such that
\begin{equation}
	E_{r}^{p,q}\cong E_{r_0}^{p,q}~,\qquad r > r_0~.
\end{equation}
This ``stable page'' is usually written as $E_\infty^{p,q}$. There are several technical definitions of what it means for a spectral sequence to converge. For our purposes a spectral sequence, $E_r^{p,q}$, converges to a cohomology $H^\bullet$ if there exists a finite, increasing filtration
\begin{equation}
	0 = F^sH^n \subset \dots \subset F^{p}H^n\subset\dots\subset F^tH^n=H^n~,
\end{equation}
where $s$ and $t$ are finite and
\begin{equation}
	E_\infty^{pq} \cong F^pH^{p+q}/F^{p-1}H^{p+q}~.
\end{equation}
One can then extract the filtration from the stable page and thus compute the total cohomology. There is a special case, where one can simply read off the cohomology without worrying about a filtration, known as \textit{collapse}. A spectral sequence collapses at some page $r>1$, if on that page there is only one nonzero row or column. In practice, we rely on spectral sequences that collapse at the second page.

The total complex $C_{tot}$ comes equipped with a natural filtration
\begin{equation}
	 F_nC_{tot}^p = \mathrm{Tot}_\oplus(\tau_n C)^p~,
\end{equation}
where
\begin{equation}
	(\tau_n C)^{p,q} =
	\begin{cases}
	\makebox[0.8cm][l]{$C^{p,q}$}\text{ if }p\le n\\
	\makebox[0.8cm][l]{$0$} \text{ if }p>n
	\end{cases}~.
\end{equation}
This filtration is natural in the sense that the subcomplex $\tau_n C$ is a truncation of $C$ at the $n$th column. It is a result in homological algebra that such a filtration gives rise to a spectral sequence that, under certain technical conditions, converges on the total cohomology. The zeroth page of the spectral sequence is simply the bicomplex
\begin{equation}
	_I E_0^{p,q} = C^{p,q}~.
\end{equation}
The differential of the first page must act vertically and is simply $d_v$. The first page is therefore
\begin{equation}
	_I E_1^{p,q} = H^q(C^{p,\ast},d_v)~,
\end{equation}
with differential $d_2 = d_h$. The second page is therefore
\begin{equation}
	_I E_2^{p,q} = H^p(H^q(C,d_v),d_h)~.
\end{equation}
The differential $d_3$ is more complicated, in practice it is best if the spectral sequence collapses by the second page, in which case
\begin{equation}
	H_{tot}^p = E^{p,n}_2~,
\end{equation}
assuming the $n$th row is the only nonzero row.

There is another spectral sequence converging to $H_{tot}$---associated to the filtration by rows. The second page of this spectral sequence is given by
\begin{equation}
	_{II} E_2^{p,q} = H^p(H^q(C,d_h),d_v)~.
\end{equation}
We refer to the earlier spectral sequence as $_I E^{p,q}_r$.

For our rearrangement lemmas, we are frequently interested in when cohomology operations commute with each other, \ie,
\begin{equation}
	H^p\tilde{H}^q\stackrel{?}{\cong}\tilde{H}^qH^p~.
\end{equation}
If the two cohomologies conspire to form a bicomplex, then this is equivalent to asking whether
\begin{equation}
	_IE_2^{p,q}\stackrel{?}{\cong}\,_{II}E_2^{p,q}~.
\end{equation}
We see that a sufficient condition for this equivalence is if the two sequences both collapse on the second page, since the two sides are then both isomorphic to the total cohomology. If we are only interested in whether the zeroth cohomologies commute then we can replace the above by a different and often weaker condition. As long as the spectral sequence is first quadrant---$E_r^{p,q}=0$ for $p,q<0$---the $E^{0,0}_r$ stabilizes by the second page. Thus, if $H^p$ and $\tilde{H}^p$ vanish for $p<0$ then
\begin{equation}
	H^0\tilde{H}^0\cong\tilde{H}^0H^0~.
\end{equation}


\section{\label{sec:proof_closed_immersion}Miura opers and a proof of Theorem~\ref{thm:closed_immersion}}

The proof of Theorem~\ref{thm:closed_immersion} relies on the machinery of Miura opers and Cartan connections. We give a brief introduction to both of these before giving a proof of the theorem.

\subsection{Miura opers and the no-monodromy condition}

Let $G$ be a reductive algebraic group and $B$ a Borel subgroup of $G$. A Miura $G$-oper on an algebraic curve $X$ is the tuple $(\mathscr{F},\nabla,\mathscr{F}_B,\mathscr{F}_B')$, where $(\mathscr{F},\nabla,\mathscr{F}_B)$ is a $G$-oper and $\mathscr{F}_B'$ is another $B$-reduction of $\mathscr{F}$ that is preserved by $\nabla$. We denote the space of Miura opers over $X$ by $\mathrm{MOp}_G(X)$ and there is a natural forgetful morphism
\begin{equation}\label{eq:miura_forget}
	\MOp_G(X) \rightarrow \mathrm{Op}_G(X)~.
\end{equation}
A Miura oper is called \textit{generic} if the two bundles $\mathscr{F}_B$ and $\mathscr{F}_B'$ are in generic relative position to each other. The space of such generic Miura opers over $X$ is denoted by $\MOp_{G}(X)_{gen}$. Once again we restrict to the cases where $X$ is the formal disc, $D$, or the formal punctured disc, $D^\times$.

Now let $N$ be the unipotent subgroup of $B$ with Lie algebra $\mathfrak{n}$ and $H$ be a choice of the maximal torus so $B=H\ltimes N$. We can define a principal $H$-bundle $\mathscr{F}_H$ on $X$ according to
\begin{equation}
	\mathscr{F}_H = \mathscr{F}_B \underset{B}\times H = \mathscr{F}_B/N~.
\end{equation}
It can be shown (see, \eg, Lemma 4.2.1 of~\cite{Frenkel:2007}) that $\mathscr{F}_H$ is isomorphic to the $H$-bundle $\Omega^{\rho^\vee}$---characterised in terms of its transition functions, which for a change of coordinates $t=\phi(s)$, are given by
\begin{equation}
	\phi'(s)\rho^\vee~.
\end{equation}
Here, $\rho^\vee$ is half the sum of positive coroots that we identify as an element of the Cartan subalgebra of $\gf$ using the identification $(\mathfrak{h}^\vee)^* \cong \mathfrak{h}$. The space of connections of $\mathrm{Conn}(\Omega^{\rho^\vee})_D$ on the bundle $\Omega^{\rho^\vee}$ over $D$ consists of connections of the form
\begin{equation}
	\nabla = \partial_t + \mathbf{u}(t)\,,\qquad \mathbf{u}(t)\in\mathfrak{h}[\![t]\!]~,
\end{equation}
for some choice of coordinate $t$ on the disc. Under a coordinate change $t=\phi(s)$, the connection transforms as
\begin{equation}\label{eq:cartan_coord_transform}
	\tilde{\mathbf{u}} = \phi(s)\mathbf{u}(\phi(s)) - \frac{\phi''(s)}{\phi'(s)}\rho^\vee~.
\end{equation}
For the $H$-bundle $\Omega^{\rho^\vee}$ on $D^\times$, the space of connections $\mathrm{Conn}(\Omega^{\rho^\vee})_{D^\times}$ is similarly populated but with $\mathrm{u}\in \IC(\!(t)\!)$.

We introduce this Cartan connection to exploit the isomorphism (see \cite{Frenkel:2007}, Proposition $8.2.2$),
\begin{equation}
	\mathrm{Conn}(\Omega^{\rho^\vee})_{D}\xrightarrow{\sim} \MOp_{G}(D)_{gen}~.
\end{equation}
Composing with the forgetful morphism \eqref{eq:miura_forget} gives the Miura transform
\begin{equation}
	\mu: \mathrm{Conn}(\Omega^{\rho^\vee})_D \rightarrow \Op_{G}(D)~.
\end{equation}
Every Miura oper on $D^\times$ is generic~\cite{Frenkel:2007}, and so we have a Miura transform
\begin{equation}
	\mu: \mathrm{Conn}(\Omega^{\rho^\vee})_{D^\times}\xrightarrow{\sim} \MOp_G(D^\times) \rightarrow \Op_{G}(D^\times)~.
\end{equation}

As we are ultimately interested in the Feigin-Frenkel centre for $\mathfrak{g}$, we work over the Langlands dual group, $^L G$, with Borel subgroup $^L B$ and Cartan subgroup $^L H$. The $^L H$-bundle of interest is $\Omega^\rho$, and the Miura transform is the morphism
\begin{equation}
	\mu : \mathrm{Conn}(\Omega^{\rho})_{D^\times}\rightarrow \Op_{^L G} (D^\times) \cong\mathrm{Spec}~ \ZZ~.
\end{equation}

Now let us specialise to the subspace $\mathrm{Conn}(\Omega^{\rho})^{\lambda}_{D}\subset\mathrm{Conn}(\Omega^{\rho})_{D^\times}$, consisting of connections of the form
\begin{equation}
	\nabla = \partial_t + \frac{\lambda}{t} + \mathbf{u}(t)\,, \qquad \mathbf{u}(t) \in \mathfrak{h}^\vee[[t]]~,
\end{equation}
where we think of $\lambda$ as an element in the Cartan subalgebra, $\mathfrak{h}^\vee$, via the identification $\mathfrak{h}^* \cong \mathfrak{h}^\vee$. The ring of functions over such connections is
\begin{equation}
	\mathrm{Fun\, Conn}(\Omega^\rho)_{D^\times}^\lambda \cong \IC[u_{i,n}|\,i = 1,\dots,\mathrm{rk}\, \gf^\vee; n<0]~,
\end{equation}
where $u_i$ are the coordinates with respect to some choice of basis, $(h_i)_{i=1}^{\mathrm{rk}\,\gf}$, of the Cartan subalgebra.

The image of this subspace under the Miura transform is exactly the subscheme $\Op_G^{\lambda}\cong \mathrm{Spec}\,\zf_\lambda$ (\cite{Frenkel:2007}, Lemma $9.6.2$). Furthermore, at the level of functions,
\begin{equation}
	\zf_\lambda \cong \Op_G^\lambda \cong (\mathrm{Fun\, Conn}(\Omega^\rho)_{D^\times}^\lambda )^{\mathfrak{n}^\vee}~.
\end{equation}
The action of $\mathfrak{n}^\vee$ on $\mathrm{Fun\, Conn}(\Omega^\rho)_{D^\times}^\lambda$ is realised by the screening charges,
\begin{equation}
	V_i[\lambda_i +1] = \sum_{j=1}^{\mathrm{rk}\,\gf}a_{ji} \sum _{n\ge\lambda_i} x_{i,n-\lambda_i} \frac{\partial}{\partial u_{j,-n-1}}~,
\end{equation}
where the polynomials $x_{i,n}$ are determined via
\begin{equation}
	\sum_{n \le 0} x_{i,n}t^{-n} = \mathrm{exp}\, \left( \sum_{m>0} \frac{u_{i,-m}}{m} t^m \right)~.
\end{equation}
In this sense, $\mathrm{Fun\, Conn}(\Omega^\rho)_{D^\times}^\lambda$, provides a kind of free-field realisation of the opers satisfying the no-monodromy condition.

\subsection{Proof of Theorem~\ref{thm:closed_immersion}}

Let us now apply the above machinery to our problem. We have the following morphisms
\begin{equation}
	\begin{split}
	\zf_{u,\lambda} &\hookrightarrow \mathrm{Fun\, Conn}(\Omega^\rho)_{^LG_u,D^\times}^\lambda~,\\
	\zf_{t,\lambda} &\hookrightarrow \mathrm{Fun\, Conn}(\Omega^\rho)_{^LG_t,D^\times}^\lambda~.
	\end{split}
\end{equation}
There is a natural action of $\mathrm{Out}(\gf_u)$ on $\mathrm{Fun\, Conn}(\Omega^\rho)_{^LG_u,D^\times}^\lambda$. We also have a canonical projection $\mathrm{Fun\, Conn}(\Omega^\rho)_{^LG_u,D^\times}^\lambda\twoheadrightarrow \mathrm{Fun\, Conn}(\Omega^\rho)_{^LG_t,D^\times}^\lambda$, via the isomorphism
\begin{equation}
	\mathrm{Fun\, Conn}(\Omega^\rho)_{^LG_t,D^\times}^\lambda \cong (\mathrm{Fun\, Conn}(\Omega^\rho)_{^LG_u,D^\times}^\lambda)_\sigma~,
\end{equation}
where $(-)_\sigma$, denotes the functor of coinvariants with respect to the subgroup $\langle\sigma\rangle\subset\mathrm{Out}(\gf)$ generated by the twisting automorphism. Concretely, we have the isomorphism
\begin{equation}
	\mathrm{Fun\, Conn}(\Omega^\rho)_{^LG_t,D^\times}^\lambda \cong \IC\left[\tilde{u}_{i,n} | i=1,\dots,\mathrm{rk}\, \gf_t;\,n\in\IZ\right]~,
\end{equation}
where
\begin{equation}
	\tilde{u}_{i,n} = \frac{1}{|\langle\sigma\rangle|}\sum_{\sigma'\in\langle\sigma\rangle}u_{\sigma'{i},n}~.
\end{equation}
These are precisely the linear combinations of the generators of $\mathrm{Fun\, Conn}(\Omega^\rho)_{^LG_u,D^\times}^\lambda$ that are invariant under $\langle\sigma\rangle$.

Assembling, we have the diagram,
\begin{equation}
	\begin{tikzcd}
	\mathrm{Fun\, Op}_{G_u}^\lambda \arrow[hookrightarrow]{r}{\iota} \arrow[dashed,"\pi"]{d}& \mathrm{Fun\, Conn}(\Omega^{-\rho})^\lambda_D\arrow[twoheadrightarrow]{d}{\pi_\sigma} \\
	\mathrm{Fun\, Op}_{^LG_t}^\lambda \arrow[hookrightarrow]{r}{\iota_\sigma} & (\mathrm{Fun\, Conn}(\Omega^{-\rho})^\lambda_D)_\sigma
	\end{tikzcd}
\end{equation}
where the horizontal morphisms are the natural inclusions of $\mathfrak{n}^\vee$-invariants and the vertical arrow is the natural projection to the coinvariants. We can define a map $\pi$ that, we claim, makes the diagram commute via $\pi = \pi_\sigma\circ\iota$. The map $\pi$ is then nothing more than the projection of $\mathrm{Fun \, Op}^\lambda_{G_u}$ to its $\langle \sigma \rangle $ coinvariants.

Our Theorem~\ref{thm:closed_immersion}, is therefore equivalent to the claim that $\pi$ is surjective. Indeed, surjectivity of $\pi$ is precisely the statement that $\zf_\lambda^u \twoheadrightarrow \zf_\lambda^t$ is a surjection.
\begin{thm}
The map $\pi:\mathrm{Fun\, Op}_{G_u}^\lambda \rightarrow \mathrm{Fun\, Op}_{^LG_t}^\lambda$, defined as above, is surjective.
\end{thm}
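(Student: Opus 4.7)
The theorem reduces to checking that the projection $\zf_u\twoheadrightarrow\zf_t$ carries the annihilator ideal $I^u_\lambda=\mathrm{Ann}_{\zf_u}(\IV^u_\lambda)$ into $I^t_\lambda$; the existence of the induced surjection $\zf^u_\lambda\twoheadrightarrow\zf^t_\lambda$ then follows, and its surjectivity is automatic from the surjectivity of $\zf_u\twoheadrightarrow\zf_t$ and $\zf_t\twoheadrightarrow\zf^t_\lambda$. Geometrically, the claim is that the closed immersion $\Op_{^L G_t}(D^\times)\hookrightarrow\Op_{^L G_u}(D^\times)$ restricts to a closed immersion $\Op^\lambda_{^L G_t}\hookrightarrow\Op^\lambda_{^L G_u}$ of their no-monodromy subschemes.

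My plan is to establish this geometric statement by a direct Miura-oper argument that hinges on two preliminary identifications. First, because $\zf_t$ is by construction the $\langle\sigma\rangle$-coinvariants of $\zf_u$ and $\langle\sigma\rangle$ is finite cyclic acting semisimply, the closed immersion $\Op_{^L G_t}(D^\times)\hookrightarrow\Op_{^L G_u}(D^\times)$ coincides with the $\sigma$-fixed locus inside $\Op_{^L G_u}(D^\times)$. Second, since $\lambda\in\iota(P^+_t)\subset P^+_u$ is $\sigma$-invariant, the $\sigma$-action on the Cartan subalgebra induces an inclusion
\begin{equation*}
    \mathrm{Conn}(\Omega^\rho)^\lambda_{^L G_t, D}\hookrightarrow \mathrm{Conn}(\Omega^\rho)^\lambda_{^L G_u, D},
\end{equation*}
whose image is precisely the $\sigma$-invariant Cartan connections with residue $\lambda$. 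The key technical input is the $\sigma$-equivariance of the Miura transform $\mu_u$: the principal $\slf_2$ embedding $\Lambda_p$ is canonically $\sigma$-equivariant, as then are $p_{-1}$ and the subspace $V^{can}$ spanned by the $p_i$'s, so the gauge transformation used to put a generic Miura oper in canonical form intertwines with $\sigma$. Consequently, $\mu_u$ sends $\sigma$-invariant Cartan connections to $\sigma$-invariant opers, and its restriction to such inputs agrees with $\mu_t$ under the identifications above.

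Combining these ingredients, any point of $\Op^\lambda_{^L G_t}$ is of the form $\mu_t(\mathbf{u})$ for some $\mathbf{u}$, and viewing $\mathbf{u}$ as a $\sigma$-invariant element of the $^L G_u$-Cartan connections produces $\mu_u(\mathbf{u})\in\Op^\lambda_{^L G_u}$, which is $\sigma$-invariant and maps to $\mu_t(\mathbf{u})$ under the closed immersion of oper schemes. This yields the restricted closed immersion $\Op^\lambda_{^L G_t}\hookrightarrow\Op^\lambda_{^L G_u}$ that we need. The main obstacle I anticipate is the careful verification of the $\sigma$-equivariance of the canonical-form gauge transformation and of the identification of the $\sigma$-fixed locus in $\Op_{^L G_u}(D^\times)$ with $\Op_{^L G_t}(D^\times)$; both are ultimately bookkeeping on the structure of the principal $\slf_2$ and the graded $\sigma$-decomposition of $V^{can}$, but the $\IZ_3$-fold of $\df_4$ to $\gf_2$ needs extra care, since there $\sigma$ has order three and the relevant coinvariants come from longer orbits than in the $\IZ_2$ cases.
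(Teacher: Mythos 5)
Your proposal is correct in strategy but takes a genuinely different route from the paper. The paper works entirely at the level of function rings via the free-field (screening-charge) realisation: it takes a $\sigma$-coinvariant polynomial $P$ annihilated by the symmetrised screening charges $\tilde V_i[\lambda_i+1]$, proves the covariance $\sigma(V_i[\lambda_i+1]P)=V_{\sigma(i)}[\lambda_{\sigma(i)}+1]P$, deduces that $V_i[\lambda_i+1]P$ would have to be a $(-1)$-eigenvector of $\sigma$ (with a separate three-term analogue for the $\IZ_3$ fold of $\df_4$), and then rules this out by inspecting the explicit formula for the screening charges; the lift of an $\mathfrak{n}_t$-invariant coinvariant is thus exhibited directly, namely $P$ itself. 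Your argument instead runs through the geometry: surjectivity of the Miura transform $\mu$ from $\mathrm{Conn}(\Omega^{\rho})^\lambda$ onto $\Op^{\lambda}$, plus $\sigma$-equivariance of $\mu$, to get the containment $\Op^\lambda_{{}^LG_t}\subseteq\Op^\lambda_{{}^LG_u}$ inside $\Op_{{}^LG_u}(D^\times)$ and hence the surjection on function rings. This is more conceptual and arguably explains \emph{why} the theorem is true, but be aware that essentially all of the content has migrated into the steps you defer as bookkeeping: (a) the $\sigma$-equivariance of the passage to canonical form, which needs a $\sigma$-invariant choice of principal $\slf_2$-triple and the identification $V^{can}(\gf_u)^{\sigma}=V^{can}(\gf_u^{\sigma})$ (this is exactly the structural fact that the invariant exponents of $\gf_u$ are the exponents of the folded algebra, visible in Table 2); (b) the compatibility of residues, i.e.\ that $\lambda\in\mathfrak{h}_t^\vee=\mathfrak{h}_u^{\sigma}$ maps to $\iota(\lambda)$ under the inclusion into $\mathfrak{h}_u$, together with the matching of the $\rho$-twists of the two Cartan bundles; and (c) the upgrade from a pointwise containment of no-monodromy loci to a containment of closed subschemes (equivalently an inclusion of ideals in $\ZZ_u$), which is what actually yields surjectivity of the ring map --- this is cleanest if you phrase the whole argument scheme-theoretically, using that $\Op^\lambda$ is the scheme-theoretic image of $\mathrm{Conn}(\Omega^{\rho})^\lambda$ under $\mu$, rather than arguing point by point. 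Your anticipation that the $\IZ_3$ case needs separate care is well placed: the paper also has to treat it by a separate computation, since the ``$(-1)$-eigenvector'' dichotomy is special to involutions.
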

\begin{proof}
The action of $\mathfrak{n}_u^\sigma$ on the space of coinvariants can be realised through the symmetrised screening charges
\begin{equation}
	\tilde{V}_i[\lambda_i +1] = \frac{1}{|\langle \sigma \rangle|} \sum_{\sigma'\in\langle \sigma \rangle} V_{\sigma(i)}[\lambda_{\sigma(i)}+1]~.
\end{equation}
A polynomial $P\in\mathrm{Fun\, Conn}(\Omega^{-\rho})$ is in the space of invariants, $(\mathrm{Fun\, Conn}(\Omega^{-\rho}))^{\mathfrak{n}_u}$, if and only if it is in the intersection of the kernels of the screening charges,~\ie
\begin{equation}
	V_i[\lambda_i +1]P \overset{!}= 0\,,\qquad \text{for } i=1,\dots,\mathrm{rk}\,\gf_u~.
\end{equation}
Similarly, a polynomial $P\in(\mathrm{Fun\, Conn}(\Omega^{-\rho}))_\sigma$ in the space of coinvariants is an $\mathfrak{n}_u^\sigma$ invariant, if and only if it lies in the intersection of the kernels of the symmetrised screening charges $\tilde{V}_i[\lambda_i+1]$.

It is, therefore, sufficient to show that if $P$ is a $\sigma$-coinvariant and lies in the interection of the kernels of the symmetrised screening charges, then it must lie in the intersection of the kernels of the $\mathfrak{n}_u$ screening charges.
We make the following observations. Suppose $P$ lies in the space of coinvariants, then we can realise it as $P\in\IC[\tilde{u}_{i,n}|i=1,\dots,\mathrm{rk}\,\gf_t;\,n<0]$. Therefore,
\begin{equation}
	\frac{\partial P}{\partial u_{j,n}} = \sigma \left( \frac{\partial P}{\partial u_{j,n}}\right) = \frac{\partial P}{\partial u_{\sigma(j),n}} ~.
\end{equation}
Furthermore, the $x_{i,m}$ must transform as,
\begin{equation}
	\sigma(x_{i,m}) = x_{\sigma(i),m}~,
\end{equation}
which follows straightforwardly from the definition. Combining, we have that
\begin{equation}\label{eq:sigma_screening_charge}
	\sigma\left(V_i[\lambda_i + 1] P\right) = V_{\sigma(i)}[\lambda_{\sigma(i)}+1]P~.
\end{equation}
Let $P\in(\mathrm{Fun\, Conn}(\Omega^{-\rho}))_\sigma$ and suppose it lies in the intersection of kernels of the symmetrised operators. Additionally, let us restrict to the case where $\sigma^2=1$ and we only consider the $\IZ_2$ twist---we shall address the $\IZ_3$ twist for $\df_4$ at the end. Now, we have that
\begin{equation}
	(V_{i}[\lambda_i+1] + V_{\sigma(i)}[\lambda_{\sigma(i)}+1])P = 0~,
\end{equation}
but from (\ref{eq:sigma_screening_charge}), we must have that
\begin{equation}
	\sigma\left(V_{i}[\lambda_i+1]P\right)=-V_{i}[\lambda_i+1]P~.
\end{equation}
We shall now show that for any $P$ in the space of coinvariants, the polynomial $V_i[\lambda_i+1]P$, cannot have eigenvalue $-1$ under $\sigma$.

The image of $P$ under the $i$th screening charge is
\begin{equation}
	V_i[\lambda_i +1]P = \sum_{n\ge\lambda_i} x_{i,n-\lambda_i}\cdot \left( a_{ij} \frac{\partial P}{\partial u_{j,-n-1}} \right)~,
\end{equation}
and we have
\begin{equation}
	\sigma \left( V_i[\lambda_i +1]P \right) = V_{\sigma(i)}[\lambda_{i}+1]P= \sum_{n\ge\lambda_{i}} x_{\sigma(i),n-\lambda_{i}}\cdot \left( a_{ij} \frac{\partial P}{\partial u_{j,-n-1}} \right)~,
\end{equation}
where we have made use of the fact that the derivatives of $P$ and the highest weight, $\lambda$, are invariant under $\sigma$. From inspection, it is clear that
\begin{equation}
	V_{\sigma(i)}[\lambda_{\sigma(i)}+1]P\neq- V_i[\lambda_i +1]P ~,
\end{equation}
unless both are identically zero, as desired.

Now we address the $\IZ_3$ case of $\gf_u=\df_4$ and $\gf_t=\gf_2$. A $\mathfrak{n}_u^\sigma$ invariant $P$ must satisfy the following:
\begin{equation}
	\begin{split}
	(V_{1}[\lambda_1+1] + V_{3}[\lambda_1+1]+V_4[\lambda_1+1])P=0~,\\
	V_{2}[\lambda_2+1]P=0~.
	\end{split}
\end{equation}
We have, once more, made use of the fact that the derivatives of $P$ and the highest weight $\lambda$ are invariant under the action of $\IZ_3$. Expanding the first requirement, we have that
\begin{equation}
	\sum_{n\ge\lambda_1} \left(x_{1,n-\lambda_1}+ x_{3,n-\lambda_1}+x_{4,n-\lambda_i}\right) \left( -\frac{\partial P}{\partial u_{2,-n-1}} + 2 \frac{\partial P}{\partial u_{i,-n-1}} \right) =0~.
\end{equation}
Once again, this can only hold if each screening charge individually acts as zero.
\end{proof}

\subsection{Infinitesmal coordinate changes}

Picking a coordinate $t$, on the disc, $D$, we can write any oper $\nabla\in\mathrm{Op}_{^L G}(D)$
\begin{equation}
	\nabla = \partial_t + p_{-1} \mathbf{v}(t)~,
\end{equation}
where $\mathbf{v}(t)\in\ ^L\mathfrak{b}[[t]]$. Of course, our choice of $t$ is arbitrary and so one is free to set $t=\phi(s)$ with respect to a different coordinate. Under such a change, $\nabla$ transforms as $\nabla \mapsto \widetilde{\nabla} = \partial_s + p_{-1}+ \widetilde{\mathbf{v}}(s)$, with~\cite{Frenkel:2007}
\begin{equation}
	\widetilde{\mathbf{v}}(s) = \phi'(s){\rho}(\phi'(s))\cdot \mathbf{v}(\phi(s))\cdot {\rho}(\phi'(s))^{-1} -\rho\left(\frac{\phi''(s)}{\phi'(s)}\right)~,
\end{equation}
where $\rho$ is the Weyl vector of $\gf$, which we think of as a co-character $\rho:\IC^\times \longrightarrow\! ^LG$. This furnishes an action of $\mathrm{Aut}(\OO_D)$ on the opers and considering the infinitesimal coordinate transformations gives the action of the Lie algebra $\mathrm{Der}(\OO_{D})$. Picking a coordinate $t$, the action of $\mathrm{Der}(\IC[[t]])$ on $\mathrm{Fun}\, \mathrm{Op}_{^L G}(D)$ is generated by $L_m = t^{m+1}\partial_t$ for $m\ge-1$.

For our proof of conformality of $\V_{1,1}$ we wish to show that the action of the $\mathrm{Der}(\OO_D)$ intertwines with the closed embedding $\mathrm{Op}_{^LG_t}(D)\hookrightarrow\mathrm{Op}_{^L G_u}(D)$. The action on opers is quite complicated and so once more we shall make use of the Miura transform.

The space of Cartan connections, $\mathrm{Conn}(\Omega^\rho)_D$, carries an action of $\mathrm{Aut}(\OO_D)$, given by~\eqref{eq:cartan_coord_transform}. The Miura transform, $\mu:\mathrm{Conn}(\Omega^\rho)_D\rightarrow \mathrm{Op}_{^L G}(D)$, intertwines the action of $\mathrm{Aut}(\OO_D)$ and of the Lie algebra $\mathrm{Der}(\OO_D)$. Therefore, it is sufficient to show that the closed embedding, $\mathrm{Conn}(\Omega^\rho)_{^LG_t,D}\hookrightarrow \mathrm{Conn}(\Omega^\rho)_{^LG_u,D}$ intertwines the $\mathrm{Der}(\OO_D)$ action.

On the function ring, $\mathrm{Fun}\,\mathrm{Conn}(\Omega^\rho)_{^LG_u,D}\cong \IC[u_{i,n}|i=1,\dots,\mathrm{rk}\, \gf;\,n<0]$, the $L_m$ act~\cite{Frenkel:2007} as
\begin{equation}
	\begin{split}
	L_m u_{i,n} = - n u_{i,n+m}\,, \quad -1\le m< -n ~,\\
	L_m u_{i,-m} = -m(m+1)\,, \quad m>0~,\\
	L_m u_{i,n} =0\,, \quad m>-n~.
	\end{split}
\end{equation}
Clearly, this action intertwines the projection to the coinvariants and so we have a commutative diagram
\begin{equation}\label{eq:der_od_comm}
	\begin{tikzcd}
	\mathrm{Fun}\,\mathrm{Op}_{^LG_u}(D) \arrow[r,twoheadrightarrow] \arrow[d, "\mathrm{Der}(\OO_D)",swap] & \mathrm{Fun}\, \mathrm{\Op}_{^LG_t}(D)\arrow[d, "\mathrm{Der}(\OO_D)"]\\
	\mathrm{Fun}\,\mathrm{Op}_{^LG_u}(D)\arrow[r,twoheadrightarrow] & \mathrm{Fun}\, \mathrm{\Op}_{^LG_t}(D)
	\end{tikzcd}
	~.
\end{equation}


\section{\label{app:proof_cyl}The proof of Theorem~\ref{thm:cyl_kl0}}

First, we establish fix notation. Let $\FF$ denote the composition $H^0_{DS}(u,\Hi{0}(\ZZ,\W_u\otimes-))$, which is an endofunctor on $\ZZ_u-$Mod. Notably, on the subcategory $\KL_{u,0}$, $\FF(M)\cong M$ for any $M\in \KL_{u,0}$ (see Proposition 9.12 of~\cite{Arakawa:2018egx}), and in general $\FF$ is left-exact. We wish to show that $\FF(\D_t)\cong \D_t$.

Let $\chi_\lambda:\ZZ_t\rightarrow \IC$ be a character defined by $\chi_\lambda(P_{i,n})=0$ for $n\neq0$ and $P_{i,0}\IV^t_\lambda = \chi_\lambda(P_{i,0})\IV^t_\lambda$. The Kazhdan--Lusztig category decomposes into blocks $\KL_t\cong \bigoplus \KL_t^{[\lambda]}$ where $\KL_t^{[\lambda]}$ is the subcategory of objects on which $P_{i,0}$ acts as the \textit{generalised} eigenvalue $\chi_\lambda(0)$,~\ie, these are objects which are \textit{set-theoretically} supported on the vanishing locus of $(P_{i,0}-\chi_\lambda(P_{i,0}))$ in $\mathrm{Spec}\, \ZZ_t$.

As $\D_t$ is a vertex algebra object in $\KL_t$, it decomposes as
\begin{equation*}
	\D_t \cong \bigoplus_{\lambda\in P^+_t}\D_{t,[\lambda]}~,
\end{equation*}
where $\D_{t,[\lambda]}$ are objects in $\KL_t^{[\lambda]}$. The increasing filtration on $\D_t$ induces a filtration on each $\D_{t,[\lambda]}$,
\begin{equation*}
	0= N_0 \subset N_1 \subset \dots \subset N = \bigcup_{i} N_i = \D_{t,[\lambda]}~,
\end{equation*}
such that successive quotients are isomorphic to $\IV_{\lambda,2}^t$. While the character of $\D_t$ is ill-defined (since each weight space is infinite dimensional), the character of each $\D_{t,[\lambda]}$ is well-defined. We have that $\FF(\D_t)_{[\lambda]} \cong \FF(\D_{t,[\lambda]})$ since $\FF$ is left exact. Therefore,
\begin{equation*}
	\ch~ \FF(\D_t)_{[\lambda]} \cong \ch~ \FF(\D_{t,[\lambda]})~.
\end{equation*}
The filtration on $\D_{t,[\lambda]}$ induces a filtration on $\FF(\D_{t,[\lambda]})$ and since $\FF$ is left exact, we have that $\FF(N_{i})/\FF(N_{i-1}) \subseteq \FF(N_{i}/N_{i-1})$, and so
\begin{equation*}
	\ch~\FF(\D_{t,[\lambda]}) \leqslant \sum_i \ch~\FF(N_{i}/N_{i-1})~.
\end{equation*}
Each subquotient, $N_i/N_{i-1}$ is an object in $\KL_{u,0}$ by Proposition~\ref{prop:kl_u}. Consequently,
\begin{equation*}
	\ch~\FF(\D_{t,[\lambda]}) \leqslant \sum_i \ch~(N_{i}/N_{i-1}) = \ch \D_{t,[\lambda]}~.
\end{equation*}
From this we conclude that $\FF(\D_t)$ must be $\IZ_{\geqslant0}$ graded. Now, since $\D_t$ is simple, if we can show that there is a non-trivial vertex algebra morphism $\D_t\rightarrow \FF(\D_t)$, then we will have $\FF(\D_t)\cong \D_t$. From here we are in a very similar situation as the proof of Theorem 9.9 of~\cite{Arakawa:2018egx}, so we adapt that proof to our setting. Before we do so, however, we will need some subsidiary lemmas. Let $\gf^+_t\subset \hat{\gf}_{t,-\kappa_c}= t\gf_t[t]$. Then we have:

\begin{lem}
$\V_{1,1}^{\gf_t^+}=\Hi{0}(\ZZ_u,\W_u\otimes \D_t)^{\gf_t^+} \cong \Hi{0}(\ZZ_u,\W_u\otimes (\D_t)^{\gf_t^+})$~.
\end{lem}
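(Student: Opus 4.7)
The plan is to verify the non-trivial isomorphism in the statement (the equality in the middle is tautological) by constructing it from a natural inclusion of complexes and showing that the map induced on $H^0$ is an isomorphism. I would begin by considering the inclusion of subcomplexes $\W_u \otimes (\D_t)^{\gf_t^+} \otimes \Li{\bullet}(\ZZ_u) \hookrightarrow \W_u \otimes \D_t \otimes \Li{\bullet}(\ZZ_u) = C^\bullet(\ZZ_u, \W_u \otimes \D_t)$ and checking first that this is indeed a subcomplex with respect to the Feigin--Frenkel gluing differential.

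Two algebraic observations are crucial. First, the $\gf_t^+$-action on $C^\bullet$ is supported entirely on the $\D_t$ tensor factor, since neither $\W_u$ nor $\Li{\bullet}(\ZZ_u)$ carries any $\hat{\gf}_t$-action; consequently the $\gf_t^+$-invariants of the full complex coincide with the subcomplex above. Second, the BRST differential $Q_{(0)}$ must be shown to commute with the $\gf_t^+$-action. This will follow because the BRST current $Q(z) = \sum_i (P_i(z) c^i(z))$ is built from the modes of generators of $\ZZ_u$---which act on $\D_t$ through the surjection $\ZZ_u \twoheadrightarrow \ZZ_t$ established by Theorem~\ref{thm:closed_immersion}---together with the ghost fields. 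The image of this surjection lands in $\zf(\gf_t) \subset V^{\kappa_c}(\gf_t)$, which is central in $V^{\kappa_c}(\gf_t)$ and therefore commutes with all of $\hat{\gf}_{t,\kappa_c}$, so these modes commute with $\gf_t^+$ as required.

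With these observations in hand, the displayed inclusion is a genuine morphism of complexes and induces the sought-after map on $H^0$. Injectivity is immediate from the left-exactness of $(-)^{\gf_t^+}$, which commutes with kernels. For surjectivity the plan is to exploit the positive-energy structure of $\D_t$: as a direct limit of objects in $\KL_t^{ord}$, the module $\D_t$ is graded by conformal weight with finite-dimensional weight spaces, and each mode $xt^n \in \gf_t^+$ strictly decreases the $L_0$-weight by $n$. On any $L_0$-bounded subspace the action of $\gf_t^+$ therefore factors through a locally nilpotent action of a finite-dimensional subalgebra, and for such actions $(-)^{\gf_t^+}$ is exact. Propagating this exactness through direct sums over $L_0$-weights (noting that $(-)^{\gf_t^+}$ acts trivially on the $\W_u$ and $\Li{\bullet}(\ZZ_u)$ factors) then delivers the surjectivity.

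The main technical hurdle will be the organisation of the $L_0$-weight bookkeeping in the last step: one must ensure that the weight decomposition of the full complex $C^\bullet$ is genuinely compatible with $Q_{(0)}$ (which preserves total conformal weight since the BRST current has conformal weight one) and that the filtration by $L_0$-bounded pieces allows the exactness of $\gf_t^+$-invariants to be checked locally before being patched back together. Once this reduction to finite-dimensional $\gf_t^+$-representations is rigorously in place, the exactness---and hence the desired isomorphism---follows.
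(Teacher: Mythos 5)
There is a genuine gap, and it sits precisely where the real content of this lemma lives. Your setup is fine: the $\gf_t^+$-action is carried entirely by the $\D_t$ factor, and the Feigin--Frenkel differential commutes with it because the $\ZZ_u$-action on $\D_t$ factors through $\ZZ_u\twoheadrightarrow\ZZ_t$ and lands in the centre of $V^{\kappa_c}(\gf_t)$. But both halves of your comparison of $H^0$'s fail as stated. For injectivity: left-exactness of $(-)^{\gf_t^+}$ identifies the invariant \emph{cocycles}, $Z^0(C^{\gf_t^+})=Z^0(C)^{\gf_t^+}$, but says nothing about coboundaries --- an invariant cocycle could be $Q_{(0)}(y)$ for a non-invariant $y\in C^{-1}$ (the semi-infinite complex $\Li{\bullet}(\ZZ_u)$ has terms in negative degree), and then it dies in $\Hi{0}(\ZZ_u,\W_u\otimes\D_t)$ without obviously dying in $\Hi{0}(\ZZ_u,\W_u\otimes(\D_t)^{\gf_t^+})$. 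For surjectivity: the claim that a locally nilpotent action has exact invariants is false. The one-dimensional nilpotent Lie algebra $\IC x$ acting on $V=\IC v_1\oplus\IC v_2$ by $xv_1=v_2$, $xv_2=0$ is locally nilpotent, yet $0\to\IC v_2\to V\to V/\IC v_2\to 0$ does not stay exact after taking invariants ($H^1(\IC x,\IC v_2)\neq 0$). Local nilpotence is what makes the invariants functor \emph{reasonable} on $\KL^{ord}$-type objects; it does not make it exact, so no amount of $L_0$-weight bookkeeping will rescue this step.

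The ingredient you are missing is a structural property of $\D_t$ itself: it is injective (cofree) as a module over $U(t\gf_t[t])=U(\gf_t^+)$, which is what forces the ordinary Lie algebra cohomology $H^{q}(\gf_t^+,\D_t)$ to vanish for $q>0$. The paper's proof packages this into a double complex $\W_u\otimes\D_t\otimes\Li{p}(\zf(\gf_u))\otimes\bigwedge^q((\gf_t^+)^*)$, combining the Feigin--Frenkel gluing differential with the Chevalley--Eilenberg differential. One spectral sequence computes $\Hi{p}(\ZZ_u,\W_u\otimes H^q(\gf_t^+,\D_t))$, which collapses by injectivity of $\D_t$ over $U(\gf_t^+)$; the other computes $H^p(\gf_t^+,\Hi{q}(\ZZ_u,\W_u\otimes\D_t))$, whose $E_2^{0,0}$ is stable because $\W_u$ is free over $\ZZ_{u,(<0)}$ (so the FF cohomology vanishes in negative degree) and $(-)^{\gf_t^+}$ is left exact. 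Identifying both $E_2^{0,0}$ terms with $H^0_{tot}$ gives the lemma. If you want to keep your subcomplex-inclusion framing, you would still need to invoke this injectivity to control the higher $\gf_t^+$-cohomology and resolve the coboundary ambiguities above --- at which point you have effectively reconstructed the spectral sequence argument.
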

\begin{proof}
Let $C^{\bullet,\bullet}$ be defined by
\begin{equation*}
	C^{p,q} = \W_u\otimes \D_t \otimes \Li{p}(\zf(\gf_u))\otimes \bigwedge\! ^q ((\gf_t^+)^*)~,
\end{equation*}
where $\W_u\otimes \D_t \otimes \Li{\bullet}(\zf(\gf_u)$ is the Feigin standard complex for computing $\Hi{0}(\ZZ_u,\W_u\otimes \D_t)$ and $\D_t \otimes \bigwedge\! ^q((\gf_t^+)^*$ is the Chevalley--Eilenberg complex for computing the \textit{ordinary} Lie algebra cohomology of $\gf_t^+$. We denote the differentials of each complex by $d_{\zf}$ and $d_{\gf}$, respectively, and extend them to $C^{\bullet\bullet}$ by letting $d_{\zf}$ act trivially on $ \bigwedge\! ^\bullet((\gf_t^+)^*$ and letting $d_{\gf}$ act trivially on $\W_u\otimes \Li{\bullet}(\zf(\gf_u))$. The two differentials anticommute and so $C^{\bullet\bullet}$ is a bicomplex from which we form the total complex $C^i_{tot} = \bigoplus_{p+q=i} C^{p,q}$ with differential $d = d_{\zf}+ (-1)^q d_{\gf}$.

There are two spectral sequences converging to the total cohomology $H_{tot}^\bullet(C)$, with second pages given by
\begin{equation}\nonumber
	\begin{split}
	 _I E_{2}^{p,q} &= \Hi{p}(\ZZ_u,\W_u\otimes H^q(\gf_t^+,\D_t))~,\\
	 _{II} E_{2}^{p,q} &= H^p(\gf_t^+, \Hi{q}(\ZZ_u,\W_u\otimes \D_t))~.
	\end{split}
\end{equation}
Note that $\D_t$ is injective over $U(t\gf_t[t])$, so $H^q(\gf_t^+,\D_t)$ is concentrated in degree zero. Furthermore, both $H^i(\gf_t^+,-)$ and $\Hi{i}(\ZZ_u,\W_u\otimes-)$ vanish in negative degrees because $\W_u$ is free over $\ZZ_{u,(<0)}$ and $(-)^{\gf_t^+}$ is left exact. Therefore, $_I E_{r}^{p,q}$ collapses at the second page and $_{II} E_{2}^{0,0}$ is stable. This gives the isomorphism
\begin{equation*}
	_I E_2^{0,0} \cong H^0_{tot}(C) \cong\ _{II}E_2^{0,0}~,
\end{equation*}
as desired.
\end{proof}
\vspace{-6pt}

\begin{lem}
$\FF(\D_t)^{\gf_t^+}= H^0_{DS}(u,\V_{1,1})^{\gf_t^+} \cong H^0_{DS}(u,\V_{1,1}^{\gf_t^+})$~.
\end{lem}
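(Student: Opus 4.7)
The plan is to mimic the bicomplex argument of the preceding lemma, but now pairing the Drinfel'd--Sokolov complex against the Chevalley--Eilenberg complex for $\gf_t^+$ instead of the Feigin standard complex for $\ZZ_u$. Concretely, I would form
\begin{equation*}
C^{p,q} = \V_{1,1}\otimes \Psi^p_{DS}\otimes \bigwedge\!{}^q((\gf_t^+)^*)~,
\end{equation*}
where $\Psi^\bullet_{DS}$ denotes the (ghost-number graded) DS ghost vertex algebra for the untwisted $\gf_u$-gauging, with horizontal differential $d_{DS}$ the usual DS BRST operator and vertical differential $d_\gf$ the Chevalley--Eilenberg differential for the action of $\gf_t^+$ on $\V_{1,1}$. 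Since the actions of $\hat\gf_{u,\kappa_c}$ and $\hat\gf_{t,\kappa_c}$ on $\V_{1,1}$ mutually commute (and the DS ghosts act only on the $\gf_u$ side), the differentials $d_{DS}$ and $d_\gf$ anticommute, so the total complex $(C^\bullet_{tot},d_{tot}=d_{DS}+(-1)^p d_\gf)$ is well defined.

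Next, I would consider the two spectral sequences converging to $H^\bullet_{tot}(C)$:
\begin{equation*}
_I E_2^{p,q}=H^p_{DS}\!\left(u,\,H^q(\gf_t^+,\V_{1,1})\right),\qquad
_{II} E_2^{p,q}=H^q\!\left(\gf_t^+,\,H^p_{DS}(u,\V_{1,1})\right).
\end{equation*}
The intermediate objects $H^q(\gf_t^+,\V_{1,1})$ inherit a locally-finite $\gf_u$-action and a $\IZ_{\geqslant 0}$-grading with finite-dimensional weight spaces from $\V_{1,1}$, hence still lie in $\KL_u$, so Theorem~\ref{thm:ds_vanish} applies. Theorem~\ref{thm:ds_vanish} then forces $H^p_{DS}(u,-)=0$ for $p\neq 0$ on every relevant module, which in particular collapses $_{I} E_2^{p,q}$ onto the row $p=0$ and puts $_{II} E_2^{p,q}$ into the strip $p=0$. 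Since Lie algebra cohomology is zero in negative degree by left-exactness of $(-)^{\gf_t^+}$, both spectral sequences are in fact first-quadrant. Consequently, the entry at $(0,0)$ is stable on both pages and coincides with $H^0_{tot}(C)$, yielding
\begin{equation*}
H^0_{DS}(u,\V_{1,1}^{\gf_t^+})\cong\, _I E_2^{0,0}\cong H^0_{tot}(C)\cong\,_{II} E_2^{0,0}\cong H^0_{DS}(u,\V_{1,1})^{\gf_t^+}~,
\end{equation*}
which is the claim.

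The main subtlety I expect to have to nail down is the first step: verifying that the Chevalley--Eilenberg complex for the infinite-dimensional topological Lie algebra $\gf_t^+=t\gf_t[t]$ pairs cleanly with the DS BRST complex in the category of (quasi)conformal vertex modules, and that the two gradings give an honest bicomplex rather than merely a graded complex of graded objects. This is the same technical point that is implicit in the preceding lemma, so it can be handled by the same conventions; once it is in place, the spectral-sequence argument above is essentially formal and invoking Theorem~\ref{thm:ds_vanish} for $\V_{1,1}\in\KL_u$ (established earlier by the filtration used in the body of the proof sketch) closes the argument.
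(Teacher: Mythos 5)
Your argument is correct, and it reaches the same conclusion by a genuinely different (though structurally parallel) route. The paper does not work with the Drinfel'd--Sokolov BRST complex directly: it first invokes Theorem 6.8 of~\cite{Arakawa:2018egx} to replace $H^0_{DS}(u,-)$ by the cap gluing $\Hi{0}(\hat{\gf}_{u,-\kappa_g},\gf_u,\W_u\otimes -)$, then forms the bicomplex $\W_u\otimes\V_{1,1}\otimes\Li{p}(\gf_u)\otimes\bigwedge^q((\gf_t^+)^*)$ and collapses both spectral sequences using the semijectivity of $\W_u$ (Theorem~\ref{thm:gauge-van}), which concentrates $\Hi{\bullet}(\hat{\gf}_{u,-\kappa_g},\gf_u,\W_u\otimes-)$ in degree zero for \emph{arbitrary} coefficients in $\KL_u$. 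You instead keep the DS ghost complex and appeal to Theorem~\ref{thm:ds_vanish}; this is more direct (no detour through the cap) but shifts the burden onto verifying that the intermediate objects $H^q(\gf_t^+,\V_{1,1})$ remain in $\KL_u$ so that the DS vanishing theorem applies to them --- a point you correctly identify and which does go through, since $\gf_t^+$ commutes with the $\hat{\gf}_{u,\kappa_c}$-action and smoothness and local finiteness pass to subquotients. The paper's route avoids this check entirely because its vanishing input is a property of $\W_u$ rather than of the coefficient module, at the price of relying on the prior identification of the two functors. Both versions share the same residual technical caveat (the semi-infinite/ghost direction of the bicomplex is not \emph{a priori} bounded, so convergence is secured by the vanishing itself), which the paper also leaves implicit, so I would not count that against you.
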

\begin{proof}
By Theorem $6.8$ of~\cite{Arakawa:2018egx}, the functors $H^0_{DS}(u,-)$ and $\Hi{0}(\hat{\gf}_{u,-\kappa_g}, \gf_u,\W_u\otimes -)$ are isomorphic. Let $C^{\bullet\bullet}$ be defined by
\begin{equation*}
	C^{p,q} = \W_u \otimes \V_{1,1} \otimes \Li{p}(\gf_u) \otimes \bigwedge\! ^q((\gf_t^+)^*)~.
\end{equation*}
Let $d_u$ be the differential on $\W_u\otimes \V_{1,1}\otimes \Li{\bullet}(\gf_u)$, which computes the relative $\hat{\gf}_{u,-\kappa_g}$ semi-infinite cohomology, and we extend $d_u$ to $C^{\bullet\bullet}$ by letting it act trivially on $\bigwedge\! ^q((\gf_t^+)^*)$. Similarly, let $d_t$ be the differential on $\V_{1,1}\otimes \bigwedge\! ^q((\gf_t^+)^*)$, which computes the ordinary Lie algebra cohomology of $\gf_t^+$, and we can extend this to $C^{\bullet\bullet}$ by letting it act trivially on $\W_u\otimes\Li{p}(\gf_u)$. The two differentials anticommute and so $C^{\bullet\bullet}$ is a bicomplex. We can form the total complex $C^{i}= \bigoplus_{p+q=i}C^{p,q}$ with total differential $d_{tot} = d_u + (-1)^q d_t$.

There are two spectral sequences converging to the total cohomology $H_{tot}(C)$, with second pages given by
\begin{equation}\nonumber
	\begin{split}
	_I E_2^{p,q} &= \Hi{p}(\hat{\gf}_{u,-\kappa_g},\gf_u, \W_u\otimes H^q(\gf_t^+,\V_{1,1}))~,\\
	_{II} E_2^{p,q} &= H^p(\gf_t^+, \Hi{q}(\hat{\gf}_{u,-\kappa_g},\gf_u, \W_u\otimes \V_{1,1}))~.
	\end{split}
\end{equation}
Since $\W_u$ is semijective in $\KL_u$, $\Hi{\bullet}(\hat{\gf}_{u,-\kappa_g},\gf_u, \W_u\otimes - )$ is concentrated in degree zero. Therefore, both spectral sequences collapse on the second page, and we have
\begin{equation*}
	_I E_2^{0,0} \cong H^0_{tot}(C) \cong _{II} E_2^{0,0}~,
\end{equation*}
as desired.
\end{proof}
\vspace{-6pt}

Combining both of the above lemmas, we have
\begin{equation*}
	\FF(\D_t)^{\gf_t^+} \cong \FF((\D_t)^{\gf_t^+})\cong \bigoplus_{\lambda\in P_t^+}\FF(\IV^t_\lambda\otimes V^t_{\lambda^*})~,
\end{equation*}
where we have used the fact that
\begin{equation*}
	(\D_t)^{\gf_t^+} \cong U(\hat{\gf}_{t,\kappa_c}) \otimes_{U(\gf_t[t])\oplus \IC K} \OO(G_t)\cong \bigoplus_{\lambda\in P_t^+}\IV^t_\lambda\otimes V^t_{\lambda^*}~.
\end{equation*}
The $\IV^t_\lambda\otimes V^t_{\lambda^*}$ are naturally modules over $\zf_\lambda^t$ and are therefore objects in $\KL_{u,0}$ by Proposition~\ref{prop:kl_u}. As a result, we have that
\begin{equation*}
	\FF(\D_t)^{\gf_t^+} \cong \bigoplus_{\lambda\in P_t^+} \IV_{\lambda}\otimes V_{\lambda^*}~.
\end{equation*}
Looking at the $\Delta=0$ subspace, $\FF(\D_t)^{\gf_t^+}_0$, and comparing with our constraint on the character, we have that
\begin{equation*}
	\FF(\D_t)_0 \cong \bigoplus_{\lambda\in P_t} V_{\lambda}\otimes V_{\lambda^*} \cong \OO(G_t)~,
\end{equation*}
as $\gf_t\otimes\gf_t$ modules. Since, $\FF(\D_t)$ is $\IZ_{\geqslant0}$-graded, the zero weight subspace $\FF(\D_t)_0$ is a unital commutative, associative algebra under the normal product. The quadratic Casimir provides a $\mathbb{Q}_{\geqslant0}$-grading,
\begin{equation*}
	\FF(\D_t)_0 = \bigoplus_{d\in \mathbb{Q}_{\geqslant0}} \FF(\D_t)_0(d)~,
\end{equation*}
where $\FF(\D_t)_0(d)$ has eigenvalue $d$ with respect to the quadratic Casimir. The natural projection $\FF(\D_t)_0\twoheadrightarrow \FF(\D_t)_0(0)$ is an algebra homomorphism.

These observations mean that we are exactly in the situation of the proof of Theorem 9.9 in~\cite{Arakawa:2018egx} and so we can apply Lemma 9.10 of \textit{loc.\ cit.} to conclude that $\FF(\D_t)_0$ is isomorphic to $\OO(G_t)$ as a commutative $G_t\times G_t$ algebra. Additionally, $\FF(\D_t)$ is a $\KL_t$ object and so we have an action of $V^{\kappa_c}(\gf_t)$. All together, this gives a nonzero homomorphism $\D_t \rightarrow \FF(\D_t)$, as desired.


\section{\label{app:rewriting_characters}Rewriting Arakawa's character}

The character of the genus-zero chiral algebra of class $\SS$ of type $\mathfrak{g}$ with $b$ punctures is given in \cite{Arakawa:2018egx} as follows,
\begin{equation}\label{eq:archarapp}
	\Tr_{\mathbf{V}_{G,b}^{\mathcal{S}}}(q^{L_0}z_1\dots z_b) =  \sum_{\lambda\in P_+} \left( \frac{q^{\langle\lambda,\rho^\vee\rangle} (q,q)_\infty^{\mathrm{rk}\mathfrak{g}}}{\prod_{\alpha\in\Delta_+} \left( 1-q^{\langle\lambda + \rho,\alpha^\vee\rangle} \right)}  \right)^{(b-2)} \prod_{k=1}^b \mathrm{tr}_{\mathbb{V}_\lambda}(q^{-D}z^k)~,
\end{equation}
where $L_0$ is the Virasoro zero mode, $z_i$ are elements of the maximal torus of $G$, and $D$ is the degree operator of the Weyl-module $\mathbb{V}_\lambda$. The inner product $ \langle \cdot,\cdot\rangle$ is the Killing form.

It will be useful to be able to rewrite this expression so that it resembles \eqref{eq:indfull}. The characters of the Weyl modules are given by the Weyl--Kac formula,
\begin{equation}
	\mathrm{Tr}_{\mathbb{V}_\lambda}(q^{-D}z) = \frac{\sum_{w\in W} \mathrm{sgn}(w) e^{w \circ \lambda}}{(q,q)_\infty^{\mathrm{rk}\mathfrak{g}} \prod_{\alpha\in\Delta_+}(e^{-\alpha},q)_\infty(qe^\alpha,q)_\infty}~,
\end{equation}
where for simplicity we suppress flavour fugacities. The sum in the numerator is over the Weyl group $W$ of $\gf$. The $q$-Pochhammer symbol is defined according to
\begin{equation}
	(a,q)_\infty \coloneqq \prod_{j=0}^{\infty} (1-aq^j)~.
\end{equation}

Our first observation is as follows:
\begin{equation}\label{eq:rearr}
	\begin{split}
	\Tr_{\mathbb{V}_\lambda}(q^{-D}z) &= \frac{\sum_{w\in W} \mathrm{sgn}(w) e^{w \circ \lambda}}{(q,q)_\infty^{\mathrm{rk}\mathfrak{g}} \prod_{\alpha\in\Delta_+}(qe^{-\alpha},q)_\infty(qe^\alpha,q)_\infty (1-e^{-\alpha})} \\
	&= \frac{\sum_{w\in W} \mathrm{sgn}(w) e^{w \circ \lambda}}{(q,q)_\infty^{\mathrm{rk}\mathfrak{g}} \prod_{\alpha\in\Delta_+}(qe^{-\alpha},q)_\infty(qe^\alpha,q)_\infty e^{-\alpha/2}(e^{\alpha/2}-e^{-\alpha/2})} \\
	&= \frac{1}{(q,q)_\infty^{\mathrm{rk}\mathfrak{g}} \prod_{\alpha \in \Delta}(qe^\alpha,q)_\infty} \; \frac{\sum_{w\in W} \mathrm{sgn}(w) e^{w(\lambda+\rho)-\rho}}{\prod_{\alpha\in\Delta_+}e^{-\alpha/2}(e^{\alpha/2}-e^{-\alpha/2})} \\
	&= \frac{1}{(q,q)_\infty^{\mathrm{rk}\mathfrak{g}} \prod_{\alpha \in \Delta}(qe^\alpha,q)_\infty} \; \frac{\sum_{w\in W} \mathrm{sgn}(w) e^{w(\lambda+\rho)}}{\prod_{\alpha\in\Delta_+}(e^{\alpha/2}-e^{-\alpha/2})}~. \\
	\end{split}
\end{equation}
Using the Weyl character formula~\cite{Fulton:1991} and the expression for $\mathcal{K}$-factors given in~\cite{Lemos:2012ph}, we achieve a final expression for the character of the Weyl module,
\begin{equation}\label{eq:weylch}
	\Tr_{\mathbb{V}_{\lambda}} (q^{-D}z) = \mathcal{K}(z)\chi^\lambda_\mathfrak{g}(z)~.
\end{equation}
What remains is to rewrite the prefactor from \eqref{eq:archarapp}. We can write the character as
\begin{equation}
	\Tr_{\V^\SS_{G,b}}(q^{L_0}z_1\dots z_b) = \sum _{\lambda\in P_+}(C_{\lambda\lambda\lambda})^{b-2} \prod_{i=1}^b \KK(z_i)\chi_\gf^\lambda(z_i)~,
\end{equation}
which is precisely the form in \eqref{eq:indfull}. The Higgsing prescription for the character is identical to that of the index. Therefore, we may apply the same argument we used to calculate $C_{\lambda\lambda\lambda}$ for the index, which results in the final expression of
\begin{equation}\label{eq:append_indfinal}
	\Tr_{\mathbf{V}^\mathcal{S}_{G,b}}(q^{L_0}z_1\cdots z_b) = \sum_{\lambda\in P_+} \frac{\prod_{i=1}^{b}\mathcal{K}(z_i)\chi^\lambda_\mathfrak{g}(z_i)}{(\mathcal{K}(\times)\chi^\lambda_\mathfrak{g}(\times))^{b-2}}~.
\end{equation}

We would like to draw attention to a point of interest. One can attempt to massage the prefactor
\begin{equation}\label{eq:rearr2}
	\frac{q^{\langle\lambda,\rho^\vee\rangle} (q,q)_\infty^{\mathrm{rk}\mathfrak{g}}}{\prod_{\alpha\in\Delta_+} \left( 1-q^{\langle\lambda + \rho,\alpha^\vee\rangle} \right)} = \frac{(q,q)^{\mathrm{rk}\mathfrak{g}}_\infty}{\prod_{\alpha\in\Delta_+}q^{\langle\rho,\alpha^\vee/2\rangle}(q^{-\langle\lambda+\rho,\alpha^\vee/2\rangle}-q^{\langle\lambda+\rho,\alpha^\vee/2\rangle})}~,
\end{equation}
into a more suitable form. We largely follow the method of Appendix A of~\cite{Lemos:2014lua}, generalised to arbitrary simple $\gf$.
\begin{equation}
	\begin{split}
	\frac{q^{\langle\lambda,\rho^\vee\rangle} (q,q)_\infty^{\mathrm{rk}\mathfrak{g}}}{\prod_{\alpha\in\Delta_+} \left( 1-q^{\langle\lambda + \rho,\alpha^\vee\rangle} \right)} &= \frac{(q,q)_\infty^{\mathrm{rk}\mathfrak{g}}(q^{-\langle \rho,\alpha^\vee/2\rangle}-q^{\langle \rho,\alpha^\vee/2\rangle})}{\prod_{\alpha\in\Delta_+}q^{\langle\rho,\alpha^\vee/2\rangle}(q^{-\langle \rho,\alpha^\vee/2\rangle}-q^{\langle \rho,\alpha^\vee/2\rangle})(q^{-\langle\lambda+\rho,\alpha^\vee/2\rangle}-q^{\langle\lambda+\rho,\alpha^\vee/2\rangle})} \\
	&= \frac{(q,q)_\infty^{\mathrm{rk}\mathfrak{g}}}{\prod_{\alpha\in\Delta_+}(1-q^{\langle \rho,\alpha^\vee\rangle})}\cdot\prod_{\alpha\in\Delta^+} \frac{ (q^{-\langle \rho,\alpha^\vee/2\rangle}-q^{\langle \rho,\alpha^\vee/2\rangle})}{(q^{-\langle\lambda+\rho,\alpha^\vee/2\rangle}-q^{\langle\lambda+\rho,\alpha^\vee/2\rangle})}~.
	\end{split}
\end{equation}
Using Equation $(76)$ of~\cite{Arakawa:2018egx} we rewrite the product over positive roots as a product over degrees $d_i$ of fundamental invariants,
\begin{equation}
	\begin{split}
	\prod_{\alpha\in\Delta_+} (1-q^{\langle \rho,\alpha^\vee \rangle}) &= \prod_{i=1}^{\mathrm{rk}\mathfrak{g}}\prod_{j=1}^{d_i-1} \\
	&= \prod_{i=1}^{\mathrm{rk}\mathfrak{g}}\frac{(q,q)_\infty}{(q^{d_i},q)_\infty} \, ,\\
	&= (q,q)^{\mathrm{rk}\,\gf}_\infty \KK(\times)~.
	\end{split}
\end{equation}
We have substituted the expression for the $\mathcal{K}$-factor of an empty puncture from~\cite{Lemos:2012ph} in the last line. Thus, the prefactor takes the simplified form
\begin{equation}
	\frac{q^{\langle\lambda,\rho^\vee\rangle} (q,q)_\infty^{\mathrm{rk}\mathfrak{g}}}{\prod_{\alpha\in\Delta_+} \left( 1-q^{\langle\lambda + \rho,\alpha^\vee\rangle} \right)} = \frac{1}{\mathcal{K}(\times)}\prod_{\alpha\in\Delta^+} \frac{ (q^{-\langle \rho,\alpha^\vee/2\rangle}-q^{\langle \rho,\alpha^\vee/2\rangle})}{(q^{-\langle\lambda+\rho,\alpha^\vee/2\rangle}-q^{\langle\lambda+\rho,\alpha^\vee/2\rangle})}~.
\end{equation}
Comparing the above to (\ref{eq:append_indfinal}), we arrive at the identity
\begin{equation}\label{eq:append_chi}
	\chi^\lambda_\gf(\times) =\prod_{\alpha\in\Delta^+} \frac{(q^{-\langle\lambda+\rho,\alpha^\vee/2\rangle}-q^{\langle\lambda+\rho,\alpha^\vee/2\rangle})}{ (q^{-\langle \rho,\alpha^\vee/2\rangle}-q^{\langle \rho,\alpha^\vee/2\rangle})}~.
\end{equation}
While this identity is interesting in its own right, it is strikingly similar to the $q$-deformed dimension. It is known that the Schur limit of the superconformal index of Class $\SS$ theories is related to $q$-deformed two dimensional Yang-Mills Theory. The $q$-deformed dimension arises naturally in this setting and for simply laced $\gf$, we have
\begin{equation}
	\dim_q\, R_\lambda = \prod_{\alpha\in\Delta^+} \frac{(q^{-\langle\lambda+\rho,\alpha^\vee/2\rangle}-q^{\langle\lambda+\rho,\alpha^\vee/2\rangle})}{ (q^{-\langle \rho,\alpha^\vee/2\rangle}-q^{\langle \rho,\alpha^\vee/2\rangle})} = \chi^\lambda_\gf(\times)~.
\end{equation}
The representation $R_\lambda$ is the tilting module~\cite{Andersen1992} of the $q$-deformed universal enveloping algebra $U_q(\gf)$~\cite{Jimbo:1985zk} associated to the highest weight $\lambda\in P^+$. We wish to bring attention to this since this correspondence does not hold for non-simply laced $\gf$. For non-simply laced $\gf$, the $q$-dimension is defined as
\begin{equation}
	\dim_q\, R_\lambda = \prod_{\alpha\in\Delta^+} \frac{(q^{-d_\alpha\langle\lambda+\rho,\alpha^\vee/2\rangle}-q^{d_\alpha\langle\lambda+\rho,\alpha^\vee/2\rangle})}{ (q^{-d_\alpha\langle \rho,\alpha^\vee/2\rangle}-q^{d_\alpha\langle \rho,\alpha^\vee/2\rangle})}~.
\end{equation}
The $d_\alpha\in\{1,2,3\}$ are defined for simple roots, $\alpha_i$ $i\in\{1,\dots,\mathrm{rk}\,\gf\}$ as the integers such that if $A_{ij}$ is the Cartan matrix of $\gf$, $d_iA_{ij}$ is symmetric. For positive roots one sets $d_\alpha = d_i$ if there is a Weyl transformation sending the positive root $\alpha$ to the simple root, $\alpha_i$.

It is worth noting that this expression for the $q$-deformed dimension no longer agrees with the expression for the fully reduced character, $\chi^\lambda_\gf(\times)$, in the non-simply laced case. We are, as yet, unsure about the significance of this.

\section{\label{app:ds_free}Drinfel'd--Sokolov reduction of the free hypermultiplet}

In this appendix we present some supporting evidence for the equivalence~\eqref{eq:ds_free}. We investigate the untwisted DS reduction of the relevant free hypermultiplet VOA in concrete examples for low ranks---making full use of accidental isomorphisms. We do not elucidate the full structure of the vertex algebra, but rather produce a list of generators, quantum numbers and null relations that are consistent with the expected identification.

\subsection{\label{subsec:D2_DS}The \texorpdfstring{$\df_2$}{d2} Case}

From the accidental isomorphism $\df_2\cong \af_1\times \af_1$, so the identification $\V_f^{\df_2}\cong \V_3^{\af_1}$ and~\eqref{eq:ds_free} follows directly from Arakawa's results in \cite{Arakawa2017introduction}. Nevertheless, direct calculation proves instructive.

The VOA $\V_3^{\af_1}$ is the trifundamental symplectic boson VOA, with generators $Q_{abc}$, where $a,b,c$ are $\suf(2)$ fundamental indices. There are three commuting $V^{\kappa_c}(\suf_2)$ current subalgebras, which are generated by the currents
\begin{equation}\label{eq:curr_embed}
	\begin{split}
	J_{ij} &= \frac{1}{2}\epsilon^{bb'}\epsilon^{cc'}Q_{ibc}Q_{jcc'}~,\\
	L_{ij} &= \frac{1}{2}\epsilon^{bb'}\epsilon^{cc'}Q_{bic}Q_{cjc'}~,\\
	R_{ij} &= \frac{1}{2}\epsilon^{bb'}\epsilon^{cc'}Q_{bci}Q_{cc'j}~.
	\end{split}
\end{equation}
The untwisted $\df_2$ DS reduction amounts to two simultaneous $\suf(2)$ reductions with respect to the $L$ and $R$ subalgebras. A spectral sequence argument, coupled with the exactness of $H^0_{DS}$, implies that this can be implemented by sequentially reducing with respect to $R$ and then $L$. Introducing fermionic ghosts $b_R$ and $c_R$, the BRST charge for the $R$ reduction is
\begin{equation}
	Q_R = c_R(R_{22}-1)~.
\end{equation}
Following the arguments of~\cite{Beem:2014rza}, the (possibly redundant) generators of the reduced VOA can be identified with the operators
\begin{equation}
	R_{11},Q_{abc}, J_{ij},L_{ij}~.
\end{equation}
A genuine set of generators of the reduced algebra are obtained via the \textit{tic-tac-toe} procedure detailed in~\cite{deBoer:1993iz}. We distinguish these generators by writing them in calligraphic font,
\begin{equation}
	\RR_{11},\QQ_{abc}, \JJ_{ij},\LL_{ij}~.
\end{equation}
The reduced conformal weights are related to the unreduced weights by a shift by the eigenvalue with respect to the $\slf(2)$ Cartan $R_{12}+R_{21}$; these are summarised in Table~\ref{tab:D2-DS}.

\begin{table}[b!]
\centering
\renewcommand{\arraystretch}{1.25}
\begin{tabular}{|>{\centering}m{2.5cm} | >{\centering}m{1.5cm} | >{\centering}m{1.5cm} | >{\centering}m{1.5cm} | >{\centering}m{1.5cm} | >{\centering\arraybackslash}m{1.5cm} |}
\hline
Generator & $\QQ_{ab2}$  &  $\QQ_{ab1}$ & $\JJ_{ij}$ & $\LL_{ij}$ & $\RR_{11}$ \\
\hline
Weight & 0 & 1 & 1 & 1 & 2\\
\hline
\end{tabular}
\caption{Generators of the principal DS reduction of $\SS\BB$ with respect to the $R$ current subalgebra.}
\label{tab:D2-DS}
\end{table}

This reduction is expected to result in a VOA that is isomorphic to $\D_{\af_1}$. The generators $\QQ_{ab2}$, $\JJ_{ij}$, and $\LL_{ij}$ have the correct quantum numbers to serve as the generators of $\D_{\af_1}$. The generators $\QQ_{ab1}$ and $\RR_{11}$ should therefore not be independent, but should be related to composites of the other generators through null relations.

The symplectic boson VOA has no relations \emph{per se}, but the method of~\cite{Beem:2014rza} treats the identity \eqref{eq:curr_embed} as a relation between the current subalgebras and the $Q_{abc}$ (\ie\, one thinks of the affine currents as independent generators subject to some relations). It is precisely the incarnation of these relations in the reduced algebra that eliminates the redundant generators.

Principal DS reduction of the critical level affine subalgebra produces the Feigin--Frenkel centre and so we should identify $\RR_{11}$ with the quadratic generator $P_1$ of the shared Feigin--Frenkel centre. Therefore, we expect
\begin{equation}
	\RR_{11} = \alpha\left(:\frac{1}{8}(J_{12}+J_{21})(J_{12}+J_{21}): + \frac{1}{2}:J_{11}J_{22}: + \frac{1}{2}:J_{22}J_{11}: \right)+\ldots~,
\end{equation}
where the $(\dots)$ represent $Q_R$-exact terms and $\alpha\in\IC^\times$ is some normalization. From the embedding of the current algebras, we have that
\begin{equation}
	P_1^J=P_1^R=P_1^L~,
\end{equation}
where $P_1^{J,L,R}$ are the quadratic Casimirs of the respective current algebras.

Through the tic-tac-toe procedure, we have that
\begin{equation}
	\RR_{11} = R_{11} + \frac{1}{4} :RR:  - \frac{1}{2}\partial R~,
\end{equation}
where
\begin{equation}
	R = \frac{1}{2}\left(R_{12}+R_{21}\right) +2:b_Rc_R:~.
\end{equation}
Note that the difference $	\RR_{11}- P_1^R $ is $Q_R$-exact, namely
\begin{equation}
	(Q_{R})_{(0)} :R_{11}b: = \RR_{11}-P_1^R~.
\end{equation}
Thus $\RR_{11}$ can be expressed as the quadratic Casimir of the surviving currents $J_{ij}$ or $L_{ij}$.

Redundancy of $Q_{ab1}$ is a result of the relations
\begin{equation}
	\begin{split}
	\epsilon^{ij}R_{i2} Q_{12j} &=  \epsilon^{ij}L_{i2}Q_{1j2}~,\\
	\epsilon^{ij}R_{i2} Q_{11j} &=  2\epsilon^{ij}L_{i1}Q_{1j2}+3/2 \partial Q_{112}~,\\
	\epsilon^{ij}R_{2i} Q_{21j} &=  1/2\epsilon^{ij}L_{1i}Q_{2j2}-3/4 \partial Q_{212}~,\\
	\epsilon^{ij}R_{2i} Q_{22j} &=  \epsilon^{ij}L_{2i}Q_{2j2}~.
	\end{split}
\end{equation}
In cohomology, $J_{22}$ acts as the identity and so the vertex algebraic incarnation of the above relates $\QQ_{ab1}$ to a linear combination of $\LL_{ij}Q_{ab2}$ and the derivatives $\partial \QQ_{ab2}$.

\subsection{\label{subsec:d3_DS}The \texorpdfstring{$\df_3$}{d3} Case}

The strong generators of the $\df_3$ symplectic bosons can be written $Q_{Aa}$, where $A=1,\dots,6$ is a $\df_4\cong\suf(4)$ index---valued in the six-dimensional real representation---and $a=1,\dots,4$ is a $\ccf_2\cong\mathfrak{usp}(4)$ fundamental index. The OPE is
\begin{equation}
	Q_{Aa}(z)Q_{Bb}(w) \sim \frac{ \delta_{AB}\Omega_{ab}}{(z-w)}~,
\end{equation}
where $\Omega$ is the invariant $\mathfrak{usp}(4)$ symplectic matrix. The $\hat{\gf}_{u,\kappa_c}\times\hat{\gf}_{t,\kappa_c}$ current subalgebras are generated by the affine currents
\begin{equation}\label{eq:d3_embed}
\begin{split}
	J_{AB} &= \Omega^{ab}(Q_{Aa}Q_{Bb})~,\\
	K_{ab} &= \delta^{AB}(Q_{Aa}Q_{Bb})~.
\end{split}
\end{equation}
$\df_3$ reduction of the $J_{AB}$ current algebra produces $\JJ_1,\JJ_2$ and $\JJ_3$ with conformal weights $\Delta=2,3,4$. These are to be identified with the generators of the Feigin--Frenkel centre $\ZZ$ of $V^{\kappa_c}(\suf(4))$, and in particular $\JJ_2$ is the degree three generator of the Feigin--Frenkel centre. However, this state is set to zero in the symplectic boson system as it is not invariant under the outer-automorphism of $\df_3$. Thus, only the $\JJ_1$ and $\JJ_3$ survive. Similar to the $\df_2$ case we believe that null-relations arising from the embedding \eqref{eq:d3_embed} relate $\JJ_1$ and $\JJ_3$ to the Feigin--Frenkel generators of the $K_{ab}$ algebra.

Under the principal embedding $\suf(2)\hookrightarrow\suf(4)$ the six-dimensional representation decomposes as $\mathbf{6}_{\suf(4)}\to\mathbf{5}_{\suf(2)}\oplus \mathbf{1}_{\suf(2)}$. Accordingly, we will decompose $Q_{Aa}$ into $\suf(2)$ irreducible representations as follows,
\begin{equation}
	Q_{Aa} \rightarrow \left(Q_{1a} , Q_{2a}, Q_{3a}, Q_{4a}, Q_{5a} \right) \oplus Q_{6a}~.
\end{equation}
The conformal weights of the generators of the cohomology are shifted by their eigenvalues under the image of the Cartan generator $\Lambda(h)$. The weight spaces of the cohomology are given in Table~\ref{tab:D3-DS}. We have nothing to compare these against so we proceed by reducing the cylinder, $\D_t$, to obtain generators of $\W_t$.

\begin{table}[t!]
\centering
\renewcommand{\arraystretch}{1.25}
\begin{tabular}{|>{\centering}m{2cm} | >{\centering}m{1cm} | >{\centering}m{1cm} | >{\centering}m{1.5cm} | >{\centering}m{1cm} | >{\centering}m{1cm} | >{\centering}m{1cm} |>{\centering\arraybackslash}m{1cm} |}
\hline
Generator & $\QQ_{5a}$ & $\QQ_{4a}$ & $\QQ_{3a},\QQ_{6a}$ & $\QQ_{2a}$ & $\QQ_{1a}$ & $\JJ_1$ & $\JJ_3$ \\
\hline
Weight & $-3/2$ & $-1/2$ & $1/2$ & $3/2$ & $5/2$& $2$ & $4$\\
\hline
\end{tabular}
\caption{Conformal weights of the (possibly redundant) generators of $H^0_{DS}(\SS\BB)$.}
\label{tab:D3-DS}
\end{table}

\begin{table}[b!]
\renewcommand{\arraystretch}{1.25}
\centering
\begin{tabular}{|>{\centering}m{2.5cm} | >{\centering}m{1.5cm} | >{\centering}m{1.5cm} | >{\centering}m{1.5cm} | >{\centering}m{1.5cm} | >{\centering}m{1.5cm} | >{\centering\arraybackslash}m{1.5cm} |}
\hline
Generator & $\GG_{a1}$ & $\GG_{a2}$ & $\GG_{a3}$ & $\GG_{a4}$ & $\KK_{\alpha_1+\alpha_2}$ & $\KK_{2\alpha_1}$ \\
\hline
Weights & $-3/2$ & $-1/2$ & $\phantom{-}1/2$ & $\phantom{-}3/2$& $\phantom{-}2\phantom{/2}$ & $\phantom{-}4\phantom{/2}$\\
\hline
\end{tabular}
\caption{Conformal weights of the (possibly redundant) generators of $\W_{C_2}$.}
\label{tab:C2_DS}
\end{table}

The twisted cylinder, $\D_t$, is strongly generated by $\widehat{\ccf}_{2,\kappa_c}$ currents $K_\alpha$, and dimension zero operators $g_{ab}$, which obey the relations of $\OO(C_2)$. We have labelled the affine currents by the roots of $\ccf_2$. We denote the simple roots of $\ccf_2$ by $\alpha_1$ and $\alpha_2$ with positive roots given by the linear combinations
\begin{equation}
\alpha_1-\alpha_2, \alpha_1+\alpha_2,2\alpha_1, 2\alpha_2~.
\end{equation}
The principal embedding of the positive root $e\in\suf(2)$ lands in the subspace
\begin{equation}
	\Lambda(e) \in \mathrm{Span} (K_{\alpha_1-\alpha_2} , K_{2\alpha_2})~.
\end{equation}
The cohomology is generated by the currents that commute with the image of $\Lambda(e)$, which are
\begin{equation}
	K_{\alpha_1-\alpha_2}\text{ and } K_{2\alpha_1}~.
\end{equation}
The conformal weights of the associated vertex algebra generators is given in Table~\ref{tab:C2_DS} and we see that they agree with the expected weights of the generators of the Feigin--Frenkel centre. As expected, the images of $K_{\alpha_1-\alpha_2}$ and $K_{2\alpha_1}$ should be identified with the generators, $P_1$ and $P_3$, of the Feigin--Frenkel centre. On the hypermultiplet side, this centre is generated by $\JJ_1$ and $\JJ_3$ and this lets us identify the two graded subspaces.

Under the principal embedding, the fundamental representation $\mathbf{4}_{\mathfrak{usp}(4)}$ of $\mathfrak{usp}(4)$ decomposes as $\mathbf{ 4}_{\mathfrak{usp}(4)}\rightarrow \mathbf{4}_{\suf(2)}$. Thus, the weights of the dimension zero generators are shifted to those in Table~\ref{tab:C2_DS}. Comparing Tables~\ref{tab:D3-DS} and~\ref{tab:C2_DS}, we can identify the weights of the generators of $\W_{C_2}$ with those of $H^0_{DS}(u,\SS\BB)$, with some redundancies. So far, we have only shown that
\begin{equation}
	\W_{C_2}\subset H^0_{DS}(\SS\BB)~,
\end{equation}
as graded vector spaces. For now, we wish to highlight that $H^0_{DS}(u,\SS\BB)$ contain redundant generators, like in the $\df_2$ case. These redundant generators are set to zero by the image of the null relations, \eqref{eq:d3_embed}, in the reduced vertex algebra. To improve this result to an isomorphism of vertex algebras, we require both the OPE structure and the null relations. We leave the exact details for future work.

\bibliographystyle{./aux/ytphys}
\bibliography{./aux/refs}

\end{document}